\newcommand*{\id}{{\mathrm{id}}}
\newcommand*{\R}{\mathbb R}
\newcommand*{\pa}{{\partial}}
\newcommand*{\rd}{{\mathrm{d}}}
\newcommand*{\lb}{[\![}
\newcommand*{\rb}{]\!]}
\newcommand*{\Vb}{\mathbb V}
\newcommand*{\tbf}{\boldsymbol{t}}
\newcommand*{\el}{\mathfrak l}
\newcommand*{\gl}{\mathfrak g}
\newcommand*{\rl}{\mathfrak r}
\newcommand*{\pf}{\mathfrak p}
\newcommand*{\qf}{\mathfrak q}
\newcommand*{\df}{\mathfrak d}
\newcommand*{\cendot}{\boldsymbol{\cdot}}
\newcommand*{\botimes}{\boldsymbol{\otimes}}
\newcommand*{\balpha}{\boldsymbol{\alpha}}
\newcommand*{\bbeta}{\boldsymbol{\beta}}
\newcommand*{\bgamma}{\boldsymbol{\gamma}}
\newcommand*{\bdelta}{\boldsymbol{\delta}}
\newtheorem{theorem}{Theorem}
\newtheorem{lemma}[theorem]{Lemma}
\newtheorem{corollary}[theorem]{Corollary}
\newtheorem{prop}[theorem]{Proposition}
\newdefinition{definition}{Definition}
\newdefinition{prescription}{Prescription}
\newdefinition{remark}{Remark}
\newdefinition{example}{Example}
\newdefinition{procedure}{Procedure}
\newproof{proof}{Proof}
\journal{...}
\begin{document}
\begin{frontmatter}
\title{The noncommutative KP hierarchy and its solution via descent algebra}
\author[1]{Gordon Blower}
\ead{G.Blower@lancaster.ac.uk}
\author[2]{Simon~J.A.~Malham\corref{cor1}}    
\ead{S.J.A.Malham@hw.ac.uk} 
\cortext[cor1]{Corresponding author} 
\address[1]{Department of Mathematics and Statistics, Lancaster University, Lancaster LA1~4YF, UK}
\address[2]{Maxwell Institute for Mathematical Sciences, and School of Mathematical and Computer Sciences,   
  Heriot-Watt University, Edinburgh EH14~4AS, UK} 

\begin{abstract}
  We give the solution to the complete noncommutative Kadomtsev--Petviashvili (KP) hierarchy.
  We achieve this via direct linearisation which involves the Gelfand--Levitan--Marchenko (GLM) equation.
  This is a linear integral equation in which the scattering data satisfies the linearised KP hierarchy. 
  The solution to the GLM equation is then shown to coincide with the solution to the noncommutative KP hierarchy.
  We achieve this using two approaches. In the first approach we use the standard Sato--Wilson dressing transformation.  
  In the second approach, which was pioneered by P\"oppe~\cite{PKP}, we assume the scattering data is semi-additive and by direct substitution,
  we show that the solution to the GLM equation satisfies the infinite set of field equations representing the noncommutative KP hierarchy.
  This approach relies on the augmented pre-P\"oppe algebra.
  This is a representative algebra that underlies the field equations representing the hierarchy.
  It is nonassociative and isomorphic to a descent algebra equipped with a grafting product.
  While we perform computations in the nonassociative descent algebra, the final result which establishes the solution to the complete hierarchy, resides in the natural associative subalgebra.
  The advantages of this second approach are that it is constructive, explicit, highlights the underlying combinatorial structures within the hierarchy,
  and reveals the mechanisms underlying the solution procedure. 
\end{abstract}
\begin{keyword} Noncommutative Kadomtsev--Petviashvili hierarchy \sep augmented pre-P\"oppe algebra \sep descent algebra \end{keyword} 
\end{frontmatter}

\section{Introduction}
Our objective is to solve the complete noncommutative Kadomtsev--Petviashvili (KP) hierarchy by direct linearisation. 
In this solution approach, we solve a linear integral equation known as the Gelfand--Levitan--Marchenko (GLM) equation.
The coefficients of the linear integral equation, or scattering data, are assumed to satisfy the linearised form of the KP hierarchy equations. 
The solution to this linear integral equation are then shown to satisfy the noncommutative KP hierarchy equations.
This method is called direct linearisation as the solution to the noncommutative KP hierarchy equations are obtained by successively solving two linear problems.
First, the linearised hierarchy equations are solved. Second, with that solution as coefficients, the linear integral equation is solved.
We thus demonstrate that the noncommutative KP hierarchy is integrable in the sense that the solution can be reconstructed from a pair of linear equations.

We achieve this via two approaches. The first approach uses the standard Sato--Wilson dressing transformation.
The second approach follows P\"oppe~\cite{PKP}, who pioneered a ``head-on'' direct linearisation approach for the KP equation.
We assume the scattering data is semi-additive, and gain a more explicit solution to the hierarchy.
This means that the first order linearised equation in the KP hierarchy is automatically satisfied; we naturally assume the scattering data satisfies the higher order linearised equations as well. 
As a result, the solution to the GLM equation and its derivatives satisfy the pre-P\"oppe product.
We also show that `higher' dependent variables in the hierarchy,
see equations \eqref{eq:gaugeexpansionPoppe}, \eqref{eq:SatoformPoppe} and \eqref{eq:BopsPoppe} just below,
are given by a simple formula in terms of the first dependent variable; thus we only need to focus on the latter.
Underlying the pre-P\"oppe product is the property that the kernels of certain linear combinations of products of the solution to the GLM equation and its derivatives,
correspond to the product of a pair of kernels.
These represent the natural noncommutative products of terms in the noncommutative KP hierarchy itself.
We construct the pre-P\"oppe algebra based on the pre-P\"oppe product.
The natural setting for the Sato formulation in this context, is the pre-P\"oppe algebra equipped with natural, closed, left and right actions, which generates a bimodule structure.
We augment the pre-P\"oppe product to mimic these actions and thus construct the `abstract' augmented pre-P\"oppe algebra that is isomorphic to the bimodule structure.
However, the augmented pre-P\"oppe algebra is nonassociative, reflecting that the left and right actions are distinct.
The original pre-P\"oppe algebra, which is associative, is a natural subalgebra of the augmented pre-P\"oppe algebra.
While we perform computations in the nonassociative augmented pre-P\"oppe algebra, at any point, we can map back to the bimodule structure and its associated actions,
that are closed in the associative pre-P\"oppe algebra.
In fact, we abstract the problem further, and show that the augmented pre-P\"oppe algebra is isomorphic to a descent algebra, equipped with a grafting product.
We perform our computations in the descent algebra context, and, consequently show that the solution to the GLM equation satisfies the noncommutative KP hierarchy.
From our perspective, the advantages of this second ``head-on'' P\"oppe approach to direct linearisation of the noncommutative KP hierarchy, are that it:
(i) Establishes explicit formulae at all stages for all the fields and coefficients involved;
(ii) Highlights the underlying, and hitherto hidden, algebraic combinatorial structures present in the hierarchy; and
(iii) Reveals the structures and mechanisms underlying the solution procedure that could, in principle, be adapted, extended and abstracted,
to establish the direct linearisation of more general integrable hierarchies. 

We consider herein, the Sato formulation of the noncommutative KP hierarchy. 
We seek a solution of the form,
\begin{equation}\label{eq:gaugeexpansionPoppe}
W=\id-\sum_{m\geqslant1}w_m\pa^{-m}, 
\end{equation}
to the system of equations,
\begin{equation}\label{eq:SatoformPoppe}
  \pa_{t_n}W=B_nW-W\pa^n,
\end{equation}
for all $n\in\mathbb N$, where the operator $B_n$ is given by,
\begin{equation}\label{eq:BopsPoppe}
  B_n\coloneqq\pa^n+\sum_{k=0}^{n-2}b_{k}^{n}(x,\tbf)\pa^k,
\end{equation}
where $\tbf\coloneqq(t_1,t_2,t_3,\ldots)$ and $\pa\coloneqq\pa_x$.
This generates polynomial relations between partial derivatives of the set of coefficients $w_k$ which constitute the hierarchy equations.
If we substitute the pseudo-differential operator expansion \eqref{eq:gaugeexpansionPoppe} into \eqref{eq:SatoformPoppe}, and equate the coefficients
of the positive and negative `powers' of $\pa_x$, the equations at the first three orders are as follows. For $n=1$, for all $k\in\mathbb N$, we find,
\begin{equation}\label{eq:x=t1}
\pa_{t_1}w_k=\pa_xw_k.
\end{equation}
We identify $t_1=x$. For $n=2$, for all $k\in\mathbb N$, we find, 
\begin{subequations}\label{eq:n=2relations}
\begin{align}
  b_{0}^{2}&=2\pa_xw_1,\label{eq:SatoVevol1}\\
  \pa_{t_2}w_k&=(\pa_x^2+b_{0}^{2})w_k+2\pa_xw_{k+1}.\label{eq:SatoVevol2}
\end{align}
\end{subequations}
It is standard to identify $t_2=y$. For $n=3$, we find,
\begin{subequations}
\begin{align}
  b_1^3=&\;3\pa_x w_1,\label{eq:n=3casebn2intro}\\
  b_0^3=&\;\bigl(3\pa_x^2+b_1^3\bigr)w_1+3\pa_xw_2,\label{eq:n=3casebn3intro}\\
  \pa_{t_3}w_k=&\;\bigl(\pa_x^3+b_1^3\,\pa_x+b_0^3\bigr)w_k\nonumber\\
              &\;+\bigl(3\pa_x^2+b_1^3\bigr)w_{k+1}+3\pa_xw_{k+2},\label{eq:n=3caseintro}
\end{align}
\end{subequations}
for all $k\in\mathbb N$. Equation~\eqref{eq:n=3caseintro} for $w_1$ is equivalent to the noncommutative KP equation.
Indeed, if we use the relations for $w_2$ and $w_3$ from \eqref{eq:n=2relations}, then for $w\coloneqq w_1$ we have,
\begin{equation}\label{eq:ncKP}
  \tfrac13(4\pa_tw-\pa_x^3w)=\pa_x^{-1}\pa_y^2w+2(\pa_xw)^2+2\pa_x^{-1}[\pa_yw,\pa_xw],
\end{equation}
where $[\cdot,\cdot]$ is the usual commutator and $t=t_3$.
This is the noncommutative KP equation; see for example Dimakis and M\"uller--Hoissen~\cite{DMH}, Gilson and Nimmo~\cite{GN}, Kupershmidt~\cite{K},
Hamanaka and Toda~\cite{HamanakaToda}, Wang and Wadati~\cite{WW} and Blower and Malham~\cite{BM-KP}.
It is common to rescale $t$ to render the coefficient of `$4$' to be unity.
Naturally we can proceed further and derive equivalent relations for $\pa_{t_n}w_k$ for all $n,k\in\mathbb N$.

The direct linearisation solution approach for the noncommutative KP hierarchy that we adopt herein has its origins in the work of P\"oppe and Nijhoff.
See: Nijhoff, Quispel, Van Der Linden and Capel~\cite{NQVC}; Nijhoff~\cite{Nijhoff}; Nijhoff and Capel~\cite{NC} and P\"oppe~\cite{PSG,PKdV,PKP}.
Our hypothesis that the scattering operator in the linear integral equation is semi-additive, is based on P\"oppe's approach for the commutative KP equation~\cite{PKP}.
In general, the direct linearisation of integrable nonlinear partial differential equations is classical, see for example:
Ablowitz, Ramani and Segur~\cite{ARS}; Dyson~\cite{Dyson}; Miura~\cite{Miura}; Santini, Ablowitz and Fokas~\cite{SAF} and Zakharov and Shabat~\cite{ZS,ZS2}.
For the Korteweg--de Vries equation, there are two versions of the linear integral equation in the literature.
However one is just the generalised Fourier transform of the other; see Blower and Malham~\cite[App.~D]{BM-KP}.
Direct linearisation for the KP equations has received much recent interest; see Fu~\cite{Fu} and Fu and Nijhoff~\cite{FN1,FN2,FN3}.
The development of a combinatorial algebra approach to solve noncommutative integrable systems based on P\"oppe's approach, and in the spirit of Manin~\cite{Manin}, 
is relatively recent.
See: Blower and Malham~\cite{BM,BM-KP}; Doikou \textit{et al.\/}~\cite{DMS}; Doikou \textit{et al.\/}~\cite{DMSWc} and Malham~\cite{MNLS,MKdV}.
Therein, this approach was applied to the noncommutative, KdV, and nonlinear Schr\"odinger (NLS) and modified KdV, hierarchies, as well as the noncommutative KP and mKP equations.
Importantly, and as a reference point for our work herein, one of the first papers to establish integrability for the complete KP hierarchy was Mulase~\cite{Mu84}.
Therein, integrability was established in the sense of Frobenius, by transforming the KP hierarchy to a linear total differential equation, involving infinitely many variables.

We now outline the direct linearisation procedure for the noncommutative KP hierarchy more precisely. 
The goal is to solve all the equations in the hierarchy.
To solve the infinite set of nonlinear partial differential equations (PDEs) constituting the noncommutative KP hierarchy,
we solve the following two linear problems in succession:

(1) \emph{Linearised PDE:} First, we solve the linearised form of the KP equations for $p=p(z,\zeta;t_1,t_2,t_3,\ldots)$, namely:
\begin{equation}\label{eq:linearform}
  \pa_{t_n}p=\pa_z^np-(-1)^n\pa_\zeta^np.
\end{equation}
for $n\in\mathbb N$. These are known as the \emph{base equations}. 

(2) \emph{Linear integral equation:} Second, we solve a linear integral equation of the form,
\begin{equation}\label{eq:GLMeqintro}
  P=G(\id-P),
\end{equation}
for the operator $G$, or equivalently, its kernel, $g$.
Here, $P$ is the scattering operator associated with the kernel function solution $p$ from \eqref{eq:linearform}.
This is also known as the \emph{Gelfand--Levitan--Marchenko (GLM) equation}.  

Our goal herein, is to show that $G$ generates the solution to the complete noncommutative KP hierarchy.
In P\"oppe's approach, we assume $P$ is a Hilbert--Schmidt and semi-additive operator family.
This means that its kernel $p$ is square-integrable and, in particular, has the form,
\begin{equation*}
p=p(z+x,\zeta+x;\tbf),
\end{equation*}
where $z,\zeta\in(-\infty,0]$ are the primary variables parametrising the operator kernel, and we consider $x\in\R$ and $\tbf$ as additional parameters.
Note, this means that the first equation in \eqref{eq:linearform}, namely $\pa_{t_1}p=\pa_z p+\pa_\zeta p$, is automatically satisfied, and $t_1=x$.
Thus, for this form we suppose $\tbf=(t_2,t_3,\ldots)$.
Herein we use the notion of left, $\pa_{\el}$, and right, $\pa_{\rl}$, partial derivatives which respectively correspond to $\pa_z$ and $\pa_\zeta$.
One of our first results establishes that the dependent variables $w_k$ in the hierarchy are signed $\pa_{\rl}^{k-1}$ derivatives of $g$
and thus we need only focus on solving the nonlinear PDE prescribing $g=w_1$; see Section~\ref{sec:SatoPoppeconnection}.
If we set $V\coloneqq(\id-P)^{-1}$, then the solution $G$ to the GLM equation~\eqref{eq:GLMeqintro} is given by, 
\begin{equation*}
G=PV.
\end{equation*}
Under mild assumptions, we can establish, $G$ is Hilbert--Schmidt valued, with kernel $g=g(z,\zeta;x,\tbf)$. 
Herein, following P\"oppe~\cite{PKP}, we use the notation $\lb G\rb$ to denote the kernel $g$ of $G$, i.e.\/ we set,
\begin{equation}\label{eq:bracketnotation}
\lb G\rb(z,\zeta;x,\tbf)\coloneqq g(z,\zeta;x,\tbf).
\end{equation}
Our goal now, under the assumption $p$ satisfies the linear PDE \eqref{eq:linearform},
is to show $\lb G\rb$ satisfies the noncommutative KP hierarchy equations for $w_1$.
We substitute $w_1=-\lb G\rb$ with $G=PV$ into the righthand side of the equation for $\pa_{t_n}w_1$  
and show the terms collapse to,
\begin{equation*}
  \pa_{t_n}\lb V\rb=\lb V(\pa_{\el}^nP-(-1)^n\pa_{\rl}^nP)V\rb.
\end{equation*}
Here we used the partial fraction formulae $V=\id-PV=\id-VP$ and that $\pa_{t_n}G=\pa_{t_n}V=V(\pa_{t_n}P)V$.
Substituting $w_1=-\lb G\rb$ into the equation for $\pa_{t_n}w_1$ means that we need to compute $\pa_x$ derivatives
of $\lb G\rb$ and products of such derivatives. The former generate linear combinations of monomials of the form,
\begin{equation}\label{eq:exmonomial}
\lb VP_{a_1,b_1}VP_{a_2,b_2}V\cdots VP_{a_k,b_k}V\rb,
\end{equation}
where the subindicies $a_1$ and $b_1$ respectively denote the number of $\pa_{\el}$ and $\pa_{\rl}$ derivatives of $P$, and so forth.
The product of such monomials can be computed via the pre-P\"oppe product; see Lemma~\ref{lemma:Poppeprodmonomials}.
This product crucially depends on our semi-additive assumption for $P$.
The product of a monomial, say~\eqref{eq:exmonomial}, with another, say $\lb VP_{a_1^\prime,b_1^\prime}V\cdots VP_{a_\kappa^\prime,b_\kappa^\prime}V\rb$, 
generates the terms,
\begin{align}
   &\lb VP_{a_1,b_1}V\cdot\!\cdot\!\cdot V(\pa_{\rl}P_{a_k,b_k})VP_{a_1^\prime,b_1^\prime}V\cdot\!\cdot\!\cdot VP_{a_\kappa^\prime,b_\kappa^\prime}V\rb\nonumber\\
  +&\lb VP_{a_1,b_1}V\cdot\!\cdot\!\cdot VP_{a_k,b_k}V(\pa_{\el}P_{a_1^\prime,b_1^\prime})V\cdot\!\cdot\!\cdot VP_{a_\kappa^\prime,b_\kappa^\prime}V\rb\nonumber\\
  +&\lb VP_{a_1,b_1}V\cdot\!\cdot\!\cdot VP_{a_k,b_k}VP_{\hat{1}}VP_{a_1^\prime,b_1^\prime}V\cdot\!\cdot\!\cdot VP_{a_\kappa^\prime,b_\kappa^\prime}V\rb,\label{eq:Poppeprodintro}
\end{align}
where $P_{\hat{1}}\coloneqq \pa_{\el}P+\pa_{\rl}P$.
In the above, we preclude the cases when either of the factors are simply $\lb V\rb$---note that $V$ is not a Hilbert--Schmidt operator. 
With this product we can define the pre-P\"oppe algebra which is associative; see Blower and Malham~\cite{BM-KP}.
However, the term $b_0^nw_1$ present in \eqref{eq:SatoVevol2} and then \eqref{eq:n=3casebn3intro} and \eqref{eq:n=3caseintro}, for example, requires us to include the right factor, $\lb G\rb$, with $G=VP$.
This is not one of the monomial forms \eqref{eq:exmonomial} which represent a basis for the pre-P\"oppe algebra.
Further, the terms $w_k$ for $k\geqslant2$ involve $\pa_{\rl}$ derivatives $\lb G\rb$, also do not generate such monomials. 
We can compute them, but in terms that are outwith the monomial basis terms for the pre-P\"oppe algebra. 
We could extend our algebra to include such terms.
However, fortunately, within the prescription of the Sato formulation for the noncommutative KP hierarchy, these two aspects can be naturally combined and encoded via a right action operation.
Indeed we can show that, $\pa_{\rl}$ differentiation plus right multiplication by $\lb G\rb$, transports elements within the pre-P\"oppe algebra.
This suggests we naturally incorporate such a right action within a pre-P\"oppe algebra framework, which indeed we do, though we also incorporate a corresponding left action.
Indeed we construct a bimodule structure on the pre-P\"oppe algebra base on these actions in Section~\ref{sec:steptodescents}.
Recall we precluded the factor $\lb V\rb$ in the product above---from a functional analytic perspective the product does not apply with this factor.
However, we can define an abstract \emph{augmented pre-P\"oppe algebra}, in which we augment the product above to include the cases of left and right $\lb V\rb$ factors.
The two additional cases involving left and right factors, $\lb V\rb$, correspond precisely to the natural specification we would anticipate from the pre-P\"oppe product above,
and, correspond precisely, to the left and right actions, respectively, of the bimodule structure.
We show that the bimodule structure and the augmented pre-P\"oppe product are isomorphic.
However, the augmented pre-P\"oppe algebra is \emph{nonassociative} which reflects the fact that the left and right actions are distinct.
The pre-P\"oppe algebra is a subalgebra of the augmented pre-P\"oppe algebra.
Indeed, it's the subalgebra in which we exclude the element $\lb V\rb$.
Also recall that the left and right actions are closed in the pre-P\"oppe algebra.
Hence any computations we perform in the augmented pre-P\"oppe algebra, can be mapped back via the bimodule structure to the pre-P\"oppe subalgebra, which is associative.  

The augmented pre-P\"oppe algebra is isomorphic to the descent algebra equipped with a grafting product.
This is not immediately obvious, and we establish this result in Sections~\ref{sec:trees}--\ref{sec:descent-alg}, wherein we provide many examples.
We first introduce the algebra of planar binary rooted trees equipped with grafting at the root as the product.
This is the archetypal nonassociative algebra.
Indeed it is then natural to define the descent algebra as the algebra of planar binary rooted trees in which we equivalence by trees whose word codings have the same descent set.
The descent grafting product is the natural product induced from the binary tree grafting product and represents/mimics the augmented pre-P\"oppe product in this context.
To show that the augmented pre-P\"oppe algebra is isomorphic to the descent algebra we utilise and representation for the descent algebra, which we denote as the left-glue-right or $\el\gl\rl$-algebra.
This algebra records all the $\pa_{\el}$ and $\pa_{\rl}$ operations generated by augmented pre-P\"oppe products, such as in the first and second terms in the pre-P\"oppe product \eqref{eq:Poppeprodintro} above,
as well as the `glue' terms, such as the final term in the pre-P\"oppe product \eqref{eq:Poppeprodintro} above.
This algebra provides the natural connection between the augmented pre-P\"oppe and descent algebras.
Subsequently, in Sections~\ref{sec:degrafting}--\ref{sec:solutions}, we introduce a degrafting operation, that helps to encode the necessary descent expansions in the Sato formulation of
the noncommutative hierarchy, and then directly show that $\lb G\rb$ solves the complete hierarchy.
Strictly speaking we show that $\lb G\rb=\lb G\rb(0,0;x,\tbf)$ satisfies the complete noncommutative KP hierarchy; see Remark~\ref{rmk:multifactor} and Theorem~\ref{thm:gaugecoeffsid}.
Within those sections we prove formulae for the so-called weight of a descent that naturally arises in the Sato formulation in the descent setting,
as well as explicit formulae for the Sato coefficients $b_k$.
These require further combinatorial algebraic structures and non-trivial binomial coefficient identities.
We finally finish the proof of the overall result, using the $\el\gl\rl$-algebra, in Section~\ref{sec:solutions}. 

Some further qualifications of our statements are required.
In principle, we have only identified herein, solutions to the complete noncommutative hierarchy that are square-integrable---with the assumption that
the solution $p$ to the linearised equations~\eqref{eq:linearform} is sufficiently smooth.  
There may be weaker classes of solutions that are not included here, for example, the classes of solutions for the Korteweg--de Vries equation considered by Grudsky and Rybkin~\cite{GR}. 
However, the solution form we generate via either of the direct linearisation approaches considered herein, is both constructive and practical.
Indeed, for example, we can numerically generate solutions to the noncommutative KP hierarchy straightforwardly as follows.
We can evaluate the solution $p$ to the linearised equations~\eqref{eq:linearform} at anytime $t>0$ directly in Fourier space,
and substitute the inverse Fourier transform of the solution into the GLM equation~\eqref{eq:GLMeqintro}, which we can solve as a numerical linear algebra problem.
This generates the solution to noncommutative KP hierarchy at any time $t>0$.
See Blower and Malham~\cite{BM-KP} where we explicitly generate noncommutative solutions to the KP equation in this way. 

For convenience for the reader, we summarise the main combinatorial algebras and structures that we require herein. These are the:
(a) \emph{Sato algebra:} This the algebra of pseudo-differential operators, with composition as product.
This algebra provides the context for generating the noncommutative KP hierarchy;
(b) \emph{Pre-P\"oppe algebra:} This is the algebra of smooth kernels of Hilbert--Schmidt operators on the half line. 
The product in this algebra is the pre-P\"oppe product based on semi-additive operators, as described above; 
(c) \emph{Action bimodule:} This is a bimodule constructed by equipping the pre-P\"oppe algebra with some specific actions that arise in the Sato formulation of the noncommutative KP hierarchy.
It represents the natural context for the Sato formulation in the P\"oppe approach;
(d) \emph{Augmented pre-P\"oppe algebra:}
This is an abstract version of the pre-P\"oppe algebra where we augment the pre-P\"oppe product to include some basic products that naturally mimic the actions in the bimodule.
It is naturally nonassociative and isomorphic to the action bimodule. The pre-P\"oppe algebra is a natural subalgebra;
(e) \emph{Descent algebra:} This is the algebra of descents, equipped with the so-called grafting product.
There are two main representations of the descent algebra that we utilise herein, these are the:
(i) \emph{Standard representation:} We record the descents associated with the standard encoding of planar binary trees, equivalencing by the sets of trees with the same descent sets; and
(ii) \emph{Left-glue-right representation:} This is an isomorphic representation that reveals that the descent algebra and augmented pre-P\"oppe algebra are isomorphic.
Hereafter we use `$\el\gl\rl$' as shorthand for `left-glue-right'.    
The descent algebra is usually associated with the Solomon Descent Algebra, which is associated with permutations and/or shuffles as well as the algebra of quasi-symmetric functions~\cite{MR,FMP}.

We now summarise what is new in this paper. In the noncommutative case, we establish:
\begin{enumerate}
\item[(1)] The connection between the Sato formulation for the KP hierarchy via a gauge transformation, and, the GLM equation with semi-additive scattering data; 
\item[(2)] An augmented pre-P\"oppe algebra. This is the underlying algebra for the Sato formulation in the context of direct linearisation. It is nonassociative;
\item[(3)] The isomorphism between the augmented pre-P\"oppe algebra and a descent algebra equipped with a grafting product;
\item[(4)] That the Sato coefficients have a simple explicit expansion formula in descents; 
\item[(5)] That the complete KP hierarchy is integrable via direct linearisation, by using (1)--(4).
\end{enumerate}

The KP equation has an illustrious history and many connections to parallel branches of mathematics and theoretical physics.
It was originally derived as a natural generalisation of the KdV equation to two dimensions by Kadomtsev and Petviashvili~\cite{KadomtsevPetviashvili}, with applications to long waves in shallow water.
However it has many further applications, including, to nonlinear optics (Pelinovsky, Stepanyants and Kivshar~\cite{PSK}) and to ferromagnetism and Bose--Einstein condensates.
Following the characterisation by Dyson~\cite{Dyson} of KdV solutions in terms of Fredholm determinants, in the nineteen eighties,
Sato~\cite{SatoI,SatoII}, Miwa, Jimbo and Date~\cite{MJD}, Hirota~\cite{Hirota} and Segal and Wilson~\cite{SW} demonstrated how solutions to the KdV and KP hierarchies 
could be formulated in terms of Fredholm Grassmannians and their associated determinantal bundle.
Simultaneously connections between the KP equation, Jacobians of algebraic curves and theta functions were also made, see Mulase~\cite{Mu94} and Mumford~\cite{Mumford}.
Also see: Ball and Vinnikov~\cite{BV}; Ercolani and McKean~\cite{EM} and McKean~\cite{Mc87}.
In the nineteen nineties Bourgain~\cite{Bourgain} established global well-posedness for periodic square-integrable solutions to the KP equation, given such data, as well as for smooth solutions for smoother data.
However, there are also rational solutions, see for example, Kasman~\cite{Kasman}, Pelinovsky~\cite{Pelinovsky} and P\"oppe~\cite{PKP}.
Kodama~\cite{Kodama} also completed the classification of all possible soliton interactions, paramaterised via a finite-dimensional Grassmannians. 
Recent results have established many more connections between the KP equation and other fields.
See, for example, Bertola, Grava and Orsatti~\cite{BGO}, who establish a connection between the $\tau$-functions of the KP and nonlinear Schr\"odinger hierarchies.
Or for another example, see Quastel and Remenik~\cite{QR}, who establish a connection between the KP equation and the KPZ equation for the random growth off a one-dimensional substrate.
In this direction, also see: McKean~\cite{McKean11}; Tracy and Widom \cite{TW94,TW96,TW03} and Zhang~\cite{Zhang}. 
In nineteen ninety, Witten~\cite{Witten} conjectured a connection between partition functions in string theory and the $\tau$-function of the KdV hierarchy, which was proved by Kontsevich~\cite{Kontsevich}.
Also see Cafasso and Wu~\cite{CafassoWu}. The connection between the noncommutative KP equation and D-branes was studied in Paniak~\cite{Paniak}. 
Also see Hamanaka~\cite{Hamanaka06,Hamanaka} and Harvey~\cite{Harvey}.
Herein, due to quantisation of the phase space or a noncommutative geometry interpretation, noncommutativity refers to when the independent coordinates, such as $x$, are noncommutative.
However, often, via the `Moyal product', the derivatives with respect to the independent coordinates can be interpreted classically, but the product
between the dependent field and its derivatives becomes noncommutative.
This is the interpretation adopted by many authors, which we adopt herein. 
See, for example: Gilson and Nimmo~\cite{GN}; Hamanaka~\cite{Hamanaka}; Hamanaka and Toda~\cite{HamanakaToda}; Koikawa~\cite{Koikawa}; Toda~\cite{Toda} and Wang and Wadati~\cite{WW}.
Noncommutativity of the dependent variables is often referred to as the `nonabelian' context, see Nijhoff~\cite{Nijhoff-Lagrangian3form}.
There are also many applications of noncommutative integrable systems, for example, boomeron, trappon and simulton solutions in noncommutative nonlinear optics;
see Calogero and Degasperis~\cite{CD}, Degasperis and Lombardo~\cite{DL}. 
Lastly, multi-soliton and $\tau$-function solutions to the noncommutative KP hierarchy can be characterised in terms of quasi-determinants,
see: Etingof, Gelfand and Retakh~\cite{EGR97,EGR98}; Gilson and Nimmo~\cite{GN}; Hamanaka~\cite{Hamanaka} and Sooman~\cite{So}.

Our paper is structured as follows. 
In Section~\ref{sec:Poppealg} we introduce the pre-P\"oppe algebra that underlies P\"oppe's approach to direct linearisation of the KP hierarchy via semi-additive scattering data.
We give some background material demonstrating the connection between the Lax, Zakharov--Shabat and Sato formulations for the noncommutative KP hierarchy in Section~\ref{sec:Satoindirect},
and introduce the Sato--Wilson dressing transformation approach to direct linearisation.
We then establish the connection between the solution to the GLM equation in P\"oppe's approach and the Sato formulation for the noncommutative KP hierarchy in Section~\ref{sec:SatoPoppeconnection}.
We also establish direct linearisation of the noncommutative KP hierarchy via the Sato--Wilson dressing transformation approach, that generates the solution fields to the whole hierarchy.
Section~\ref{sec:hierarchyequations} outlines the explicit formulae for the Sato coefficients from which the noncommutative KP hierarchy equations are constructed in the context
of P\"oppe's direct linearisation approach.
The goal of Section~\ref{sec:steptodescents} is to explain how we make the step, from the Sato formulation of the noncommutative KP hierarchy,
to the nonassociative noncommutative augmented pre-P\"oppe algebra, which we use to solve the hierarchy by direct linearisation. 
In Section~\ref{sec:trees}, we review the algebra of planar binary rooted trees. This is a more general nonassociative algebra than we need.
However, the concepts we introduce in that context, such as the grafting of trees at their roots, are the ones we naturally induce on the  
$\el\gl\rl$-algebra, which we discuss in Section~\ref{sec:lgr-alg}, and the descent algebra, which we discuss in Section~\ref{sec:descent-alg}.
The latter two algebras, and the augmented pre-P\"oppe algebra, are all isomorphic.
The $\el\gl\rl$-algebra is a natural representation for the pre-P\"oppe algebra. The descent algebra is a useful abstract representation.
We introduce the degrafting operator in Section~\ref{sec:degrafting} and establish and explicit formula for the weight character of a descent in Section~\ref{sec:weightchar}.
These are both crucial for establishing the simple explicit expansion formula for the Sato coefficients that we provide in Section~\ref{sec:Satocoeffs}.  
Finally in Section~\ref{sec:solutions} we show that the solution to the GLM equation with semi-additive scattering data that satisfies the linearised equations,
solves the corresponding noncommutative KP hierarchy equations.
Lastly, in Section~\ref{sec:discussion} we discuss possible future research directions.

\section{Pre-P\"oppe algebra}\label{sec:Poppealg}
Herein we introduce the pre-P\"oppe algebra. 
Let $\mathbb V\coloneqq L^2((-\infty,0];\R^m)$ denote the space of square-integrable $\R^m$-valued functions on $(-\infty,0]$,
and $\mathfrak J_2=\mathfrak J_2(\Vb)$ denote the space of Hilbert--Schmidt operators on $\mathbb V$.
Any Hilbert--Schmidt valued operator $G$ on $\Vb$ generates a unique kernel function $g\in L^2((-\infty,0]^{\times2};\R^{m\times m})$
such that for all $\phi\in\Vb$ and $z\in(-\infty,0]$,
\begin{equation*}
\bigl(G\phi\bigr)(z)=\int_{-\infty}^0 g(z,\zeta)\,\phi(\zeta)\,\rd\zeta.
\end{equation*}
Conversely, any $g\in L^2((-\infty,0]^{\times2};\R^{m\times m})$ generates a Hilbert--Schmidt valued operator $G$ on $\Vb$,
with $\|G\|_{\mathfrak J_2(\Vb)}=\|g\|_{L^2((-\infty,0]^{\times2};\R^{m\times m})}$. See, for example, Simon~\cite{Simon}.
\begin{definition}[Left and right derivatives]
  Consider a Hilbert--Schmidt valued operator $G$ on $\Vb$, with a continuously differentiable kernel $g=g(z,\zeta)$.
  Further, assume the partial derivatives, $\pa_z g$ and $\pa_\zeta g$ of $g$, are square-integrable on $(-\infty,0]^2$.
  We define the left $\pa_\el$ and right $\pa_\rl$ partial derivatives of $G$, respectively,
  to be the Hilbert--Schmidt valued operators corresponding to the respective kernels $\pa_z g$ and $\pa_\zeta g$.
\end{definition}
\begin{remark}
  Note that for any pair of Hilbert--Schmidt valued operators, $G$ and $\hat{G}$, with continuously differentiable kernels, we have (naturally $\pa_\el$ and $\pa_\rl$ commute),
  \begin{equation}
   \pa_\el(G\hat{G})=(\pa_\el G)\hat{G}\quad\text{and}\quad \pa_\rl(G\hat{G})=G(\pa_\rl\hat{G}).\label{eq:leftandrightderivs}
  \end{equation}
\end{remark}
\begin{definition}[Semi-additive operator]\label{def:semiadd}
  A Hilbert--Schmidt operator $P$ on $\Vb$ with corresponding square-integrable kernel $p$ is \emph{semi-additive with parameters} $x$ and $\tbf$
  if its operation for any square-integrable function $\phi\in\Vb$, with $z\in(-\infty,0]$, is,
  \begin{equation*}
     \bigl(P\phi\bigr)(z;x,\tbf)=\int_{-\infty}^0 p(z+x,\zeta+x;\tbf)\,\phi(\zeta)\,\rd\zeta.
  \end{equation*}
\end{definition}
Note that for any such semi-additive operator, we have
\begin{equation*}
  \pa_xP=(\pa_\el+\pa_{\rl})P.  
\end{equation*}
\begin{remark}
  The parameter $\tbf$ is an infinite collection of parameters $\tbf=(t_2,t_3,\ldots)$.
  As will become apparent, we usually denote $t_1=x$.
  For the Hilbert--Schmidt valued semi-additive operator $P$ with kernel $p=p(z+x,\zeta+x;\tbf)$,
  we consider $z$ and $\zeta$ to be the primary kernel variables, while $x$ and $\tbf$ are additional parameters.
\end{remark}
Recall the bracket `$\lb\cdot\rb$' notation in \eqref{eq:bracketnotation} for Hilbert--Schmidt valued kernels from the Introduction.
For any Hilbert--Schmidt operator $G$, we denote the kernel $g=g(z,\zeta)$ of $G$ by $\lb G\rb=\lb G\rb(z,\zeta)$, i.e.\/ $\lb G\rb\coloneqq g$.
Crucial to our results herein is the pre-P\"oppe product. 
\begin{lemma}[Pre-P\"oppe product]\label{lemma:Poppeprodsemiadd}
  Assume that $P$ and $\hat P$ are semi-additive Hilbert--Schmidt operators with parameters $x$ and $\tbf$, and continuously differentiable kernels.
  Further assume that $G$ and $\hat G$ are Hilbert--Schmidt operators with continuous kernels. Then we have the following pre-P\"oppe product rule:
  \begin{multline*}
    \lb G(P_{0,1}\hat P\!+\!P\hat{P}_{1,0})\hat{G}\rb(z,\zeta;x,\tbf)\\=\lb GP\rb(z,0;x,\tbf)\,\lb\hat{P}\hat{G}\rb(0,\zeta;x,\tbf).
  \end{multline*}
\end{lemma}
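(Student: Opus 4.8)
The plan is to reduce the identity to a statement about kernels of compositions of Hilbert--Schmidt operators, and then to exploit the semi-additive structure via the chain rule $\pa_x P = (\pa_\el + \pa_\rl)P$ together with the one-sided derivative rules \eqref{eq:leftandrightderivs}. First I would unpack the notation on the left-hand side: the subscripts mean $P_{0,1} = \pa_\rl P$ and $\hat P_{1,0} = \pa_\el \hat P$, so the operator inside the bracket is $G(\pa_\rl P)\hat P\hat G + GP(\pa_\el \hat P)\hat G$. Using \eqref{eq:leftandrightderivs}, $(\pa_\rl P)\hat P = \pa_\rl\bigl(P\hat P\bigr) - P(\pa_\rl\hat P)$ is not quite what I want; instead the cleaner route is to observe that $\pa_\rl$ acting on the middle factor $P$ in a product $GP\hat P\hat G$ picks out exactly $G(\pa_\rl P)\hat P\hat G$, while $\pa_\el$ acting on the middle factor $\hat P$ picks out $GP(\pa_\el\hat P)\hat G$; however these are derivatives with respect to the \emph{internal} integration variable shared at the junction between $P$ and $\hat P$, and it is precisely the semi-additivity that lets us convert such an internal derivative into a parameter derivative or a boundary term.

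Concretely, I would write out the kernel of $GP\hat P\hat G$ as the iterated integral $\int g(z,\sigma_1)\,p(\sigma_1+x,\sigma_2+x)\,\hat p(\sigma_2+x,\sigma_3+x)\,\hat g(\sigma_3,\zeta)\,\rd\sigma_1\rd\sigma_2\rd\sigma_3$ over $(-\infty,0]^3$, where I have used Definition~\ref{def:semiadd} for $P$ and $\hat P$. The quantity $\lb G(\pa_\rl P)\hat P + GP(\pa_\el\hat P)\hat G\rb$ is then the same integral with $(\pa_{\sigma_2}p)\hat p + p(\pa_{\sigma_2}\hat p)$ in the integrand, i.e.\ with $\pa_{\sigma_2}\!\bigl(p(\sigma_1+x,\sigma_2+x)\hat p(\sigma_2+x,\sigma_3+x)\bigr)$. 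Integrating in $\sigma_2$ over $(-\infty,0]$ by the fundamental theorem of calculus collapses the $\sigma_2$-integral to the boundary values at $\sigma_2 = 0$ and $\sigma_2 = -\infty$; the contribution at $-\infty$ vanishes by square-integrability (the continuous kernels decay, so the product tends to $0$), leaving exactly the value at $\sigma_2 = 0$. That boundary term factorises as $\Bigl(\int g(z,\sigma_1)p(\sigma_1+x,0+x)\,\rd\sigma_1\Bigr)\Bigl(\int \hat p(0+x,\sigma_3+x)\hat g(\sigma_3,\zeta)\,\rd\sigma_3\Bigr)$, which is precisely $\lb GP\rb(z,0;x,\tbf)\,\lb\hat P\hat G\rb(0,\zeta;x,\tbf)$, as claimed. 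I would suppress $\tbf$ throughout since it rides along inertly as a parameter.

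The main obstacle — and the point requiring the stated hypotheses rather than a purely formal manipulation — is the justification of the two analytic steps: (i) differentiating under the integral sign and applying the fundamental theorem of calculus in the $\sigma_2$ variable, which is why we assume $P,\hat P$ have continuously differentiable kernels; and (ii) the vanishing of the boundary contribution as $\sigma_2\to-\infty$, which needs the kernels (and ideally their relevant one-sided derivatives) to be square-integrable so that, after the outer $\sigma_1$- and $\sigma_3$-integrations against the continuous kernels $g,\hat g$, the limit is genuinely zero. I would handle (ii) by noting that $\sigma_2\mapsto\int g(z,\sigma_1)p(\sigma_1+x,\sigma_2+x)\rd\sigma_1$ is, for fixed $z$, a continuous $L^2$-type function of $\sigma_2$ that is integrable near $-\infty$, hence has a subsequence of arguments along which it tends to $0$; combined with the existence of the limit (from the FTC representation) this forces the full limit to vanish. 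A careful but routine Fubini argument licenses interchanging the order of the $\sigma_1,\sigma_2,\sigma_3$ integrations throughout. Everything else is bookkeeping with the definitions.
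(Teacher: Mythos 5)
Your proposal is correct and is exactly the argument the paper has in mind: the paper dispenses with the proof in one line, noting that the identity "is straightforwardly established using the Fundamental Theorem of Calculus," and your computation — recognising $(\pa_\rl P)\hat P + P(\pa_\el\hat P)$ as a total $\pa_{\sigma_2}$-derivative of the product of semi-additive kernels at the internal junction, integrating out $\sigma_2$, and keeping the boundary term at $\sigma_2=0$ — is precisely that FTC argument, with the factorisation of the boundary term giving the right-hand side. Your care over the vanishing of the contribution at $\sigma_2\to-\infty$ and the Fubini interchange is appropriate and consistent with the paper's standing integrability hypotheses.
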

This result is straightforwardly established using the Fundamental Theorem of Calculus.

We now consider the GLM equation \eqref{eq:GLMeqintro}, i.e.\/ the linear integral equation $P=G(\id-P)$.
Suppose we set, 
\begin{equation}\label{eq:defV}
V\coloneqq (\id-P)^{-1}.
\end{equation}
Then the solution $G$ to the GLM equation is given by, 
\begin{equation}\label{eq:defG}
G=PV.
\end{equation}
Existence and uniqueness of such a solution is given, provided $P$ is a Hilbert--Schmidt operator and $\mathrm{det}_2(\id-P)\neq0$,
where $\mathrm{det}_2$ is Carleman's determinant.
For more details, see Blower and Malham~\cite[Lemma~7]{BM-KP}.
Herein, we assume the operator $P$ is semi-additive with parameters $x$ and $\tbf$.
We are concerned the kernels $\lb G\rb$ and $\lb V\rb$ and their left $\pa_\el$, right $\pa_\rl$, and $\pa_x$ partial derivatives.
Note that by partial fractions, we have,
\begin{equation}\label{eq:partialfractions}
V\equiv\id+PV\equiv\id+VP,
\end{equation}
and thus $\pa_x$ derivatives of $\lb G\rb$ and $\lb V\rb$ are identical.
Let $\mathcal C(n)$ denote the set of all compositions of $n$. 
Hereafter we use the notation, $P_{\hat a}\coloneqq\pa_x^aP$.
We assume $P$ is continuously differentiable to any order we require, and its partial derivatives at that order are Hilbert--Schmidt valued.
By successively differentiating $V$ with respect to $x$, and using the partial fraction formulae~\eqref{eq:partialfractions}, and then applying the
kernel bracket operator `$\lb\,\cdot\,\rb$', we observe,
\begin{equation}\label{eq:expansion}
\pa_x^n\lb V\rb=\sum \chi(a_1\cdots a_k)\cdot\lb VP_{\hat{a}_1}V\cdots VP_{\hat{a}_k}V\rb.
\end{equation}
The sum is over all compositions $a_1a_2\cdots a_k\in\mathcal C(n)$, and, 
\begin{equation*}
  \chi(a_1a_2\cdots a_k)\coloneqq\prod_{\ell=1}^kC_{a_\ell}^{a_\ell+\cdots+a_k}.
\end{equation*}
Here $C_\ell^n$ is the binomial coefficient $n$ choose $\ell$. We now lean on the fact that $P$ is semi-additive.
We observe that, by the chain rule, we have that, $P_{\hat a}=(\pa_\el+\pa_{\rl})^aP$ for any $a\in\mathbb N$.
Hence $P_{\hat a}$ is represented by, 
\begin{equation}\label{eq:Leibnizexpansionorig}
P_{\hat a }=\sum_{k=0}^a C_{k}^a\,\pa_\el^k\pa_{\rl}^{a-k}P.
\end{equation}
The terms in the sum are indexed by pairs $(k,a-k)$ of left and right derivatives, respectively.
We can replace each of the terms $P_{\hat{a}_1}$, \ldots, $P_{\hat{a}_k}$ in~\eqref{eq:expansion} by a Leibniz sum of the form \eqref{eq:Leibnizexpansionorig},
which involves pairs of left and right derivatives of the form $P_{a,b}$.
Hence we can regard the base monomials in~\eqref{eq:expansion} to be of the form \eqref{eq:exmonomial}.
The following formulation of the pre-P\"oppe product formula in Lemma~\ref{lemma:Poppeprodsemiadd} is crucial to our construction of the pre-P\"oppe algebra. 
\begin{lemma}\label{lemma:Poppeprodmonomials}
  For arbitrary Hilbert--Schmidt operators $F$ and $\hat{F}$ and a semi-additive Hilbert--Schmidt operator $P$
  with parameters $x,y$ and a smooth kernel, and for any $a,b,c,d\in\mathbb N\cup\{0\}$, we have,
  \begin{align*}
  \lb FP_{a,b} V\rb\,\lb VP_{c,d}\hat{F}\rb=&\;\lb F(\pa_{\rl}P_{a,b})VP_{c,d}\hat{F}\rb\\
                                              &\;+\lb FP_{a,b} V(\pa_{\el}P_{c,d})\hat{F}\rb\\
                                              &\;+\lb FP_{a,b} VP_{\hat{1}} VP_{c,d}\hat{F}\rb.
  \end{align*}
\end{lemma}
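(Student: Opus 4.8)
The plan is to deduce Lemma~\ref{lemma:Poppeprodmonomials} directly from the general pre-P\"oppe product of Lemma~\ref{lemma:Poppeprodsemiadd} by making the right identifications. First I would set, in the notation of Lemma~\ref{lemma:Poppeprodsemiadd}, the outer Hilbert--Schmidt factors $G\coloneqq FP_{a,b}$ on the left and $\hat G\coloneqq P_{c,d}\hat F$ on the right; these are Hilbert--Schmidt because $F,\hat F$ are and because the relevant derivatives $P_{a,b}=\pa_\el^a\pa_\rl^bP$ of the semi-additive operator $P$ are assumed Hilbert--Schmidt valued to whatever order is needed. Then the middle semi-additive operators in Lemma~\ref{lemma:Poppeprodsemiadd} I would both take to be $P$ itself (with parameters $x,y$ playing the role of $x,\tbf$), so that $P_{0,1}=\pa_\rl P$ and $\hat P_{1,0}=\pa_\el P$, and the lemma gives
\begin{equation*}
\lb G(P_{0,1}P+P\hat P_{1,0})\hat G\rb=\lb GP\rb(z,0)\,\lb P\hat G\rb(0,\zeta),
\end{equation*}
i.e. $\lb FP_{a,b}(\pa_\rl P+\pa_\el P)P_{c,d}\hat F\rb$ equals the product $\lb FP_{a,b}P\rb(z,0)\,\lb PP_{c,d}\hat F\rb(0,\zeta)$.

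The next step is to insert the resolvent $V=(\id-P)^{-1}$ in the appropriate places. I would replace $P$ by $PV$ and use the partial-fraction identity $PV=V-\id$ from \eqref{eq:partialfractions}; more directly, the cleanest route is to run the Fundamental Theorem of Calculus argument behind Lemma~\ref{lemma:Poppeprodsemiadd} with the middle operators chosen as $PV$ and $VP$ rather than $P$ and $P$. Concretely, semi-additivity gives that for a semi-additive $P$ the kernel identity
\begin{equation*}
\lb A(\pa_\rl P)B\rb(z,\zeta)+\lb AP(\pa_\el\hat P)\hat B\rb = \text{(boundary terms)}
\end{equation*}
collapses the convolution in the middle slot to an evaluation at $0$; applying this with $A=FP_{a,b}V$ and $B=\hat B=VP_{c,d}\hat F$ yields
\begin{equation*}
\lb FP_{a,b}V(\pa_\rl P)VP_{c,d}\hat F\rb+\lb FP_{a,b}V(\pa_\el P)VP_{c,d}\hat F\rb=\lb FP_{a,b}V\rb(z,0)\,\lb VP_{c,d}\hat F\rb(0,\zeta),
\end{equation*}
up to the $\pa_\el,\pa_\rl$ derivatives landing on $P_{a,b}$ resp. $P_{c,d}$ through the $V$'s by \eqref{eq:leftandrightderivs}. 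Here one has to be slightly careful: $\pa_\el$ acting on the block $V(\pa_\rl P)V\cdots$ passes through the left $V$ onto $\pa_\rl P$ giving the "glue" term $P_{\hat1}=\pa_\el P+\pa_\rl P$, while $\pa_\rl$ of $FP_{a,b}V$ passes onto $P_{a,b}$ giving $\pa_\rl P_{a,b}$, and symmetrically $\pa_\el$ of $VP_{c,d}\hat F$ gives $\pa_\el P_{c,d}$. Collecting these is exactly what produces the three terms on the right-hand side of the claimed identity.

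The key bookkeeping is therefore tracking how $\pa_\el$ and $\pa_\rl$ distribute across the inserted resolvents via \eqref{eq:leftandrightderivs}: the derivative in the first middle slot contributes $\pa_\rl$ to the preceding factor $P_{a,b}$ (first term), the derivative in the second middle slot contributes $\pa_\el$ to the following factor $P_{c,d}$ (second term), and the "crossed" contribution where a left derivative meets the kernel of the middle $P$ through a $V$ is precisely $P_{\hat1}=\pa_\el P+\pa_\rl P$ sandwiched between two $V$'s (third term). Each of these uses only that $\pa_\el$ acts on the leftmost factor and $\pa_\rl$ on the rightmost factor of a product of Hilbert--Schmidt kernels, together with the semi-additivity $\pa_xP=(\pa_\el+\pa_\rl)P$. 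I expect the main obstacle to be organising the Fundamental-Theorem-of-Calculus step cleanly with the resolvents $V$ present rather than bare identities, and making sure the boundary-at-$-\infty$ terms vanish (which follows from square-integrability of the kernels and their derivatives, so that $p(z+x,\zeta+x;\tbf)\to0$ as the middle variable $\to-\infty$); everything else is routine Leibniz-rule bookkeeping of where the left and right derivatives land.
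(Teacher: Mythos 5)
You have assembled the right ingredients---Lemma~\ref{lemma:Poppeprodsemiadd}, the partial fraction formulae \eqref{eq:partialfractions}, and the rule \eqref{eq:leftandrightderivs} for where $\pa_\el$ and $\pa_\rl$ land---and your final enumeration of the three output terms is correct; this is indeed the route the paper intends when it says the result follows by ``systematically combining the partial fraction formulae with the pre-P\"oppe product rule''. However, the key displayed equation in your second paragraph is false as written: its left-hand side equals only the glue term $\lb FP_{a,b}VP_{\hat 1}VP_{c,d}\hat F\rb$, and the qualifier ``up to the $\pa_\el,\pa_\rl$ derivatives landing on $P_{a,b}$ resp.\ $P_{c,d}$ through the $V$'s'' conceals exactly the step that has to be proved. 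The mechanism you invoke for the two missing terms is also backwards: \eqref{eq:leftandrightderivs} says that $\pa_\el$ acts only on the \emph{leftmost} factor and $\pa_\rl$ only on the \emph{rightmost} factor of a composition, so derivatives do not ``pass through'' a $V$ onto an interior factor. The missing terms come instead from the identity summand in $V=\id+PV=\id+VP$.

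The clean repair, staying within your strategy, is to deploy the partial fractions \emph{before} applying the product rule: write $\lb FP_{a,b}V\rb(z,0)=\lb FP_{a,b}\rb(z,0)+\lb FP_{a,b}VP\rb(z,0)$ and $\lb VP_{c,d}\hat F\rb(0,\zeta)=\lb P_{c,d}\hat F\rb(0,\zeta)+\lb PVP_{c,d}\hat F\rb(0,\zeta)$, expand the product into four terms, and apply Lemma~\ref{lemma:Poppeprodsemiadd} to each (noting that $P_{a,b}$ and $P_{c,d}$ are themselves semi-additive and so can serve as the middle operators). Recombining the resulting eight kernel monomials using $\id+PV=\id+VP=V$ yields precisely $\lb F(\pa_\rl P_{a,b})VP_{c,d}\hat F\rb+\lb FP_{a,b}V(\pa_\el P_{c,d})\hat F\rb+\lb FP_{a,b}VP_{\hat 1}VP_{c,d}\hat F\rb$. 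If you prefer your single-application version with middle operators $P_{a,b}V$ and $VP_{c,d}$, you must first justify the Fundamental-Theorem-of-Calculus collapse for these non-semi-additive operators and then compute $\pa_\rl(P_{a,b}V)=\pa_\rl P_{a,b}+P_{a,b}V(\pa_\rl P)$ and $\pa_\el(VP_{c,d})=\pa_\el P_{c,d}+(\pa_\el P)VP_{c,d}$ via $V=\id+VP$ and $V=\id+PV$; the first summands here are exactly the two terms your displayed equation omits.
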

\begin{remark}
  Note that the cases $FP_{a,b}=\id$ and $P_{c,d}\hat F=\id$ are naturally precluded as $F$, $P$ and $\hat{F}$ are all assumed to be Hilbert--Schmidt valued.
  In particular, the possibility of a pre- or postfactor being `$\lb V\rb$', is also precluded.
\end{remark}
The result of Lemma~\ref{lemma:Poppeprodmonomials} is straightforwardly established by systematically combining the partial fraction formulae~\eqref{eq:partialfractions}
with the pre-P\"oppe product rule in Lemma~\ref{lemma:Poppeprodsemiadd}.
\begin{definition}[Pre-P\"oppe algebra]\label{def:prePoppealg}
  We call the real matrix algebra of monomials of the form~\eqref{eq:exmonomial}, equipped with the pre-P\"oppe product in Lemma~\ref{lemma:Poppeprodmonomials}, the \emph{pre-P\"oppe algebra}.
  The monomial $\lb V\rb$ is \emph{excluded} from the algebra. 
\end{definition}
\begin{remark}
   We do not require the pre-P\"oppe algebra to be unital. And though `$\lb V\rb$' is excluded from the algbera, the monomials $\lb VP_{1,0}V\rb$ and $\lb VP_{0,1}V\rb$ are included.
\end{remark}
\begin{remark}\label{rmk:multifactor}
Some linear combinations of monomials of the form \eqref{eq:exmonomial} generate multi-factor products. Such multi-factor products are necessarily of the form,
\begin{equation*}
\lb\cdot\rb(z,0;x,\tbf)\lb\cdot\rb(0,0;x,\tbf)\cdots\lb\cdot\rb(0,0;x,\tbf)\lb\cdot\rb(0,\zeta;x,\tbf),
\end{equation*}
which hereafter, we represent by $\lb\cdot\rb\lb\cdot\rb\cdots\lb\cdot\rb\lb\cdot\rb$.
Further, though we have some leeway to leave $z$ and $\zeta$ `free', strictly speaking, we suppose the monomials \eqref{eq:exmonomial} in the pre-P\"oppe algebra are all evaluated at $z=\zeta=0$.
The pre-P\"oppe algebra remains consistent.
\end{remark}

\section{Lax, Zakharov--Shabat and Sato formulations}\label{sec:Satoindirect}
We outline the Lax, Zakharov--Shabat and Sato formulations for the noncommutative KP hierarchy. Modulo some constraints outlined, these are all essentially equivalent.
We also introduce the Sato--Wilson dressing transformation and demonstrate direct linearisation of the noncommutative KP hierarchy using this approach.
Much of the material herein is based on, or derived from, Kodama~\cite[Ch.~2]{Kodama} and P\"oppe and Sattinger~\cite{PoppeSattinger}.

Let us begin with the Lax formulation. Suppose $L$ is the pseudo-differential operator defined by,  
\begin{equation}\label{eq:L}
L\coloneqq \pa+u_1\pa^{-1}+u_2\pa^{-2}+\cdots,
\end{equation}
where $u=u_i(x,\tbf)$, with $x\in\R$, and where $\tbf$ represents the set of variables $\tbf=(t_1,t_2,t_3,\cdots)$, and $\pa\coloneqq\pa_x$.
The operator $\pa^{-1}$ represents the formal inverse of the operator $\pa$.
For $n\in\mathbb Z$, the usual Leibniz rule applies:
\begin{equation*}
\pa^n u=\sum_{k\geqslant0}C^n_k(\pa_x^k u)\pa^{n-k}.
\end{equation*}
For example, we have $\pa u=\pa_xu+u\pa$ and also that $\pa^{-1}u=u\pa^{-1}-u_x\pa^{-2}+u_{xx}\pa^{-3}-\cdots$, and so forth.
\begin{definition}[Lax formulation]\label{def:Laxform}
With $B_n\coloneqq (L^n)_+$, the \emph{Lax formulation} of the noncommutative KP hierarchy is given by the infinite set of equations for all $n\in\mathbb N$:
\begin{equation}\label{eq:KPLaxform}
\pa_{t_n}L=[B_n,L].
\end{equation}
Here, $(L^n)_+$ represents the polynomial part of $L^n$ so that $(L^n)_+$ is a pure differential operator, and $[B_n,L]\coloneqq B_n L-L B_n$ is the standard commutator for operators.
\end{definition}
The case $n=1$ in the definition above generates the equations, $\pa_{t_1}u_k=\pa_x u_k$, for all $k\in\mathbb N$.
Thus, as mentioned in the Introduction, we identify $t_1$ and $x$.
The cases $n=2,3,\ldots$ and so forth generate a coupled infinite set of evolution equations for the $u_k$ in $t_n$.
\begin{definition}[Zakharov--Shabat formulation]\label{def:ZSform}
  For any $n\in\mathbb N$, consider the pure differential operators in \eqref{eq:BopsPoppe}.
  The Zakharov--Shabat formulation of the noncommutative KP hierarchy is then given by the following system of equations for all $n,m\in\mathbb N$: $\pa_{t_n}B_m-\pa_{t_m}B_n=[B_n,B_m]$.
\end{definition}

\begin{remark}
  We note:
  (i) An equivalent Zakharov--Shabat formulation is, $[\pa_{t_n}-B_n,\pa_{t_m}-B_m]=0$;
  (ii) If $B_n\coloneqq (L^n)_+$, then a consequence of the Zakharov--Shabat formulation is that the flows defined by the Lax formulation~\eqref{eq:KPLaxform},
  commute for any $n,m\in\mathbb N$: $\pa_{t_n}\pa_{t_m}L=\pa_{t_m}\pa_{t_n}L$;
  (iii) Again, if $B_n\coloneqq (L^n)_+$, then the Zakharov--Shabat formulation directly follows from the Lax formulation, see for example Kodama~\cite[Th.~2.1]{Kodama}; and
  (iv) That the Lax formulation follows from the Zakharov--Shabat formulation ``is true up to a triangular change of time variables''; quoting from Krichever and Zabrodin~\cite{KZ}.
\end{remark}
\begin{example}
In the Zakharov--Shabat formulation in Definition~\ref{def:ZSform}, if we take $n=2$ and $m=3$ in which $B_2\coloneqq\pa_x^2+b_{2,0}$ and $B_3\coloneqq\pa_x^3+b_{3,1}\pa_x+b_{3,0}$,
then with $t_2$ identified as $y$, and $t_3$ identified as $t$, we eventually obtain the noncommutative KP equation~\eqref{eq:ncKP}.
\end{example}

Let us now outline the Sato--Wilson \emph{dressing transformation}.
It is possible to gauge transform $L$ into $\pa_x$.
Indeed, we can transform, $L\mapsto \pa_x=W^{-1} LW$, where $W$ is the Sato--Wilson operator~\eqref{eq:gaugeexpansionPoppe}. 
The coefficients $u_i$ of $L$ are related to the coefficients $w_k$ of $W$ via the relation $LW=W\pa_x$.
Thus for example, $u_1=\pa_xw_1$, $u_2=\pa_xw_2+u_1w_1$, and so forth.
In particular, we can consider the $w_k$ variables to be the \emph{primary variables} that determine the noncommutative KP hierarchy.
With this realisation in hand, we give the Sato formulation of the noncommutative KP hierarchy as follows.
\begin{definition}[Sato formulation]\label{def:Satoform}
  The Sato formulation of the noncommutative KP hierarchy is given as follows.
  For $n\in\mathbb N$, consider the system of equations, 
  \begin{equation}\label{eq:Satoform}
  \pa_{t_n}W=B_nW-W\pa_x^n,
  \end{equation}
  where the operator $B_n$ is given by $B_n=(W\pa_x^n W^{-1})_+$. 
\end{definition}
\begin{remark}
We can also take the operators $B_n$ to have the form \eqref{eq:BopsPoppe}, as we did in the Introduction.
\end{remark}
We quote the following result from Kodama~\cite[Th.~2.2]{Kodama}.
\begin{theorem}
If the Sato--Wilson operator $W$ satisfies the Sato equation~\eqref{eq:Satoform}, then the operator $L=W\pa W^{-1}$ satisfies the Lax equation~\eqref{eq:KPLaxform}
for the noncommutative KP hierarchy, and the operators $B_n=(W\pa ^nW^{-1})_+=(L^n)_+$ satisfy the Zakharov--Shabat equations in Definition~\ref{def:ZSform}.
\end{theorem}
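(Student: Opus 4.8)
The statement is Kodama's Theorem~2.2 \cite{Kodama}, and the plan is to reproduce the standard pseudo-differential operator argument in the Sato algebra. Throughout, write $(\cdot)_-$ for the projection complementary to $(\cdot)_+$, i.e.\ onto pseudo-differential operators of order at most $-1$, so that any $A$ splits uniquely as $A=A_++A_-$; recall that the commutator of two purely differential operators is again purely differential, and that the commutator of two operators of order at most $-1$ has order at most $-2$. Note also that $t_n$-differentiation commutes with the projections $(\cdot)_\pm$, since the splitting is by powers of $\pa$. The identification of the $B_n$ is immediate and uses nothing: since $L=W\pa W^{-1}$, telescoping gives $L^n=W\pa^n W^{-1}$ for every $n\in\mathbb N$, hence $B_n=(W\pa^n W^{-1})_+=(L^n)_+$ identically.

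\emph{Lax from Sato.} Differentiating $L=W\pa W^{-1}$ with respect to $t_n$ and using $\pa_{t_n}(W^{-1})=-W^{-1}(\pa_{t_n}W)W^{-1}$, one rewrites the result in terms of $L$ as $\pa_{t_n}L=\bigl[(\pa_{t_n}W)W^{-1},L\bigr]$. Inserting the Sato equation \eqref{eq:Satoform} gives $(\pa_{t_n}W)W^{-1}=B_n-W\pa^n W^{-1}=B_n-L^n$, and since $[L^n,L]=0$ this collapses to $\pa_{t_n}L=[B_n,L]$, which is the Lax equation \eqref{eq:KPLaxform}.

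\emph{Zakharov--Shabat from Lax.} Both $\pa_{t_n}$ and $[B_n,\,\cdot\,]$ are derivations of the Sato algebra agreeing on $L$, hence they agree on all powers: $\pa_{t_n}L^m=[B_n,L^m]$. Write $M_m\coloneqq(L^m)_-$, so $L^m=B_m+M_m$. Taking purely differential parts of $\pa_{t_n}L^m=[B_n,L^m]$ and using that $[B_n,B_m]$ is purely differential yields
\[
  \pa_{t_n}B_m=[B_n,B_m]+\bigl([B_n,M_m]\bigr)_+,
\]
and symmetrically with $n\leftrightarrow m$. Subtracting these, and then substituting the identity
\[
  [B_n,B_m]=[L^n-M_n,\,L^m-M_m]=-[B_n,M_m]+[B_m,M_n]-[M_n,M_m]
\]
(which uses $[L^n,L^m]=0$ together with $B_k=L^k-M_k$), the positive-part terms recombine with the bare commutators and one is left with
\[
  \pa_{t_n}B_m-\pa_{t_m}B_n-[B_n,B_m]=-\bigl([B_n,M_m]\bigr)_-+\bigl([B_m,M_n]\bigr)_--[M_n,M_m].
\]
The left-hand side is purely differential; the right-hand side has order at most $-1$; such an operator is zero, so both sides vanish, which is precisely $\pa_{t_n}B_m-\pa_{t_m}B_n=[B_n,B_m]$.

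The only delicate point, and the main obstacle, is the bookkeeping in the last display: one must verify that after substituting the identity for $[B_n,B_m]$ every surviving term on the right is manifestly of order at most $-1$ (the terms $([B_n,M_m])_-$ and $([B_m,M_n])_-$ by construction, and $[M_n,M_m]$ because it has order at most $-2$), while the left-hand side is manifestly a differential operator, so that membership in the intersection of the differential operators with the operators of order at most $-1$ forces vanishing. The remaining ingredients---that $[B_n,B_m]$ is differential, that $\pa_{t_n}$ distributes over powers of $L$, and that $\pa_{t_n}$ commutes with $(\cdot)_\pm$---are routine once stated.
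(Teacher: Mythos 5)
The paper does not prove this theorem---it simply quotes it from Kodama~\cite[Th.~2.2]{Kodama}---and your argument is the standard pseudo-differential one from that source, carried out correctly: the derivation of $\pa_{t_n}L=[(\pa_{t_n}W)W^{-1},L]$, the collapse via $[L^n,L]=0$, and the order-counting that forces the Zakharov--Shabat residual to vanish are all sound (and the order estimates you invoke remain valid with noncommutative coefficients). Nothing to correct.
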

For convenience, we set:
\begin{equation*}
A_n^\ast\coloneqq\pa_{t_n}-\pa_x^n\qquad\text{and}\qquad A_n\coloneqq\pa_{t_n}-B_n.
\end{equation*}
Then we also have the following.
\begin{lemma}\label{eq:Satodressing}
The Sato equations~\eqref{eq:Satoform}, are equivalent to the statement, 
\begin{equation*}
WA_n^\ast W^{-1}=A_n. 
\end{equation*}
\end{lemma}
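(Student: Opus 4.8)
The plan is to verify the operator identity $WA_n^\ast W^{-1}=A_n$ directly by conjugating $A_n^\ast=\pa_{t_n}-\pa_x^n$ and comparing with $A_n=\pa_{t_n}-B_n$. The only subtlety is that $\pa_{t_n}$ is a differential operator in the time variable, so conjugating it by $W$ produces an extra term involving $\pa_{t_n}W$; this is precisely where the Sato equation enters.

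First I would compute $W\pa_{t_n}W^{-1}$ acting on an arbitrary pseudo-differential operator (or test object) $\phi$. Using the Leibniz rule, $\pa_{t_n}(W^{-1}\phi)=(\pa_{t_n}W^{-1})\phi+W^{-1}(\pa_{t_n}\phi)$, and differentiating $WW^{-1}=\id$ gives $\pa_{t_n}W^{-1}=-W^{-1}(\pa_{t_n}W)W^{-1}$. Hence
\begin{equation*}
W\pa_{t_n}W^{-1}=\pa_{t_n}-(\pa_{t_n}W)W^{-1}.
\end{equation*}
Next, the $\pa_x^n$ term conjugates without any derivative-of-$W$ correction: $W\pa_x^nW^{-1}$ is just the pseudo-differential operator $L^n$ if $L=W\pa_xW^{-1}$, but more elementarily we simply keep it as $W\pa_x^nW^{-1}$. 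Combining the two pieces,
\begin{equation*}
WA_n^\ast W^{-1}=\pa_{t_n}-(\pa_{t_n}W)W^{-1}-W\pa_x^nW^{-1}.
\end{equation*}

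Then I would show the right-hand side equals $A_n=\pa_{t_n}-B_n$, i.e.\ that $(\pa_{t_n}W)W^{-1}+W\pa_x^nW^{-1}=B_n$. Multiplying the Sato equation~\eqref{eq:Satoform}, $\pa_{t_n}W=B_nW-W\pa_x^n$, on the right by $W^{-1}$ gives exactly $(\pa_{t_n}W)W^{-1}=B_n-W\pa_x^nW^{-1}$, which rearranges to the desired identity. Conversely, reversing these steps shows that if $WA_n^\ast W^{-1}=A_n$ then $\pa_{t_n}W=B_nW-W\pa_x^n$, so the equivalence holds in both directions. The argument is purely formal manipulation within the Sato algebra of pseudo-differential operators, so there is no real analytic obstacle; the one point requiring care is the sign and placement of the factors in $\pa_{t_n}W^{-1}=-W^{-1}(\pa_{t_n}W)W^{-1}$ and making sure the conjugation identity $W\pa_{t_n}W^{-1}=\pa_{t_n}-(\pa_{t_n}W)W^{-1}$ is applied consistently (it is an operator identity, valid because $\pa_{t_n}$ acts by Leibniz on products). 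That bookkeeping is the only thing to watch; everything else is immediate from the definitions of $A_n^\ast$, $A_n$, and the Sato equation.
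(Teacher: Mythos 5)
Your proof is correct and follows essentially the same route as the paper: both arguments reduce the claimed conjugation identity to the Sato equation $\pa_{t_n}W=B_nW-W\pa_x^n$ by formal manipulation in the algebra of pseudo-differential operators, the key point being the Leibniz rule for $\pa_{t_n}$ acting on operator products. Your version is, if anything, more explicit than the paper's about the bookkeeping step $W\pa_{t_n}W^{-1}=\pa_{t_n}-(\pa_{t_n}W)W^{-1}$, which the paper compresses into the convention ``$W\pa_{t_n}=0$''.
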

\begin{proof}
  We observe that, $WA_n^\ast=A_nW$, is equivalent to, $W\pa_{t_n}-W\pa_x^n=\pa_{t_n}W-B_nW$, which, via the Sato equations~\eqref{eq:Satoform}, is equivalent to $W\pa_{t_n}-W\pa_x^n=-W\pa_x^n$.
  Since $W\pa_{t_n}=0$, the result follows.
\qed
\end{proof}

Let us now consider the more classical formulation of the dressing transform, see P\"oppe and Sattinger~\cite{PoppeSattinger}.
We define the Volterra integral operators $K_{\pm}$ as follows:
\begin{align*}
\bigl(K_+\psi\bigr)(x)\coloneqq\int_x^\infty \hat K_+(x,\xi)\psi(\xi)\,\rd \xi,\\
\bigl(K_-\psi\bigr)(x)\coloneqq\int_{-\infty}^x \hat K_-(x,\xi)\psi(\xi)\,\rd \xi,
\end{align*}
where the $\hat{K}_{\pm}$ represent the corresponding kernels, with $\hat{K}_+(x,\xi)=0$ if $\xi<x$ and $\hat{K}_-(x,\xi)=0$ if $\xi>x$. 
\emph{Suppose} that `$\id+K_-$' and `$\id+K_+$' both \emph{dress} the operator $A_n^\ast$ to $A_n$, i.e.\/ we have, 
\begin{equation}\label{eq:classicaldressing}
A_n(\id+K_\pm)=(\id+K_\pm)A_n^\ast. 
\end{equation}
In other words, both $\id+K_+$ and $\id+K_-$ are representatives of $W$.
\begin{definition}[Scattering operator]\label{def:scatteringoperator}
We define the \emph{scattering operator} $P$ as the operator given by,
\begin{equation}\label{eq:scatteringoperator}
\id-P\coloneqq(\id+K_-)^{-1}(\id+K_+). 
\end{equation}
\end{definition}
\begin{theorem}[Base equations]\label{thm:baseeqns}
The dressing equations \eqref{eq:classicaldressing} hold if and only if,
\begin{equation}\label{eq:baseeqns}
[\id-P,A_n^\ast]=0. 
\end{equation}
\end{theorem}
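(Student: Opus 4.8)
The plan is to prove the equivalence by a short, purely algebraic manipulation of the two dressing relations \eqref{eq:classicaldressing}, using only the definition \eqref{eq:scatteringoperator} of the scattering operator, $\id-P=(\id+K_-)^{-1}(\id+K_+)$, together with the invertibility of the operators $\id+K_\pm$, which holds under the standing hypotheses ($\id+K_\pm$ being representatives of the Sato--Wilson operator $W$, whose leading term is $\id$). Throughout, $A_n^\ast=\pa_{t_n}-\pa_x^n$ and $A_n=\pa_{t_n}-B_n$ are treated as operators, and conjugation by $\id+K_\pm$ is well defined, so every rearrangement below is legitimate.

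First I would rewrite each half of \eqref{eq:classicaldressing} as a conjugation: the $K_+$-relation $A_n(\id+K_+)=(\id+K_+)A_n^\ast$ is equivalent to $A_n=(\id+K_+)A_n^\ast(\id+K_+)^{-1}$, and likewise the $K_-$-relation to $A_n=(\id+K_-)A_n^\ast(\id+K_-)^{-1}$. Hence \emph{both} dressing equations hold, with the common value $A_n$, if and only if
\[
(\id+K_+)A_n^\ast(\id+K_+)^{-1}=(\id+K_-)A_n^\ast(\id+K_-)^{-1}.
\]
Left-multiplying this identity by $(\id+K_-)^{-1}$ and right-multiplying by $(\id+K_+)$ turns it into $(\id+K_-)^{-1}(\id+K_+)\,A_n^\ast=A_n^\ast\,(\id+K_-)^{-1}(\id+K_+)$, i.e.\ $(\id-P)A_n^\ast=A_n^\ast(\id-P)$, which is exactly \eqref{eq:baseeqns}. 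Since every step is reversible, this simultaneously yields both implications. For the forward direction one may also argue more directly: from the $K_-$-relation, $(\id+K_-)^{-1}A_n(\id+K_+)=A_n^\ast(\id+K_-)^{-1}(\id+K_+)=A_n^\ast(\id-P)$, while from the $K_+$-relation the same expression equals $(\id+K_-)^{-1}(\id+K_+)A_n^\ast=(\id-P)A_n^\ast$, and comparing gives \eqref{eq:baseeqns}.

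The one point I would be careful about is the status of $A_n$, equivalently of $B_n$, in the ``if'' direction: one must check that the operator to which both $\id+K_\pm$ dress $A_n^\ast$ is the intended $A_n=\pa_{t_n}-B_n$, with $B_n=(W\pa^nW^{-1})_+$ a genuine differential operator of the form \eqref{eq:BopsPoppe}, rather than merely an abstract conjugate. This, however, is supplied by the standing setup: $\id+K_+$ being a representative of $W$, Lemma~\ref{eq:Satodressing} gives $(\id+K_+)A_n^\ast(\id+K_+)^{-1}=WA_n^\ast W^{-1}=A_n$, which pins $A_n$ down, and the computation above then shows the $K_-$-dressing holds with this same $A_n$ precisely when $[\id-P,A_n^\ast]=0$. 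Modulo this identification I expect no further obstacle; the substance of the theorem is simply the recognition that comparing the two dressings of $A_n^\ast$ is the same as commuting $A_n^\ast$ with $\id-P$.
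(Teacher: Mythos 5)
Your proof is correct and follows essentially the same route as the paper's: both directions reduce to conjugating $A_n^\ast$ by $\id+K_\pm$ and using $\id-P=(\id+K_-)^{-1}(\id+K_+)$ together with invertibility, and your ``more direct'' forward argument is precisely the paper's own computation for the implication from the dressing equations to $[\id-P,A_n^\ast]=0$. Your closing remark pinning down the common conjugate as the intended $A_n=\pa_{t_n}-B_n$ via Lemma~\ref{eq:Satodressing} addresses a point the paper leaves implicit in its standing hypotheses, and is a welcome clarification rather than a deviation.
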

\begin{proof}
  We follow P\"oppe and Sattinger~\cite{PoppeSattinger}.
  First we demonstrate the base equations~\eqref{eq:baseeqns} imply the dressing equations~\eqref{eq:classicaldressing}. 
  We observe that, 
  \begin{align*}
    (\id+K_-)A_n^\ast&(\id+K_-)^{-1}\\
                   =&\;(\id+K_-)A_n^\ast(\id-P)(\id+K_+)^{-1}\\
                   =&\;(\id+K_-)(\id-P)A_n^\ast(\id+K_+)^{-1}\\
                   =&\;(\id+K_+)A_n^\ast(\id+K_+)^{-1},
  \end{align*}
  giving the result. Now we demonstrate that the dressing equations~\eqref{eq:classicaldressing} imply the base equations~\eqref{eq:baseeqns}. 
  We observe that $(\id-P)A_n^\ast-A_n^\ast(\id-P)$ equals,
  \begin{align*}
     &\;(\id+K_-)^{-1}(\id+K_+)A_n^\ast-A_n^\ast(\id+K_-)^{-1}(\id+K_+)\\
    =&\;(\id+K_-)^{-1}A_n(\id+K_+)-(\id+K_-)^{-1}A_n(\id+K_+),
  \end{align*}
  giving the zero operator and thus the second result.
\qed
\end{proof}
\begin{corollary}[Base equations equivalent]\label{cor:samebaseeqns}
  The base equations~\eqref{eq:baseeqns}, i.e.\/ $[\id-P,A_n^\ast]=0$, correspond precisely to the base equations~\eqref{eq:linearform}, i.e.\/ $ \pa_{t_n}p=\pa_{\el}^np-(-1)^n\pa_{\rl}^np$.
\end{corollary}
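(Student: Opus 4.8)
\emph{Proof proposal.} The substantive work has already been done in Theorem~\ref{thm:baseeqns}; what remains is to translate the operator identity $[\id-P,A_n^\ast]=0$ into a condition on the kernel $p$ of the scattering operator $P$. Since the identity operator is central, $[\id-P,A_n^\ast]=-[P,A_n^\ast]$, so~\eqref{eq:baseeqns} is equivalent to the assertion that $P$ commutes with $A_n^\ast=\pa_{t_n}-\pa_x^n$. The plan is to compute the kernel of $[P,A_n^\ast]$ and recognise it as the relation in~\eqref{eq:linearform}.

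I would split $[P,A_n^\ast]=[P,\pa_{t_n}]-[P,\pa_x^n]$ and evaluate each commutator at the level of kernels. For the first, since $\pa_{t_n}$ acts only on the explicit $t_n$-dependence and not on the spatial variable, $[P,\pa_{t_n}]$ is minus the operator with kernel $\pa_{t_n}p$. For the second, composing $\pa_x$ with $P$ on the left differentiates the kernel in its first argument, which by definition of the left derivative is $\pa_\el$, whereas composing $\pa_x$ with $P$ on the right and integrating by parts $n$ times produces $(-1)^n$ times the $n$-fold derivative of the kernel in its second argument, namely $(-1)^n\pa_\rl^n$, the boundary contributions vanishing by the decay and regularity of the kernels involved. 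Hence $[P,\pa_x^n]$ has kernel $(-1)^n\pa_\rl^n p-\pa_\el^n p$, and, collecting terms, the kernel of $[P,A_n^\ast]$ equals
\begin{equation*}
-\pa_{t_n}p+\pa_\el^n p-(-1)^n\pa_\rl^n p .
\end{equation*}

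Setting this kernel to zero is exactly $\pa_{t_n}p=\pa_\el^n p-(-1)^n\pa_\rl^n p$, which is~\eqref{eq:linearform}; conversely, if $p$ satisfies~\eqref{eq:linearform}, then reading the same chain of equalities backwards gives $[\id-P,A_n^\ast]=0$. The only step that requires genuine care is the bookkeeping of the boundary terms arising in the integration by parts: once these are dispatched using the decay and smoothness of the kernels, the correspondence is exact, and the remainder is routine kernel manipulation. The left and right derivative notation here is the same one later applied to the semi-additive kernel $p=p(z+x,\zeta+x;\tbf)$, so~\eqref{eq:linearform} is literally the base equation in that setting as well.
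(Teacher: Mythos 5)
Your proposal is correct and follows essentially the same route as the paper: both reduce $[\id-P,A_n^\ast]=0$ to $P$ commuting with $\pa_{t_n}-\pa_x^n$, identify $[P,\pa_{t_n}]$ with $-\pa_{t_n}P$, and convert $\pa_x^nP-P\pa_x^n$ into $\pa_\el^nP-(-1)^n\pa_\rl^nP$ at the kernel level via the $n$-fold integration by parts. Your explicit flagging of the vanishing boundary terms is a small point the paper leaves implicit, but the argument is the same.
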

\begin{proof}
  We observe from Theorem~\ref{thm:baseeqns}, that the dressing equations \eqref{eq:classicaldressing} for $G^\prime\coloneqq K_-$ hold if and only if,
  \begin{align*}
    [\id-P,A_n^\ast]=0
    \quad\Leftrightarrow\quad & A_n^\ast-PA_n^\ast=A_n^\ast-A_n^\ast P\\
    \quad\Leftrightarrow\quad & P(\pa_{t_n}-\pa_x^n)=(\pa_{t_n}-\pa_x^n)P\\
    \quad\Leftrightarrow\quad & \pa_{t_n}P=\pa_x^nP-P\pa_x^n\\
    \quad\Leftrightarrow\quad & \pa_{t_n}P=\bigl(\pa_{\el}^nP-(-1)^n\pa_{\rl}^nP\bigr).
  \end{align*}
  Here for an operator $P$ with kernel $p=p(x,\xi)$, we have used the notation $\pa_{\el}p=\pa_x p$ and $\pa_{\rl}p=\pa_\xi p$.
\qed
\end{proof}
\begin{remark}
  The latter form of the base equations, namely~\eqref{eq:linearform}, are the base equations considered by P\"oppe and Sattinger~\cite[eq.~2.4]{PoppeSattinger} and P\"oppe~\cite[eq.~3.6]{PKP}.
\end{remark}
\begin{lemma}[GLM equation]\label{lemma:GLM}
The operator $G^\prime\coloneqq K_-$ satisfies the Gelfand--Levitan--Marchenko (GLM) equation,
\begin{equation}\label{eq:GLM}
P=G^\prime(\id-P). 
\end{equation}
\end{lemma}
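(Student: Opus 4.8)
The plan is to unwind the definitions of $K_\pm$ and $P$ and verify the identity $P=G'(\id-P)$ directly, where $G'\coloneqq K_-$. The key observation is that the scattering operator was \emph{defined} in Definition~\ref{def:scatteringoperator} precisely by the relation $\id-P=(\id+K_-)^{-1}(\id+K_+)$, so the whole lemma should follow by a short algebraic manipulation of this defining equation. First I would left-multiply the defining relation by $(\id+K_-)$ to obtain $(\id+K_-)(\id-P)=\id+K_+$. Expanding the left-hand side gives $\id-P+K_-(\id-P)=\id+K_+$, and cancelling the $\id$ on both sides yields $K_-(\id-P)=P+K_+$, i.e.\ $G'(\id-P)=P+K_+$.

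The remaining point is to explain why the $K_+$ term drops out when we pass to the appropriate kernel, or — more precisely in this operator formulation — why the relevant components of $P+K_+$ and $P$ agree. The mechanism is the Volterra (triangularity) structure: $\hat K_+(x,\xi)=0$ for $\xi<x$ while $\hat K_-(x,\xi)=0$ for $\xi>x$, and the scattering kernel $p$ of $P$ is supported (after the semi-additive reduction, on $z,\zeta\in(-\infty,0]$) in the region complementary to $K_+$. Thus $K_-(\id-P)$, which is a lower-triangular operator composed with $\id-P$, must reproduce $P$ in its lower-triangular part and $K_+$ in its upper-triangular part; projecting onto the lower-triangular (i.e.\ $\xi<x$, equivalently the $(-\infty,0]^{\times 2}$) component kills $K_+$ and leaves exactly $P=G'(\id-P)$. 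I would phrase this by applying the projection onto the half-line kernels to the identity $G'(\id-P)=P+K_+$, noting $P$ already lives there and $K_+$ projects to zero.

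The step I expect to be the main obstacle is being careful about exactly which projection / restriction of variables makes $K_+$ vanish, and ensuring that $G'(\id-P)$ and $P$ genuinely have kernels in $L^2((-\infty,0]^{\times2};\R^{m\times m})$ so that the GLM equation~\eqref{eq:GLM} holds as an identity in $\Jf_2(\Vb)$ rather than merely formally; this is where the hypotheses that $P$ is Hilbert--Schmidt and $\det_2(\id-P)\neq0$ (so that $V=(\id-P)^{-1}$ exists and $G'$ is well defined) are used, together with the support properties of the Volterra kernels. Once that bookkeeping is in place, the proof is just the two-line rearrangement above; indeed, this is essentially the classical derivation of the Marchenko equation from the factorisation~\eqref{eq:scatteringoperator}, and I would cite P\"oppe and Sattinger~\cite{PoppeSattinger} for the analytic details while giving the algebraic core here.

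\begin{proof}
From Definition~\ref{def:scatteringoperator} we have $(\id+K_-)(\id-P)=\id+K_+$. Expanding and cancelling $\id$ gives $K_-(\id-P)=P+K_+$, that is $G'(\id-P)=P+K_+$. Restricting to kernels on $(-\infty,0]^{\times2}$ and using the Volterra support property $\hat K_+(x,\xi)=0$ for $\xi<x$, the contribution of $K_+$ vanishes on this component, whence $P=G'(\id-P)$, which is~\eqref{eq:GLM}. See P\"oppe and Sattinger~\cite{PoppeSattinger} for the analytic details.
\qed
\end{proof}
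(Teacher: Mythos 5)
Your proof is correct and follows essentially the same route as the paper's: expand the defining factorisation $(\id+K_-)(\id-P)=\id+K_+$, cancel the identity, and use the Volterra support property $\hat K_+(x,\xi)=0$ for $\xi<x$ to discard the $K_+$ term on the relevant (lower-triangular) kernel component. The paper's version is simply a terser rendering of the same two-line rearrangement, with the same one-sentence appeal to the support of $\hat K_+$.
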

\begin{proof}
  By definition, the scattering operator satisfies,
  \begin{align*}
                    &&  (\id+K_-)(\id-P)&=\id+K_+\\
    \Leftrightarrow && K_--K_-P-P&=K_+\\
    \Leftrightarrow && K_--K_-P&=P.
  \end{align*}
  In the last step we used that $\hat{K}_+(x,\xi)=0$ when $\xi<x$. 
\qed
\end{proof}
Recall our choice of `$\id+K_-$' in~\eqref{eq:classicaldressing} as one representative for the gauge transformation $W$.
In other words, an example operator $W$ in the gauge transformation, $L\mapsto \pa_x=W^{-1} LW$, is $W=\id+K_-=\id+G^\prime$. 
If we combine this fact with Lemma~\ref{eq:Satodressing}, Corollary~\ref{cor:samebaseeqns} and Lemma~\ref{lemma:GLM}, we establish the following. 
\begin{theorem}[Direct linearisation]\label{thm:DLviadressing}
  Suppose $G^\prime$ is the solution to the GLM equation~\eqref{eq:GLM}, i.e.\/ $P=G^\prime(\id-P)$, where the kernel $p$, of the operator $P$, satisfies the linearised equations~\eqref{eq:linearform}.
  Then the coefficients $w_k=w_k(x,\tbf)$ of $W=\id+K_-=\id+G^\prime$, satisfy the Sato system~\eqref{eq:Satoform}, and thus the noncommutative KP hierarchy.
\end{theorem}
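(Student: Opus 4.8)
The plan is to assemble Theorem~\ref{thm:DLviadressing} purely from the pieces already established in Section~\ref{sec:Satoindirect}, since each required implication has been isolated as a separate lemma or corollary. First I would start from the hypothesis that $G^\prime$ solves the GLM equation $P=G^\prime(\id-P)$ with $P$ having kernel $p$ satisfying the linearised base equations~\eqref{eq:linearform}. By Corollary~\ref{cor:samebaseeqns}, the base equations~\eqref{eq:linearform} are equivalent to the operator identity $[\id-P,A_n^\ast]=0$ for all $n\in\mathbb N$, so the hypothesis immediately gives us~\eqref{eq:baseeqns}. Then by Theorem~\ref{thm:baseeqns}, the base equations~\eqref{eq:baseeqns} are equivalent to the dressing equations~\eqref{eq:classicaldressing}, namely $A_n(\id+K_\pm)=(\id+K_\pm)A_n^\ast$; in particular this holds for the choice $\id+K_-=\id+G^\prime$, which is consistent with Lemma~\ref{lemma:GLM}, where it is shown that $G^\prime=K_-$ is precisely the operator satisfying the GLM equation~\eqref{eq:GLM}.

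Next I would identify $W\coloneqq\id+K_-=\id+G^\prime$ as a representative of the Sato--Wilson gauge operator. Writing the dressing relation in the form $WA_n^\ast W^{-1}=A_n$ and invoking Lemma~\ref{eq:Satodressing}, which states that this identity is equivalent to the Sato equations~\eqref{eq:Satoform}, we conclude that $W$ satisfies $\pa_{t_n}W=B_nW-W\pa_x^n$ for all $n$, with $B_n=(W\pa_x^nW^{-1})_+$. Expanding $W=\id-\sum_{m\geqslant1}w_m\pa^{-m}$ as in~\eqref{eq:gaugeexpansionPoppe} and equating coefficients of the negative powers of $\pa_x$ then yields exactly the polynomial relations among the $\pa_{t_n}$- and $\pa_x$-derivatives of the $w_k$ that constitute the noncommutative KP hierarchy; via the further quoted result (Kodama~\cite[Th.~2.2]{Kodama}), $L=W\pa W^{-1}$ satisfies the Lax equations and the $B_n$ satisfy the Zakharov--Shabat equations, so calling this ``the noncommutative KP hierarchy'' is justified.

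The only genuine subtlety — and the step I would be most careful about — is that $\id+K_-$ is merely \emph{one} representative of $W$, defined via a Volterra operator rather than the pseudo-differential expansion~\eqref{eq:gaugeexpansionPoppe}; one must check that the kernel $\hat K_-(x,\xi)$ does indeed encode coefficients $w_k(x,\tbf)$ compatible with that expansion, i.e. that the two avatars of the dressing transformation genuinely coincide as gauge operators. This is exactly the content of the classical passage between the integral-operator and pseudo-differential-operator forms of Sato--Wilson dressing (cf. P\"oppe and Sattinger~\cite{PoppeSattinger}), and I would either cite it or sketch that the triangularity $\hat K_+(x,\xi)=0$ for $\xi<x$, together with the Volterra structure of $K_-$, forces the expansion of $\id+K_-$ in powers of $\pa^{-1}$ to match~\eqref{eq:gaugeexpansionPoppe}. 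Modulo that identification, the theorem follows by concatenating the equivalences: linearised equations $\Leftrightarrow$ $[\id-P,A_n^\ast]=0$ $\Leftrightarrow$ dressing equations $\Leftrightarrow$ Sato equations for $W=\id+G^\prime$, and the final sentence is just unwinding the definition of the hierarchy from the Sato equations.
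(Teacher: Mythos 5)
Your proposal is correct and follows essentially the same route as the paper, which obtains the theorem precisely by concatenating Corollary~\ref{cor:samebaseeqns}, Theorem~\ref{thm:baseeqns}, Lemma~\ref{lemma:GLM} and Lemma~\ref{eq:Satodressing} with the identification $W=\id+K_-=\id+G^\prime$. The subtlety you flag about reconciling the Volterra representative with the pseudo-differential expansion~\eqref{eq:gaugeexpansionPoppe} is handled by the paper in Section~\ref{sec:SatoPoppeconnection} via the integration-by-parts expansion leading to Theorem~\ref{thm:gaugecoeffsid}, so your instinct to address it explicitly is well placed.
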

\begin{remark}\label{rmk:extractingwk}
If we assume that $P$ is semi-additive, then there is a straightforward procedure for extracting the coefficients $w_k$ from the solution to the GLM equation, $G^\prime$.
See Theorem~\ref{thm:first} in Section~\ref{sec:SatoPoppeconnection}.
\end{remark}
\begin{remark}\label{rmk:baseeqns}
  For the same scattering data $P$ satisfying the linearised equations~\eqref{eq:linearform}, the GLM equation~\eqref{eq:GLM} for $G^\prime$ and the GLM equation~\eqref{eq:GLMeqintro} for $G$ are equivalent.
  We explore this connection in detail in Section~\ref{sec:SatoPoppeconnection}.
  The direct linearisation formulation in~\eqref{eq:linearform} and \eqref{eq:GLMeqintro}, matches the formulation for the KP equation by P\"oppe~\cite{PKP},
  and for the noncommutative KP equation by Blower and Malham~\cite{BM-KP}.
\end{remark}

\section{The Sato and pre-P\"oppe algebra connection}\label{sec:SatoPoppeconnection}
We establish the connection between the classical GLM equation for the noncommutative KP hierarchy
and the corresponding GLM equation based on a semi-additive form for the scattering operator $P$ in Section~\ref{sec:Poppealg}.
Let us denote the scattering operator in~\eqref{eq:scatteringoperator} by $P^\prime$, i.e.\/ $P^\prime$ is defined via $\id-P^\prime\coloneqq(\id+K_-)^{-1}(\id+K_+)$,
where $K_\pm$ both dress the base operator $\pa_{t_n}-\pa_x^n$ to $\pa_{t_n}-B_n$.
The classical GLM equation~\eqref{eq:GLM} then has the form,
\begin{equation*}
P^\prime=G^\prime(\id-P^\prime),
\end{equation*}
or more explicitly, in terms of the kernels $p^\prime$ and $g^\prime$,
\begin{equation}\label{eq:GLMclassical}
p^\prime(x,\xi;\tbf)=g^\prime(x,\xi;\tbf)-\int_{-\infty}^x g^\prime(x,\eta;\tbf)p^\prime(\eta,\xi;\tbf)\,\rd\eta.
\end{equation}
Herein, we define the GLM equation to be the linear integral equation in \eqref{eq:GLMeqintro}, i.e.\/ $P=G(\id-P)$, for the operator $G$ with kernel $g=g(z,\zeta;x,\tbf)$.
In this formulation, we assume the kernel of the scattering operator $P$ has the semi-additive form $p=p(z+x,\zeta+x;\tbf)$. 
In terms of the underlying kernels, setting $z=0$, we have that $g$ satisfies (suppressing the $\tbf$-dependence in $p$ and $g$),
\begin{equation}
  p(x,\zeta+x)=g(0,\zeta;x)-\int_{-\infty}^0\!\!g(0,\nu;x)p(\nu+x,\zeta+x)\,\rd\nu.\label{eq:GLMPoppe}
\end{equation}
If, in~\eqref{eq:GLMPoppe}, we set $\xi\coloneqq\zeta+x$ and $\eta\coloneqq\nu+x$, and then set,
\begin{subequations}\label{eq:primekernels}
\begin{align}
  g^\prime(x,\xi;\tbf)&\coloneqq g(0,\xi-x;x,\tbf),\\
  p^\prime(x,\xi;\tbf)&\coloneqq p(x,\xi;\tbf),\label{eq:pkernel}
\end{align}
\end{subequations}
then we observe that $g^\prime$ and $p^\prime$ satisfy~\eqref{eq:GLMclassical}.
It is straightforward to show we can also reverse this statement.
Indeed, we have established the following.
\begin{lemma}[Equivalent GLM formulations]
  Suppose $p=p(z+x,\zeta+x;\tbf)$ is semi-additive, and the fields $g^\prime$ and $p^\prime$, and, $g=g(0,\zeta;x,\tbf)$ and $p$, are related via~\eqref{eq:primekernels}.  
  Then, if $g$ satisfies \eqref{eq:GLMPoppe}, the field $g^\prime$ satisfies the classical GLM equation~\eqref{eq:GLMclassical}, and vice versa. 
\end{lemma}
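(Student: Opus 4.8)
The plan is to prove both implications by a single, invertible change of variables, so that the proof is essentially the bookkeeping already anticipated in the discussion above. First I would treat the forward direction: assuming the kernel $g$ satisfies \eqref{eq:GLMPoppe}, I deduce that $g^\prime$ satisfies the classical GLM equation \eqref{eq:GLMclassical}. The converse then follows by running exactly the same substitutions backwards, each being a bijection for every fixed value of the shift parameter $x$.

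For the forward direction I would start from \eqref{eq:GLMPoppe} and substitute $\xi\coloneqq\zeta+x$ in the two pointwise terms and $\eta\coloneqq\nu+x$ in the integral. Since $x$ is held fixed we have $\rd\nu=\rd\eta$, and the domain $\nu\in(-\infty,0]$ is carried to $\eta\in(-\infty,x]$, which is precisely the range appearing in \eqref{eq:GLMclassical}. It then remains to recognise the transformed factors via \eqref{eq:primekernels}: the free term $g(0,\zeta;x,\tbf)$ becomes $g(0,\xi-x;x,\tbf)=g^\prime(x,\xi;\tbf)$, while the kernel factor $g(0,\nu;x,\tbf)$ becomes $g(0,\eta-x;x,\tbf)=g^\prime(x,\eta;\tbf)$; here I would emphasise that the first slot of $g^\prime$ tracks the fixed shift $x$, not the dummy $\eta$, which is exactly what makes the transformed equation match \eqref{eq:GLMclassical} rather than some variant. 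Using semi-additivity together with \eqref{eq:pkernel} one likewise has $p(x,\zeta+x;\tbf)=p^\prime(x,\xi;\tbf)$ and $p(\nu+x,\zeta+x;\tbf)=p^\prime(\eta,\xi;\tbf)$. Inserting these identifications into \eqref{eq:GLMPoppe} reproduces \eqref{eq:GLMclassical}.

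For the converse I would begin with \eqref{eq:GLMclassical} and set $\zeta\coloneqq\xi-x$ and $\nu\coloneqq\eta-x$; these are the inverses of the maps above, the endpoint $x$ of the integral returns to $0$, and $\rd\eta=\rd\nu$. Applying \eqref{eq:primekernels} and \eqref{eq:pkernel} in the reverse direction then recovers \eqref{eq:GLMPoppe}. To justify the change of variables in the integral rigorously I would invoke the standing hypotheses of Section~\ref{sec:Poppealg}: $P$ is a Hilbert--Schmidt operator with continuous, square-integrable kernel and $G^\prime=K_-$ has a continuous kernel, so the integrand is absolutely integrable and the substitution is legitimate; semi-additivity of $P$ is exactly the property that lets the translation $\nu\mapsto\nu+x$ act on $p$ through its shifted arguments alone.

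I do not expect a genuine obstacle here. The only point that warrants care---and that I would state explicitly---is the dual role of $x$: it is simultaneously the right endpoint of the integration range in the classical formulation \eqref{eq:GLMclassical} and the additive shift built into the semi-additive kernel in the P\"oppe formulation \eqref{eq:GLMPoppe}, and keeping the two roles distinct is what makes the correspondence transparent.
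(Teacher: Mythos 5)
Your proposal is correct and follows essentially the same route as the paper: the change of variables $\xi=\zeta+x$, $\eta=\nu+x$ carrying the integration range $(-\infty,0]$ to $(-\infty,x]$, followed by the identifications \eqref{eq:primekernels}, is precisely the argument the paper gives just before stating the lemma, with the converse obtained by inverting the substitutions. Your extra remarks on integrability and on the dual role of $x$ are sound but not needed beyond what the paper records.
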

Hereafter, we suppose the scattering operator $P$ is semi-additive with kernel $p=p(z+x,\zeta+x;\tbf)$.

We observe, by successively integrating by parts, we can generate the following expansion for the operator $G^\prime$:
\begin{align*}
  \bigl(G^\prime\psi\bigr)(x;\tbf)=&\;\Bigl(\bigl(g^\prime(x,x;\tbf)\pa_x^{-1}-\bigl(\pa_{\rl}g^\prime(x,x;\tbf)\bigr)\pa_x^{-2}\\
  &\;~~~~~+\bigl(\pa_{\rl}^2g^\prime(x,x;\tbf)\bigr)\pa_x^{-3}-\cdots\bigr)\psi\Bigr)(x).
\end{align*}
Since by definition, $g^\prime(x,x;\tbf)=g(0,0;x,\tbf)$, we observe that coefficients of the terms involving $\pa_x^{-k}$ in the expansion above,  
are given by $(-1)^{k-1}\pa^{k-1}_{\rl}\lb G\rb(0,0;x,\tbf)$.

Recall from Theorem~\ref{thm:first} that $W=\id+K_-\equiv\id+G^\prime$.
Hence for $k\in\mathbb N$, the coefficients $w_k=w_k(x;\tbf)$, in the expansion for the gauge operator $W$ in~\eqref{eq:gaugeexpansionPoppe},
are $w_1(x;\tbf)=-g^\prime(x,x;\tbf)$, $w_2(x;\tbf)=\pa_{\rl}g^\prime(x,x;\tbf)$, and so forth.
We have thus established the following result.
\begin{theorem}[Gauge coefficient identities]\label{thm:gaugecoeffsid}
  The coefficients $w_k$ from the operator $W$ of the gauge transformation, are related to the solution $g$ of the P\"oppe form of the GLM equation~\eqref{eq:GLMPoppe}, for all $k\in\mathbb N$ via,
  \begin{equation}\label{eq:solutionansatz}
    w_k(x,\tbf)\equiv(-1)^{k}\pa^{k-1}_{\rl}\lb G\rb(0,0;x,\tbf).
  \end{equation}
\end{theorem}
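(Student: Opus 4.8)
The plan is to unwind the integration-by-parts expansion of $G'$ already displayed just before the statement, and then match powers of $\pa_x^{-1}$ against the gauge expansion $W=\id+K_-=\id+G'$ from Theorem~\ref{thm:first} (equivalently Theorem~\ref{thm:DLviadressing}, where $W=\id+G'$). First I would recall that for a Volterra kernel $\hat{K}_-(x,\xi)$ supported on $\xi\leqslant x$, acting on $\psi$, repeated integration by parts in the variable $\xi$ produces the asymptotic pseudo-differential expansion $G'\sim \sum_{k\geqslant1}(-1)^{k-1}\bigl(\pa_\xi^{k-1}\hat{K}_-(x,\xi)\big|_{\xi=x}\bigr)\pa_x^{-k}$; this is exactly the displayed expansion, and the only input is the boundary term at $\xi=x$ at each stage (the lower endpoint contributing nothing relevant under the usual decay assumptions on $p$, hence on $g'$). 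In our notation $\hat{K}_-(x,\xi)=g'(x,\xi;\tbf)$ and the right partial derivative $\pa_\rl$ is differentiation in the second kernel slot $\xi$, so the coefficient of $\pa_x^{-k}$ is $(-1)^{k-1}\,\pa_\rl^{k-1}g'(x,x;\tbf)$.

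Next I would use the semi-additive change of variables relating the two GLM formulations, namely \eqref{eq:primekernels}: since $g'(x,\xi;\tbf)=g(0,\xi-x;x,\tbf)$, evaluating on the diagonal $\xi=x$ gives $g'(x,x;\tbf)=g(0,0;x,\tbf)=\lb G\rb(0,0;x,\tbf)$, and more generally $\pa_\rl^{k-1}g'(x,x;\tbf)=\pa_\rl^{k-1}\lb G\rb(0,0;x,\tbf)$ because $\pa_\rl$ acts on the second kernel argument, which under the substitution $\xi=\zeta+x$ is precisely $\pa_\zeta$ — the two $x$-dependences ($z=0$ fixed versus the parameter $x$) do not interfere when we differentiate only in $\zeta$. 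Thus the coefficient of $\pa_x^{-k}$ in the expansion of $G'$ is $(-1)^{k-1}\pa_\rl^{k-1}\lb G\rb(0,0;x,\tbf)$.

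Finally I would compare with $W=\id-\sum_{m\geqslant1}w_m\pa^{-m}$ from \eqref{eq:gaugeexpansionPoppe}. Since $W=\id+G'$, matching the coefficient of $\pa_x^{-k}$ gives $-w_k=(-1)^{k-1}\pa_\rl^{k-1}\lb G\rb(0,0;x,\tbf)$, i.e.\ $w_k=(-1)^k\pa_\rl^{k-1}\lb G\rb(0,0;x,\tbf)$, which is \eqref{eq:solutionansatz}. The cases $k=1,2$ reproduce $w_1=-g'(x,x)$ and $w_2=\pa_\rl g'(x,x)$ stated in the text, serving as a consistency check.

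The main obstacle I anticipate is purely analytic rather than algebraic: justifying the integration-by-parts expansion as a genuine (asymptotic) pseudo-differential operator identity, i.e.\ controlling the remainder terms and the contribution of the lower endpoint $\xi\to-\infty$, which requires the assumed smoothness and decay of $p$ (hence of $g'$ via solvability of the GLM equation). Once one grants the form of the expansion displayed in the excerpt, the rest is a bookkeeping exercise in the change of variables \eqref{eq:primekernels} and reading off coefficients, with the only subtlety being to keep straight that $\pa_\rl$ differentiates the \emph{second} kernel slot so that diagonal evaluation commutes with it in the required way.
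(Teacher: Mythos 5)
Your proposal is correct and follows essentially the same route as the paper: the integration-by-parts expansion of $G^\prime$ in inverse powers of $\pa_x$, the diagonal identification $g^\prime(x,x;\tbf)=\lb G\rb(0,0;x,\tbf)$ via \eqref{eq:primekernels}, and coefficient matching against $W=\id+G^\prime$ in the form \eqref{eq:gaugeexpansionPoppe}. The extra care you take over the remainder/endpoint terms and over $\pa_\rl$ acting only in the second kernel slot is a welcome elaboration of what the paper leaves implicit, but it is not a different argument.
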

The coefficients $w_k$ of the operator $W$ generate the fields in the noncommutative KP hierarchy.
Theorem~\ref{thm:gaugecoeffsid} establishes that these coefficients are given by the signed partial derivatives $\pa_{\rl}^{k-1}$ of the kernels $g=g(z,\zeta;x,\tbf)$ evaluated at $z=\zeta=0$.
Recall from Section~\ref{sec:Poppealg} that we set $V\coloneqq(\id-P)^{-1}$.
\begin{corollary}\label{cor:WequalsV}
  The Gauge operator $W$ and the operator $V$ are one and the same, i.e.\/ $W=V$.
\end{corollary}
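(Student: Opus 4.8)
The plan is to establish the identity $W=V$ by showing both operators satisfy the same relation with the scattering operator $P$, or equivalently, by directly relating their kernels via the GLM equation. First I would recall the two key facts already in hand: by Theorem~\ref{thm:DLviadressing} and the discussion following it, $W=\id+K_-=\id+G'$ where $G'$ solves the classical GLM equation $P'=G'(\id-P')$; and by definition $V=(\id-P)^{-1}$, while the solution $G$ to the P\"oppe-form GLM equation $P=G(\id-P)$ is given by $G=PV$. Since $V=\id+PV=\id+G$ by the partial fraction identity~\eqref{eq:partialfractions}, it suffices to show that $\id+G'$ and $\id+G$ represent the same operator in the appropriate sense, i.e.\ that $G'$ and $G$ have matching kernels under the change of variables~\eqref{eq:primekernels}.

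The cleanest route is to argue at the level of the integral operators. By the Lemma on equivalent GLM formulations, the kernel $g'$ of $G'$ and the kernel $g$ of $G$ are related by $g'(x,\xi;\tbf)=g(0,\xi-x;x,\tbf)$, which is precisely the semi-additive/Volterra realisation of the same abstract operator: the operator $\id+G'$ acting on functions of $x$ corresponds, under the semi-additive parametrisation $z\mapsto z+x$, to the operator $\id+G$ whose kernel is $g(z,\zeta;x,\tbf)$. Thus $W=\id+G'$ and $V=\id+G$ are two coordinate presentations of one and the same operator. Concretely, one checks that $(\id+G')$ and $(\id-P')^{-1}$ agree: from $P'=G'(\id-P')$ we get $G'=P'(\id-P')^{-1}$, hence $\id+G'=\id+P'(\id-P')^{-1}=(\id-P')^{-1}$; and since $P'=P$ as kernels by~\eqref{eq:pkernel}, this is exactly $V=(\id-P)^{-1}$.

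So the core of the argument is the chain $W=\id+G'=\id+P'(\id-P')^{-1}=(\id-P')^{-1}=(\id-P)^{-1}=V$, where the first equality is Theorem~\ref{thm:DLviadressing}, the second is the GLM equation rearranged, the third is the partial fraction identity, the fourth is~\eqref{eq:pkernel} (that $P$ and $P'$ have the same kernel), and the last is the definition~\eqref{eq:defV} of $V$. The only subtlety, and the step I expect to require the most care, is justifying the fourth equality: one must be precise that the operator $P'$ appearing in the classical dressing picture and the semi-additive operator $P$ in the P\"oppe picture are genuinely the same operator and not merely have the same kernel formula in different variables. This is handled by the preceding Lemma on equivalent GLM formulations together with~\eqref{eq:primekernels}, which shows the two GLM equations are the same equation written in different coordinates; since both $G$ and $G'$ are the unique solutions (guaranteed by $\mathrm{det}_2(\id-P)\neq0$, cf.\ Blower and Malham~\cite[Lemma~7]{BM-KP}), they must coincide, and therefore so do $W=\id+G'$ and $V=\id+G$.
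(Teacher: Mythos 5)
Your proposal is correct and follows essentially the same route as the paper: both arguments rest on $P'=P$ via~\eqref{eq:pkernel}, the identification $W=\id+G'$ with $G'=G=PV$, and the partial fraction identity $V=\id+PV$ to conclude $W=V$. The paper's proof is simply a condensed version of your chain, so no further comment is needed.
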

\begin{proof}
From \eqref{eq:pkernel} we know that $p^\prime\equiv p$ and thus $P^\prime=P$. Hence if $V^\prime\coloneqq(\id-P^\prime)^{-1}$, then $G=PV=P^\prime V^\prime=G^\prime$.
Since $W=\id+G^\prime$, we have $W=\id+G$, and thus via the partial fractions formulae \eqref{eq:partialfractions}, the result follows.
\qed
\end{proof}
Recall Theorem~\ref{thm:DLviadressing} and Remark~\ref{rmk:extractingwk} from Section~\ref{sec:Satoindirect}. We can now state the following.
\begin{theorem}[Direct linearisation via dressing]\label{thm:first}
  Suppose $G$ is the solution to the GLM equation~\eqref{eq:GLMeqintro}, i.e.\/ $P=G(\id-P)$, where $P$ is a Hilbert--Schmidt operator of semi-additive form
  and and its kernel $p$ satisfies the linearised equations~\eqref{eq:linearform}.
  Then the coefficients $w_k=w_k(x,\tbf)$ given in \eqref{eq:solutionansatz}, i.e. $w_k(x,\tbf)\equiv(-1)^{k}\pa^{k-1}_{\rl}\lb G\rb(0,0;x,\tbf)$,
  satisfy the Sato system~\eqref{eq:Satoform}, and thus the noncommutative KP hierarchy.
\end{theorem}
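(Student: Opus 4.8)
The plan is to combine the two preceding strands of the section: (i) Theorem~\ref{thm:DLviadressing}, which already asserts that the coefficients $w_k$ of $W=\id+K_-=\id+G^\prime$ satisfy the Sato system~\eqref{eq:Satoform} whenever the kernel $p$ of $P$ satisfies the linearised equations~\eqref{eq:linearform}; and (ii) Theorem~\ref{thm:gaugecoeffsid} together with Corollary~\ref{cor:WequalsV}, which identify the operator $W$ appearing there with $V=(\id-P)^{-1}$ and express its coefficients explicitly as $w_k(x,\tbf)=(-1)^k\pa_\rl^{k-1}\lb G\rb(0,0;x,\tbf)$. In other words, the theorem is essentially a corollary assembled from results already in hand, once the semi-additive hypothesis is used to make the bookkeeping match up.

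Concretely, I would proceed as follows. First, under the semi-additive assumption on $P$, invoke the Equivalent GLM formulations lemma to pass between the P\"oppe-form GLM equation~\eqref{eq:GLMPoppe} for $g=\lb G\rb$ and the classical GLM equation~\eqref{eq:GLMclassical} for $g^\prime=\lb G^\prime\rb$, via the change of variables~\eqref{eq:primekernels}; in particular $P^\prime=P$ and $G^\prime=G$. Second, observe that Corollary~\ref{cor:samebaseeqns} says the base equations $[\id-P,A_n^\ast]=0$ are exactly the linearised equations~\eqref{eq:linearform} satisfied by $p$, so the hypothesis of Theorem~\ref{thm:DLviadressing} is met with this same $P$. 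Third, apply Theorem~\ref{thm:DLviadressing}: the coefficients of $W=\id+G^\prime$ satisfy the Sato system. Fourth, use Corollary~\ref{cor:WequalsV} to replace $W$ by $V$, and then Theorem~\ref{thm:gaugecoeffsid} to rewrite those coefficients in the closed form~\eqref{eq:solutionansatz}. Finally, invoke the cited Kodama theorem to pass from the Sato system to the full Lax/Zakharov--Shabat formulation, and hence conclude that~\eqref{eq:solutionansatz} solves the complete noncommutative KP hierarchy.

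The main obstacle is not any single deep estimate but rather the careful alignment of the two GLM formulations under the semi-additive reduction: one must check that the integration-by-parts expansion of $(G^\prime\psi)(x)$ in pseudo-differential form, with coefficients $(-1)^{k-1}\pa_\rl^{k-1}g^\prime(x,x;\tbf)=(-1)^{k-1}\pa_\rl^{k-1}\lb G\rb(0,0;x,\tbf)$, is genuinely the operator expansion $W=\id-\sum_{m\geqslant1}w_m\pa^{-m}$ of~\eqref{eq:gaugeexpansionPoppe}, so that the sign conventions in~\eqref{eq:solutionansatz} come out correctly (note the shift from $(-1)^{k-1}$ in the kernel expansion to $(-1)^k$ in $w_k$ because $W$ has a leading minus sign). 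One also needs the mild analytic hypotheses---$P$ Hilbert--Schmidt, $\mathrm{det}_2(\id-P)\neq0$, $p$ smooth enough that the required $\pa_x$, $\pa_\el$, $\pa_\rl$ derivatives are Hilbert--Schmidt valued---so that $V$, $G$, and all the relevant kernels exist and the formal manipulations (partial fractions~\eqref{eq:partialfractions}, $\pa_{t_n}G=V(\pa_{t_n}P)V$) are justified; these are exactly the standing assumptions collected in Section~\ref{sec:Poppealg} and in Blower and Malham~\cite[Lemma~7]{BM-KP}, so the proof amounts to citing them and stitching the chain of implications together rather than re-deriving anything.
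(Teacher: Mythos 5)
Your proposal is correct and follows essentially the same route as the paper: the paper states Theorem~\ref{thm:first} without a separate proof, presenting it as the direct assembly of Theorem~\ref{thm:DLviadressing}, the Equivalent GLM formulations lemma, Theorem~\ref{thm:gaugecoeffsid} and Corollary~\ref{cor:WequalsV}, exactly as you outline. Your remark about the sign shift from $(-1)^{k-1}$ in the integration-by-parts expansion of $G^\prime$ to $(-1)^{k}$ in $w_k$ (owing to the leading minus in $W=\id-\sum_{m\geqslant1}w_m\pa^{-m}$) is the same bookkeeping the paper performs in deriving \eqref{eq:solutionansatz}.
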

Having establish direct linearisation of the noncommutative KP hierarchy via the Sato--Wilson dressing transformation,
in the remaining sections we consider a `head-on' approach to direct linearisation via the P\"oppe approach.

\section{Noncommutative hierarchy equations}\label{sec:hierarchyequations}
Herein and henceforth, we consider the P\"oppe approach to the direct linearisation of the noncommutative KP hierarchy.
We present the Sato formulation of the noncommutative KP hierarchy equations, given by the system of equations \eqref{eq:gaugeexpansionPoppe}--\eqref{eq:BopsPoppe}, in the context of P\"oppe's approach.
From Corollary~\ref{cor:WequalsV}, note that we can equivalently replace $W$ in \eqref{eq:SatoformPoppe} by $V$.
Recall the base equations~\eqref{eq:linearform}, i.e.\/ $\pa_{t_n}P=\pa_{\el}^nP-(-1)^n\pa_{\rl}^nP$.
Since $V=(\id-P)^{-1}$ we observe that for all $n\in\mathbb N$,
\begin{equation}\label{eq:PoppeV}
  \pa_{t_n}V=V\bigl(\pa_{\el}^nP-(-1)^n\pa_{\rl}^nP\bigr)V.
\end{equation}
For example, when $n=1$, this relation reduces to the relation, $\pa_{t_1}V=V P_{\hat 1}V=\pa_x V$, consistent with our previous designation of $t_1=x$.
When $n=2$, we observe that~\eqref{eq:PoppeV} is equivalent to the relation, $\pa_{t_2}V=V(P_{2,0}-P_{0,2})V$.
In the case $n=3$, we have, $\pa_{t_2}V=V(P_{3,0}+P_{0,3})V$, and so forth. See, for example, Blower and Malham~\cite{BM-KP}.

We now explore the Sato formulation for the noncommutative KP hierarchy in the context of P\"oppe's approach. 
\begin{example}[Case $n=2$]\label{ex:n=2}
  The equations in this case are given by~\eqref{eq:n=2relations} in the Introduction.
  In particular, equation~\eqref{eq:SatoVevol2} holds for $w_1$. 
  Also recall that we identify $t_2=y$.
  If we use the solution ansatz~\eqref{eq:solutionansatz}, then the relation
  for $w_1$ is equivalent to the relation, 
  \begin{equation}\label{eq:n2id}
    \pa_y\lb G\rb=\pa_x^2\lb G\rb-2\bigl(\pa_x\lb G\rb\bigr)\lb G\rb-2\pa_x\pa_{\rl}\lb G\rb.
  \end{equation}
  That this relation does indeed hold for $G=PV$ is straightforwardly established by using the P\"oppe product formula from Lemma~\ref{lemma:Poppeprodmonomials}
  on the nonlinear term `$2\bigl(\pa_x\lb G\rb\bigr)\lb G\rb$' therein and that we know that $\pa_{t_2}V=V(P_{2,0}-P_{0,2})V$ from~\eqref{eq:PoppeV}.
  The corresponding relation for $w_2$ has the form,
  \begin{equation*}
    \pa_{t_2}w_2=\pa_x^2w_2+2(\pa_xw_1)w_2+2\pa_xw_3.
  \end{equation*}
  That this identity holds follows by taking the `$\pa_{\rl}$' derivative of the relation~\eqref{eq:n2id} and using the properties~\eqref{eq:leftandrightderivs}.
  The corresponding relations for $w_3$, $w_4$, and so forth follow suit. Thus, relation~\eqref{eq:n2id} establishes that~\eqref{eq:SatoVevol2} holds.
\end{example}
In general, the Sato formulation is equivalent to the following prescription.
\begin{theorem}[KP hierarchy equations]\label{thm:hierarchyequations}
  The noncommutative KP hierarchy equations are given by the following formulae, for $n\in\mathbb N$ and $k\in\{0,1,\ldots,n-2\}$ and all $\kappa\in\mathbb N$:
  \begin{subequations}\label{eq:ncKPhierarchy}
  \begin{align}
    \mathcal L_k\coloneqq&\;\sum_{\ell=0}^{k-2}C_{\ell}^{n-k+\ell}b_{n-k+\ell}^n\pa_x^\ell+C_{k}^n\pa_x^k,\label{eq:ncKPhierarchyL}\\
    b_k^n\coloneqq&\;\sum_{\ell=1}^{n-k-1}\mathcal L_{n-k-\ell}w_\ell,\label{eq:ncKPhierarchybnk}\\
    \pa_{t_n}w_\kappa=&\;\sum_{\ell=1}^n\mathcal L_{n-\ell+1}w_{\kappa+\ell-1}.\label{eq:ncKPhierarchyw1}
  \end{align}
  \end{subequations}
\end{theorem}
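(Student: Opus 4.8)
The plan is to show that Theorem~\ref{thm:hierarchyequations} is nothing but the content of the Sato system \eqref{eq:SatoformPoppe}--\eqref{eq:BopsPoppe} after substituting the pseudo-differential expansion \eqref{eq:gaugeexpansionPoppe} and equating coefficients of powers of $\pa_x$. Thus the proof is a bookkeeping computation, and the task is to organise the bookkeeping so the stated closed forms \eqref{eq:ncKPhierarchyL}--\eqref{eq:ncKPhierarchyw1} emerge. First I would write $W=\id-\sum_{m\geqslant1}w_m\pa^{-m}$ and $B_n=\pa^n+\sum_{j=0}^{n-2}b_j^n\pa^j$, form the two sides $B_nW$ and $W\pa^n$, and use the Leibniz rule $\pa^{-m}u=\sum_{r\geqslant0}C^{-m}_r(\pa_x^r u)\pa^{-m-r}$ (equivalently, the positive-power Leibniz rule already quoted just after \eqref{eq:L}) to move all coefficient functions to the left of the powers of $\pa$. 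The left-hand side $\pa_{t_n}W=-\sum_{m\geqslant1}(\pa_{t_n}w_m)\pa^{-m}$ then contributes only to the negative powers $\pa^{-\kappa}$ with $\kappa\geqslant1$, while $B_nW-W\pa^n$ contributes to all powers $\pa^j$ with $j\leqslant n-1$: the strictly positive powers $\pa^{n-1},\dots,\pa^1$ and the constant term $\pa^0$ must vanish (these are the equations determining $b_{n-2}^n,\dots,b_0^n$), and the negative powers $\pa^{-\kappa}$ give the evolution equations for $w_\kappa$.

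Next I would identify the operator $\mathcal L_k$ in \eqref{eq:ncKPhierarchyL} as exactly the coefficient operator, acting on a generic function, that multiplies $\pa^{-j}$ (or $\pa^j$) when one expands $B_nW$. Concretely, in $B_nW$ the term $b_i^n\pa^i$ acting on $-w_m\pa^{-m}$ produces, via Leibniz, a sum $-\sum_{\ell\geqslant0}C_\ell^i b_i^n(\pa_x^\ell w_m)\pa^{i-\ell-m}$, with $b_n^n\coloneqq1$ and $i$ ranging over $0,\dots,n$; collecting all contributions to a fixed output power $\pa^{-r}$ (so $i-\ell-m=-r$) and comparing with $-W\pa^n=\pa^n-\sum_{m\geqslant1}w_m\pa^{n-m}$ and with $\pa_{t_n}W$ should reproduce \eqref{eq:ncKPhierarchybnk} for the nonnegative output powers $r\leqslant0$ and \eqref{eq:ncKPhierarchyw1} for the negative ones $r=\kappa\geqslant1$. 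The shift of indices in $\mathcal L_{n-k-\ell}$ versus $\mathcal L_{n-\ell+1}$ is just the difference between "the coefficient of $\pa^k$, which sees $w_1,\dots,w_{n-k-1}$" and "the coefficient of $\pa^{-\kappa}$, which sees $w_\kappa,w_{\kappa+1},\dots,w_{\kappa+n-1}$". To keep the binomial coefficients matching the stated form $C_\ell^{n-k+\ell}b_{n-k+\ell}^n$, I would reindex the internal Leibniz sum so that the surviving $b$-index is written as $n-k+\ell$ with $\ell$ the number of $\pa_x$-derivatives landing on the $w$; a cross-check against the worked cases $n=1,2,3$ in \eqref{eq:x=t1}--\eqref{eq:n=3caseintro} (e.g. $b_0^3=(3\pa_x^2+b_1^3)w_1+3\pa_xw_2$ should fall out of \eqref{eq:ncKPhierarchybnk} with $k=0$, $n=3$) fixes any residual sign or offset ambiguity.

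The main obstacle I anticipate is purely combinatorial: verifying that the grouping of Leibniz terms really collapses to the compact double sum in \eqref{eq:ncKPhierarchyL}--\eqref{eq:ncKPhierarchybnk}, i.e.\ that no cross terms are double-counted and that the requirement "$B_n$ is a \emph{pure differential} operator, so all output powers $\pa^{-r}$ with $r\geqslant1$ coming from $B_nW$ are cancelled only by $W\pa^n$ and $\pa_{t_n}W$" is handled correctly — in particular that the positive-power equations (coefficients of $\pa^{n-1},\dots,\pa^1,\pa^0$) are precisely what make $B_n=(W\pa^nW^{-1})_+$ consistent, matching Definition~\ref{def:Satoform}. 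There is also a bounded-support subtlety: the sum $\sum_{\ell=0}^{k-2}$ in \eqref{eq:ncKPhierarchyL} truncates because only finitely many derivatives of the $w_m$ can survive into a given power; I would note that this truncation is automatic since $C_\ell^i=0$ for $\ell>i$ and the relevant indices are nonnegative.

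Finally I would assemble the argument: (i) expand both sides in $\pa$; (ii) read off the constant and positive-power coefficients to get the recursive definition \eqref{eq:ncKPhierarchybnk} of $b_k^n$ (starting from $b_{n-1}^n$ down to $b_0^n$, each expressed through $\mathcal L$'s built from the already-determined higher $b$'s, consistent with the triangular structure seen in \eqref{eq:SatoVevol1}, \eqref{eq:n=3casebn2intro}, \eqref{eq:n=3casebn3intro}); (iii) read off the coefficient of $\pa^{-\kappa}$ to get \eqref{eq:ncKPhierarchyw1}; and (iv) observe, as in Example~\ref{ex:n=2}, that equation \eqref{eq:ncKPhierarchyw1} for general $\kappa$ follows from the case $\kappa=1$ by applying $\pa_\rl^{\kappa-1}$ and using Theorem~\ref{thm:gaugecoeffsid} together with \eqref{eq:leftandrightderivs}, so it suffices to record the $\kappa=1$ equation as the genuinely new content while noting the rest is forced. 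The equivalence with the Sato system \eqref{eq:SatoformPoppe} is then immediate since every step is reversible.
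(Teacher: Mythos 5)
Your proposal follows essentially the same route as the paper: substitute the expansion \eqref{eq:gaugeexpansionPoppe} into \eqref{eq:SatoformPoppe}, apply the Leibniz rule, and equate coefficients of the nonnegative powers $\pa_x^k$, $k\in\{0,\ldots,n-2\}$, to obtain the triangular system for the $b_k^n$ and of the negative powers $\pa_x^{-\kappa}$ to obtain the evolution equations. The only blemish is the reference to ``$b_{n-1}^n$'' in your final assembly step, which should be $b_{n-2}^n$ since $B_n$ has no $\pa^{n-1}$ term (the $\pa^{n-1}$ coefficient cancels automatically); otherwise the argument matches the paper's proof.
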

\begin{proof}
We insert the expansion \eqref{eq:gaugeexpansionPoppe} for $W$ into \eqref{eq:SatoformPoppe}, and use \eqref{eq:BopsPoppe}. 
This gives,
\begin{align}
  \sum_{m\geqslant1}\bigl(\pa_{t_n}w_m\bigr)&\pa_x^{-m}+\sum_{k=0}^{n-2}b_k^n\,\pa_x^k\nonumber\\
  =&\;\sum_{m\geqslant1}\sum_{\ell=0}^{n-1}C_\ell^n\bigl(\pa_x^{n-\ell}w_m\bigr)\pa_x^{-m+\ell}\nonumber\\
  &\;+\sum_{m\geqslant1}\sum_{\ell=0}^{n-2}\sum_{k=\ell}^{n-2}C_\ell^nb_k^n\bigl(\pa_x^{k-\ell}w_m\bigr)\pa_x^{-m+\ell}.\label{eq:firstinsert}
\end{align}
If we equate the coefficients of $\pa_x^k$ for $k\in\{0,\ldots,n-2\}$ in \eqref{eq:firstinsert}, we observe that the first term on the left is not involved,
and the resulting relations give a sequential set of equations prescribing the coefficients $b_k^n$ from $b_{n-2}^n$ down through to $b_0^n$.
The first relation is $b_{n-2}^n=\mathcal L_1w_1$, where $\mathcal L_1\coloneqq C_{1}^n\pa_x$, using that $C_k^n\equiv C_{n-k}^n$.
The second relation is $b_{n-3}^n=\mathcal L_2w_1+\mathcal L_1w_2$, where $\mathcal L_2\coloneqq C_{2}^n\pa_x^2+C_{0}^{n-2}b_{n-2}^n$.
It is straightforward to show that for $k\in\{2,\cdots,n\}$, 
\begin{equation*}
  b_{n-k}^n=\mathcal L_{k-1}w_1+\mathcal L_{k-2}w_2+\cdots+\mathcal L_1w_{k-1},
\end{equation*}
where $\mathcal L_{k}=C_{k}^n\pa_x^k+C_{k-2}^{n-2}b_{n-2}^n\,\pa_x^{k-2}+\cdots+C_{0}^{n-k}b_{n-k}^n$.
Note that, $b_0^n=\mathcal L_{n-1}w_1+\mathcal L_{n-2}w_2+\cdots+\mathcal L_{1}w_{n-1}$.
We have thus established the form for the operators $\mathcal L_k$ and coefficients $b_k^n$ in the statement of the theorem.

If we now equate the coefficients of $\pa_x^{-m}$ for $m\geqslant1$ in \eqref{eq:firstinsert}, we observe that the second term on the right is not involved,
and via completely analogous arguments to those just above, we arrive at the expression for $\pa_{t_n}w_\kappa$ given in the statement of the theorem. 
\qed
\end{proof}
\begin{remark}\label{rmk:b_1}
  In the relation for $w_1$ given in Theorem~\ref{thm:hierarchyequations}, we observe that terms on the right are precisely the terms
  we would expect if we were to extend the parameter $k$ for the set of coefficients $b_k^n$ to the case $k=-1$.
  In other words, dropping the superindex $n$, we have,
  \begin{equation*}
    \pa_{t_n}w_1=b_{-1}.
  \end{equation*}
\end{remark}
\begin{remark}
  For a given order $n\in\mathbb N$, it is usual to close-off the system at the level of the evolution equation for $w_1$ involving $\pa_{t_n}w_1$.
  For example, see \eqref{eq:n=3caseintro}. In general, the equation for $\pa_{t_n}w_1$ involves $w_2$ through to $w_n$.
  However we can substitute for the latter dependent variables by using the corresponding relations for $\pa_{t_{n-1}}w_k$ for $k=n-1$ down to $k=1$, and so forth.
  We thus obtain a closed system of equations for $w_1$. See Example~\ref{ex:n=3} just below.
  From a different perspective, using the solution ansatz \eqref{eq:solutionansatz}, we only need to verify the equation for $w_1$ in Theorem~\ref{thm:hierarchyequations}.
  This is because we can use the properties of the $\pa_\rl$ operator outlined in~\eqref{eq:leftandrightderivs} and the pre-P\"oppe product.
  Using these, we observe that the equations for $\pa_{t_n}w_2$, $\pa_{t_n}w_3$, etc, automatically follow from the equation for $\pa_{t_n}w_1$.
  See Example~\ref{ex:n=2}. 
\end{remark}
\begin{example}[KP equation]\label{ex:n=3}
  In the case when $n=3$, the equation for $w_1$ in Theorem~\ref{thm:hierarchyequations} becomes---hereafter we drop the superindex for $b_k^n$,
  \begin{subequations}
  \begin{align}
     b_{n-2}=&\;C_{1}^n\pa_x w_1,\label{eq:n=3casebn2}\\
     b_{n-3}=&\;\bigl(C_{2}^n\pa_x^2+C_{0}^{n-2}b_{n-2}\bigr)w_1+C_{1}^n\pa_xw_2,\label{eq:n=3casebn3}\\
     \pa_{t_n}w_1=&\;\mathcal L_3w_1+\mathcal L_2w_2+\mathcal L_1w_3\nonumber\\
                =&\;\bigl(C_{3}^n\pa_x^3+C_{1}^{n-2}b_{n-2}\,\pa_x+C_{0}^{n-3}b_{n-3}\bigr)w_1\nonumber\\
                 &\;+\bigl(C_{2}^n\pa_x^2+C_{0}^{n-2}b_{n-2}\bigr)w_2+C_{1}^n\pa_xw_3.\label{eq:n=3case}
  \end{align}
  \end{subequations}
  We have retained the general `$n$' label for illustrative purposes later on.
  Equation~\eqref{eq:n=3case} is equivalent to the noncommutative KP equation.
  To see this, as briefly outlined in the Introduction, from \eqref{eq:n=2relations}, with $t_2=y$, we observe,
  \begin{subequations}\label{eq:n=2strut}
  \begin{align}
    \pa_{y}w_1&=\pa_x^2w_1+2(\pa_xw_1)w_1+2\pa_xw_{2},\\
    \pa_{y}w_2&=\pa_x^2w_2+2(\pa_xw_1)w_2+2\pa_xw_{3}.
  \end{align}
  \end{subequations}
  If we explicitly substitute that $n=3$ in the relations just above for $b_{n-2}$, $b_{n-3}$ and $\pa_{t_n}w_1$ in \eqref{eq:n=3case},
  set $t=t_3$, and use the relations \eqref{eq:n=2strut} to substitute for $w_3$ and $w_2$, after standard reductions,
  with $w\coloneqq w_1$, we arrive at the noncommutative KP equation~\eqref{eq:ncKP} given in the Introduction.
\end{example}
The noncommutative KP equation in \eqref{eq:ncKP} is in potential form and $w$ only appears in derivative form. 
Hence, to show that $\lb G\rb$ satisfies \eqref{eq:ncKP}, it is sufficient to show that
$\pa_{t_3}w=-\pa_{t_3}\lb G\rb=-\pa_{t_3}\lb V\rb=-\lb V(\pa_{t_3})V\rb=-\lb V(P_{3,0}+P_{0,3})V\rb$
matches the linear and nonlinear terms involving $\pa_xw=\lb VP_{\hat{1}}V\rb$ on the right-hand side.
This is demonstrated in Blower and Malham~\cite[Sec.~3]{BM-KP}.
However, in the Sato formulation of the noncommutative KP hierarchy, we need to deal with products involving $\lb G\rb$, where $G=VP=PV$.
We explain how we deal with this next.

\section{The augmented pre-P\"oppe algebra}\label{sec:steptodescents}
Herein we examine the consequences of substituting the ansatz \eqref{eq:solutionansatz} into the complete noncommutative KP hierarchy \eqref{eq:ncKPhierarchy} given in Theorem~\ref{thm:hierarchyequations}.
This leads to the augmented pre-P\"oppe algebra with the property that it is nonassociative. 
For convenience in this section, we adopt the following notation. We use $\mathbb P$ to denote the pre-P\"oppe algebra given in Definition~\ref{def:prePoppealg}.
Recall that a basis for $\mathbb P$ is the set of monomials of the form~\eqref{eq:exmonomial}, though the monomial `$\lb V\rb$' is excluded.
We could introduce an extended pre-P\"oppe algebra, in which we also include monomials of the form
$\lb VP_{a_1,b_1}VP_{a_2,b_2}V\cdots VP_{a_k,b_k}VP_{a_{k+1},b_{k+1}}\rb$ and $\lb P_{a_0,b_0}VP_{a_1,b_1}VP_{a_2,b_2}V\cdots VP_{a_k,b_k}V\rb$.
Products between such monomials are straightforwardly constructed from the pre-P\"oppe products in Lemmas~\ref{lemma:Poppeprodsemiadd} and \ref{lemma:Poppeprodmonomials}, 
combined with the partial fraction formulae~\eqref{eq:partialfractions}. 
As we see below, after substituting the ansatz \eqref{eq:solutionansatz} into the noncommutative KP hierarchy equations \eqref{eq:ncKPhierarchy}, 
most of the terms present lie in $\mathbb P$, but we also need to multiply on the right by $\lb G\rb\equiv\lb VP\rb$. 
Further, right derivative terms of the form $\pa_{\rl}^c\lb VP_{a_1,b_1}VP_{a_2,b_2}V\cdots VP_{a_k,b_k}V\rb$ also appear due to the ansatz \eqref{eq:solutionansatz}.
By using the partial fraction formulae~\eqref{eq:partialfractions}, such terms lie in the extended pre-P\"oppe algebra as opposed to $\mathbb P$.
However, in the Sato formulation of the noncommutative KP hierarchy, these two aberrations into the extended pre-P\"oppe algebra and away from $\mathbb P$, combine so that we remain in $\mathbb P$.
Indeed we can define a bimodule on $\mathbb P$ that succinctly incorporates this issue.
We now explain this in detail.
Let us define two rings, $\mathbb{A}_{\el}$ and $\mathbb{A}_{\rl}$. We set,
\begin{equation*}
A_{\el}\coloneqq\pa_{\el}+\lb G\rb\qquad\text{and}\qquad A_{\rl}\coloneqq\pa_{\rl}+\lb G\rb.
\end{equation*}
Suppose that $\mathbb{A}_{\el}$ is the simple ring with the single generator $A_{\el}$ and concatenation as product.
Thus $\mathbb{A}_{\el}$ consists of all the polynomials of $A_{\el}$ with a unit defined with the usual properties. 
The ring $\mathbb{A}_{\rl}$ is defined similarly, via the single generator $A_{\rl}$ and concatenation as product.
It consists of all the polynomials of $A_{\rl}$, and so forth.
We now construct the $\mathbb{A}_{\el}$-$\mathbb{A}_{\rl}$-bimodule on $\mathbb{P}$ as follows. We denote this bimodule by $\mathbb{P}_A$.
We define the actions of $A_{\el}$ and $A_{\rl}$ on $\mathbb{P}$ as follows.
For convenience, hereafter in this section, we denote a generic monomial in $\mathbb{P}$ by $\lb V_{a,b}\rb$, where for $a=(a_1,\ldots,a_k)$ and $b=(b_1,\ldots,b_k)$ we set,
\begin{equation*}
V_{a,b}\coloneqq VP_{a_1,b_1}V\cdots VP_{a_k,b_k}V.
\end{equation*}
\begin{definition}[Actions]\label{def:actions}
  We define the actions $A_{\el}$ and $A_{\rl}$ on $\mathbb{P}$, i.e.\/ $A_{\el}\colon\mathbb{A}_{\el}\times\mathbb P\to\mathbb P$ and $A_{\rl}\colon\mathbb P\times\mathbb{A}_{\rl}\to\mathbb P$, by,
  \begin{align*}
    A_{\el}\colon &\lb V_{a,b}\rb\mapsto \pa_{\el}\lb V_{a,b}\rb+\lb G\rb\,\lb V_{a,b}\rb,\\ 
    A_{\rl}\colon &\lb V_{a,b}\rb\mapsto \pa_{\rl}\lb V_{a,b}\rb+\lb V_{a,b}\rb\,\lb G\rb,
  \end{align*}
  where $\lb G\rb=\lb VP\rb=\lb PV\rb$. The products with $\lb G\rb$ on the right are standard pre-P\"oppe products. These actions naturally extend linearly.
\end{definition}
We confirm these actions are well-defined on $\mathbb{P}$, as follows.
\begin{lemma}[Actions closed in $\mathbb{P}$]\label{lemma:crucial}
  The standard pre-P\"oppe product implies:
  \begin{align*}
    A_{\el}\circ\lb V_{a,b}\rb=&\;\lb V(\pa_\el P_{a_1,b_1})V\cdots VP_{a_k,b_k}V\rb+\lb VP_{\hat{1}}V_{a_,b}\rb,\\
    A_{\rl}\circ\lb V_{a,b}\rb=&\;\lb VP_{a_1,b_1}V\cdots V(\pa_\rl P_{a_k,b_k})V\rb+\lb V_{a,b}P_{\hat{1}}V\rb.
  \end{align*}
\end{lemma}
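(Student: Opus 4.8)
The plan is to verify each of the two identities in Lemma~\ref{lemma:crucial} directly from Definition~\ref{def:actions}, reducing in each case the term involving multiplication by $\lb G\rb$ to a pre-P\"oppe product of the type covered by Lemma~\ref{lemma:Poppeprodmonomials}. I would begin with the left action. By definition, $A_{\el}\circ\lb V_{a,b}\rb=\pa_{\el}\lb V_{a,b}\rb+\lb G\rb\,\lb V_{a,b}\rb$. For the first summand, I use the left-derivative property~\eqref{eq:leftandrightderivs}: since $\pa_{\el}$ acts only on the leftmost factor of the operator product $V_{a,b}=VP_{a_1,b_1}V\cdots VP_{a_k,b_k}V$, and since $V=(\id-P)^{-1}$ satisfies $\pa_{\el}V=V(\pa_{\el}P)V$, expanding $\pa_{\el}(VP_{a_1,b_1}\cdots V)$ by the Leibniz rule produces $\lb V(\pa_{\el}P_{a_1,b_1})V\cdots VP_{a_k,b_k}V\rb$ together with a sum of terms where a $\pa_{\el}P$ is inserted via one of the leftmost $V$ factors; I must check that these latter terms recombine, using the partial fraction identity~\eqref{eq:partialfractions}, into a single contribution.

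For the second summand $\lb G\rb\,\lb V_{a,b}\rb$, I write $\lb G\rb=\lb VP\rb=\lb VP_{0,0}V\rb$ — here I am using that $V=\id+PV$ so that the trailing $V$ can be supplied — and apply Lemma~\ref{lemma:Poppeprodmonomials} with $F=\id$ (suitably interpreted), the first factor being $P_{0,0}$ and the second being $P_{a_1,b_1}$ times the rest. The lemma gives three terms: one with $\pa_{\rl}P_{0,0}=\pa_{\rl}P$ on the first slot, one with $\pa_{\el}P_{a_1,b_1}$ on the second slot, and one with a glued $P_{\hat1}$ between them. I then need to see that the first of these three, namely $\lb V(\pa_{\rl}P)VP_{a_1,b_1}V\cdots V\rb$, cancels precisely the spurious insertion terms left over from the $\pa_{\el}\lb V_{a,b}\rb$ computation (those terms have the form $\lb V(\pa_{\el}P)VP_{a_1,b_1}\cdots\rb$ or similar, and the cancellation is really an application of $\pa_{\el}P+\pa_{\rl}P=P_{\hat1}$), so that what survives is exactly $\lb V(\pa_{\el}P_{a_1,b_1})V\cdots VP_{a_k,b_k}V\rb+\lb VP_{\hat1}V_{a,b}\rb$. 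The right-action identity is then handled by the mirror-image argument, using the second relation in~\eqref{eq:leftandrightderivs} (so $\pa_{\rl}$ hits only the rightmost factor) and Lemma~\ref{lemma:Poppeprodmonomials} applied to $\lb V_{a,b}\rb\,\lb G\rb$ with the trailing $\lb VP\rb$.

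The main obstacle I anticipate is bookkeeping the interaction between the derivative term and the $\lb G\rb$-multiplication term: neither $\pa_{\el}\lb V_{a,b}\rb$ nor $\lb G\rb\,\lb V_{a,b}\rb$ is individually clean — the first produces, besides the wanted $\lb V(\pa_{\el}P_{a_1,b_1})V\cdots\rb$, extra terms from differentiating the $V$'s, and the second produces, besides the wanted $\lb VP_{\hat1}V_{a,b}\rb$, the extra $\lb V(\pa_{\rl}P)VP_{a_1,b_1}V\cdots\rb$ and $\lb V P\,V(\pa_{\el}P_{a_1,b_1})V\cdots\rb$ pieces — and the content of the lemma is that these extras sum to zero. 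Making that cancellation transparent requires writing $\pa_x P=(\pa_{\el}+\pa_{\rl})P$ and regrouping carefully; I would do one representative low-$k$ case (say $k=1$) fully to expose the pattern, then note the general case follows by the same telescoping, which is why the displayed result reads so simply. Once the cancellation is in hand, both displayed identities are immediate, and in particular they show $A_{\el}\circ\lb V_{a,b}\rb$ and $A_{\rl}\circ\lb V_{a,b}\rb$ are $\R$-linear combinations of monomials of the form~\eqref{eq:exmonomial}, hence lie in $\mathbb P$, establishing that the actions of Definition~\ref{def:actions} are well-defined.
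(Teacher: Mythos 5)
Your overall architecture is the same as the paper's---split $A_{\el}\circ\lb V_{a,b}\rb$ into the derivative piece and the $\lb G\rb$-product piece, evaluate the latter with Lemma~\ref{lemma:Poppeprodmonomials}, and recombine using the partial fractions~\eqref{eq:partialfractions}---but the mechanism you describe for the derivative piece is wrong, and this is not a bookkeeping issue. Property~\eqref{eq:leftandrightderivs} states $\pa_{\el}(G\hat G)=(\pa_{\el}G)\hat G$: the left derivative acts \emph{only} on the leftmost kernel factor, so there is no Leibniz sum over the interior $V$'s and no ``spurious insertion terms'' to cancel. Moreover $\pa_{\el}V\neq V(\pa_{\el}P)V$; that resolvent formula is for the \emph{parameter} derivative $\pa_x V=VP_{\hat 1}V$, whereas $\pa_{\el}$ differentiates the kernel variable $z$. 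The correct move is to apply $V=\id+PV$ once to the leftmost $V$, writing $V_{a,b}=P_{a_1,b_1}V\cdots VP_{a_k,b_k}V+PV_{a,b}$, so that each summand has a genuine Hilbert--Schmidt leftmost factor; then $\pa_{\el}\lb V_{a,b}\rb=\lb(\pa_{\el}P_{a_1,b_1})V\cdots\rb+\lb P_{1,0}V_{a,b}\rb$, exactly two terms. Separately, your identification $\lb G\rb=\lb VP_{0,0}V\rb$ is false as an identity of kernels ($PV\neq VPV$ unless $(PV)^2=0$); the product lemma is instead applied to $\lb G\rb=\lb PV\rb$ with $F=\id$ in the $\lb FP_{0,0}V\rb$ slot, yielding $\lb P_{0,1}V_{a,b}\rb+\lb PV(\pa_{\el}P_{a_1,b_1})V\cdots\rb+\lb PVP_{\hat 1}V_{a,b}\rb$.

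Finally, the claim that ``these extras sum to zero'' misstates what happens: nothing cancels. Adding the five terms, one uses $P_{1,0}+P_{0,1}=P_{\hat 1}$ and then $\id+PV=V$ \emph{twice} to reassemble $\lb(\pa_{\el}P_{a_1,b_1})V\cdots\rb+\lb PV(\pa_{\el}P_{a_1,b_1})V\cdots\rb=\lb V(\pa_{\el}P_{a_1,b_1})V\cdots\rb$ and $\lb P_{\hat 1}V_{a,b}\rb+\lb PVP_{\hat 1}V_{a,b}\rb=\lb VP_{\hat 1}V_{a,b}\rb$. It is a telescoping recombination that restores the leading $V$, not a cancellation. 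With these corrections the argument is short and uniform in $k$, so no separate low-$k$ case is needed; the $A_{\rl}$ identity is the mirror image, peeling the rightmost $V$ via $V=\id+VP$.
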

\begin{proof}
    We establish the result for $A_{\el}$. The corresponding result for $A_{\rl}$ follows completely analogously.
    We use the standard pre-P\"oppe product from Lemma~\ref{lemma:Poppeprodmonomials} and the partial fraction formulae \eqref{eq:partialfractions}.
    Indeed, using that $V=\id+PV$, so that for example, $V_{a,b}=P_{a_1,b_1}V\cdots VP_{a_k,b_k}V+PV_{a,b}$,
    and the properties of the $\pa_{\el}$ operator from~\eqref{eq:leftandrightderivs}, we observe, 
    \begin{align*}
      \pa_{\el}\lb V_{a,b}\rb+\lb PV&\rb\,\lb V_{a,b}\rb\\
                        =&\;\pa_{\el}\bigl(\lb P_{a_1,b_1}V\cdots\rb+\lb PV_{a,b}\rb\bigr)+\lb P_{0,1}V_{a,b}\rb\\
                         &\;+\lb PV(\pa_{\el}P_{a_1,b_1})V\cdots\rb+\lb PVP_{\hat{1}}V_{a,b}\rb\\
                        =&\;\lb (\pa_{\el}P_{a_1,b_1})V\cdots\rb+\lb P_{1,0}V_{a,b}\rb+\lb P_{0,1}V_{a,b}\rb\\
                         &\;+\lb PV(\pa_{\el}P_{a_1,b_1})V\cdots\rb+\lb PVP_{\hat{1}}V_{a,b}\rb,
    \end{align*}
    which gives the result once we use that $P_{\hat{1}}=P_{1,0}+P_{0,1}$, and $V=\id+PV$ again.
\qed
\end{proof}
Lemma~\ref{lemma:crucial} confirms the actions $A_{\rl}$ and $A_{\el}$ are closed on $\mathbb P$.
To demonstrate that $\mathbb{P}_A$ is a well-defined $\mathbb{A}_{\el}$-$\mathbb{A}_{\rl}$-bimodule on $\mathbb{P}$, we need to confirm a few more properties.
The distributive laws for addition for both the left $\mathbb{A}_{\el}$-module and $\mathbb{A}_{\rl}$-module components are straightforward. 
It is straightforward to demonstrate that the actions $A_{\el}\circ\bigl(A_{\el}\circ\lb V_{a,b}\rb\bigr)$ and $A_{\el}^2\circ\lb V_{a,b}\rb$ match,
and that this establishes the associativity of the left $\mathbb{A}_{\el}$-module component.
Exactly analogous arguments establish the associativity of the right $\mathbb{A}_{\rl}$-module component.
It remains to demonstrate the bimodule compatability condition,
\begin{equation}\label{eq:bimodulecompatability}
  A_{\el}\circ\bigl(A_{\rl}\circ\lb V_{a,b}\rb\bigr)=A_{\rl}\circ\bigl(A_{\el}\circ\lb V_{a,b}\rb\bigr).
\end{equation}
That this statement holds is straightforwardly established using Lemma~\ref{lemma:crucial}.
We have thus established the following.
\begin{lemma}[Action bimodule $\mathbb{P}_A$]
  The set $\mathbb{P}_A$ is a well-defined $\mathbb{A}_{\el}$-$\mathbb{A}_{\rl}$-bimodule on $\mathbb{P}$.
\end{lemma}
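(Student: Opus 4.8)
The plan is to verify, one axiom at a time, that $\mathbb{P}_A$ satisfies all the defining properties of an $\mathbb{A}_{\el}$-$\mathbb{A}_{\rl}$-bimodule, leaning throughout on Lemma~\ref{lemma:crucial}, which already gives us closed, explicit formulae for $A_{\el}\circ\lb V_{a,b}\rb$ and $A_{\rl}\circ\lb V_{a,b}\rb$ purely within $\mathbb{P}$. First I would note that closure of the actions in $\mathbb{P}$ is exactly the content of Lemma~\ref{lemma:crucial}, so the module maps $\mathbb{A}_{\el}\times\mathbb{P}\to\mathbb{P}$ and $\mathbb{P}\times\mathbb{A}_{\rl}\to\mathbb{P}$ are well-defined; the linear extension of the actions over sums of monomials is built into Definition~\ref{def:actions}, which immediately yields the additive distributive laws on the module side, and distributivity over sums in $\mathbb{A}_{\el}$ and $\mathbb{A}_{\rl}$ is forced because those rings are, by construction, the polynomial rings generated by the single symbols $A_{\el}$ and $A_{\rl}$.

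Next I would check that the unit of $\mathbb{A}_{\el}$ (resp.\ $\mathbb{A}_{\rl}$) acts as the identity on $\mathbb{P}$; this is immediate from the ring definition, since $\mathbb{A}_{\el}$ consists of polynomials in $A_{\el}$ with a formal unit whose action is declared to be the identity. The substantive points are the two associativity conditions and the bimodule compatibility~\eqref{eq:bimodulecompatability}. For left-associativity I would compute $A_{\el}\circ\bigl(A_{\el}\circ\lb V_{a,b}\rb\bigr)$ by applying the formula from Lemma~\ref{lemma:crucial} to each monomial appearing in $A_{\el}\circ\lb V_{a,b}\rb=\lb V(\pa_\el P_{a_1,b_1})V\cdots VP_{a_k,b_k}V\rb+\lb VP_{\hat{1}}V_{a,b}\rb$, and check that the result agrees with the natural definition of $A_{\el}^2\circ\lb V_{a,b}\rb$ (i.e.\ the action of the degree-two monomial $A_{\el}A_{\el}\in\mathbb{A}_{\el}$). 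The key mechanism is that $A_{\el}$ grafts a single $\pa_{\el}$-or-glue ($P_{\hat 1}$) insertion onto the leftmost slot, so two applications commute in the appropriate bookkeeping sense; the same reasoning, applied to the rightmost slot, gives right-associativity for $\mathbb{A}_{\rl}$.

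The main obstacle I expect is the compatibility condition~\eqref{eq:bimodulecompatability}: one must show the left action (grafting at the leftmost slot) and the right action (grafting at the rightmost slot) genuinely commute as operations on $\mathbb{P}$. I would expand both sides using Lemma~\ref{lemma:crucial}: $A_{\rl}\circ\lb V_{a,b}\rb$ produces $\lb VP_{a_1,b_1}V\cdots V(\pa_\rl P_{a_k,b_k})V\rb+\lb V_{a,b}P_{\hat 1}V\rb$, then $A_{\el}$ acts on each term by inserting $\pa_\el$ or $P_{\hat 1}$ at the left; symmetrically on the other side. Because the two insertions touch disjoint slots (leftmost vs.\ rightmost) when $k\geqslant1$, and because $\pa_{\el}$ and $\pa_{\rl}$ commute (stated in the excerpt's first Remark) while pre-P\"oppe multiplication by $\lb G\rb$ on the left and on the right are independent operations, every cross-term on the two sides matches. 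I would be slightly careful about the boundary cases---the glue terms $\lb VP_{\hat 1}V_{a,b}\rb$ and $\lb V_{a,b}P_{\hat 1}V\rb$, and whether a left insertion into a right-glued monomial coincides with a right insertion into a left-glued one---but these reduce to the same disjoint-slot observation. Having verified closure, the unit laws, both distributive laws, both associativity laws, and~\eqref{eq:bimodulecompatability}, the bimodule axioms are complete and the lemma follows.
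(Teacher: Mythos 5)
Your proposal is correct and follows essentially the same route as the paper, which likewise establishes the lemma by citing Lemma~\ref{lemma:crucial} for closure, noting the distributive laws, checking that $A_{\el}\circ(A_{\el}\circ\lb V_{a,b}\rb)$ agrees with $A_{\el}^2\circ\lb V_{a,b}\rb$ (and analogously for $A_{\rl}$), and verifying the compatibility condition~\eqref{eq:bimodulecompatability} by direct expansion. Your extra care about the case where the left and right insertions act on the same $P$ factor (resolved by the commutativity of $\pa_{\el}$ and $\pa_{\rl}$) is exactly the point the paper leaves implicit as ``straightforwardly established''.
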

We illustrate how the action $A_{\rl}$, in particular, appears in the Sato formulation of the noncommutative KP hierarchy via the following example.
\begin{example}[KP equation revisited]\label{ex:n=3revisited}
  We substitute the anstaz~\eqref{eq:solutionansatz} into the noncommutative KP equation \eqref{eq:n=3case}.
  If we focus on the linear terms on the right in \eqref{eq:n=3case}, then the terms associated with $b_{n-2}$ and $b_{n-3}$, we observe,
  \begin{align}
    \pa_{t_n}w_1\!=&-C_{3}^n\pa_x^3\lb G\rb+C_{2}^n\pa_x^2\pa_\rl\lb G\rb-C_{1}^n\pa_x\pa_\rl^2\lb G\rb\nonumber\\
                &+C_{1}^nC_{1}^{n-2}\bigl(\pa_x\lb G\rb\bigr)^2\nonumber\\
                &-C_{1}^nC_{0}^{n-2}\bigl(\pa_x\lb G\rb\bigr)\bigl(\pa_\rl\lb G\rb\bigr)\nonumber\\
                &+C_{2}^{n}C_{0}^{n-3}\bigl(\pa_x^2\lb G\rb\bigr)\lb G\rb\nonumber\\
                &-C_{1}^nC_{0}^{n-2}C_{n-3}^{n-3}\bigl(\pa_x\lb G\rb\bigr)\lb G\rb^2\nonumber\\
                &-C_{1}^{n}C_{0}^{n-3}\bigl(\pa_x\pa_\rl\lb G\rb\bigr)\lb G\rb\nonumber\\
               =&-C_{3}^n\pa_x^3\lb G\rb\!+\!C_{2}^n\,A_{\rl}\circ\pa_x^2\lb G\rb\!+\!C_{1}^nC_{1}^{n-2}\bigl(\pa_x\lb G\rb\bigr)^2\nonumber\\
                &-C_{1}^n\,A_{\rl}\circ\pa_x\pa_\rl\lb G\rb\!-\!C_{1}^n\,A_{\rl}\circ\bigl((\pa_x\lb G\rb)\lb G\rb\bigr)\nonumber\\
               =&-C_{3}^n\pa_x^3\lb G\rb\!+\!C_{2}^n\,A_{\rl}\circ\pa_x^2\lb G\rb\!+\!C_{1}^nC_{1}^{n-2}\bigl(\pa_x\lb G\rb\bigr)^2\nonumber\\
                &-C_{1}^n\,A_{\rl}\circ\bigl(A_{\rl}\circ\pa_x\lb G\rb\bigr).\label{eq:begright}
  \end{align}
  For completeness, since $G=PV\equiv VP$ and $\pa_x\lb G\rb\equiv\pa_x\lb V\rb=\lb VP_{\hat{1}}V\rb$, we can use Lemma~\ref{lemma:crucial} to compute that,
  \begin{align}
    A_{\rl}\circ\bigl(A_{\rl}\circ\pa_x\lb G\rb\bigr)=&\;A_{\rl}\circ\bigl(\lb V(\pa_\rl P_{\hat{1}})V\rb+\lb VP_{\hat{1}}VP_{\hat{1}}V\rb\bigr)\nonumber\\
    =&\;\lb V(\pa_\rl^2 P_{\hat{1}})V\rb+\lb V(\pa_{\rl}P_{\hat{1}})VP_{\hat{1}}V\rb\nonumber\\
    &\;+\lb VP_{\hat{1}}V(\pa_{\rl}P_{\hat{1}})V\rb\nonumber\\
    &\;+\lb VP_{\hat{1}}VP_{\hat{1}}VP_{\hat{1}}V\rb.\label{eq:firstright}
  \end{align}
  And similarly we have,
  \begin{align}
    A_{\rl}\circ\pa_x^2\lb G\rb=&\;\lb V(\pa_\rl P_{\hat{2}})V\rb+\lb VP_{\hat{2}}VP_{\hat{1}}V\rb\nonumber\\
    &\;+2\,\lb VP_{\hat{1}}V(\pa_\rl P_{\hat{1}})V\rb\nonumber\\
    &\;+2\,\lb VP_{\hat{1}}VP_{\hat{1}}VP_{\hat{1}}V\rb,\label{eq:secondright}
  \end{align}
  If we substitute \eqref{eq:firstright} and \eqref{eq:secondright} into \eqref{eq:begright},
  it is straightforward to show that the righthand side in \eqref{eq:begright} collapses to $-\lb V(P_{3,0}+P_{0,3})V\rb$. Thus $\lb G\rb$ solves \eqref{eq:n=3case}.
\end{example}

We observe from Example~\ref{ex:n=3revisited} that the natural context for the complete noncommutative KP hierarchy appears to be the $\mathbb{A}_{\el}$-$\mathbb{A}_{\rl}$-bimodule set $\mathbb{P}_A$.
We demonstrate this is indeed the case in Corollary~\ref{cor:hierarchyequations} at the end of this section.
Recall that in Definition~\ref{def:prePoppealg} for the pre-P\"oppe algebra $\mathbb{P}$, we excluded the monomial `$\lb V\rb$'.
This was because the standard pre-P\"oppe product in Lemma~\ref{lemma:Poppeprodmonomials}, precludes the cases $FP_{a,b}=\id$ and $P_{c,d}\hat{F}=\id$, for which it does not apply.
Hence we excluded a pre- or postfactor being `$\lb V\rb$'.
However, suppose we could include the case $P_{c,d}\hat F=\id$, for example, then the natural result we would expect from the pre-P\"oppe product in Lemma~\ref{lemma:Poppeprodmonomials}, would be,
\begin{equation}\label{eq:nogo}
  \lb V_{a,b}\rb \lb V\rb=\lb VP_{a_1,b_1}V\cdot\!\cdot\!\cdot V(\pa_\rl P_{a_k,b_k})V\rb+\lb V_{a,b}P_{\hat{1}}V\rb.
\end{equation}
Assume for the moment that \eqref{eq:nogo} holds.
Then we observe, 
\begin{equation}\label{eq:nogoconnect}
  A_{\rl}\circ\lb V_{a,b}\rb=\lb V_{a,b}\rb\,\lb V\rb.
\end{equation}
Propagating this perspective further, let us re-examine Example~\ref{ex:n=3revisited} with \eqref{eq:nogoconnect} in mind.
\begin{example}[KP equation reprised]\label{ex:n=3reprised}
  Consider $b_{n-2}$ and $b_{n-3}$ in \eqref{eq:n=3casebn2} and \eqref{eq:n=3casebn3}, respectively.
  Using the ansatz \eqref{eq:solutionansatz}, that $G=VP$ and $\pa_x\lb G\rb=\pa_x\lb V\rb$, we find,
  \begin{align*}
       b_{n-2}&=-C_1^n\pa_x\lb V\rb,\\
       b_{n-3}&=-C_2^n\pa_x^2\lb V\rb-b_{n-2}\,\lb VP\rb+C_1^n\pa_{\rl}\pa_x\lb V\rb\\
             &=-C_2^n\pa_x^2\lb V\rb-A_{\rl}\circ b_{n-2}\\
             &=-C_2^n\pa_x^2\lb V\rb-b_{n-2}\,\lb V\rb. 
  \end{align*}
  Then if we consider $\pa_{t_n}w_1$ in \eqref{eq:n=3case}, we observe,
  \begin{align*}
    \pa_{t_n}w_1=&\;C_3^n\pa_x^3\lb V\rb+C_1^{n-2}b_{n-2}\,\pa_x\lb V\rb+b_{n-3}\,\lb VP\rb\\
                &\;-C_2^n\pa_x^2\lb V\rb-C_1^{n-2}b_{n-2}\,\pa_{\rl}\lb V\rb+C_1^n\pa_{\rl}^2\pa_x\lb V\rb\\
               =&\;C_3^n\pa_x^3\lb V\rb+C_1^{n-2}b_{n-2}\,\pa_x\lb V\rb+A_{\rl}\circ b_{n-3}\\
               =&\;C_3^n\pa_x^3\lb V\rb+C_1^{n-2}b_{n-2}\,\pa_x\lb V\rb+b_{n-3}\lb V\rb. 
  \end{align*}
  We thus see that, if \eqref{eq:nogo} does hold, then the expressions above for $b_{n-3}$ and $\pa_{t_n}w_1$
  are more succinct and a simple pattern for the coefficients emerges.
\end{example}
We exclude `$\lb V\rb$' from the pre-P\"oppe algebra for functional analytic reasons.
However, we can still construct an \emph{abstract algebra} based on the pre-P\"oppe product, but which also includes the product \eqref{eq:nogo} and the equivalent for $\lb V\rb\,\lb V_{a,b}\rb$, as follows.
\begin{definition}[Augmented pre-P\"oppe product]\label{def:augmentedpPp}
 We define the following products on left by the kernel monomial linear combinations on the right:  
 \begin{align*}
   \lb V\rb\lb V_{a,b}\rb&\coloneqq\lb V(\pa_{\el}P_{a_1,b_1})V\cdot\!\cdot\!\cdot VP_{a_k,b_k}V\rb+\lb VP_{\hat{1}}V_{a,b}\rb,\\
   \lb V_{a,b}\rb\lb V\rb&\coloneqq\lb VP_{a_1,b_1}V\cdot\!\cdot\!\cdot V(\pa_{\rl}P_{a_k,b_k})V\rb+\lb V_{a,b}P_{\hat{1}}V\rb. 
 \end{align*}
The \emph{augmented pre-P\"oppe product} is the pre-P\"oppe product in Lemma~\ref{lemma:Poppeprodmonomials}, augmented to include these two cases.
\end{definition}
\begin{definition}[Augmented pre-P\"oppe algebra]\label{def:ApPA}
  The \emph{augmented pre-P\"oppe algebra}, denoted by $\mathbb{AP}$, is the real algebra of the forms~\eqref{eq:exmonomial} together with the monomial `$\lb V\rb$',
  equipped with the augmented pre-P\"oppe product. However, it is \emph{nonassociative}, see Example~\ref{ex:nonassoc} below.
\end{definition}  
\begin{remark}\label{rmk:nogoconnectall}
  Naturally, we observe that,
  \begin{equation*}
    A_{\rl}\circ\lb V_{a,b}\rb=\lb V_{a,b}\rb\,\lb V\rb\quad\text{and}\quad A_{\el}\circ\lb V_{a,b}\rb=\lb V\rb\,\lb V_{a,b}\rb. 
  \end{equation*}  
\end{remark}
We observe from the formulae in Remark~\ref{rmk:nogoconnectall} that the left and right multiplications in Definition~\ref{def:augmentedpPp} by $\lb V\rb$ in $\mathbb{AP}$,
precisely emulate the actions of $A_{\rl}$ and $A_{\el}$ in $\mathbb{P}_A$.
\begin{example}[$\mathbb{AP}$ is nonassociative]\label{ex:nonassoc}
  In $\mathbb{P}_A$ we observe that we have,
  \begin{align*}
    A_{\rl}\circ\lb VP_{\hat{1}}V\rb&=\lb V(\pa_{\rl}P_{\hat{1}})V\rb+\lb VP_{\hat{1}}VP_{\hat{1}}V\rb,\\
    A_{\el}\circ\lb VP_{\hat{1}}V\rb&=\lb V(\pa_{\el}P_{\hat{1}})V\rb+\lb VP_{\hat{1}}VP_{\hat{1}}V\rb.
  \end{align*}
  In $\mathbb{AP}$, since $\lb V\rb^2=\lb VP_{\hat 1}V\rb$, we observe that we have,
  \begin{align*}
    \bigl(\lb V\rb^2\bigr)\lb V\rb&=\lb V(\pa_{\rl}P_{\hat 1})V\rb+\lb VP_{\hat 1}VP_{\hat 1}V\rb,\\
    \lb V\rb\bigl(\lb V\rb^2\bigr)&=\lb V(\pa_{\el}P_{\hat 1})V\rb+\lb VP_{\hat 1}VP_{\hat 1}V\rb.   
  \end{align*}
  In particular, $A_{\rl}\circ\lb VP_{\hat{1}}V\rb\neq A_{\el}\circ\lb VP_{\hat{1}}V\rb$, and similarly, $\lb V\rb\bigl(\lb V\rb^2\bigr)\neq\bigl(\lb V\rb^2\bigr)\lb V\rb$.
  This means that the augmented pre-P\"oppe algebra $\mathbb{AP}$ is \emph{not associative}.
\end{example}
\begin{remark}[$\mathbb{P}$ is associative]
  The pre-P\"oppe algebra $\mathbb{P}$, which excludes the monomial `$\lb V\rb$', \emph{is associative}.
  This is because the pre-P\"oppe product given in Lemma~\ref{lemma:Poppeprodmonomials} is associative. 
  This is straightforwardly checked by computing, $\lb V_{a,b}\rb\,\bigl(\lb V_{a^\prime,b^\prime}\rb\,\lb V_{a^{\prime\prime},b^{\prime\prime}}\rb\bigr)$
  and $\bigl(\lb V_{a,b}\rb\,\lb V_{a^\prime,b^\prime}\rb\bigr)\,\lb V_{a^{\prime\prime},b^{\prime\prime}}\rb$, and comparing the two results. 
\end{remark}
As already hinted, we have the following important result.
\begin{lemma}\label{lemma:APandPAisomorphic}
  The the action bimodule $\mathbb{P}_A$ and the augmented pre-P\"oppe algebra $\mathbb{AP}$ are isomorphic, i.e.\/ we have,
\begin{equation*}
     \mathbb{P}_A\cong\mathbb{AP}.
\end{equation*}
\end{lemma}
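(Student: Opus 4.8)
The plan is to exhibit an explicit bijective linear map between the two structures and verify it respects all the operations. Since both $\mathbb{P}_A$ and $\mathbb{AP}$ are built on the same underlying real vector space---the span of the monomials $\lb V_{a,b}\rb$ of the form~\eqref{eq:exmonomial}, together with the extra element $\lb V\rb$---the natural candidate is the identity map $\Phi$ on basis elements, extended linearly. So the content of the proof is not constructing $\Phi$ but checking that it is compatible with the operations: the pre-P\"oppe product, the left action $A_{\el}$, and the right action $A_{\rl}$.

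First I would make precise what ``isomorphic'' means here, since $\mathbb{P}_A$ is an $\mathbb{A}_{\el}$-$\mathbb{A}_{\rl}$-bimodule over $\mathbb{P}$ while $\mathbb{AP}$ is a (nonassociative) algebra containing $\mathbb{P}$ as a subalgebra. The correct statement is that $\mathbb{AP}$, viewed as a bimodule over itself via left and right multiplication by the element $\lb V\rb$ (so that $\mathbb{A}_{\el}$ acts by left multiplication by polynomials in $\lb V\rb$ and $\mathbb{A}_{\rl}$ by right multiplication), together with its underlying $\mathbb{P}$-algebra structure, is isomorphic to $\mathbb{P}_A$. Concretely: $\Phi\colon\mathbb{P}\to\mathbb{P}$ is the identity on the pre-P\"oppe subalgebra (which sits inside both $\mathbb{AP}$ and $\mathbb{P}_A$), it sends the generator $A_{\el}\in\mathbb{A}_{\el}$ to left multiplication by $\lb V\rb$ and the generator $A_{\rl}\in\mathbb{A}_{\rl}$ to right multiplication by $\lb V\rb$, and one must check that these assignments are consistent with all the relations.

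The key steps, in order, are: (i) note that $\Phi$ restricted to $\mathbb{P}$ is an algebra isomorphism onto the pre-P\"oppe subalgebra of $\mathbb{AP}$ trivially, since the products agree by construction; (ii) verify that $\Phi$ intertwines the left action of $A_{\el}$ on $\mathbb{P}_A$ with left multiplication by $\lb V\rb$ in $\mathbb{AP}$---but this is exactly the first identity in Remark~\ref{rmk:nogoconnectall}, namely $A_{\el}\circ\lb V_{a,b}\rb=\lb V\rb\,\lb V_{a,b}\rb$, which in turn follows from Lemma~\ref{lemma:crucial} and Definition~\ref{def:augmentedpPp}; (iii) verify the symmetric statement for $A_{\rl}$ and right multiplication by $\lb V\rb$, which is the second identity in Remark~\ref{rmk:nogoconnectall}; (iv) check that the ring structures of $\mathbb{A}_{\el}$ and $\mathbb{A}_{\rl}$ are faithfully represented---that is, iterated left actions $A_{\el}^j\circ\lb V_{a,b}\rb$ correspond to $\lb V\rb^j\,\lb V_{a,b}\rb$ and similarly on the right; this uses the associativity of the left $\mathbb{A}_{\el}$-module component and of the right $\mathbb{A}_{\rl}$-module component, both already noted in the text preceding the statement; and (v) check the bimodule compatibility $A_{\el}\circ(A_{\rl}\circ\lb V_{a,b}\rb)=A_{\rl}\circ(A_{\el}\circ\lb V_{a,b}\rb)$ corresponds, under $\Phi$, to the (valid) associativity-type relation $\lb V\rb\,(\lb V_{a,b}\rb\,\lb V\rb)=(\lb V\rb\,\lb V_{a,b}\rb)\,\lb V\rb$ inside $\mathbb{AP}$---note this particular triple product \emph{is} associative even though $\mathbb{AP}$ is not associative in general, which is consistent with~\eqref{eq:bimodulecompatability} holding in $\mathbb{P}_A$.

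The main obstacle I anticipate is step (v), and more generally bookkeeping exactly \emph{which} triple products in $\mathbb{AP}$ are associative and which are not. Example~\ref{ex:nonassoc} shows $\lb V\rb(\lb V\rb^2)\neq(\lb V\rb^2)\lb V\rb$, so one cannot simply invoke associativity; one must check that the only triple products needed to match the bimodule axioms are of the mixed form $\lb V\rb\,x\,\lb V\rb$ with $x\in\mathbb{P}$ (associative, matching the compatibility condition) or of the pure one-sided form $\lb V\rb^j\,x$ and $x\,\lb V\rb^j$ (associative within each side, matching module-associativity), and that the genuinely nonassociative triples---those with $\lb V\rb$ appearing on both ``inner'' slots of a nested product in the same direction---never arise from the bimodule axioms. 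Once this dichotomy is established, the verification that $\Phi$ is a bijection respecting all structure is routine, and the inverse map is again the identity/evaluation-at-generators assignment, so bijectivity is immediate.
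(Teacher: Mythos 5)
Your proposal is correct and takes essentially the same route as the paper: the paper's proof simply identifies $A_{\el}$ and $A_{\rl}$ with pre- and postmultiplication by $\lb V\rb$ and maps all other monomials $\lb V_{a,b}\rb$ to themselves, exactly your map $\Phi$. Your additional bookkeeping in steps (iv)--(v) about which triple products are associative (and your observation that the problematic nested products of Example~\ref{ex:nonassoc} never arise from the bimodule axioms, since the actions only ever apply to elements of $\mathbb{P}$, which excludes $\lb V\rb$) is a correct and worthwhile elaboration of details the paper leaves implicit.
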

\begin{proof}
  In view of the results established above, this is straightforward. We associate the actions $A_{\el}$ and $A_{\rl}$ of the action bimodule $\mathbb{P}_A$
  respectively with pre- and postmultiplication by $\lb V\rb$ in $\mathbb{AP}$. All other monomials of the form $\lb V_{a,b}\rb$ with $V_{a,b}\neq V$ are mapped to each other.
  This mapping is one-to-one. 
\qed
\end{proof}
Thus \emph{to summarise}, the pre-P\"oppe algebra, $\mathbb{P}$, in Definition~\ref{def:prePoppealg}, is noncommutative and associative.
The augmented pre-P\"oppe algebra, $\mathbb{AP}$, is noncommutative and nonassociative.
Further, the pre-P\"oppe algebra $\mathbb{P}$ is a natural subalgebra of the augmented pre-P\"oppe algebra $\mathbb{AP}$.
It is the subalgebra of $\mathbb{AP}$ in which we restrict ourselves to monomials of nonzero grade (see the next few sections), i.e.\/ we exclude the monomial `$\lb V\rb$'.
If we regard the incorporation of the actions $A_{\rl}$ and $A_{\el}$ on $\mathbb{P}$,
as `lifting' the pre-P\"oppe algebra $\mathbb{P}$ to the action bimodule, $\mathbb{P}_A$, then we have the following:
\begin{equation*}
\xymatrix{
  \ar@{<->}[r]^\cong \mathbb{P}_A& \mathbb{AP} \\
  \ar@{->}[u]^{\text{lift}}\mathbb{P}\ar@{->}[ur]_{\text{subalg.}} & }
\end{equation*}
The augmented pre-P\"oppe algebra is extremely rich in structure as we see in the coming sections.
It is also the context we use to solve the complete noncommutative KP hierarchy.
In particular, we establish that the augmented pre-P\"oppe algebra is isomorphic to the \emph{descent algebra} equipped with a so-called \emph{grafting product}.
To see that this is the case, the most edifying route requires us to consider several equivalent representations of the descent algebra;
as outlined briefly in the introduction and in detail in the following sections.
Indeed the first representation is that involving planar binary rooted trees in which at each grade, we equivalence specific subsets of elements.
In fact, we consider the algebra of planar rooted trees equipped with the grafting product, which is a natural algebra representation for a nonassociative noncommutative algebraic context. 
The equivalencing operation on the algebra of planar binary rooted trees is meant to model the fact that, in our context, there is some redundancy in the tree representation.
The second representation provides a natural encoding incorporating this redundancy. We call it the `left-glue-right' representation or `$\el\gl\rl$'-encoding for short.
This representation reveals the connection between the redundancy-free nonassociative algebra and the augmented pre-P\"oppe algebra. 
The third representation we use is that of the descent algebra itself, equipped, as mentioned, with the grafting product. 
This is the natural product induced from the algebra of planar binary rooted trees after equivalencing.
We also introduce a degrafting operator which we use to succinctly encode the expressions involved.
In Sections~\ref{sec:trees}--\ref{sec:degrafting} we introduce the algebraic structures we require, and then in Sections~\ref{sec:weightchar}--\ref{sec:solutions} we get down to solving the noncommutative KP hierarchy.
To round off this section, we give the following corollary to Theorem~\ref{thm:hierarchyequations}, demonstrating how, in the augmented pre-P\"oppe algebra context, assuming the ansatz~\eqref{eq:solutionansatz},  
the Sato coefficients $b_{n-k}$ are simply prescribed.
\begin{corollary}\label{cor:hierarchyequations}
  In the augmented pre-P\"oppe algebra, and assuming the ansatz~\eqref{eq:solutionansatz}, the Sato coefficients $b_{n-k}$ in Theorem~\ref{thm:hierarchyequations}, are given by,
  \begin{equation*}
    b_{n-k}=\biggl(C_{k-1}^n\pa_x^{k-1}+\sum_{\ell=3}^{k}C_{k-\ell}^{n-\ell+1}b_{n-\ell+1}\pa_x^{k-\ell}\biggr)\,\lb V\rb.
  \end{equation*}
  We emphasise that the final term is $C_0^{n-k+1}b_{n-k+1}\lb V\rb$. 
  The product in this term and in all other instances just above, is to be interpreted as the augmented pre-P\"oppe product.
  This formula also applies in the case $k=n+1$ corresponding to $b_{-1}=\pa_{t_n}w_1$.
\end{corollary}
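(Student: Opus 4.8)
The plan is to argue by induction on $k$, running from $k=2$ up to $k=n+1$, with the inductive step supplied by a short recursion that expresses $b_{n-k}$ through $b_{n-k+1}$. The starting point is the closed form for the Sato coefficients established inside the proof of Theorem~\ref{thm:hierarchyequations}, namely $b_{n-k}=\mathcal{L}_{k-1}w_1+\mathcal{L}_{k-2}w_2+\cdots+\mathcal{L}_1 w_{k-1}$, into which I substitute the ansatz~\eqref{eq:solutionansatz}. Two elementary facts do all the work. First, $w_1=-\lb G\rb$ while, for $j\geqslant2$, $w_j=-\pa_{\rl}w_{j-1}$, both immediate from $w_j\equiv(-1)^j\pa_{\rl}^{j-1}\lb G\rb$. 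Second, $\pa_{\rl}$ commutes with $\pa_x$ and, because by~\eqref{eq:leftandrightderivs} and Lemma~\ref{lemma:Poppeprodmonomials} the operator $\pa_{\rl}$ differentiates only the rightmost factor of an (augmented) pre-P\"oppe product, it also commutes with left multiplication by any coefficient $b_j$, hence with every $\mathcal{L}_m$. We also use that $\pa_x^m\lb G\rb=\pa_x^m\lb V\rb$ for $m\geqslant1$, by~\eqref{eq:partialfractions}.

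Next I would separate off the first term, $\mathcal{L}_{k-1}w_1=-\mathcal{L}_{k-1}\lb G\rb$, and extract a single $\pa_{\rl}$ from the remaining $k-2$ terms: since $w_j=-\pa_{\rl}w_{j-1}$ and $\pa_{\rl}$ commutes with each $\mathcal{L}_{k-j}$, the tail $\mathcal{L}_{k-2}w_2+\cdots+\mathcal{L}_1 w_{k-1}$ equals $-\pa_{\rl}\bigl(\mathcal{L}_{k-2}w_1+\cdots+\mathcal{L}_1 w_{k-2}\bigr)=-\pa_{\rl}b_{n-k+1}$, the last equality being the same closed form applied at order $k-1$. This produces the recursion $b_{n-k}=-\mathcal{L}_{k-1}\lb G\rb-\pa_{\rl}b_{n-k+1}$. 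I would then expand $\mathcal{L}_{k-1}=C_{k-1}^n\pa_x^{k-1}+\sum_{j=0}^{k-3}C_j^{n-k+1+j}b_{n-k+1+j}\pa_x^j$ and split off its $\pa_x^0$-term, which is exactly $C_0^{n-k+1}b_{n-k+1}\lb G\rb=b_{n-k+1}\lb G\rb$.

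The decisive step is to recognise that the two pieces $b_{n-k+1}\lb G\rb$ and $\pa_{\rl}b_{n-k+1}$ assemble into the action $A_{\rl}\circ b_{n-k+1}=\pa_{\rl}b_{n-k+1}+b_{n-k+1}\lb G\rb$ of Definition~\ref{def:actions}, which by Remark~\ref{rmk:nogoconnectall} equals the augmented pre-P\"oppe product $b_{n-k+1}\lb V\rb$; here one invokes the inductive hypothesis that $b_{n-k+1}$ lies in the pre-P\"oppe subalgebra $\mathbb{P}$, so that $A_{\rl}$ is defined on it and returns an element of $\mathbb{P}\subset\mathbb{AP}$. Replacing the surviving $\pa_x^j\lb G\rb$ by $\pa_x^j\lb V\rb$ for $j\geqslant1$ and reindexing $j\mapsto\ell=k-j$, with $b_{n-k+1}\lb V\rb$ absorbed as the $\ell=k$ (that is, $C_0^{n-k+1}$) summand, then yields the stated expansion for $b_{n-k}$, the overall sign being that fixed by $w_1=-\lb G\rb$. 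The base case $k=2$ is the bare identity $b_{n-2}=\mathcal{L}_1 w_1=C_1^n\pa_x w_1$, which under the ansatz is the $k=2$ instance of the formula (with empty sum), and the case $k=n+1$, i.e.\ $b_{-1}=\pa_{t_n}w_1$, is covered verbatim by the recursion with $b_{n-k+1}=b_0$, using Remark~\ref{rmk:b_1}.

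I expect the main obstacle to be organisational rather than conceptual: one must check that the lone $\pa_{\rl}$ extracted from the $w$-recursion lands on $b_{n-k+1}$ in precisely the slot where it pairs with the $\pa_x^0$-term of $\mathcal{L}_{k-1}$ to produce a clean $A_{\rl}$-action, while every higher-order term of $\mathcal{L}_{k-1}\lb G\rb$ is left untouched and simply becomes $C_j^{n-k+1+j}b_{n-k+1+j}\pa_x^j\lb V\rb$. The one genuinely technical point is the commutation of $\pa_{\rl}$ past the left $b$-multiplications inside the $\mathcal{L}_m$; this is legitimate because, exactly as in~\eqref{eq:leftandrightderivs}, $\pa_{\rl}$ acts only on the rightmost factor of any (augmented) pre-P\"oppe product and hence slides freely past any left factor.
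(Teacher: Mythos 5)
Your proposal is correct and follows essentially the same route as the paper's proof: both write out $b_{n-k}=\mathcal L_{k-1}w_1+\cdots+\mathcal L_1w_{k-1}$, collapse the $w_2,\ldots,w_{k-1}$ tail into $-\pa_{\rl}b_{n-k+1}$ via $w_{j+1}=-\pa_{\rl}w_j$, and then recognise $\pa_{\rl}b_{n-k+1}+b_{n-k+1}\lb VP\rb$ as the action $A_{\rl}\circ b_{n-k+1}$, i.e.\ the augmented pre-P\"oppe product $b_{n-k+1}\lb V\rb$. Your explicit inductive framing and the remark on commuting $\pa_{\rl}$ past the left $b$-multiplications merely make precise steps the paper leaves implicit.
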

\begin{proof}
  Explicitly, the coefficients $b_{n-k}$ from Theorem~\ref{thm:hierarchyequations} have the form,
  \begin{align*}
    b_{n-k}=&\;\bigl(C_{k-1}^n\pa_x^{k-1}+C_{k-3}^{n-2}b_{n-2}\pa_x^{k-3}+\cdots\\
           &\qquad\cdots+C_0^{n-k+1}b_{n-k+1}\bigr)w_1\\
           &\;+\bigl(C_{k-2}^n\pa_x^{k-2}+C_{k-4}^{n-2}b_{n-2}\pa_x^{k-4}+\cdots\\
           &\;\qquad\cdots+C_0^{n-k+2}b_{n-k+2}\bigr)w_2+\cdots\\
           &\;\cdots+\bigl(C_2^n\pa_x^2+C_0^{n-2}b_{n-2}\bigr)w_{k-2}+C_1^n\pa_x w_{k-1}\\
          =&\;-\bigl(C_{k-1}^n\pa_x^{k-1}+C_{k-3}^{n-2}b_{n-2}\pa_x^{k-3}+\cdots\\
           &\;\qquad\cdots+C_0^{n-k+1}b_{n-k+1}\bigr)\lb VP\rb-\pa_{\rl}b_{n-k+1}.
  \end{align*}
  Using the augmented pre-P\"oppe product and that $\pa_x\lb G\rb\equiv\pa_{x}\lb V\rb$ with $G=VP$, gives the result.
  The case of $b_{-1}$ is straightforwardly checked.
\qed
\end{proof}
\begin{remark}
  The perceptive reader will have noticed that in the Sato formulation of the noncommutative KP hierarchy above, we actually only needed the right action $A_{\rl}$.
  However, as we see in the next sections, to encode derivative terms such as $\pa_x^n\lb V\rb$,
  corresponding to the polynomials $\pf_n$ in Lemma~\ref{lemma:weightedsumofdescents} below, in the context of the augmented pre-P\"oppe algebra
  or equivalently the descent algebra as we are about to see, we also require the left action $A_{\el}$.
\end{remark}
\begin{remark}
  We also observe such left and right actions in the context of linear systems theory applied to the KP equation, see Blower and Malham~\cite{BM-KP-LS}.
\end{remark}

\section{Planar binary rooted trees}\label{sec:trees}
The archetypal nonassociative noncommutative algebra is the algebra of planar binary rooted trees. For further details on these see for example, Loday and Ronco~\cite{LR} and Foissy~\cite{Foissy}.
For some further background on the notation and concepts for planar binary rooted trees that we use herein, see Malham~\cite{M-coag}.
We observe that, using the augmented pre-P\"oppe product, we can generate operator kernel monomials of the form `$\lb VP_{a_1,b_1}VP_{a_2,b_2}V\cdots VP_{a_n,b_n}V\rb$' by computing products of $\lb V\rb$.
Thus for example, we have, $\lb V\rb^2=\lb VP_{\hat 1}V\rb$ and $\lb V\rb\bigl(\lb V\rb^2\bigr)$ and $\bigl(\lb V\rb^2\bigr)\lb V\rb$ generate the forms shown on the right-hand side in Example~\ref{ex:nonassoc}.
We can parametrise such products by planar binary rooted trees. Indeed the binary parenthesisation of strings is isomorphically mapped to planar binary trees, see Stanley~\cite[Ch.~6]{Stanley}.
We thus parametrise the first three basic products as follows,
\begin{equation*}
  \lb V\rb^2=\!{\tiny \begin{forest} for tree={grow'=90, parent anchor=center,child anchor=center, l=0cm,inner ysep=0cm,edge+=thick} [[][]] \end{forest}}\!,\quad
  \lb V\rb\bigl(\lb V\rb^2\bigr)=\!{\tiny \begin{forest} for tree={grow'=90, parent anchor=center,child anchor=center, l=0cm,inner ysep=0cm,edge+=thick} [[][[][]]] \end{forest}}\!\!,
  \quad\text{and}\quad
  \bigl(\lb V\rb^2\bigr)\lb V\rb=\!{\tiny \begin{forest} for tree={grow'=90, parent anchor=center,child anchor=center, l=0cm,inner ysep=0cm,edge+=thick} [[[][]][]] \end{forest}}\!.
\end{equation*}
The parametrisation of higher degree paranthesised products such as $\lb V\rb\bigl(\lb V\rb\bigl(\lb V\rb^2\bigr)\bigr)$ by planar binary rooted trees is straightforward.
This last example is given by the first tree of grade $3$ in Table~\ref{table:treesupto3}. Further examples of such trees are given in Tables~\ref{table:treesupto3} and \ref{table:treesof4}.
\begin{definition}[Grade of a tree]
  The \emph{grade}, $\|\tau\|$, of a planar binary rooted tree $\tau$, is the number of vertices
  `${\tiny \begin{forest} for tree={grow'=90, parent anchor=center,child anchor=center, l=0cm,inner ysep=0cm,edge+=thick} [[][]] \end{forest}}$'
  it contains.
\end{definition}
There are two natural procedures for generating planar binary rooted trees: grafting and branching. The induced grafting procedure underlies the product on the descent algebra we eventually use.
For the moment though, we focus on the branching procedure. This is outlined as follows.
All the trees of grade $n+1$ can be generated from those of grade $n$ by attaching a single branch
`${\tiny \begin{forest} for tree={grow'=90, parent anchor=center,child anchor=center, l=0cm,inner ysep=0cm,edge+=thick} [[][]] \end{forest}}$'
successively to each free end of each tree $\tau$ of grade $n$.
This procedure exhaustively generates all the trees of grade $n+1$, but some trees are multiply generated.
The multiplicity of any tree generated in this way is known as the \emph{weight character} of the tree.
Formally, see Malham~\cite{M-coag}, we define this as follows.
\begin{definition}[Weight character]\label{def:weightchar}
  We define the \emph{weight character} $\alpha$ of any planar binary rooted tree $\tau$ recursively as follows:
  $\alpha({\tiny \begin{forest} for tree={grow'=90, parent anchor=center,child anchor=center} [] \path[fill=black] (.anchor) circle[radius=1.5pt]; \end{forest}})\coloneqq1$,
  $\alpha(\raisebox{-1pt}{{\tiny \begin{forest} for tree={grow'=90, parent anchor=center,child anchor=center, l=0cm,inner ysep=0cm,edge+=thick} [[][]] \end{forest}}})=C^0_0$ and,
  \begin{equation*}
    \alpha\Bigl(\raisebox{-5pt}{{\scriptsize \begin{forest} for tree={grow'=90, parent anchor=center,child anchor=south, l=0cm,inner ysep=2pt,edge+=thick, s sep=0mm} [[$\tau_1$][$\tau_2$]] \end{forest}}}\Bigr)
    =C^{\|\tau_1\|+\|\tau_2\|}_{\|\tau_1\|} \alpha(\tau_1)\,\alpha(\tau_2).
  \end{equation*}
\end{definition}
The \emph{branching operator} actions the branching procedure as follows. Again, see Malham~\cite{M-coag}.
\begin{definition}[Branching operator]\label{def:branchingoperator}
  We define the linear \emph{branching operator} $B_{\vee}$ as the operator that acts on any tree $\tau$ by successively, additively attaching a branch
  `\raisebox{-1pt}{{\tiny \begin{forest} for tree={grow'=90, parent anchor=center,child anchor=center, l=0cm,inner ysep=0cm,edge+=thick} [[][]] \end{forest}}}'
  to each free end of the tree $\tau$, thus generating a sum of the corresponding trees at the next grade.
\end{definition}

\begin{example}\label{ex:branching}
  The action of the branching operator on some basic trees is illustrated as follows,
  \begin{align*}
    B_{\vee}({\tiny \begin{forest} for tree={grow'=90, parent anchor=center,child anchor=center} [] \path[fill=black] (.anchor) circle[radius=1.5pt]; \end{forest}})
    =&\;\raisebox{-1pt}{{\tiny \begin{forest} for tree={grow'=90, parent anchor=center,child anchor=center, l=0cm,inner ysep=0cm,edge+=thick} [[][]] \end{forest}}},\\    
    B_{\vee}(\raisebox{-1pt}{{\tiny \begin{forest} for tree={grow'=90, parent anchor=center,child anchor=center, l=0cm,inner ysep=0cm,edge+=thick} [[][]] \end{forest}}})
    =&\;\raisebox{-4pt}{{\tiny \begin{forest} for tree={grow'=90, parent anchor=center,child anchor=center, l=0cm,inner ysep=0cm,edge+=thick} [[][[][]]] \end{forest}}}
    +\raisebox{-4pt}{{\tiny \begin{forest} for tree={grow'=90, parent anchor=center,child anchor=center, l=0cm,inner ysep=0cm,edge+=thick} [[[][]][]] \end{forest}}},\\
    B_{\vee}^2(\raisebox{-1pt}{{\tiny \begin{forest} for tree={grow'=90, parent anchor=center,child anchor=center, l=0cm,inner ysep=0cm,edge+=thick} [[][]] \end{forest}}})
    =&\;B_\vee\Bigl(\raisebox{-4pt}{{\tiny \begin{forest} for tree={grow'=90, parent anchor=center,child anchor=center, l=0cm,inner ysep=0cm,edge+=thick} [[][[][]]] \end{forest}}}
    +\raisebox{-4pt}{{\tiny \begin{forest} for tree={grow'=90, parent anchor=center,child anchor=center, l=0cm,inner ysep=0cm,edge+=thick} [[[][]][]] \end{forest}}}\Bigr)\\
    =&\;\raisebox{-8pt}{{\tiny \begin{forest} for tree={grow'=90, parent anchor=center,child anchor=center, l=0cm,inner ysep=0cm,edge+=thick} [[][[][[][]]]] \end{forest}}}
    +\raisebox{-8pt}{{\tiny \begin{forest} for tree={grow'=90, parent anchor=center,child anchor=center, l=0cm,inner ysep=0cm,edge+=thick} [[][[[][]][]]] \end{forest}}}
    +2\cdot\raisebox{-8pt}{{\tiny \begin{forest} for tree={grow'=90, parent anchor=center,child anchor=center, l=0cm,inner ysep=0cm,edge+=thick} [[[][]][[][]]] \end{forest}}}
    +\raisebox{-8pt}{{\tiny \begin{forest} for tree={grow'=90, parent anchor=center,child anchor=center, l=0cm,inner ysep=0cm,edge+=thick} [[[][[][]]][]] \end{forest}}}
    +\raisebox{-8pt}{{\tiny \begin{forest} for tree={grow'=90, parent anchor=center,child anchor=center, l=0cm,inner ysep=0cm,edge+=thick} [[[[][]][]][]] \end{forest}}}.
  \end{align*}
\end{example}

\begin{table}
  \caption{We list all the rooted planar binary trees up to grade $3$. For the grades in the left column, we list each tree $\tau\in\mathbb T$ in the second column,
           its word-coding in terms of levels in the third column, its `$\el\gl\rl$'-code in the fourth column, and its descent set (see Section~\ref{sec:descent-alg}) in the final column.}
\label{table:treesupto3}
\begin{center}
   \setlength{\baselineskip}{2\baselineskip}
\begin{tabular}{c|c|c|c|c}
\hline
$\phantom{\biggl|}$ grade $\phantom{\biggl|}$ & tree & levels & `$\el\gl\rl$'-code & descent\\ \hline
\hline
0 & {\tiny \begin{forest} for tree={grow'=90, parent anchor=center,child anchor=center} [] \path[fill=black] (.anchor) circle[radius=1.5pt]; \end{forest}} & 0 & $-$ & $\emptyset_0$  \\[5pt]
1 & {\tiny \begin{forest} for tree={grow'=90, parent anchor=center,child anchor=center, l=0cm,inner ysep=0cm,edge+=thick} [[][]] \end{forest}}             & 1 & $\gl$ & $\emptyset_1$ \\\hline
&&&&\\[-5pt]
2 & {\tiny \begin{forest} for tree={grow'=90, parent anchor=center,child anchor=center, l=0cm,inner ysep=0cm,edge+=thick} [[][[][]]] \end{forest}}         & 12 & $(\el+\gl)\gl$ & $\emptyset_2$ \\
&&&&\\[-5pt]
2 & {\tiny \begin{forest} for tree={grow'=90, parent anchor=center,child anchor=center, l=0cm,inner ysep=0cm,edge+=thick} [[[][]][]] \end{forest}}         & 21 & $\gl(\rl+\gl)$  & $1_2$ \\\hline
&&&&\\[-5pt]
3 & {\tiny \begin{forest} for tree={grow'=90, parent anchor=center,child anchor=center, l=0cm,inner ysep=0cm,edge+=thick} [[][[][[][]]]] \end{forest}}     & 123 & $(\el+\gl)^2\gl$ & $\emptyset_3$ \\
3 & {\tiny \begin{forest} for tree={grow'=90, parent anchor=center,child anchor=center, l=0cm,inner ysep=0cm,edge+=thick} [[][[[][]][]]] \end{forest}}     & 132 & $(\el+\gl)\gl(\rl+\gl)$  & $2_3$ \\[7pt]
3 & {\tiny \begin{forest} for tree={grow'=90, parent anchor=center,child anchor=center, l=0cm,inner ysep=0cm,edge+=thick} [[[][]][[][]]] \end{forest}}     & 212 & $\gl(\el+\gl+\rl)\gl$  & $1_3$ \\
3 & {\tiny \begin{forest} for tree={grow'=90, parent anchor=center,child anchor=center, l=0cm,inner ysep=0cm,edge+=thick} [[[][[][]]][]] \end{forest}}     & 231 & $(\el+\gl)\gl(\rl+\gl)$  & $2_3$ \\
3 & {\tiny \begin{forest} for tree={grow'=90, parent anchor=center,child anchor=center, l=0cm,inner ysep=0cm,edge+=thick} [[[[][]][]][]] \end{forest}}     & 321 & $\gl(\rl+\gl)^2$ & $12_3$ \\\hline
\end{tabular}
\end{center}
\end{table}

\begin{definition}[Weighted sum of binary trees]\label{def:weightedpolytrees}
  A natural linear combination of planar binary rooted trees of grade $n$, is the following weight sum,
  \begin{equation}\label{eq:weightedpolytrees}
    \pf_n\coloneqq\sum_{\|\tau\|=n}\alpha(\tau)\cdot\tau.
  \end{equation}
\end{definition}
We have the following result. We set $\mathbb N_0\coloneqq\mathbb N\cup\{0\}$.
\begin{lemma}[Derivatives and trees]\label{lemma:derivativesandtrees}
  For any $n\in\mathbb N_0$, we have,
  \begin{equation*}
        \pa_x\pf_n\equiv B_{\vee}(\pf_n)\equiv\pf_{n+1}.
  \end{equation*}
\end{lemma}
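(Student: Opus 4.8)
The plan is to separate the statement into its purely combinatorial part, $B_{\vee}(\pf_n)\equiv\pf_{n+1}$ as formal linear combinations of trees, and its ``translation'' part, $\pa_x\pf_n\equiv B_{\vee}(\pf_n)$, understood via the identification of a planar binary rooted tree $\tau$ with the element $E_\tau$ of $\mathbb{AP}$ obtained by evaluating the iterated augmented pre-P\"oppe product dictated by the parenthesisation $\tau$ (so the leaf is $\lb V\rb$ and grafting is the $\mathbb{AP}$-product). This identification extends $\R$-linearly, so it sends $\pf_n$ to $\sum_{\|\tau\|=n}\alpha(\tau)\,E_\tau$.

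For the combinatorial part, I would identify the weight character $\alpha(\tau)$ with the number $\beta(\tau)$ of \emph{build sequences} of $\tau$, i.e.\ of chains starting from the single-vertex tree and producing $\tau$ by successively attaching one branch at a time to a free end (cf.\ Malham~\cite{M-coag}). On one hand, expanding $B_{\vee}^n$ applied to the single-vertex tree tracks exactly such chains, so the coefficient of $\tau$ in it is $\beta(\tau)$, and every grade-$n$ tree occurs (one may always collapse a cherry). On the other hand, $\beta$ satisfies the recursion defining $\alpha$ in Definition~\ref{def:weightchar}: the first branch of any build sequence of $\tau_1\vee\tau_2$ necessarily creates the root cherry, after which leaves of the left subtree remain in the left subtree and likewise on the right, so a build sequence of $\tau_1\vee\tau_2$ is exactly an arbitrary shuffle of a build sequence of $\tau_1$ with one of $\tau_2$, whence $\beta(\tau_1\vee\tau_2)=C^{\|\tau_1\|+\|\tau_2\|}_{\|\tau_1\|}\beta(\tau_1)\beta(\tau_2)$. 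By induction on grade $\beta=\alpha$, hence $\pf_n$ equals $B_{\vee}^n$ of the single-vertex tree, and so $B_{\vee}(\pf_n)=\pf_{n+1}$.

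For the translation part, I would prove $\pa_x E_\tau=E_{B_{\vee}(\tau)}$ by induction on the grade $\|\tau\|$, where $E_{B_{\vee}(\tau)}$ denotes the image of the tree combination $B_{\vee}(\tau)$ under the linear identification above; the lemma then follows by linearity together with the combinatorial part. The base case $\tau$ a single vertex is $\pa_x\lb V\rb=\lb VP_{\hat{1}}V\rb$, which is the $\mathbb{AP}$-element of the unique grade-$1$ tree, i.e.\ of $B_{\vee}$ of the single vertex. For the inductive step write $\tau=\tau_1\vee\tau_2$, so $E_\tau=E_{\tau_1}E_{\tau_2}$ in $\mathbb{AP}$. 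On the tree side $B_{\vee}$ is a derivation for grafting, $B_{\vee}(\tau_1\vee\tau_2)=\bigl(B_{\vee}(\tau_1)\bigr)\vee\tau_2+\tau_1\vee\bigl(B_{\vee}(\tau_2)\bigr)$, since each free end of $\tau_1\vee\tau_2$ lies in exactly one of the two subtrees. Hence it suffices to show that $\pa_x$ is also a derivation for the augmented pre-P\"oppe product, i.e.\ $\pa_x(E_1E_2)=(\pa_x E_1)E_2+E_1(\pa_x E_2)$; granting this, the inductive hypothesis (with the base case covering a trivial subtree) gives $\pa_x E_\tau=E_{(B_{\vee}\tau_1)\vee\tau_2}+E_{\tau_1\vee(B_{\vee}\tau_2)}=E_{B_{\vee}(\tau)}$, closing the induction.

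I expect the one genuinely computational point, and hence the main obstacle, to be this Leibniz property of $\pa_x$ over the augmented pre-P\"oppe product. It is checked directly against Lemma~\ref{lemma:Poppeprodmonomials} and the two augmented cases of Definition~\ref{def:augmentedpPp}, using only that for a semi-additive $P$ one has $\pa_x P=(\pa_\el+\pa_\rl)P=P_{\hat{1}}$, that $\pa_x V=VP_{\hat{1}}V$ (from $V=(\id-P)^{-1}$ and the base equations~\eqref{eq:linearform}), and that $\pa_\el,\pa_\rl$ commute with $\pa_x$. Concretely, writing a monomial $E_1$ as $\lb FP_{a,b}V\rb$ and $E_2$ as $\lb VP_{c,d}\hat F\rb$, one differentiates the three monomials on the right of Lemma~\ref{lemma:Poppeprodmonomials} by the ordinary operator Leibniz rule and sorts the resulting terms according to whether the differentiated factor lies inside $F$, inside $P_{a,b}$, inside one of the two interior $V$'s generated by the product, inside $P_{c,d}$, or inside $\hat F$; term by term this reproduces $(\pa_x E_1)E_2+E_1(\pa_x E_2)$, and the augmented cases with $E_1$ or $E_2$ equal to $\lb V\rb$ are handled identically using $\pa_x\lb V\rb=\lb VP_{\hat{1}}V\rb$. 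Once this is in hand, the two parts combine to give $\pa_x\pf_n\equiv B_{\vee}(\pf_n)\equiv\pf_{n+1}$.
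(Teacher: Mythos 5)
Your proposal is correct and follows essentially the same route as the paper: the paper's proof simply observes that $\pa_x$ acts by the usual product rule on parenthesised monomials in $\lb V\rb$, replacing each leaf $\lb V\rb$ by $\pa_x\lb V\rb=\lb V\rb^2$, which is exactly the branching operation applied to the corresponding tree. You merely make explicit two points the paper treats as immediate --- that the weight character equals the branching multiplicity (your build-sequence/shuffle count) and that $\pa_x$ is a derivation for the augmented pre-P\"oppe product --- and both of these check out.
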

\begin{proof}
  We have already seen that $\pa_x\lb V\rb\equiv \lb VP_{\hat 1}V\rb\equiv\lb V\rb^2$.
  In other words if take the $\pa_x$-derivative of any parenthesised monomial involving $\lb V\rb$, using the usual product rule,
  we successively and additively replace each instance of the derivative `$\pa_x\lb V\rb$' by $\lb V\rb^2$.
  This is equivalent to adding a branch `${\tiny \begin{forest} for tree={grow'=90, parent anchor=center,child anchor=center, l=0cm,inner ysep=0cm,edge+=thick} [[][]] \end{forest}}$'
  successively and additively to each free end of the corresponding tree $\tau$. Applying this for each tree in $\pf_n$, establishes the result.\qed
\end{proof}
Recall that the tree `${\tiny \begin{forest} for tree={grow'=90, parent anchor=center,child anchor=center, l=0cm,inner ysep=0cm,edge+=thick} [[][]] \end{forest}}$' corresponds to the monomial $\lb V\rb^2$.
Hence $\pa_x\lb V\rb=\lb V\rb^2$, which corresponds to $\pf_1$, and $\pa_x\bigl(\lb V\rb^2\bigr)=\pa_x^2\lb V\rb=\bigl(\lb V\rb^2\bigr)\lb V\rb+\lb V\rb\bigl(\lb V\rb^2\bigr)$, which corresponds to $\pf_2$. 
Hence as a result of Lemma~\ref{lemma:derivativesandtrees}, we have the following immediate corollary.
\begin{corollary}\label{cor:derivsandtrees}
  For any $n\in\mathbb N\cup\{0\}$, we have,
   \begin{equation*}
        \pa_x^n\lb V\rb=\pf_{n+1}\bigl(\lb V\rb\bigr).
   \end{equation*}
\end{corollary}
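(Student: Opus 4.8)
The plan is a short induction on $n$ powered by Lemma~\ref{lemma:derivativesandtrees}, once one records the elementary fact that evaluating a planar binary rooted tree at $\lb V\rb$ --- that is, reading the tree as the corresponding iterated augmented pre-P\"oppe product of copies of $\lb V\rb$, and extending linearly to weighted tree sums such as $\pf_m$ --- intertwines $\pa_x$-differentiation with the branching operator $B_{\vee}$. This intertwining is precisely what the proof of Lemma~\ref{lemma:derivativesandtrees} already supplies: by the Leibniz rule together with the basic relation $\pa_x\lb V\rb=\lb V\rb^2$ (itself a consequence of the augmented pre-P\"oppe product, via $\pa_x V=VP_{\hat1}V$), differentiating a parenthesised monomial in $\lb V\rb$ replaces each factor $\lb V\rb$ by $\lb V\rb^2$, one term at a time and additively, which is exactly the action of grafting the elementary branch at each free end of the associated tree. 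Hence $\pa_x\bigl(\pf_m(\lb V\rb)\bigr)=\bigl(B_{\vee}\pf_m\bigr)(\lb V\rb)$ for every $m$.

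Granting this, the corollary follows at once. For the base case $n=0$ one has, trivially, $\pa_x^0\lb V\rb=\lb V\rb=\pf_1(\lb V\rb)$, the evaluation of the weighted sum consisting of the single lowest tree that reads as $\lb V\rb$ alone. For the inductive step, suppose $\pa_x^n\lb V\rb=\pf_{n+1}(\lb V\rb)$ and apply $\pa_x$ to both sides: the left-hand side becomes $\pa_x^{n+1}\lb V\rb$, while by the intertwining above and Lemma~\ref{lemma:derivativesandtrees} the right-hand side becomes $\pa_x\bigl(\pf_{n+1}(\lb V\rb)\bigr)=\bigl(B_{\vee}\pf_{n+1}\bigr)(\lb V\rb)=\pf_{n+2}(\lb V\rb)$, which is the asserted identity at order $n+1$, closing the induction. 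The lowest-order instances, namely $\pa_x\lb V\rb=\lb V\rb^2$ and $\pa_x^2\lb V\rb=\bigl(\lb V\rb^2\bigr)\lb V\rb+\lb V\rb\bigl(\lb V\rb^2\bigr)$, are exactly those displayed in the discussion preceding the statement and provide a direct check.

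There is no genuine obstacle: the corollary is, as the paper indicates, an immediate consequence of Lemma~\ref{lemma:derivativesandtrees}. If any step merits a second look, it is only the verification that the evaluation-at-$\lb V\rb$ map is compatible with $\pa_x$ and with $B_{\vee}$; but this compatibility is nothing beyond the product rule and the relation $\pa_x\lb V\rb=\lb V\rb^2$, and it is already the mechanism underlying the proof of Lemma~\ref{lemma:derivativesandtrees}, so no new work is required.
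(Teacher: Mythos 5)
Your route is the paper's own: the corollary is read off by iterating Lemma~\ref{lemma:derivativesandtrees}, the only substantive point being that evaluation of trees at $\lb V\rb$ intertwines $\pa_x$ with the branching operator $B_{\vee}$, via $\pa_x\lb V\rb=\lb V\rb^2$ and the product rule --- precisely the mechanism the paper invokes when it calls the corollary ``immediate''. So in structure there is nothing to add.

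The one point you should not pass over is the base case. By Definition~\ref{def:weightedpolytrees} and the grade convention of Table~\ref{table:treesupto3}, $\pf_1=\emptyset_1$ is the weighted sum over grade-one trees, i.e.\ the single-vertex tree, which evaluates to $\lb V\rb^2$; the tree that ``reads as $\lb V\rb$ alone'' is the grade-zero bud, i.e.\ $\pf_0$. Anchoring the induction at $\pa_x^0\lb V\rb=\lb V\rb=\pf_0(\lb V\rb)$ and applying the intertwining yields $\pa_x^n\lb V\rb=\pf_n(\lb V\rb)$ --- which is exactly what the text immediately preceding the corollary asserts ($\pa_x\lb V\rb$ corresponds to $\pf_1$ and $\pa_x^2\lb V\rb$ to $\pf_2$) and what the later usage requires (compare $b_{n-2}=-C_1^n\pf_1$ in Proposition~\ref{pres:KP} with $b_{n-2}=-C_1^n\pa_x\lb V\rb$). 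In other words, the subscript $n+1$ in the displayed statement is an off-by-one slip relative to Definition~\ref{def:weightedpolytrees}, and your proof silently absorbs it by reinterpreting $\pf_1$ as the bud; that reinterpretation contradicts the very definition of $\pf_m$ that your inductive step relies on. State the base case as $\pf_0(\lb V\rb)=\lb V\rb$ (or flag the index shift explicitly) and the argument is complete and identical to the paper's.
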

\begin{remark}[Augmented pre-P\"oppe algebra]\label{rmk:nonassocpPa}
  The statement of Corollary~\ref{cor:derivsandtrees} is just the natural nonassociative equivalent result to that for $\pa_x^n\lb V\rb$ in~\eqref{eq:expansion}.
  From another perspective, it shows us that $\pa_x^n\lb V\rb$ can be expressed as a degree $n+1$ polynomial in the nonassociative augmented pre-P\"oppe algebra.
\end{remark}

\begin{table}
  \caption{We list all the rooted planar binary trees of grade $4$. For each tree $\tau\in\mathbb T$ in the left column,
           we give its word-coding in terms of levels in the second column, its `$\el\gl\rl$'-code in the third column, and its descent set (see Section~\ref{sec:descent-alg}) in the final column.}
\label{table:treesof4}
\begin{center}
  \setlength{\baselineskip}{2\baselineskip}
\begin{tabular}{c|c|c|c}
\hline
$\phantom{\biggl|}$ tree $\phantom{\biggl|}$ & levels & `$\el\gl\rl$'-code & descent\\ \hline
&&&\\[-5pt]
{\tiny \begin{forest} for tree={grow'=90, parent anchor=center,child anchor=center, l=0cm,inner ysep=0cm,edge+=thick} [[][[][[][[][]]]]] \end{forest}} & 1234 & $(\el+\gl)^3\gl$  & $\emptyset_4$\\
{\tiny \begin{forest} for tree={grow'=90, parent anchor=center,child anchor=center, l=0cm,inner ysep=0cm,edge+=thick} [[][[][[[][]][]]]] \end{forest}} & 1243 & $(\el+\gl)^2\gl(\rl+\gl)$  & $3_4$\\[5pt]
{\tiny \begin{forest} for tree={grow'=90, parent anchor=center,child anchor=center, l=0cm,inner ysep=0cm,edge+=thick} [[][[[][]][[][]]]] \end{forest}} & 1323 & $(\el+\gl)\gl(\el+\gl+\rl)\gl$ & $2_4$\\
{\tiny \begin{forest} for tree={grow'=90, parent anchor=center,child anchor=center, l=0cm,inner ysep=0cm,edge+=thick} [[][[[][[][]]][]]] \end{forest}} & 1342 & $(\el+\gl)^2\gl(\rl+\gl)$ & $3_4$\\
{\tiny \begin{forest} for tree={grow'=90, parent anchor=center,child anchor=center, l=0cm,inner ysep=0cm,edge+=thick} [[][[[[][]][]][]]] \end{forest}} & 1432 & $(\el+\gl)\gl(\rl+\gl)^2$ & $23_4$\\[5pt]
{\tiny \begin{forest} for tree={grow'=90, parent anchor=center,child anchor=center, l=0cm,inner ysep=0cm,edge+=thick} [[[][]][[][[][]]]] \end{forest}} & 2123 & $\gl(\el+\gl+\rl)(\el+\gl)\gl$ & $1_4$\\[5pt]
{\tiny \begin{forest} for tree={grow'=90, parent anchor=center,child anchor=center, l=0cm,inner ysep=0cm,edge+=thick} [[[][]][[[][]][]]] \end{forest}} & 2132 & $\gl(\el+\gl+\rl)\gl(\rl+\gl)$ & $13_4$ \\[5pt]
{\tiny \begin{forest} for tree={grow'=90, parent anchor=center,child anchor=center, l=0cm,inner ysep=0cm,edge+=thick} [[[][[][]]][[][]]] \end{forest}} & 2312 & $(\el+\gl)\gl(\el+\gl+\rl)\gl$ & $2_4$\\[5pt]
{\tiny \begin{forest} for tree={grow'=90, parent anchor=center,child anchor=center, l=0cm,inner ysep=0cm,edge+=thick} [[[[][]][]][[][]]] \end{forest}} & 3212 & $\gl(\rl+\gl)(\el+\gl+\rl)\gl$ & $12_4$\\
{\tiny \begin{forest} for tree={grow'=90, parent anchor=center,child anchor=center, l=0cm,inner ysep=0cm,edge+=thick} [[[][[][[][]]]][]] \end{forest}} & 2341 & $(\el+\gl)^2\gl(\rl+\gl)$ & $3_4$\\
{\tiny \begin{forest} for tree={grow'=90, parent anchor=center,child anchor=center, l=0cm,inner ysep=0cm,edge+=thick} [[[][[[][]][]]][]] \end{forest}} & 2431 & $(\el+\gl)\gl(\rl+\gl)^2$ & $23_4$\\[5pt]
{\tiny \begin{forest} for tree={grow'=90, parent anchor=center,child anchor=center, l=0cm,inner ysep=0cm,edge+=thick} [[[[][]][[][]]][]] \end{forest}} & 3231 & $\gl(\el+\gl+\rl)\gl(\rl+\gl)$ & $13_4$\\
{\tiny \begin{forest} for tree={grow'=90, parent anchor=center,child anchor=center, l=0cm,inner ysep=0cm,edge+=thick} [[[[][[][]]][]][]] \end{forest}} & 3421 & $(\el+\gl)\gl(\rl+\gl)^2$ & $23_4$\\
{\tiny \begin{forest} for tree={grow'=90, parent anchor=center,child anchor=center, l=0cm,inner ysep=0cm,edge+=thick} [[[[[][]][]][]][]] \end{forest}} & 4321 & $\gl(\rl+\gl)^3$ & $123_4$\\
\hline
\end{tabular}
\end{center}
\end{table}

Let $\mathbb T$ denote the set of planar binary rooted trees.
We can define a real algebra of planar binary rooted trees $\R\langle\mathbb T\rangle$, equipped with a natural product, the so-called \emph{grafting product}.
See Malham~\cite{M-coag} for more details. 
The grafting product, which generates new planar binary rooted trees by grafting two trees together at their roots, is defined as follows.
\begin{definition}[Grafting product on trees]\label{def:graftingontrees}
  For two planar binary rooted trees $\tau_1$ and $\tau_2$, such that $\|\tau_1\|+\|\tau_2\|=n$, the \emph{grafting product} of these two trees is the tree constructed as follows, 
\begin{equation*}
     {\scriptsize \begin{forest} for tree={grow'=90, parent anchor=center,child anchor=south, l=0cm,inner ysep=2pt,edge+=thick, s sep=0mm} [[$\tau_1$][$\tau_2$]] \end{forest}}.
\end{equation*}
The tree generated in this way is of grade $n+1$.
\end{definition}
\begin{example}[Nonassociativity]\label{ex:graftingnonassoctrees}
  Suppose $\tau_1={\tiny \begin{forest} for tree={grow'=90, parent anchor=center,child anchor=center, l=0cm,inner ysep=0cm,edge+=thick} [[][]] \end{forest}}$, the single vertex tree,
  and $\tau_0={\tiny \begin{forest} for tree={grow'=90, parent anchor=center,child anchor=center} [] \path[fill=black] (.anchor) circle[radius=1.5pt]; \end{forest}}$, the single `bud'.
  Suppose we use `$\star$' to denote the grafting product between trees---this is the notation we use below. Then we observe that,
  \begin{equation*}
    \tau_0\star\tau_0={\tiny \begin{forest} for tree={grow'=90, parent anchor=center,child anchor=center, l=0cm,inner ysep=0cm,edge+=thick} [[][]] \end{forest}}=\tau_1
    \qquad\text{and}\qquad  \tau_1\star\tau_0={\tiny \begin{forest} for tree={grow'=90, parent anchor=center,child anchor=center, l=0cm,inner ysep=0cm,edge+=thick} [[[][]][]] \end{forest}}.
  \end{equation*}
  Now consider the following triple products. We observe,
  \begin{align*}
    \tau_1\star(\tau_0\star\tau_0)&=\tau_1\star\tau_1={\tiny \begin{forest} for tree={grow'=90, parent anchor=center,child anchor=center, l=0cm,inner ysep=0cm,edge+=thick} [[[][]][[][]]] \end{forest}},
    \intertext{whereas,}
    (\tau_1\star\tau_0)\star\tau_0&={\tiny \begin{forest} for tree={grow'=90, parent anchor=center,child anchor=center, l=0cm,inner ysep=0cm,edge+=thick} [[[][]][]] \end{forest}}\star\tau_0
    ={\tiny \begin{forest} for tree={grow'=90, parent anchor=center,child anchor=center, l=0cm,inner ysep=0cm,edge+=thick} [[[[][]][]][]] \end{forest}}.
  \end{align*}
  This reflects the nonassociativity of the grafting product.
\end{example}
\begin{remark}[Grafting and concatenation]\label{rmk:graftingandconcat}
  We note that the grafting product is very natural, and is akin to a concatenation product of parenthesised strings; see Lundervold and Munthe--Kaas~\cite{LM-Kbackward}.
\end{remark}

There is one further aspect of planar binary rooted trees we require, and that is their \emph{level} or \emph{word} coding, which represents an equivalent representation.
Again, more details on this can be found in Loday and Ronco~\cite{LR} and Malham~\cite[\S4]{M-coag}.
Also see Aguiar and Sottile~\cite{AguiarSottile}, Arcis and M\'arquez~\cite{Arcis}, Chapoton~\cite{Chapoton} and Chatel and Pilaud~\cite{CP}.
We can assign words according to the position of vertices for any planar binary rooted tree as as follows. 
Consider, for example, the trees of grade $4$ listed in Table~\ref{table:treesof4}.
Consider the tree labelled $2132$. This word-coding is generated by recording the positions of the vertices at the different levels from the root up.
The `$1$' indicates the first level and bottom vertex. There is only a single `$1$' in any such code.
The `$2$' either side of the `$1$' indicate that there is a vertex attached to the left branch of the level-one vertex, as well as a vertex attached to the right branch of the that level-one vertex. 
Thus any such word-coding has either none, one or two $2$'s. If there were no `$2$', that would mean that the tree in question is just the grade $1$ tree.
The case of only one `$2$' to the left of the `$1$' would mean that there is only further tree structure attached to the left branch of the level-one vertex,
with the analogous interpretation if there is on a single `$2$' to the right of level-one vertex. 
For our word-coding $2132$, the fact that there is single `$3$' to the right of the `$2$' which is to the left of the `$1$',
means that there is one more vertex attached to the right branch of the vertex at level two, which is to the left of the level-one vertex.
Consider another example, say $3421$. This means that there is only some further tree structure attached to the left branch of the bottom level-one vertex,
as there is a single `$2$' to the left of the `$1$'. That there is a single `$3$' to the left of the `$2$' means there is a vertex at level three attached to
the left branch of the vertex at level two. The single `$4$' to the right of the `$3$' means that there is a level four vertex attached to the right branch of the level three vertex. 
And so forth. The set of rooted planar binary trees can then be ordered by the numerical ordering, within the set of integers, of the word-codes. 
See the ordered lists in Tables~\ref{table:treesupto3} and \ref{table:treesof4}, and Example~\ref{ex:wordordering} below.
\begin{example}\label{ex:wordordering}
  Two further illustrative examples of the word-coding for planar binary rooted trees are as follows. First we have,
  \begin{equation*}
    {\tiny \begin{forest} for tree={grow'=90, parent anchor=center,child anchor=center, l=0cm,inner ysep=0cm,edge+=thick} [[[][]][[[][]][[][]]]] \end{forest}}=21323.
  \end{equation*}
  The corresponding `$\el\gl\rl$' and descent representations for this tree are $\mathrm{g(\el\!+\!r\!+\!g)g(\el\!+\!r\!+\!g)g}$ and $13_5$, respectively.
  We introduce these representations respectively in Sections \ref{sec:lgr-alg} and \ref{sec:descent-alg} just below. Second we have,
  \begin{equation*}
    {\tiny \begin{forest} for tree={grow'=90, parent anchor=center,child anchor=center, l=0cm,inner ysep=0cm,edge+=thick} [[[[][]][[[][]][[][[][]]]]][[][[[][]][[][[][]]]]]] \end{forest}}=324345124345.
  \end{equation*}
  The `$\el\gl\rl$' and descent codes for this tree are,
  \begin{equation*}
    \mathrm{g(\el\!+\!r\!+\!g)g(\el\!+\!r\!+\!g)(\el\!+\!g)g(\el\!+\!r\!+\!g)(\el\!+\!g)g(\el\!+\!r\!+\!g)(\el\!+\!g)g},
  \end{equation*}
  and $1369_{12}$, respectively.
\end{example}

\section{The `left-glue-right' algebra}\label{sec:lgr-alg}
The augmented pre-P\"oppe product is nonassociative. However, there is some redundancy in the planar binary rooted tree representation/encoding.
The source of the nonassociativity present in algebra of operator kernel monomials `$\lb VP_{a_1,b_1}V\cdots VP_{a_n,b_n}V\rb$' generated by the augmented pre-P\"oppe product,
is due to left or right factors involving `$\lb V\rb$' and that pre- or postmultiplication by such factors generates different counts of left and right derivatives in the resulting kernel monomials.
Products involving factors of the form outlined in Lemma~\ref{lemma:Poppeprodmonomials} for the pre-P\"oppe product are associative.
Thus some parenthesised products involving `$\lb V\rb$', though they are parametrised by different trees, generate the same linear combination of kernel monomials of the form `$\lb VP_{a_1,b_1}V\cdots VP_{a_n,b_n}V\rb$'.
In other words, the source of this redundancy is due to the fact that several different parenthesised products can 
generate the same linear combinations of operator kernel monomials with the exact same number of left and right derivatives on the appropriate $P$ operators in each monomial contained therein. 
For example, consider the three planar binary rooted trees in Table~\ref{table:treesof4} parametrised by the word-codings `$1243$', `$1342$' and `$2341$'.
These respectively correspond to the parenthesised products as follows,
\begin{align*}
  &\lb V\rb\cdot\Bigl(\lb V\rb\cdot\bigl(\lb V\rb^2\cdot\lb V\rb\bigr)\Bigr),\\
  &\lb V\rb\cdot\Bigl(\bigl(\lb V\rb\cdot\lb V\rb^2\bigr)\cdot\lb V\rb\Bigr),\\
  &\Bigl(\lb V\rb\cdot\bigl(\lb V\rb\cdot\lb V\rb^2\bigr)\Bigr)\cdot\lb V\rb.
\end{align*}
It is straightforward to check that, if we use the augmented pre-P\"oppe product, each of these parenthesised products generates the same linear combination of kernel monomials.
More precisely, the result is, 
\begin{align}
  &\lb V(\pa_{\el}^2\pa_{\rl}P_{\hat 1})V\rb+\lb V(\pa_{\el}P_{\hat 1})V(\pa_{\rl}P_{\hat 1})V\rb\nonumber\\
  &\;+\lb VP_{\hat 1}V(\pa_{\el}\pa_{\rl}P_{\hat 1})V\rb+\lb VP_{\hat 1} VP_{\hat 1}V(\pa_{\el}P_{\hat 1})V\rb\nonumber\\
  &\;+\lb V(\pa_{\el}^2P_{\hat 1})VP_{\hat 1}V\rb+\lb V(\pa_{\el}P_{\hat 1})VP_{\hat 1} VP_{\hat 1}V\rb\nonumber\\
  &\;+\lb VP_{\hat 1}V(\pa_{\el}P_{\hat 1}) VP_{\hat 1}V\rb+\lb VP_{\hat 1}VP_{\hat 1}VP_{\hat 1}VP_{\hat 1}V\rb.\label{eq:seqexp}
\end{align}
There are several further cases like this in Table~\ref{table:treesof4}, though only one in Table~\ref{table:treesupto3}, namely for the two planar binary trees
with the word-coding `$132$' and `$231$'. That the parenthesised products corresponding to these two trees generate the same linear combination of kernel monomials,
is straightforwardly checked by direct computation. We now introduce an encoding for these `equivalent' trees as follows.
We call this the `left-glue-right' encoding or $\el\gl\rl$-coding for short. We introduce this encoding via some explanatory examples as follows.
\begin{example}\label{ex:firstconstruct}
Consider the tree with the representation,
\begin{equation*}
{\tiny \begin{forest} for tree={grow'=90, parent anchor=center,child anchor=center, l=0cm,inner ysep=0cm,edge+=thick} [[][[][[[][]][]]]] \end{forest}}=1243.
\end{equation*}
Now consider the following construction sequence for the parenthesised product it represents,
\begin{align*}
  \lb V\rb^2 \to\lb V\rb^2\cdot\lb V\rb&\to \lb V\rb\cdot\bigl(\lb V\rb^2\cdot\lb V\rb\bigr)\\
  &\to\lb V\rb\cdot\bigl(\lb V\rb\cdot\bigl(\lb V\rb^2\cdot\lb V\rb\bigr)\bigr).
\end{align*}
We mimic this construction sequence as follows: 
\begin{equation*}
  \gl\to\gl(\rl+\gl)\to(\el+\gl)\gl(\rl+\gl)\to(\el+\gl)^2\gl(\rl+\gl).
\end{equation*}
Superficially, the connection between these two constructions can be interpreted as follows.
In the context of operator kernels we observe that the `$\gl$' represents the form $\lb VP_{\hat 1}V\rb\equiv\lb V\rb^2$.  
Then postmultiplication by $\lb V\rb$ corresponds to postmultiplication by `$(\rl+\gl)$', whilst premultiplication by $\lb V\rb$ corresponds to premultiplication by $(\el+\gl)$.
This generates the `$\el\gl\rl$'-form shown. However, at a deeper level, the pre- and postmultiplication by these factors is recording the actions implicit
in the augmented pre-P\"oppe products in Definition~\ref{def:augmentedpPp}. In postmultiplication by $\lb V\rb$ in Definition~\ref{def:augmentedpPp}, we take the form $\lb V_{a,b}\rb$
and add an additional $\pa_{\rl}$-derivative to $P_{a_k,b_k}$---this corresponds to the `$\rl$' term in the factor `$(\rl+\gl)$'. However there is an additional term on the right
in which we \emph{glue} and extra $P_{\hat 1}V$ term on the right of $V_{a,b}$---this corresponds to the `$\gl$' term in the factor `$(\rl+\gl)$'.
Premultiplication by $\lb V\rb$ in Definition~\ref{def:augmentedpPp} corresponds to the prefactor `$(\el+\gl)$' with the `$\el$' term recording
the term on the right with an additional $\pa_{\el}$-derivative applied to $P_{a_1,b_1}$,
whilst the `$\gl$' term records the second term on the right in which a prefactor $VP_{\hat 1}$ is glued to $V_{a,b}$.
\end{example}
This example explains most of the `$\el\gl\rl$'-encodings listed in column four in Table~\ref{table:treesupto3} and column three in Table~\ref{table:treesof4}, but not all of them.
This is because we have not as yet encoded the standard pre-P\"oppe product given in Lemma~\ref{lemma:Poppeprodmonomials} into this representation.
However, this is straightforward, as the following example demonstrates.
\begin{example}
  Now consider the tree of grade $3$ in Table~\ref{table:treesupto3} with the representation,
  \begin{equation*}
   {\tiny \begin{forest} for tree={grow'=90, parent anchor=center,child anchor=center, l=0cm,inner ysep=0cm,edge+=thick} [[[][]][[][]]] \end{forest}} =212.
  \end{equation*}
  This represents the parenthesised product $\bigl(\lb V\rb^2\bigr)\bigl(\lb V\rb^2\bigr)=\lb VP_{\hat 1}V\rb\cdot\lb VP_{\hat 1}V\rb$.
  This is just the standard pre-P\"oppe product in Lemma~\ref{lemma:Poppeprodmonomials} between $\lb VP_{\hat 1}V\rb$ and itself, which generates three terms.
  Recall that the term $\lb VP_{\hat 1}V\rb\equiv \lb V\rb^2$ corresponds to the term `$\gl$' in the $\el\gl\rl$-encoding. 
  The first term on the right in product in Lemma~\ref{lemma:Poppeprodmonomials} corresponds to $\lb VP_{\hat 1}V\rb$ but with an additional $\pa_{\rl}$-derivative applied to the $P_{\hat 1}$ factor.
  This corresponds to the `$\rl$' term in the central factor in the $\el\gl\rl$-encoding `$\gl(\el+\gl+\rl)\gl$' for this tree.
  The second term corresponds to $\lb VP_{\hat 1}V\rb$ but with an additional $\pa_{\el}$-derivative applied to the $P_{\hat 1}$ factor.
  This corresponds to the `$\el$' term in the central factor in the encoding `$\gl(\el+\gl+\rl)\gl$'.
  Finally, the third term corresponds to the scenario when an extra factor $P_{\hat 1}$ is glued in between the other two factors in the original product.
  This corresponds to the `$\gl$' term in the central factor.
\end{example}
That a central factor `$(\el+\gl+\rl)$' in the $\el\gl\rl$-encoding corresponds to an instance of the standard pre-P\"oppe product in Lemma~\ref{lemma:Poppeprodmonomials}, is a straightforward step.
The $\el\gl\rl$-codes shown in Tables~\ref{table:treesupto3} and \ref{table:treesof4}, as well as in Example~\ref{ex:wordordering}, are now also straightforward.
One issue remains. That is the equivalence of some parenthesised products. The following example, which extends Example~\ref{ex:firstconstruct}, illustrates this.
\begin{example}\label{ex:secondandthirdconstructs}
  Following on from Example~\ref{ex:firstconstruct}, consider the two grade $4$ trees in Table~\ref{table:treesof4} given by,
  \begin{equation*}
    {\tiny \begin{forest} for tree={grow'=90, parent anchor=center,child anchor=center, l=0cm,inner ysep=0cm,edge+=thick} [[][[[][[][]]][]]] \end{forest}}=1342
     \qquad\text{and}\qquad
    {\tiny \begin{forest} for tree={grow'=90, parent anchor=center,child anchor=center, l=0cm,inner ysep=0cm,edge+=thick} [[[][[][[][]]]][]] \end{forest}}=2341.
  \end{equation*}
  Let us consider the construction sequence for the parenthesised product corresponding to the first tree, $1342$. The construction sequence would be,
  \begin{align*}
    \lb V\rb^2 \to\lb V\rb\cdot\lb V\rb^2&\to \bigl(\lb V\rb\cdot\lb V\rb^2\bigr)\cdot\lb V\rb\\
    &\to\lb V\rb\cdot\bigl(\bigl(\lb V\rb\cdot\lb V\rb^2\bigr)\cdot\lb V\rb\bigr).
  \end{align*}
  We can mimic this construction sequence as follows,
  \begin{equation*}
    \gl\to(\el+\gl)\gl\to(\el+\gl)\gl(\rl+\gl)\to(\el+\gl)^2\gl(\rl+\gl).
  \end{equation*}
  We observe that this is the same $\el\gl\rl$-encoding as that for the tree $1243$ in Example~\ref{ex:firstconstruct}.
  Now consider the construction sequence for the parenthesised product corresponding to the second tree, $2341$. The construction sequence is,
  \begin{align*}
    \lb V\rb^2 \to\lb V\rb\cdot\lb V\rb^2&\to \lb V\rb\cdot\bigl(\lb V\rb\cdot\lb V\rb^2\bigr)\\
    &\to\bigl(\lb V\rb\cdot\bigl(\lb V\rb\cdot\lb V\rb^2\bigr)\bigr)\cdot\lb V\rb.
  \end{align*}
  Mimicking this construction by $\el\gl\rl$-encoding, generates,
   \begin{equation*}
    \gl\to(\el+\gl)\gl\to(\el+\gl)^2\gl\to(\el+\gl)^2\gl(\rl+\gl).
  \end{equation*}
  Again, the final $\el\gl\rl$-encoding is the same as that for the other two trees, $1243$ and $1342$.
\end{example}
We observe that in Table~\ref{table:treesupto3}, in which there is a single instance, and in Table~\ref{table:treesof4}, where there are multiple examples,
there are several sets of trees with the same $\el\gl\rl$-encoding. 

Before we formally define the $\el\gl\rl$-algebra, there are some points that we need to elucidate.
The $\el\gl\rl$-codes record the cumulative effect of left and right $\lb V\rb$-factors as well as any standard pre-P\"oppe products.
We illustrate these actions via an example presently.
The natural product on the $\el\gl\rl$-algebra is the product equivalent to \emph{grafting}---see Definition~\ref{def:lgrgrafting} below.
And, lastly, the $\el\gl\rl$-algebra, as the name suggests, is a real polynomial algebra \emph{involving} the alphabet $\{\el,\gl,\rl\}$,
however it is \emph{generated} by the four factors given in Definition~\ref{def:lgr-alg} below.
We illustrate this last point and its consequences via the following example.
\begin{example}\label{ex:illus}
Consider expanding the $\el\gl\rl$-code given by `$(\el+\gl)^2\gl(\rl+\gl)$', corresponding to the trees $1243$, $1342$ and $2341$. See Examples~\ref{ex:firstconstruct} and \ref{ex:secondandthirdconstructs}.
The noncommutative expansion is given by,
\begin{equation}
\el^2\gl\rl+\el\gl^2\rl+\gl\el\gl\rl+\gl^3\rl+\el^2\gl^2+\el\gl^3+\gl\el\gl^2+\gl^4. \label{eq:lgrexp}
\end{equation}
Let us now successively interpret each of the terms in~\eqref{eq:lgrexp}.
The term $\el^2\gl\rl$, recalling that $\gl$ corresponds to $\lb VP_{\hat 1}V\rb$, records that two $\pa_{\el}$-derivatives should be applied to $\gl$, as well as a single $\pa_{\rl}$-derivative.
Note that we interpret the juxtaposition of the letters as follows.
Any powers of `$\el$' apply to the `$\gl$' immediately to its right, whilst any powers of `$\rl$'  apply to the `$\gl$' immediately to its left.
This means that $\el^2\gl\rl$ corresponds to the kernel term $\lb V\pa_{\el}^2\pa_{\rl}P_{\hat 1}V\rb$.
Now consider the term $\el\gl^2\rl$. We might more conveniently express this as $(\el\gl)(\gl\rl)$, with our interpretation that powers of `$\el$' apply to the `$\gl$' immediately to its right, and vice versa for $\rl$, in mind. 
The term $\el\gl^2\rl$ thus corresponds to $\lb V(\pa_{\el}P_{\hat 1})V(\pa_{\rl}P_{\hat 1})V\rb$, whereas $\gl\el\gl\rl$ corresponds to $\lb VP_{\hat 1}V(\pa_{\el}\pa_{\rl}P_{\hat 1})V\rb$ and
$\gl^3\rl$ corresponds to $\lb VP_{\hat 1}VP_{\hat 1}V(\pa_{\rl}P_{\hat 1})V\rb$. And so forth.
We observe the natural one-to-one correspondence between the successive terms in \eqref{eq:lgrexp} and those in \eqref{eq:seqexp}.
\end{example}
Example~\ref{ex:illus} illustrates both, that the expansions of the $\el\gl\rl$-codes, which were developed to record all the actions of the augmented pre-P\"oppe product, 
naturally retain this property when they are expanded as we would expect. 
However, Example~\ref{ex:illus} also demonstrates that the juxtaposition of the letters $\{\el,\gl,\rl\}$ is important.
Hence, as we have observed, each planar binary rooted tree has an $\el\gl\rl$-coding, though several trees of the same grade may share the same $\el\gl\rl$-coding. 
Each $\el\gl\rl$-coding involves the factors $\gl$, $(\el+\gl)$, $(\rl+\gl)$ and $(\el+\gl+\rl)$.
The relative juxtaposition of these factors depends on the tree and is not free, as is illustated in Tables~\ref{table:treesupto3} and \ref{table:treesof4}.
As mentioned above, we define a product of two $\el\gl\rl$-codes based on grafting representative trees; also see Blower and Newsham~\cite[Def.~4.3]{BN}.
\begin{definition}[$\el\gl\rl$-grafting product]\label{def:lgrgrafting}
  For any two $\el\gl\rl$-codes, $\df_1$ and $\df_2$ of non-zero grade, we define their $\el\gl\rl$\emph{-grafting product} `$\star$' by,
  \begin{equation*}
    \df_1\star\df_2\coloneqq\df_1(\el+\gl+\rl)\df_2.
  \end{equation*}
  If $\mathfrak 0$ is the zero grade $\el\gl\rl$-code corresponding to the tree of grade zero given by,
  `${\tiny \begin{forest} for tree={grow'=90, parent anchor=center,child anchor=center} [] \path[fill=black] (.anchor) circle[radius=1.5pt]; \end{forest}}$',
  then for any $\el\gl\rl$-code, $\df$, we define,
  \begin{equation*}
    \df\star\mathfrak{0}\coloneqq\df(\rl+\gl),\quad\mathfrak{0}\star\df\coloneqq(\el+\gl)\df\quad\text{and}\quad\mathfrak{0}\star\mathfrak{0}\coloneqq\gl.
  \end{equation*}
\end{definition}
As highlighted in our development of the $\el\gl\rl$-encoding, the presence of a factor `$(\el+\gl+\rl)$' indicates the grafting of two subtrees together,
or equivalently the product or concantenation of two parenthesised strings together.
As explained in Definition~\ref{def:graftingontrees} and Remark~\ref{rmk:graftingandconcat}, the natural product on the real algebra of planar binary rooted trees is the grafting product,
which corresponds to the concantenation of the corresponding parenthesised strings.
Since the definition of the $\el\gl\rl$-algebra simply involves the algebra of planar binary rooted trees with grafting as product in which we equivalence specific subsets of trees, 
the naturally induced product is that of grafting for the equivalence sets, or equivalently $\el\gl\rl$-grafting of the $\el\gl\rl$-codes.
We thus formally define the `left-glue-right' algebra as follows.
\begin{definition}[`Left-glue-right' algebra]\label{def:lgr-alg}
  We define the `left-glue-right' or $\el\gl\rl$-algebra, to be the real nonassociative noncommutative algebra generated by the real algebra of planar binary rooted trees,
  in which we equivalence trees with the same $\el\gl\rl$-code. It is the real polynomial algebra with monomials involving the factors $\gl$, $(\el+\gl)$, $(\rl+\gl)$ and $(\el+\gl+\rl)$
  corresponding to the $\el\gl\rl$-encoding of the trees, equipped with the $\el\gl\rl$-grafting product.
\end{definition}
\begin{remark}
  Note that in Definition~\ref{def:lgrgrafting} for the $\el\gl\rl$-grafting product, the product $\df_1\star\df_2$ for descents of non-zero grade,
  precisely emulates the pre-P\"oppe product given in Lemma~\ref{lemma:Poppeprodmonomials}.
  The other three products therein, namely $\df\star\mathfrak{0}$, $\mathfrak{0}\star\df$ and $\mathfrak{0}\star\mathfrak{0}$, precisely emulate the augmented pre-P\"oppe products
  given in Definition~\ref{def:augmentedpPp}.
\end{remark}
To help illustrate the $\el\gl\rl$-grafting product, consider the following example.
\begin{example}\label{ex:lgrstarprod}
  Consider the trees of grade $3$ and less given in Table~\ref{table:treesupto3}. Suppose we graft the trees with word-codings $\tau_1=12$ and $\tau_2=212$ together, where graphically, 
  \begin{equation*}
    \tau_1={\tiny \begin{forest} for tree={grow'=90, parent anchor=center,child anchor=center, l=0cm,inner ysep=0cm,edge+=thick} [[][[][]]] \end{forest}} 
    \qquad\text{and}\qquad
    \tau_2= {\tiny \begin{forest} for tree={grow'=90, parent anchor=center,child anchor=center, l=0cm,inner ysep=0cm,edge+=thick} [[[][]][[][]]] \end{forest}}.
  \end{equation*}
  The grafting product of these two trees generates,
  \begin{equation*}
    {\scriptsize \begin{forest} for tree={grow'=90, parent anchor=center,child anchor=south, l=0cm,inner ysep=2pt,edge+=thick, s sep=0mm} [[$\tau_1$][$\tau_2$]] \end{forest}}
    ={\tiny \begin{forest} for tree={grow'=90, parent anchor=center,child anchor=center, l=0cm,inner ysep=0cm,edge+=thick} [[[][[][]]][[[][]][[][]]]] \end{forest}}.
  \end{equation*}
  The $\el\gl\rl$-codes corresponding to the trees $\tau_1=12$ and $\tau_2=212$ are $\df_1=(\el+\gl)\gl$ and $\df_2=\gl(\el+\gl+\rl)\gl$, respectively.
  The $\el\gl\rl$-grafting product of these two $\el\gl\rl$-codes is,
  \begin{equation*}
   (\el+\gl)\gl\star\gl(\el+\gl+\rl)\gl=(\el+\gl)\gl(\el+\gl+\rl)\gl(\el+\gl+\rl)\gl.
  \end{equation*}
  That the resulting $\el\gl\rl$-code on the right just above corresponds to the tree on the right above that results from the grafting of $\tau_1$ and $\tau_2$, is straightforward to check.
\end{example}
\begin{example}[Nonassociativity]\label{ex:graftingnonassoclgr}
  Let us consider the grafting products analogous to those in Example~\ref{ex:graftingnonassoctrees}.
  Suppose that $\df_1=\gl$ and $\df_0=\mathfrak{0}$. Then we have, $\df_0\star\df_0=\gl$ and $\df_1\star\df_0=\gl(\rl+\gl)$.
  However, consider the triple products, $\df_1\star(\df_0\star\df_0)=\gl\star\gl=\gl(\el+\gl+\rl)\gl$, and also, $(\df_1\star\df_0)\star\df_0=\gl(\rl+\gl)\star\mathfrak{0}=\gl(\rl+\gl)^2$.
  This exactly fits with the results of Example~\ref{ex:graftingnonassoctrees}, and reflects the nonassociativity of the grafting product in this context.
\end{example}

Lastly, we observe that we grade $\el\gl\rl$-codes according to the number of factors they contain, which corresponds to the number of vertices in any of its planar binary rooted tree representations. 
A quick glance through Table~\ref{table:treesupto3} reveals a single code of grade $1$, namely $\gl$. There are two codes of grade $2$, namely $(\el+\gl)\gl$ and $\gl(\rl+\gl)$.
There are only $4$ codes of grade $4$, though there are $5$ trees. This is because the trees $132$ and $231$ generate the same $\el\gl\rl$-code, namely $(\el+\gl)\gl(\rl+\gl)$.
Similarly, from Table~\ref{table:treesof4} we see that there are $8$ independent $\el\gl\rl$-codes.
Note that the number of independent $\el\gl\rl$-codes at each grade $n$ appears to be equal to $2^{n-1}$. 
The final columns in Tables~\ref{table:treesupto3} and \ref{table:treesof4} give a hint as to why this is the case, and we explore this in Section~\ref{sec:descent-alg}.
In particular, see Remarks~\ref{rmk:generatingdescents} and \ref{rmk:descentsand compositions}.

\section{Descent algebra}\label{sec:descent-alg}
Herein we develop the descent algebra that underlies the noncommutative KP hierarchy.
This algebra has two obvious representations, the \emph{standard representation}, as well as a \emph{binary representation} which proves more useful for exploring and explaining some of the results herein.  
The final three columns in Tables~\ref{table:treesupto3} and \ref{table:treesof4} respectively record, for each planar binary rooted tree shown,
their word-coding, their corresponding $\el\gl\rl$-code and the descent set of the word-coding.
The descent set records the positions in the wording coding where the next digit is strictly less than the previous digit.
\begin{definition}[Descent set]\label{def:descentset}
Given a word-coding $a_1a_2\cdots a_n$ corresponding to a planar binary rooted tree, the descent set $\df$ lists the indices $k\in\{1,\ldots,n-1\}$ such that $a_{k+1}<a_k$.  
\end{definition}
\begin{example}
  Consider Table~\ref{table:treesof4}. The tree with the word-coding, $1234$, does not contain any descents, so its descent set is $\emptyset_4$, i.e.\/ the empty set.
  We use the subindex $4$ to indicate that the word-coding from which the descent was extracted, is of grade $4$.
  The tree with word-coding, $1243$, has a descent between the third and fourth integers, so its descent set is $3_4$.
  Now consider the word-coding, $1432$, which contains two descents, one between the second and third integers, and another between the third and fourth integers.
  Hence the descent set corresponding to this word-coding is $23_4$. And so forth. See Example~\ref{ex:wordordering} for some further examples.
\end{example}
With the concept of descent sets in hand, we observe that in both Table~\ref{table:treesupto3} and Table~\ref{table:treesof4}, there is a one-to-one correspondence 
between the $\el\gl\rl$-codes corresponding to a given tree/word-code, and the descent set corresponding to that word-code.  
This equivalence is straightforward to establish. All $\el\gl\rl$-codes can be generated from the the zero grade $\el\gl\rl$-code, $\mathfrak 0$. See Definition~\ref{def:lgrgrafting}.
Any factors $(\rl+\gl)$ and $(\el+\gl+\rl)$ in a given $\el\gl\rl$-code denote a descent in the corresponding tree/word-code.
This is because the latter records a grafting between any two non-zero grade trees which naturally corresponds to a ``descent'' into the grafting root, wherever this occurs in the tree,
while the former records a right-grafting with a zero-grade tree, again invoking a descent.
We observe that in Table~\ref{table:treesof4}, the positions of these factors in the $\el\gl\rl$-codes present, occur at precisely the corresponding descent locations indicated in the final column. 
Thus indeed, the aforementioned one-to-one correspondence generalises to all $\el\gl\rl$-codes. Hereafter we abbreviate `descent sets' to `descent'.
\begin{definition}[Grade and length]~\label{def:gradeofdescent}
  For a descent $\df=(d_1\cdots d_k)_n$, we use $\|\df\|$ and $|\df|$ to respectively denote its \emph{grade} and \emph{length}. 
  The former is the length of the word-coding from which the descent was extracted, while the latter is the number of descents recorded therein.
  In other words we respectively set,
  \begin{equation*}
    \|\df\|\coloneqq n\qquad\text{and}\qquad |\df|\coloneqq k.
  \end{equation*}
\end{definition}
\begin{remark}[Generating descents]\label{rmk:generatingdescents}
Generating descents grade by grade is straightforward. There is a single descent of grade $1$, namely $\emptyset_1$.
There are two descents of grade $2$, namely $\{\emptyset_2,1_2\}$.
We can generate the set of grade $3$ descents by keeping this set, but extending the length of the descents by one,
and then also attaching a possible descent `$2$' at position two, to the end of these descents as follows,
\begin{equation*}
  \{\emptyset_2,1_2\}\to \{\emptyset_3,1_3,2_3,12_3\}.
\end{equation*}
Note that when we attach the `$2$' to the empty descent `$\emptyset_2$', we simply replace the empty descent by the `$2$'.
Then to generate all the descents of grade $4$, we take the set of descents of grade $3$ shown on the right just above,
keep that set and extend the length by one, and then also attach a possible descent `$3$' at position three:
\begin{equation*}
  \{\emptyset_3,1_3,2_3,12_3\}\to\{\emptyset_4,1_4,2_4,12_4,3_4,13_4,23_4,123_4\}.
\end{equation*}
The same process can be use to generate all the descents of grade $5$ giving,
\begin{align*}
  \{&\emptyset_5,1_5,2_5,12_5,3_5,13_5,23_5,123_5,\\
    &4_5,14_5,24_5,124_5,34_5,134_5,234_5,1234_5\}.
\end{align*}
And so forth. The number of descents at grade $n$ is thus obviously $2^{n-1}$. 
\end{remark}
\begin{definition}[Grafting product]\label{def:descentgrafting}
  For any two descents, $(a_1\cdots a_k)_n$ and $(b_1\cdots b_\ell)_m$, we define the (descent) \emph{grafting product} `$\star$' by,
  \begin{align*}
    (a_1\cdots& a_k)_n\star(b_1\cdots b_\ell)_m\coloneqq\\
              &\bigl(a_1\cdots a_kn(b_1+n+1)\cdots(b_\ell+n+1)\bigr)_{n+m+1}.
  \end{align*}
\end{definition}
\begin{remark}
  That the right-hand side is the correct resulting descent is explained as follows. The left factor descent retains its form in the grafting procedure.
  Then a root-graft occurs precisely at that point---this would correspond to an insertion of a factor $(\el+\gl+\rl)$ for $\el\gl\rl$-codes.
  This is recorded by the integer `$n$', where $n$ is the grade of the left factor descent.
  This `$n$' thus naturally records the position of the root grafting. We observe that the rest of the descent on the right corresponds to the
  right factor descent, but with all the descent positions shifted by `$n+1$'.
  This is because we must account for the insertion of both the left factor descent and the additional insertion of the single descent `$n$' in the middle due to the root grafting just described.
  This product is much more easily interpreted in the binary representation for descents as we see presently.
  For the moment, for a useful illustrative example, the reader may refer to the product in Example~\ref{ex:lgrstarprod}, though we will present many further illustrative examples presently.
  That the formula above still holds when the left descent factor is the empty descent, $\emptyset_n$, can be assumed if we take the convention that nothing is present on the right-hand side before the `$n$' in this instance. 
  The same is true if the right descent factor is the empty descent, $\emptyset_m$, if again, we take the convention that in this instance, there is nothing present after the `$n$' on the right-hand side.
  As mentioned we thoroughly explore these cases presently. 
  Lastly, the observant reader will have noticed that we use the same symbol `$\star$' for both the $\el\gl\rl$-grafting product and the descent grafting product.
  Presently, below, we confirm that the descent algebra equipped with the grafting product is isomorphic to the $\el\gl\rl$-algebra equipped with $\el\gl\rl$-grafting product.
  Therefore we do not distinguish the two.
\end{remark}
Let us consider some illustrative examples.
\begin{example}\label{ex:descentproducts}
  Some simple descent-grafting products are as follows,
  \begin{align*}
    \emptyset_0\star\emptyset_0&=\emptyset_1, &\!\!\! \emptyset_0\star\emptyset_1&=\emptyset_2,  &\!\!\!  \emptyset_1\star\emptyset_0&=1_2,         &\!\!\!  \emptyset_0\star\emptyset_2&=\emptyset_3,\\
    \emptyset_1\star\emptyset_1&=1_3,         &\!\!\!  \emptyset_2\star\emptyset_0&=2_3,         &\!\!\!  \emptyset_0\star\emptyset_3&=\emptyset_4, &\!\!\!  \emptyset_1\star\emptyset_2&=1_4,\\
    \emptyset_0\star1_2&=2_3,                 &\!\!\!  1_2\star\emptyset_0&=12_3,                &\!\!\!  1_2\star\emptyset_1&=12_4,                &\!\!\!  1_2\star1_2&=124_5.
  \end{align*}
  For another example, see Example~\ref{ex:lgrstarprod}, which translated into descents gives, $\emptyset_2\star1_3=24_6$.
  Some products generate the same result, for example $\emptyset_2\star\emptyset_0=2_3$ and $\emptyset_0\star1_2=2_3$.
\end{example}
\begin{example}[Nonassociativity]\label{ex:graftingnonassocdescents}
  Consider the grafting products analogous to those in Examples~\ref{ex:graftingnonassoctrees} and \ref{ex:graftingnonassoclgr}.
  We observe from Example~\ref{ex:descentproducts} that $\emptyset_0\star\emptyset_0=\emptyset_1$ and $\emptyset_1\star\emptyset_0=1_2$.
  However, consider the triple products. We observe that,
  $\emptyset_1\star(\emptyset_0\star\emptyset_0)=\emptyset_1\star\emptyset_1=1_3$, while on the other hand we have, $(\emptyset_1\star\emptyset_0)\star\emptyset_0=1_2\star\emptyset_0=12_3$.
  This fits with Examples~\ref{ex:graftingnonassoctrees} and \ref{ex:graftingnonassoclgr}, and reflects the nonassociativity of the descent grafting product.
\end{example}

We formally define the `descent algebra' we use herein as follows. Let $\mathbb D$ denote the set of all possible descents.
In other words, $\mathbb D$ includes all monomials of the form $(d_1\cdots d_k)_n$ over all $d_1,\ldots,d_k\in\mathbb N$ with $d_\ell<d_{\ell+1}$ for $\ell\in\{1,\ldots,k-1\}$ and $d_k<n$, for all $k,n\in\mathbb N$.
It also includes all the empty descents of the form $\emptyset_n$, for all $n\in\mathbb N_0$.
\begin{definition}[Descent algebra]\label{def:descent-alg}
  We define the \emph{descent algebra} to be the real nonassociative noncommutative algebra, $\R\langle\mathbb D\rangle_\star$, generated by descents and equipped with the grafting product.
\end{definition}

There is another representation of the descent algebra, that proves very insightful and useful. This is the so-called binary representation.
For this representation, we simply write down all descents as binary codes.
\begin{definition}[Binary representation]\label{def:binaryrepresentation}
For the binary representation of the descent $\df=(d_1\cdots d_k)_n\in\mathbb D$, we take a string of $n$ zeros, and then at each location $d_\ell$ along the string, for $\ell\in\{1,\ldots,k\}$, we place a `$1$' at that position.
In other words the binary-code representation for $(d_1\cdots d_k)_n$ is given by,
\begin{equation*}
  \underbrace{0\cdots0}_{d_1-1}\stackon{\stackon{$1$}{$\downarrow$}}{{\scriptsize $d_1$}}\!\underbrace{0\cdots0}_{d_2-d_1-1}\stackon{\stackon{$1$}{$\downarrow$}}{{\scriptsize $d_2$}}
  \underbrace{0\cdots0}_{d_3-d_2-1}\!\stackon{\stackon{$1$}{$\downarrow$}}{{\scriptsize $d_3$}}\;0\cdots0\stackon{\stackon{$1$}{$\downarrow$}}{{\scriptsize $d_{k-1}$}}\!\!\!\!
  \underbrace{0\cdots0}_{d_k-d_{k-1}-1}\!\!\stackon{\stackon{$1$}{$\downarrow$}}{{\scriptsize $d_k$}}\underbrace{0\cdots0}_{n-d_k}\!.
\end{equation*}
In the above, we have indicated by arrows, the positions of the descents. The braces indicate the lengths of the continuous strings of zeros shown; those lengths could be zero of course.
\end{definition}
\begin{remark}
In the binary representation outlined above, there is always a superfluous `$0$' at the end as the number of descents is at most $n-1$.
\end{remark}
\begin{example}
  Some example binary representations are as follows,
  \begin{align*}
    \emptyset_1&=0,    & \emptyset_2&=00, & 1_2&=10,    && \\
    \emptyset_3&=000,  & 1_3&=100,        & 2_3&=010,   & 12_3&=110,  \\
    \emptyset_4&=0000, & 1_4&=1000,       & 2_4&=0100,  & 12_4&=1100,  \\
            3_4&=0010, & 13_3&=1010,      & 23_3&=0110, & 123_3&=1110,
  \end{align*}
  and so forth. In Remark~\ref{rmk:binaryproducts} just below, we explain how we deal with $\emptyset_0$ in this representation.
\end{example}
\begin{remark}[Descents and compositions]\label{rmk:descentsand compositions}
There is a one-to-one correspondence between descents and compositons. As we have seen, there is a natural one-to-one map between descents of a given grade and length and binary numbers.
In turn there is a natural one-to-one map between binary numbers and compositions, indeed between compositions of $n$ of length $k$ and binary numbers involving $n$ digits and $k$ ones.
This is standard map. For example in the case of binary numbers of length $n$ with $k$ ones, imagine $n+1$ intervals of unit length.
Given such a binary number, suppose we identify at the positions of the ones, the corresponding $k$ out of $n$ internal interval endpoints.
Then suppose we glue together the coincident intervals at those endpoints to generate longer intervals of integer length.
The digit sequence of the count of the lengths of the resulting intervals represents a composition of $n$ of length $k$.
That this gives the one-to-one mapping between any such binary number and compositions of $n$ of length $k$, is straightforward, and by transitivity,
there is a one-to-one map between descents and compositions.
\end{remark}
\begin{remark}[Grafting binary descents]\label{rmk:binaryproducts}
The grafting product of two descents given in binary representation is particularly simple. This explains why this representation is so useful.
Consider the following two examples, $1_3\star2_3=136_7$ and $12_4\star124_5=124679_{10}$.
In the binary representation, these examples are given by,
\begin{equation*}
  100\star010=1010010
  \quad\text{and}\quad
  1100\star11010=1101011010.
\end{equation*}
The effect of the grafting product in this representation is straightforwardly interpreted.
We essentally concatenation the two binary strings in the two factors, and perform a `fix' at the concatenation interface.
For example, in the string represented by the product `$100\star010$' we replace the central `$0\star$' by `$10$'. 
With this in hand, let us consider some simpler examples, in particular those outlined in Example~\ref{ex:descentproducts}.
In binary representation we express these products in respective order as follows, 
\begin{align*}
  \emptyset_0\star\emptyset_0 &=0,  &\!\! \emptyset_0\star  0&=00,   &\!\! 0\star\emptyset_0&=10,  \\
  \emptyset_0\star00&=000,          &\!\!             0\star0&=100,  &\!\! 00\star\emptyset_0&=010, \\
  \emptyset_0\star000&=0000,        &\!\!            0\star00&=1000, &\!\! \emptyset_0\star10&=010,  \\
  10\star\emptyset_0&=110,          &\!\!            10\star0&=1100, &\!\! 10\star10&=11010.
\end{align*}
Note that the first two cases above respectively correspond to $\emptyset_0\star\emptyset_0=\emptyset_1$ and $\emptyset_0\star\emptyset_1=\emptyset_2$.
Observe that we use $\emptyset_0$ in the binary representation as well. 
In the examples above we notice two features.
One is that the concatenation of the binary representations as result of the grafting product, with the `fix' mentioned above, is naturally observed therein.
The other is that left multiplication by $\emptyset_0$, concatenates a zero to the front of the right factor in the grafting product. This is the second component of the aforementioned `fix'.
\end{remark}
\begin{remark}[The concatenation `fix']\label{rmk:twofixes}
  The grafting product for binary representations, highlighted in Remark~\ref{rmk:binaryproducts}, is thus interpreted as the concatenation of the binary strings of the factors involved,
  with the fix mentioned therein. In summary, if $\mathfrak{b}=\mathfrak{b}^\prime0$ and $\mathfrak{c}$ are binary representations of two descents, then this \emph{fix} is given by, 
  \begin{equation}\label{eq:fix}
    \mathfrak{b}^\prime0\star\mathfrak{c}\to\mathfrak{b}^\prime10\,\mathfrak{c}\qquad\text{and}\qquad \emptyset_0\star\,\mathfrak{c}\to0\,\mathfrak{c},  
  \end{equation}
  where straightforward concatenation of the binary strings is assumed on the right.
\end{remark}

We define the following weight character associated with any descent, based on that for trees in Definition~\ref{def:weightchar}.
\begin{definition}[Descent weight]~\label{def:descentweight}
  The weight $\omega\colon\mathbb D\to\R$ of any descent is defined as follows. By convention we set $\omega\circ\emptyset_0\coloneqq1$ and
  start with $\omega\circ\emptyset_1\coloneqq C^0_0$. Then for any descent $\df\in\mathbb D$ of length $k$, we define,
  \begin{equation*}
     \omega\circ\df\coloneqq\omega\circ(\emptyset_0\star\df_1)+\omega\circ(\df_2\star\df_3)+\cdots+\omega\circ(\df_{k}\star\df_{k+1}),
  \end{equation*} 
  where $\emptyset_0\star\df_1$, \ldots, $\df_{k}\star\df_{k+1}$ are all the possible descent pairs, which when grafted together, generate $\df$.
  For any such pair, we can compute the weight of, say $\df_1\star\df_2$, via,
  \begin{equation}\label{eq:pairweight}
    \omega\circ(\df_1\star\df_2)\coloneqq C^{\|\df_1\|+\|\df_2\|}_{\|\df_1\|}(\omega\circ\df_1)\,(\omega\circ\df_2).
  \end{equation}
\end{definition}
\begin{remark}
  We emphasise that: (1) We have naturally extended the weight functional linearly;
  (2) The decomposition of the descent $\df$ of length $k$, into the sum of pairs of descents that when grafted together generate $\df$, is \emph{degrafting}.
  See Section~\ref{sec:degrafting} where we consider this operation in detail. Many basic examples of computing weights of descents can also be found therein; 
  (3) That the weight of the grafting product of two descents, $\df_1\star\df_2$, is given by \eqref{eq:pairweight}, is based on the corresponding formula
  for the grafting of two trees, as outlined in Definition~\ref{def:weightchar};
  (4) We can also think of the weight of a descent, as the sum of all the weight characters associated with the corresponding individual planar binary rooted trees, with that descent.
\end{remark}
Recall the weighted sum of planar binary rooted trees, $\pf_n$, in Definition~\ref{def:weightedpolytrees}. The equivalent sum in terms of descents follows straightforwardly.
\begin{lemma}[Weighted sum of descents]\label{lemma:weightedsumofdescents}
  The sum of planar binary trees, $\pf_n$, in the context of descents, is equivalent to,
   \begin{equation*}
        \pf_n=\sum_{\|\df\|=n}\omega(\df)\cdot\df.
  \end{equation*}
\end{lemma}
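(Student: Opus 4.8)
The plan is to deduce the identity from a pointwise statement relating the descent weight $\omega$ to the tree weight character $\alpha$ of Definition~\ref{def:weightchar}. Write $\df(\tau)$ for the descent set (Definition~\ref{def:descentset}) of the level word-coding of a planar binary rooted tree $\tau$. The goal is to prove that, for every descent $\df\in\mathbb D$,
\begin{align*}
  \omega\circ\df=\sum_{\|\tau\|=\|\df\|,\ \df(\tau)=\df}\alpha(\tau).
\end{align*}
Granting this, Lemma~\ref{lemma:weightedsumofdescents} is immediate: the map $\tau\mapsto\df(\tau)$ partitions the grade-$n$ trees into fibres indexed by the descents $\df$ with $\|\df\|=n$, so from Definition~\ref{def:weightedpolytrees},
\begin{align*}
  \pf_n=\sum_{\|\tau\|=n}\alpha(\tau)\cdot\tau=\sum_{\|\df\|=n}\Bigl(\sum_{\df(\tau)=\df}\alpha(\tau)\Bigr)\df=\sum_{\|\df\|=n}(\omega\circ\df)\cdot\df.
\end{align*}

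To prove the pointwise identity I would induct on $\|\df\|$. The base case $\df=\emptyset_0$ is direct: the only grade-$0$ tree is the bud, with $\alpha$-value $1=\omega\circ\emptyset_0$. For the inductive step, fix $\df$ with $\|\df\|=n\geqslant1$. Every grade-$n$ tree $\tau$ is uniquely the grafting $\tau=\tau_1\star\tau_2$ of its left and right root-subtrees, with $\|\tau_1\|+\|\tau_2\|=n-1$, and Definition~\ref{def:weightchar} gives $\alpha(\tau)=C_{\|\tau_1\|}^{\|\tau_1\|+\|\tau_2\|}\alpha(\tau_1)\alpha(\tau_2)$. The crucial compatibility, recorded in Sections~\ref{sec:lgr-alg}--\ref{sec:descent-alg}, is that the descent map intertwines tree-grafting with the descent grafting product of Definition~\ref{def:descentgrafting}, namely $\df(\tau_1\star\tau_2)=\df(\tau_1)\star\df(\tau_2)$. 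Summing over all grade-$n$ trees with descent $\df$ and regrouping by the pair $(\df_1,\df_2)=(\df(\tau_1),\df(\tau_2))$,
\begin{align*}
  \sum_{\df(\tau)=\df}\alpha(\tau)&=\sum_{\df_1\star\df_2=\df}C_{\|\df_1\|}^{\|\df_1\|+\|\df_2\|}\\
  &\qquad\times\Bigl(\sum_{\df(\tau_1)=\df_1}\alpha(\tau_1)\Bigr)\Bigl(\sum_{\df(\tau_2)=\df_2}\alpha(\tau_2)\Bigr).
\end{align*}
Since $\|\df_1\|,\|\df_2\|<n$, the inductive hypothesis replaces the two inner sums by $\omega\circ\df_1$ and $\omega\circ\df_2$; applying \eqref{eq:pairweight} for the weight of a grafting product and then the defining recursion for $\omega$ in Definition~\ref{def:descentweight} (the sum of pair-weights over all degraftings of $\df$, cf.\ Section~\ref{sec:degrafting}), the right-hand side collapses to $\omega\circ\df$. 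This closes the induction.

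\textbf{Main obstacle.} The only substantive ingredient is the intertwining identity $\df(\tau_1\star\tau_2)=\df(\tau_1)\star\df(\tau_2)$, together with the fact that the degraftings of $\df$ are in bijection with the pairs $(\df_1,\df_2)$ appearing in Definition~\ref{def:descentweight}. If the isomorphism between the tree algebra modulo descent-equivalence, the $\el\gl\rl$-algebra and the descent algebra (all with the grafting product) has already been set up in the preceding sections, this can simply be quoted. Otherwise it requires a short direct check at the level of level word-codings: grafting $\tau_1$ (grade $p$) with $\tau_2$ (grade $q$) yields the word obtained by incrementing every letter of the coding of $\tau_1$ by $1$, appending a single `$1$' for the new root, then appending the coding of $\tau_2$ with every letter incremented by $1$. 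Descents internal to each block are preserved by the uniform shift, a descent is created at the junction before the new `$1$' (the last letter of the shifted $\tau_1$-block is $\geqslant2>1$) and none after it (the first letter of the shifted $\tau_2$-block is $\geqslant2>1$); this reproduces exactly the index bookkeeping of Definition~\ref{def:descentgrafting}, equivalently the binary concatenation-with-fix of Remark~\ref{rmk:binaryproducts}. Everything else — the binomial arithmetic, the regrouping of finite sums, and the base case — is routine.
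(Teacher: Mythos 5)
Your proposal is correct, and it supplies essentially the argument the paper leaves implicit: the paper states this lemma without proof ("follows straightforwardly"), and your pointwise identity $\omega\circ\df=\sum_{\df(\tau)=\df}\alpha(\tau)$ is exactly the characterisation the authors record as item (4) of the remark following Definition~\ref{def:descentweight}. Your induction via the unique root-decomposition, the intertwining $\df(\tau_1\star\tau_2)=\df(\tau_1)\star\df(\tau_2)$ (which your word-coding check correctly verifies and which is what Definition~\ref{def:descentgrafting} is built to encode), and the matching of the two weight recursions through \eqref{eq:pairweight} is precisely the intended route.
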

Further, the result of Lemma~\ref{lemma:derivativesandtrees} applies for $\pf_n$ in the case here, where the weighted sum is over descents. In other words we have, $\pa_x\pf_n\equiv\pf_{n+1}$.
\begin{example}[Weighted sums]\label{ex:weightedsums}
Some examples of low order weighted sums, $\pf_n$, are as follows. 
We have that $\pf_0=\emptyset_0$ and $\pf_1=\emptyset_1$.
This means that the weight characters of the two descents concerned here, are $\omega(\emptyset_0)=1$ and $\omega(\emptyset_1)=1$.
Further, we have, $\pf_2=\emptyset_2+1_2$ and $\pf_3=\emptyset_3+2\cdot 1_3+3\cdot 2_3+12_3$.
In these examples, we observe that: $\omega(\emptyset_2)=1$; $\omega(1_2)=1$; $\omega(\emptyset_3)=1$; $\omega(1_3)=2$; $\omega(2_3)=3$ and $\omega(12_3)=1$.
\end{example}

We now examine the role of the weighted sum of descents in Lemma~\ref{lemma:weightedsumofdescents}, namely $\pf_n$, in the formulation of the noncommutative KP hierarchy equations.
\begin{definition}[Sato character]\label{def:Satochar}
Given $n,\ell\in\mathbb N$ with $\ell<n$, and a collection of integers $c_1,c_2,\cdots,c_\ell\in\mathbb N$, we define the \emph{Sato character} $\gamma_n$ of the string $c_1c_2\cdots c_\ell$ by, 
\begin{equation*}
     \gamma_n(c_1c_2\cdots c_\ell)\coloneqq C^n_{c_1-1}C^{n-c_1}_{c_2-1}C^{n-c_1-c_2}_{c_3-1}\cdots C^{n-c_1-\ldots-c_{\ell-1}}_{c_\ell-1}.
\end{equation*}
\end{definition}
\begin{prop}[KP hierarchy: descents]\label{pres:KP}
In terms of descents, the equations for the noncommutative KP hierarchy given in Corollary~\ref{cor:hierarchyequations} have the form,
\begin{align*}
  b_{n-2}=&\;-C_1^n\pf_1,\\
  b_{n-3}=&\;-C_2^n\pf_2-C_0^{n-2}b_{n-2}\star\pf_0,\\
  b_{n-4}=&\;-C_3^n\pf_3-C_1^{n-2}b_{n-2}\star\pf_1-C_0^{n-3}b_{n-3}\star\pf_0,\\
  b_{n-5}=&\;-C_4^n\pf_4-C_2^{n-2}b_{n-2}\star\pf_2-C_1^{n-3}b_{n-3}\star\pf_1\\
         &\;-C_0^{n-4}b_{n-4}\star\pf_0,\\
  \vdots & \\
  b_{n-k}=&\;-C^n_{k-1}\pf_{k-1}-\sum_{\ell=2}^{k-1}C^{n-\ell}_{k-1-\ell}b_{n-\ell}\star\pf_{k-1-\ell}.
\end{align*}
  The `$\star$' product represents the augmented pre-P\"oppe product. 
\end{prop}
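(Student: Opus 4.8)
The plan is to read Proposition~\ref{pres:KP} off directly from Corollary~\ref{cor:hierarchyequations} by rewriting every ingredient on its right-hand side as an element of the descent algebra. Two ``dictionary entries'' are needed. First, the pure-derivative terms $\pa_x^{m}\lb V\rb$ appearing in Corollary~\ref{cor:hierarchyequations} must be identified with the weighted descent sums $\pf_m$. Second, each product in Corollary~\ref{cor:hierarchyequations} --- which is stated there to be the augmented pre-P\"oppe product --- must be identified with the grafting product $\star$ of Definition~\ref{def:descentgrafting}; in particular right-multiplication by $\lb V\rb$, which is the action $A_{\rl}$, must become grafting on the right by the grade-zero descent $\emptyset_0=\pf_0$. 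Once these identifications are in place the proposition is essentially a transcription.

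First I would record the derivative dictionary. Starting from the identification of $\lb V\rb$ with the single-bud element $\emptyset_0=\pf_0$ and iterating the branching identity $\pa_x\pf_n\equiv\pf_{n+1}$ (Corollary~\ref{cor:derivsandtrees}, rephrased via Lemma~\ref{lemma:weightedsumofdescents}), one gets $\pa_x^{m}\lb V\rb=\pf_m=\sum_{\|\df\|=m}\omega(\df)\,\df$ as an element of the descent algebra for every $m\in\mathbb N_0$. In particular the leading term $C_{k-1}^{n}\pa_x^{k-1}\lb V\rb$ of $b_{n-k}$ becomes $C_{k-1}^{n}\pf_{k-1}$, and the overall sign in Proposition~\ref{pres:KP} is the one forced by the ansatz~\eqref{eq:solutionansatz}, namely $w_1=-\lb G\rb$ and $\pa_x\lb G\rb\equiv\pa_x\lb V\rb$.

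Next I would invoke the product dictionary. By Lemma~\ref{lemma:APandPAisomorphic} together with the identification of the augmented pre-P\"oppe algebra with the $\el\gl\rl$-algebra and hence with the descent algebra (Definitions~\ref{def:lgrgrafting} and \ref{def:descentgrafting} and the remark following Definition~\ref{def:lgr-alg}), the augmented pre-P\"oppe product of two non-trivial descents $\df_1,\df_2$ is $\df_1\star\df_2=\df_1(\el+\gl+\rl)\df_2$, while right-multiplication by $\lb V\rb$ is $\df\star\emptyset_0$. Thus each term $C_{k-\ell}^{n-\ell+1}b_{n-\ell+1}\pa_x^{k-\ell}\lb V\rb$ of Corollary~\ref{cor:hierarchyequations} becomes $C_{k-\ell}^{n-\ell+1}\,b_{n-\ell+1}\star\pf_{k-\ell}$; in the last term the coefficient is $C_0^{n-k+1}=1$, so it is exactly $b_{n-k+1}\star\pf_0$, matching the emphasised final term. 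Substituting both dictionaries into Corollary~\ref{cor:hierarchyequations} and reindexing the sum $\ell\mapsto\ell+1$ (so $\ell$ now runs from $2$ to $k-1$, with $b_{n-\ell+1}\mapsto b_{n-\ell}$, $\pf_{k-\ell}\mapsto\pf_{k-1-\ell}$ and $C_{k-\ell}^{n-\ell+1}\mapsto C_{k-1-\ell}^{n-\ell}$) yields the displayed recursion; the first four lines are the cases $k=2,3,4,5$, and the case $k=n+1$ gives $b_{-1}=\pa_{t_n}w_1$ by Remark~\ref{rmk:b_1}. The low-order checks are already in place in Examples~\ref{ex:n=3reprised} and \ref{ex:weightedsums}.

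The main obstacle is not a hard estimate but two points of care. The purely mechanical one is the bookkeeping of the binomial super- and subscripts under the reindexing, together with the fact that the grafting product automatically carries the weight combination $\omega(\df_1\star\df_2)=C^{\|\df_1\|+\|\df_2\|}_{\|\df_1\|}\,\omega(\df_1)\,\omega(\df_2)$, so that no spurious combinatorial factors appear beyond those written. The genuinely substantive point is the legitimacy of the identification ``augmented pre-P\"oppe product $=$ grafting product $\star$'', which rests on the $\el\gl\rl$-encoding of Section~\ref{sec:lgr-alg} and the isomorphisms assembled in Sections~\ref{sec:trees}--\ref{sec:descent-alg}; granting those, the proof of the proposition is a direct transcription of Corollary~\ref{cor:hierarchyequations}.
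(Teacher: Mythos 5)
Your proposal is correct and matches the paper's (implicit) derivation: Proposition~\ref{pres:KP} is presented there without a separate proof, precisely as the transcription of Corollary~\ref{cor:hierarchyequations} under the two dictionary entries you state, namely $\pa_x^{m}\lb V\rb\leftrightarrow\pf_m$ and augmented pre-P\"oppe product $\leftrightarrow$ grafting product `$\star$', followed by the reindexing $\ell\mapsto\ell+1$. You also correctly navigate the two notational wrinkles --- the overall minus sign, which is supplied by the ansatz $w_1=-\lb G\rb$ in the proof of Corollary~\ref{cor:hierarchyequations}, and the grading convention $\pa_x\lb V\rb\leftrightarrow\pf_1$ --- so nothing further is needed.
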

If, in these equations, we set $\qf_n\coloneqq\pf_{n-1}$, and then successively substitute each row into the next, we find that,
\begin{align*}
  b_{n-2}=&\;-\gamma_n(2)\cdot\qf_2,\\
  b_{n-3}=&\;-\gamma_n(3)\cdot\qf_3+\gamma_n(21)\cdot\qf_2\star\qf_1,\\
  b_{n-4}=&\;-\gamma_n(4)\cdot\qf_4+\gamma_n(31)\cdot\qf_3\star\qf_1+\gamma_n(22)\cdot\qf_2\star\qf_2\\
         &\;-\gamma_n(211)\cdot(\qf_2\star\qf_1)\star\qf_1,\\
  b_{n-5}=&\;-\gamma_n(5)\cdot\qf_5+\gamma_n(41)\cdot\qf_4\star\qf_1+\gamma_n(32)\cdot\qf_3\star\qf_2\\
         &\;+\gamma_n(23)\cdot\qf_2\star\qf_3-\gamma_n(311)\cdot(\qf_3\star\qf_1)\star\qf_1\\
         &\;-\gamma_n(221)\cdot(\qf_2\star\qf_2)\star\qf_1-\gamma_n(212)\cdot(\qf_2\star\qf_1)\star\qf_2\\
         &\;+\gamma_n(2111)\cdot((\qf_2\star\qf_1)\star\qf_1)\star\qf_1.
\end{align*}
For each coefficient, if we consider the pullback of the map, $c_1c_2\cdots c_\ell\mapsto \gamma_n(c_1c_2\cdots c_\ell)\cdot(\qf_{c_1}\star\cdots)\star\qf_{c_\ell}$,
we find, $b_{n-2}\to-2$, $ b_{n-3}\to-3+21$, $b_{n-4}\to-4+31+22-211$, and,
\begin{equation*}
b_{n-5}\to-5+41+32+23-311-221-212+2111. 
\end{equation*}
It thus appears that each $b_{n-k}$ is parametrised by the linear combinations of the compositions of $k$, except for those compositions starting with a `$1$',
and the sign of each term is determined by the length of the composition.
Indeed, let $\mathcal C_k^\ast$ denote the set of compositions of $k$ that do not start with a `$1$'. Then we have the following result.
\begin{theorem}[Sato coefficients]\label{thm:Satocoeffs}
  The $k^{\mathrm{th}}$ coefficient in the Sato formulation of the noncommutative hierarchy equations, $b_{n-k}$, is given by,
\begin{equation*}
   b_{n-k}=\sum_{c=c_1\cdots c_\ell\in\mathcal C_k^\ast}(-1)^{\ell} \gamma_n(c)\cdot(\qf_{c_1}\star\cdots)\star\qf_{c_\ell}.
\end{equation*}
The sum is over all compositions $c=c_1\cdots c_\ell$ in $\mathcal C_k^\ast$, and the product is computed left to right starting with $\qf_{c_1}\star\qf_{c_2}$.
\end{theorem}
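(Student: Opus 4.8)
The plan is to prove Theorem~\ref{thm:Satocoeffs} by induction on $k$, starting from the recursion for $b_{n-k}$ established in Proposition~\ref{pres:KP}, namely
\begin{equation*}
  b_{n-k}=-C^n_{k-1}\pf_{k-1}-\sum_{\ell=2}^{k-1}C^{n-\ell}_{k-1-\ell}\,b_{n-\ell}\star\pf_{k-1-\ell}.
\end{equation*}
Rewriting in terms of $\qf_m\coloneqq\pf_{m-1}$, this becomes $b_{n-k}=-C^n_{k-1}\qf_k-\sum_{\ell=2}^{k-1}C^{n-\ell}_{k-1-\ell}\,b_{n-\ell}\star\qf_{k-\ell}$. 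The base case $k=2$ gives $b_{n-2}=-C^n_1\qf_2=-\gamma_n(2)\,\qf_2$, matching the claimed formula since $\mathcal C_2^\ast=\{2\}$ and $\gamma_n(2)=C^n_1$. For the inductive step, one substitutes the inductive hypothesis for each $b_{n-\ell}$ with $2\le\ell\le k-1$ into the recursion: each term $(-1)^{\ell'}\gamma_n(c')\cdot(\qf_{c_1'}\star\cdots)\star\qf_{c'_{\ell'}}$ of $b_{n-\ell}$, after right-grafting with $\qf_{k-\ell}$ and multiplying by $-C^{n-\ell}_{k-1-\ell}$, becomes a term indexed by the composition $c'(k-\ell)=c'_1\cdots c'_{\ell'}(k-\ell)$ of $k$, whose first part $c_1'\ge 2$ (since $c'\in\mathcal C_\ell^\ast$), and whose sign is $(-1)^{\ell'+1}=(-1)^{\mathrm{length}}$. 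The leading term $-C^n_{k-1}\qf_k$ supplies exactly the length-one composition $(k)$ with sign $-1=(-1)^1$. Thus every composition $c=c_1\cdots c_{\mathrm{len}}\in\mathcal C_k^\ast$ of length $\ge 2$ arises exactly once, by stripping off its last part $c_{\mathrm{len}}=k-\ell$ and recognising $c_1\cdots c_{\mathrm{len}-1}$ as a composition of $\ell$ in $\mathcal C_\ell^\ast$; the coefficient that accompanies it is $-C^{n-\ell}_{k-1-\ell}$ times $(-1)^{\mathrm{len}-1}\gamma_n(c_1\cdots c_{\mathrm{len}-1})$, which we must show equals $(-1)^{\mathrm{len}}\gamma_n(c)$.

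The heart of the argument is therefore the binomial identity
\begin{equation*}
  C^{n-\ell}_{k-1-\ell}\cdot\gamma_n(c_1\cdots c_{\mathrm{len}-1})=\gamma_n(c_1\cdots c_{\mathrm{len}-1},k-\ell),
\end{equation*}
where $\ell=c_1+\cdots+c_{\mathrm{len}-1}$ and $k-\ell=c_{\mathrm{len}}$, so that $k-1-\ell=c_{\mathrm{len}}-1$ and $n-\ell=n-c_1-\cdots-c_{\mathrm{len}-1}$. Unpacking Definition~\ref{def:Satochar}, the right side is $C^n_{c_1-1}C^{n-c_1}_{c_2-1}\cdots C^{n-c_1-\cdots-c_{\mathrm{len}-1}}_{c_{\mathrm{len}}-1}$, and this is precisely $\gamma_n(c_1\cdots c_{\mathrm{len}-1})$ multiplied by the single new factor $C^{n-\ell}_{c_{\mathrm{len}}-1}=C^{n-\ell}_{k-1-\ell}$. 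So the identity is in fact immediate from the definition of $\gamma_n$, which was evidently designed to telescope in exactly this way. I would state this observation as a one-line lemma and then the inductive step is essentially bookkeeping.

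The main obstacle I anticipate is not the algebra but the combinatorial matching: verifying that the map $c\mapsto(c_1\cdots c_{\mathrm{len}-1},\,c_{\mathrm{len}})$ gives a bijection between $\{c\in\mathcal C_k^\ast:\mathrm{len}\ge 2\}$ and $\bigsqcup_{\ell=2}^{k-1}\{c'\in\mathcal C_\ell^\ast\}\times\{k-\ell\}$, together with the length-one composition $(k)$ being supplied by the leading term, so that no composition in $\mathcal C_k^\ast$ is produced twice or omitted, and that the condition ``does not start with $1$'' is preserved and exhausted. Since removing the last part of a composition of $k$ not starting with $1$ yields a composition of some $\ell<k$ that still does not start with $1$ (the first part is unchanged), and conversely appending any last part $\ge 1$ to such a composition of $\ell$ yields one of $k$ in $\mathcal C_k^\ast$, this bijection is clean; the only subtlety is the boundary cases $\ell=k-1$ (last part $1$) and the degenerate ``empty prefix'' giving $(k)$ itself, which I would handle by noting that the leading term $-C^n_{k-1}\qf_k$ in the recursion plays exactly the role of the ``$\ell=1$'' summand with $\gamma_n(\emptyset)\coloneqq 1$. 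Once the bijection and the telescoping identity are in place, the signs $(-1)^{\mathrm{len}}$ track correctly because each application of the recursion adds one part and flips the sign once, and the proof concludes.
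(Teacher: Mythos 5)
Your proposal is correct and follows essentially the same route as the paper: induction on $k$ via the recursion of Proposition~\ref{pres:KP}, with each composition in $\mathcal C_k^\ast$ of length at least two obtained uniquely by postattaching its last part to a shorter composition in some $\mathcal C_\ell^\ast$, the length-one composition $(k)$ coming from the leading term. The only difference is that you spell out the telescoping identity $C^{n-\ell}_{k-1-\ell}\,\gamma_n(c_1\cdots c_{\mathrm{len}-1})=\gamma_n(c_1\cdots c_{\mathrm{len}-1}(k-\ell))$ explicitly, which the paper leaves as ``straightforward''.
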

\begin{proof}
The result follows by induction. Consider the expression for $b_{n-k}$ given in Proposition~\ref{pres:KP}, now with $\pf_\kappa$ replaced by $\qf_{\kappa+1}$. 
The result is,
\begin{align*}
  b_{n-k}=&\;-C^n_{k-1}\qf_{k}-C^{n-2}_{k-3}b_{n-2}\star\qf_{k-2}-C^{n-3}_{k-4}b_{n-3}\star\qf_{k-3}\\
         &\;-\cdots-C_0^{n-k+1}b_{n-k+1}\star\qf_1.
\end{align*}
Consider the collection of compositions in $\mathcal C^\ast_{k-1}$. Postattach a `$1$' to each such composition therein. 
Now consider the collection of compositions in $\mathcal C^\ast_{k-2}$. Postattach a `$2$' to each such composition. 
Continue this process until the compositions in $\mathcal C^\ast_{2}=\{2\}$ is postattached by `$k-2$'.  
It is straightforward to establish that the union of these sets together with $\{k\}$ itself is $\mathcal C^\ast_{k}$.
That the binomial coefficients mirror this construction and generate the $\gamma_n$-coefficients is also straightforward, and similarly for the signs of the terms.
This establishes the result.
\qed
\end{proof}
The explicit form for the noncommutative KP equation of order $n$ in the hierarchy thus follows immediately. Indeed we have the following. 
\begin{corollary}[KP equation]\label{cor:KPequationformula}
The $n^{\mathrm{th}}$ equation in the noncommutative KP hierarchy is given explicitly by,
\begin{equation*}
   b_{-1}=\sum_{c=c_1\cdots c_\ell\in\mathcal C_{n+1}^\ast}(-1)^{\ell} \gamma_n(c)\cdot(\qf_{c_1}\star\cdots)\star\qf_{c_\ell}.
\end{equation*}
The sum is over all compositions $c=c_1\cdots c_\ell$ in $\mathcal C_{n+1}^\ast$, and, again, the product is computed left to right. 
\end{corollary}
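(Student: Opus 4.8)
The plan is to obtain Corollary~\ref{cor:KPequationformula} as the boundary case $k=n+1$ of Theorem~\ref{thm:Satocoeffs}, so that essentially no new work beyond bookkeeping is required. First I would invoke Remark~\ref{rmk:b_1}, which identifies the right-hand side of the $w_1$-evolution equation \eqref{eq:ncKPhierarchyw1} with the Sato coefficient of index $k=n+1$, namely $\pa_{t_n}w_1=b_{-1}$; thus the $n^{\mathrm{th}}$ equation of the hierarchy is exactly the assertion that $b_{-1}$ is given by the claimed descent expansion. Corollary~\ref{cor:hierarchyequations} already records that its closed recursion for $b_{n-k}$ remains valid at $k=n+1$, and Proposition~\ref{pres:KP} translates that recursion into descent form; hence the recursion
\begin{equation*}
  b_{n-k}=-C^n_{k-1}\pf_{k-1}-\sum_{\ell=2}^{k-1}C^{n-\ell}_{k-1-\ell}\,b_{n-\ell}\star\pf_{k-1-\ell}
\end{equation*}
holds for $k=n+1$ as well.

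Second, I would run the induction of Theorem~\ref{thm:Satocoeffs} one step further. That proof uses only the recursion above together with the combinatorial fact that $\mathcal C_k^\ast$ decomposes as $\{k\}$ together with the sets obtained by post-attaching $\ell$ to each composition in $\mathcal C_{k-\ell}^\ast$ for $\ell=1,\dots,k-2$, and that the binomial prefactors assemble correctly into the Sato characters $\gamma_n$ with the sign $(-1)^\ell$ tracking composition length. None of this uses $k\le n$, so setting $k=n+1$ yields
\begin{equation*}
  b_{-1}=\sum_{c=c_1\cdots c_\ell\in\mathcal C_{n+1}^\ast}(-1)^\ell\,\gamma_n(c)\cdot(\qf_{c_1}\star\cdots)\star\qf_{c_\ell},
\end{equation*}
which is the statement of the corollary.

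The one point meriting care --- and the only real obstacle --- is a definitional edge effect: Definition~\ref{def:Satochar} introduces $\gamma_n(c_1\cdots c_\ell)$ under the restriction $\ell<n$, whereas a composition of $n+1$ not beginning with $1$ may have length as large as $\ell=n$ (the composition $21\cdots1$). I would resolve this simply by observing that the defining product $\gamma_n(c)=C^n_{c_1-1}C^{n-c_1}_{c_2-1}\cdots C^{\,n-c_1-\cdots-c_{\ell-1}}_{c_\ell-1}$ is meaningful for every such $c$: since $c_1+\cdots+c_\ell=n+1$ and each $c_j\ge1$, the upper index $n-c_1-\cdots-c_{j-1}$ is non-negative for all $j$, and for the final factor it equals exactly $c_\ell-1$. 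Hence the formula of Theorem~\ref{thm:Satocoeffs} extends transparently to $k=n+1$, and the corollary follows. I expect the remainder to be purely routine verification, since the substantive combinatorics was already carried out in the proof of Theorem~\ref{thm:Satocoeffs}.
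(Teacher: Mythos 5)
Your proposal is correct and follows essentially the same route as the paper, which treats the corollary as an immediate consequence of Theorem~\ref{thm:Satocoeffs} evaluated at $k=n+1$, using Remark~\ref{rmk:b_1} and the explicit note in Corollary~\ref{cor:hierarchyequations} that the recursion persists at that index. Your observation that the restriction $\ell<n$ in Definition~\ref{def:Satochar} needs to be relaxed (harmlessly) for the length-$n$ composition $21\cdots1$ of $n+1$ is a minor point the paper glosses over, but it does not alter the argument.
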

Corollary~\ref{cor:KPequationformula} explicitly outlines all the terms present in the $n^{\mathrm{th}}$ equation in the noncommutative KP hierarchy.

\section{Degrafting}\label{sec:degrafting}
Herein we develop the degrafting operator. 
As we see, the degrafting operator is very natural in the context of the noncommutative KP hierarchy and in our pursuit of the solution to the whole hierarchy.
Recall, our main objective is to solve the whole hierarchy by direct linearisation.

To begin, consider the noncommutative equations given in Proposition~\ref{pres:KP}.
Recall from Example~\ref{ex:weightedsums}, the expansion sums for $\pf_0$, $\pf_1$, $\pf_2$, etc, though here we retain the general forms for the weighted characters, $\omega$. 
If we substitute the explicit expressions for these weighted sums into the hierarchy equations in Proposition~\ref{pres:KP}, we find that,
\begin{align*}
  b_{n-2}=&\;-\gamma_n(\emptyset_1)\omega(\emptyset_1)\botimes\emptyset_1,\\
  b_{n-3}=&\;-\gamma_n(\emptyset_2)\omega(\emptyset_2)\botimes\emptyset_2-\gamma_n(1_2)\omega(1_2)\botimes1_2\\
         &\;+\gamma_n(\emptyset_1,\emptyset_0)\omega(\emptyset_1)\omega(\emptyset_0)\botimes\emptyset_1\star\emptyset_0,\\
  b_{n-4}=&\;-\gamma_n(\emptyset_3)\omega(\emptyset_3)\botimes\emptyset_3-\gamma_n(1_3)\omega(1_3)\botimes1_3\\
         &\;-\gamma_n(2_3)\omega(2_3)\botimes2_3-\gamma_n(12_3)\omega(12_3)\botimes12_3\\
         &\;+\gamma_n(\emptyset_1,\emptyset_1)\omega(\emptyset_1)\omega(\emptyset_1)\botimes\emptyset_1\star\emptyset_1\\
         &\;+\gamma_n(\emptyset_2,\emptyset_0)\omega(\emptyset_2)\omega(\emptyset_0)\botimes\emptyset_2\star\emptyset_0\\
         &\;+\gamma_n(1_2,\emptyset_0)\omega(1_2)\omega(\emptyset_0)\botimes1_2\star\emptyset_0\\
         &\;-\gamma_n(\emptyset_1,\emptyset_0,\emptyset_0)\omega(\emptyset_1)\omega(\emptyset_0)\omega(\emptyset_0)\!\botimes\!(\emptyset_1\star\emptyset_0)\star\emptyset_0,
\end{align*}
and in general,
\begin{equation*}
  b_{n-k}\!=\!\!\sum(-1)^\ell\gamma_n(\df_1,\ldots,\df_\ell)\omega(\df_1)\cdot\!\cdot\!\cdot\omega(\df_\ell)\botimes\df_1\star\cdot\!\cdot\!\cdot\star\df_\ell,
\end{equation*}
where the sum is over all descents $\df_1,\df_2,\ldots,\df_\ell$ such that,
\begin{equation}\label{eq:descentcomps}
\bigl(\|\df_1\|+1\bigr)\bigl(\|\df_2\|+1\bigr)\cdots\bigl(\|\df_\ell\|+1\bigr)\in\mathcal C_k^\ast.
\end{equation}
The factors on the left are to be interpreted as a string of positive integers and thus a composition element.
We assume the product `$\df_1\star\cdots\star\df_\ell$' is computed left to right.
Note that in the expansions above, we separated the real coefficents to the left of the tensor symbol `$\botimes$' from the descents and their grafting products on the right.
Further, we used the notation $\gamma_n=\gamma_n(\df_1,\ldots,\df_k)$ to denote the corresponding Sato character coefficient $\gamma_n$ that can be extracted from $\df_1,\ldots,\df_k$.
This is a slight abuse of notation. More precisely we should write $\gamma_n=\gamma_n\bigl(\|\df_1\|\cdots\|\df_k\|\bigr)$.

Let $\mathbb D^{\otimes}$ denote $\mathbb D^\otimes\coloneqq\cup_{k\geqslant0}\mathbb D^{\otimes k}$, the set of all tensored descents.
It is natural to define $\R\langle\mathbb D\rangle_{\otimes}$ as the real algebra of tensored descents with `$\otimes$' as product.
Hereafter, we also use the notation $\R\langle\mathbb D\rangle_{\cendot}$ to denote the same real algebra of tensored descents, but using the symbol `$\cendot$' as the tensor product instead.
This equivalent representation is convenient in what follows, as we use such tensor algebras in several different contexts.
\begin{definition}[Sato weight map]\label{def:Satoweightmap}
We define the \emph{Sato weight map} $\sigma_n\colon\mathbb D^{\otimes}\to\R$ to be,
\begin{equation}\label{eq:sigmadef}
   \sigma_n\colon\df_1\otimes\cdots\otimes\df_k\mapsto\gamma_n(\df_1,\ldots,\df_k)\omega(\df_1)\cdots\omega(\df_k).
\end{equation}
\end{definition}
Suppose, for each Sato coefficient $b_{n-k}$ above, for the each real coefficient to the left of the tensor symbol `$\botimes$', we consider their pullback to $\R\langle\mathbb D\rangle_{\cendot}$ by $\sigma_n$.
This gives,
\begin{align*}
  b^\ast_{n-2}=&\;-\emptyset_1\botimes\emptyset_1,\\
  b^\ast_{n-3}=&\;-\emptyset_2\botimes\emptyset_2-1_2\botimes1_2+(\emptyset_1\cendot\emptyset_0)\botimes\emptyset_1\star\emptyset_0,\\
  b^\ast_{n-4}=&\;-\emptyset_3\botimes\emptyset_3-1_3\botimes1_3-2_3\botimes2_3-12_3\botimes12_3\\
         &\;+(\emptyset_1\cendot\emptyset_1)\botimes\emptyset_1\star\emptyset_1+(\emptyset_2\cendot\emptyset_0)\botimes\emptyset_2\star\emptyset_0\\
         &\;+(1_2\cendot\emptyset_0)\botimes1_2\star\emptyset_0\\
         &\;-(\emptyset_1\cendot\emptyset_0\cendot\emptyset_0)\botimes(\emptyset_1\star\emptyset_0)\star\emptyset_0,
\end{align*}
and in general,
\begin{equation}\label{eq:bnkformula}
  b_{n-k}^\ast\!=\!\!\sum(-1)^\ell(\df_1\cendot\df_2\cendot\cdots\cendot\df_\ell)\botimes\df_1\star\df_2\star\cdots\star\df_\ell,
\end{equation}
where the sum is over all descents $\df_1,\df_2,\ldots,\df_\ell\in\mathbb D$ satisfying \eqref{eq:descentcomps}.
Again, we assume the product `$\df_1\star\df_2\star\cdots\star\df_\ell$' is computed left to right.
We now recall some of the grafting products outlined in Example~\ref{ex:descentproducts}. 
We observe that all the grafting products shown on the right for $b_{n-2}^\ast$, $b_{n-3}^\ast$ and $b_{n-4}^\ast$ above, are present in Example~\ref{ex:descentproducts}.
For the last term in $b_{n-4}^\ast$, we observe,
\begin{equation*}
(\emptyset_1\star\emptyset_0)\star\emptyset_0=1_2\star\emptyset_0=12_3.
\end{equation*}
As highlighted in Example~\ref{ex:descentproducts}, some products generate the same descent.
Computing all the products in the formulae just above, and collecting like terms together, we have,
\begin{subequations}\label{eq:bnkcoeffs}
\begin{align}
  b^\ast_{n-2}=&\;-\emptyset_1\botimes\emptyset_1,\label{eq:bn2dg}\\
  b^\ast_{n-3}=&\;-\emptyset_2\botimes\emptyset_2-(1_2-\emptyset_1\cendot\emptyset_0)\botimes1_2,\label{eq:bn3dg}\\
  b^\ast_{n-4}=&\;-\emptyset_3\botimes\emptyset_3-(1_3-\emptyset_1\cendot\emptyset_1)\botimes1_3\nonumber\\
         &\;-(2_3-\emptyset_2\cendot\emptyset_0)\botimes2_3\nonumber\\
         &\;-(12_3-1_2\cendot\emptyset_0+\emptyset_1\cendot\emptyset_0\cendot\emptyset_0)\botimes12_3.\label{eq:bn4dg}
\end{align}
\end{subequations}
We make four observations. First, the coefficients $b^\ast_{n-k}$ lie in the tensored algebra,
\begin{equation*}
      \R\langle\mathbb D\rangle_{\cendot}\botimes\R\langle\mathbb D\rangle_{\star}.
\end{equation*}
Here, $\R\langle\mathbb D\rangle_{\star}$ represents the descent algebra equipped with the grafting product in Definition~\ref{def:descent-alg},
but for which, all the `$\star$' products are interpreted, in terms of bracketing, left to right.
Hereafter we write `$\star$' for this product and assume,
\begin{quote}
\emph{all such products are computed left to right,} 
\end{quote}
in this manner.
The left factor, $\R\langle\mathbb D\rangle_{\cendot}$, is the real tensor algebra of descents as outlined above.
Second, since we will always be interpreting the elements on the left in $\R\langle\mathbb D\rangle_{\cendot}$
through the prism of the map $\sigma_n\colon\R\langle\mathbb D\rangle_{\cendot}\to\R$ in \eqref{eq:sigmadef}, there is no need to associate or parenthesise them.
Third, the product of any two terms in $\R\langle\mathbb D\rangle_{\cendot}\botimes\R\langle\mathbb D\rangle_{\star}$ is interpreted in the obvious fashion as follows, 
\begin{equation*}
      \bigl(\df\botimes\df\bigr)\bigl(\df^\prime\botimes\df^\prime\bigr)
      =(\df\cendot\df^\prime)\botimes(\df\star\df^\prime),
\end{equation*}
which is extended linearly.
Fourth, any linear combination in $\R\langle\mathbb D\rangle_{\cendot}\botimes\R\langle\mathbb D\rangle_{\star}$ which involves `$\star$' products on the right,
can be re-written as a linear combination that does not involve any `$\star$' products on the right.
For example, for a given $n\in\mathbb N$, suppose we are given a linear combination in $\R\langle\mathbb D\rangle_{\cendot}\botimes\R\langle\mathbb D\rangle_{\star}$ of the form,
\begin{equation}\label{eq:lincombdescents}
    \sum_{\df_1,\ldots,\df_\ell} \!\! c(\df_1,\ldots,\df_\ell)\bigl(\df_1\cendot\cdots\cendot\df_\ell\botimes\df_1\star\cdots\star\df_\ell\bigr),   
\end{equation}
where the sum is over all possible descents $\df_1,\ldots,\df_\ell$ such that $\|\df_1\|+\cdots+\|\df_\ell\|+\ell-1=n$,
and the $c(\df_1,\ldots,\df_\ell)$ are the corresponding real coefficients.
All the products of the form `$\df_1\star\cdots\star\df_\ell$' generate other descents, some of which might be the same, as highlighted in Example~\ref{ex:descentproducts}.
This means we can always rewrite \eqref{eq:lincombdescents} in the form,
\begin{equation}\label{eq:adjoint}
    \sum_{\df\in\mathbb D} F^{\cendot}(\df)\botimes\df,
\end{equation}
where $F^{\cendot}(\df)$ is a linear combination of all the possible tensored descents $\df_1^\prime\cendot\cdots\cendot\df^\prime_\ell$, for any $\ell\in\mathbb N$, such that,
\begin{equation*}
    \df_1^\prime\star\cdots\star\df_\ell^\prime=\df.
\end{equation*}
This is precisely what we did in the cases of $b^\ast_{n-2}$, $b^\ast_{n-3}$ and $b^\ast_{n-4}$ in \eqref{eq:bnkcoeffs}. We now formalise this observation.
\begin{definition}[Degrafting]\label{def:degraftingop}
We define the \emph{degrafting operator} $\Delta\colon\mathbb D\to\mathbb D\otimes\mathbb D$ by,
\begin{equation*}
  \Delta(\df)\coloneqq \sum_{\df_1,\df_2\in\mathbb D}\langle \df_1\star \df_2, \df\rangle\,\df_1\otimes\df_2,
\end{equation*}
where $\langle\,\cdot\,,\,\cdot\,\rangle$ is the inner product on $\R\langle\mathbb D\rangle_\star$ which is one when both arguments match and zero otherwise.
We set $\Delta(\emptyset_0)\coloneqq0$.
\end{definition}
The effect of $\Delta$ on a descent $\df$ is to enumerate and sum all the possible descents $\df_1$ and $\df_2$ such that $\df_1\star\df_2=\df$.
\begin{remark}
The observant reader will recognise that our definition for $\Delta$ is analogous to a standard construct: a coproduct. See Reutenauer~\cite{Reutenauer} and Section~\ref{sec:discussion}. 
\end{remark}
\begin{example}\label{ex:degrafting}
  With reference to Example~\ref{ex:descentproducts}, we observe,
  \begin{equation*}
    \Delta(\emptyset_1)=\emptyset_0\otimes\emptyset_0, \quad \Delta(\emptyset_2)=\emptyset_0\otimes\emptyset_1, \quad \Delta(1_2)=\emptyset_1\otimes\emptyset_0,
  \end{equation*}
  and so forth, but that, $\Delta(2_3)=\emptyset_2\otimes\emptyset_0+\emptyset_0\otimes1_2$.
\end{example}
\begin{example}[Binary degrafting]\label{ex:binarydegrafting}
  The binary representation for descents is particularly useful in degrafting. In the last example we considered the case $2_3=010$.
  Recall the concantenation `fix' in Remark~\ref{rmk:twofixes}. In particular, consider the inverse of the fixes in \eqref{eq:fix}.
  That $2_3=010$ starts with a `$0$' suggests $010=\emptyset_0\star10=\emptyset_0\star1_2$, where we take $\mathfrak{c}=10$.
  Then that $2_3=010$ contains the subset `$10$' suggests that $010=00\star\emptyset_0$, where we take the `$\mathfrak{c}$' in the lefthand relation in \eqref{eq:fix} to be $\emptyset_0$.
  This shows that $2_3=\emptyset_2\star\emptyset_0$. This exhausts the only two possibilities and generates the final degrafting result in Example~\ref{ex:degrafting}.
\end{example}
Recall the binary representation for any descent $\df=(d_1\cdots d_k)_n$ given in Definition~\ref{def:binaryrepresentation}.
With the results of Example~\ref{ex:binarydegrafting} in mind, as well as the concatenation `fix' in Remark~\ref{rmk:twofixes},
it is straightforward to degraft any descent $\df=(d_1\cdots d_k)_n$ as follows. We work from left to right.
Either $d_1-1>0$, then $\df$ starts with a zero, and we have,
\begin{equation*}
     \df=\emptyset_0\star\df_1,
\end{equation*}
where $\df_1$ is the same as $\df$, but with the first `$0$' missing, or, $d_1-1=0$, i.e.\/ $d_1$ starts with a `$1$', and this possibility does not occur.
Next, if $d_2-d_1-1>0$, then we know,
\begin{equation*}
     \df=\df_2\star\df_3,
\end{equation*}
where $\df_2$ is the same as $\df$ all to the left of the `$1$' in the $d_2$ position, with the `$1$' in that position replaced by a `$0$'.  
The factor $\df_3$ is the same as $\df$ to the right of the `$1$' in the $d_2$ position,  with the first `$0$' missing.
If $d_2-d_1-1=0$, then this does not occur. And so on, until we reach the `$1$' in the $d_k$ position. 
Assuming $d_k-d_{k-1}-1>0$, then we know,
\begin{equation*}
     \df=\df_k\star\df_{k+1},
\end{equation*}
with $\df_k$ and $\df_{k+1}$ constructed in exactly the same manner as just outlined, except that if $n-d_k=1$, then $\df_{k+1}$
is interpreted as $\emptyset_0$. Combining these results, this means that if $\df=(d_1\cdots d_k)_n$, then,
\begin{equation*}
\Delta(\df)=\emptyset_0\otimes\df_1+\df_2\otimes\df_3+\cdots+\df_k\otimes\df_{k+1},
\end{equation*}
under the proviso that for any $\ell\in\{2,\ldots,k\}$, the $\df_\ell\otimes\df_{\ell+1}$ term may not be present if $d_\ell-d_{\ell-1}-1=0$,
that the first term may not be present if $d_1-1=0$, while the last factor is $\emptyset_0$ if $n-d_k=1$.
\begin{remark}[Shorthand degrafting]\label{rmk:shorthand}
  A quick shorthand method to determine all the possible degrafting terms for a given descent, $\df=(d_1\cdots d_k)_n$,
  is to scan through the binary representation given in Definition~\ref{def:binaryrepresentation} left to right as follows.
  If $\df$ starts with a `$0$', replace it by $\emptyset_0\star$. This gives the first factorisation. Then, in turn, look to each `$1$'
  in position $d_\ell$, and assuming there is a `$0$' immediately to its right, replace the pair `$10$' by `$0\star$'.
  In the case of the final `$1$' in the binary representation, if there is a single zero beyond it, replace the `$10$' by `$0\star\emptyset_0$'. 
  This procedure generates all possible factorisations.
\end{remark}

We extend degrafting to $\R\langle\mathbb D\rangle_{\otimes}$ as follows. We set, 
\begin{equation*}
  \Delta_2\coloneqq(\Delta\otimes\id)\circ\Delta,~\text{and}~
  \Delta_k\coloneqq(\Delta\otimes\id^{\otimes(k-1)})\circ\Delta_{k-1},
\end{equation*}
where $\Delta_2\colon\mathbb D\to\mathbb D^{\otimes3}$, and for $k\geqslant3$, $\Delta_k\colon\mathbb D\to\mathbb D^{\otimes(k+1)}$.
See, for example, Klausner~\cite{Klausner}.
\begin{example}\label{ex:multipledegrafting}
  We observe that, 
  \begin{align*}
    \Delta\circ 24_5&=\Delta\circ 01010\\
                    &=\emptyset_0\otimes1010+00\otimes10+0100\otimes\emptyset_0\\
                    &=\emptyset_0\otimes13_4+\emptyset_2\otimes1_2+2_4\otimes\emptyset_0.
  \end{align*}
  Then applying $\Delta\otimes\id$ and using that $\Delta(\emptyset_0)=0$ and that
  $\Delta\circ2_4=\Delta\circ0100=\emptyset_0\otimes1_3+\emptyset_2\otimes\emptyset_1$, we find,
  \begin{align*}
    (\Delta\otimes\id)\circ\Delta\circ 24_5
    =&\;(\Delta\circ\emptyset_2)\otimes1_2+(\Delta\circ2_4)\otimes\emptyset_0\\
    =&\;\emptyset_0\otimes\emptyset_1\otimes1_2+\emptyset_0\otimes1_3\otimes\emptyset_0\\
    &\;+\emptyset_2\otimes\emptyset_1\otimes\emptyset_0. 
  \end{align*}  
\end{example}
\begin{remark}[Convention]\label{rmk:convention}
A descent of grade $n$ cannot be degrafted more than $n$ times. We thus assume that $\Delta_{n-1}\circ\df=0$ for any $\df$ such that $\|\df\|>n$.
\end{remark}

We define the following endomorphisms on $\mathbb D$. We have already seen the \emph{identity}
endomorphism on $\mathbb D$, which extends linearly, and has the following action on any $\df\in\mathbb D$,
\begin{equation*}
    \id\colon\df\mapsto\df.
\end{equation*}
We now define the endomorphism $J\colon\mathbb D\to\mathbb D$ by,
\begin{equation*}
    J\colon\df\mapsto\begin{cases} \df, & \text{if}~ \df\neq\emptyset_0,\\ 0, & \text{if}~ \df=\emptyset_0, \end{cases}
\end{equation*}
which also extends linearly.
In other words $J$ sends $\emptyset_0$ to zero, while it acts like the identity on any descent which is not $\emptyset_0$.
It is analogous to the augmented ideal projector, see Reutenauer~\cite{Reutenauer} or Ebrahimi--Fard \textit{et al.}\/ \cite{E-FLMM-KW}.
Suppose that $H_1,H_2,\ldots,H_k$ are endomorphisms on $\R\langle\mathbb D\rangle_{\cendot}$.
Then we can apply $H_1\otimes H_2\otimes\cdots\otimes H_k$ to any $k$-tensor of descents,
and, for example, construct operations on $\R\langle\mathbb D\rangle_{\cendot}$, for any $\df\in\mathbb D$, of the form,
\begin{equation}\label{eq:endo1}
\cendot\circ(H_1\otimes H_2\otimes\cdots\otimes H_k)\circ\Delta_{k-1}\circ\df.
\end{equation}
In this construction, given a descent $\df$, we degraft $\df$ into the appropriate linear combination of $k$-tensored descents,
apply the endomorphisms $H_1$ through to $H_k$ to the respective components of each $k$-tensored term, and then finally, simply
replace all the tensors, `$\otimes$', by the equivalent tensor product, `$\cendot$'.
As is natural, see for example Reutenauer~\cite{Reutenauer}, and with a slight abuse of notation, we can represent the construction \eqref{eq:endo1} above by,
\begin{equation}\label{eq:endo2}
H_1\cendot H_2\cendot\cdots\cendot H_k\circ\df.
\end{equation}
Let us now re-examine $b^\ast_{n-2}$, $b^\ast_{n-3}$ and $b^\ast_{n-4}$ in \eqref{eq:bn2dg}--\eqref{eq:bn4dg}.
We observe that,
\begin{align*}
  b^\ast_{n-2}=&-(J\circ\emptyset_1)\botimes\emptyset_1,\\
  b^\ast_{n-3}=&-(J\circ\emptyset_2)\botimes\emptyset_2-\bigl((J-J\cendot\id)\circ 1_2\bigr)\botimes1_2.
\end{align*}
In the case of $b^\ast_{n-4}$ we have,
\begin{align*}
  b^\ast_{n-4}=&-(J\circ\emptyset_3)\botimes\emptyset_3-\bigl((J-J\cendot\id)\circ 1_3\bigr)\botimes1_3\\
         &-\bigl((J-J\cendot\id)\circ 2_3\bigr)\botimes2_3\\
         &-\bigl((J-J\cendot\id+J\cendot\id^{\cendot2})\circ 12_3\bigr)\botimes12_3.
\end{align*}
These results suggest our main result, Theorem~\ref{thm:Satocoeffsdescents}, which we present in Section~\ref{sec:Satocoeffs}.
Before we prove our main result though, there is one further intermediate step we need. This is to establish a general formula for the weight character of a general descent.
We tackle this next in Section~\ref{sec:weightchar}.

\section{Weight formula}\label{sec:weightchar}
Herein we establish a general formula for the weight character of a general descent, i.e.\/ for $\omega\circ(d_1\cdots d_\ell)_n$.
This is crucial for the proof of our main result, Theorem~\ref{thm:Satocoeffsdescents}.
Recall Definition~\ref{def:descentweight} for the descent weight $\omega$.
Since any descent $\df$ can be constructed from any of the possible ways of factoring it into two descents, we can compute the weight $\omega\circ\df$
associated with any descent $\df$ as follows. We degraft $\df$ and then compute the weight of each of the factorised terms using Definition~\ref{def:descentweight}.
We thus compute,
\begin{equation}\label{eq:practicalweight}
\omega\circ\df=\omega\circ\bigl(\star\circ(\id\otimes\id)\circ\Delta\bigr)\circ\df.
\end{equation}
The degrafting operation, $\Delta$, splits $\df$ into the sum of all possible pairs of factors generating $\df$.
The operator $\id\otimes\id$, applies the identity to both factors.
The grafting operation `$\star$' grafts all those pairs of factors together---with a slight abuse of notation,
we suppose $\star\circ(\df_1\otimes\df_2)=\df_1\star\df_2$.
Then we can apply the weight operator $\omega$, utilising the property given in Definition~\ref{def:descentweight}.
A convenient shorthand notation for the inner triple of operations is,
\begin{equation*}
\id^{\star2}\coloneqq\star\circ(\id\otimes\id)\circ\Delta. 
\end{equation*}
The following set of binomial coefficient identities are crucial to the proof of the weight formula. 
They are established by straightforward computation.
\begin{lemma}[Binomial identities]\label{lemma:binomialids}
We have the following three binomial identities, the sum identity,
\begin{subequations}\label{eq:identities}
\begin{align}  
C_0^{n-d-1}+C_1^{n-d}+C_2^{n-d+1}&+\cdots+C_d^{n-1}\equiv C_d^n,\label{eq:binomialsum}
\intertext{and the product identities,}
  C^n_{d}+C^{n}_{d-1}&\equiv C^{n+1}_{d}\label{eq:sum-simple}\\
  C^n_{d_1}C^{n-d_1}_{d_2-d_1}&\equiv C^n_{d_2}C^{d_2}_{d_1}.\label{eq:prod-prod}
\end{align}
\end{subequations}
\end{lemma}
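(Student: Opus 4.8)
The plan is to prove the three identities in Lemma~\ref{lemma:binomialids} by entirely elementary manipulations, treating each separately and in increasing order of difficulty. First I would dispatch the Pascal-type product identity \eqref{eq:sum-simple}, $C^n_d+C^n_{d-1}=C^{n+1}_d$; this is just Pascal's rule, immediate from the factorial definition $C^n_k=n!/(k!(n-k)!)$ by finding the common denominator $d!(n+1-d)!$ and checking $(n+1-d)+d=n+1$. Next I would handle the ``product-product'' identity \eqref{eq:prod-prod}, $C^n_{d_1}C^{n-d_1}_{d_2-d_1}=C^n_{d_2}C^{d_2}_{d_1}$: both sides equal the trinomial coefficient $n!/\bigl(d_1!\,(d_2-d_1)!\,(n-d_2)!\bigr)$, which one sees by expanding each side in factorials and cancelling. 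This is the standard ``choose a subset then a sub-subset'' counting identity, and both algebraic and combinatorial one-line arguments are available; I would give the factorial cancellation since it is mechanical and self-contained.

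The main work is the sum identity \eqref{eq:binomialsum}, $\sum_{j=0}^{d} C_j^{n-d-1+j}\equiv C_d^n$. The cleanest route is the hockey-stick identity: rewrite each term $C_j^{n-d-1+j}$ using $C_k^m=C_{m-k}^m$ as $C^{n-d-1+j}_{n-d-1}$, so the left-hand side becomes $\sum_{j=0}^{d} C^{n-d-1+j}_{n-d-1} = \sum_{i=n-d-1}^{n-1} C^{i}_{n-d-1}$, a sum of ``upper index varying, lower index fixed'' binomials. The hockey-stick (or ``Christmas stocking'') identity $\sum_{i=r}^{N} C^i_r = C^{N+1}_{r+1}$ then gives $C^{n}_{n-d} = C^n_d$, as required. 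If one prefers not to invoke hockey-stick as a named result, it follows by a trivial induction on $d$ using Pascal's rule \eqref{eq:sum-simple}: the base case $d=0$ reads $C_0^{n-1}=C_0^n$, i.e.\ $1=1$, and the inductive step adds the next term $C_d^{n-1}$ to the accumulated $C_{d-1}^{n-1}$ and applies $C_{d-1}^{n-1}+C_d^{n-1}=C_d^n$. I would present the induction version since it keeps the proof closed under the identities already in the lemma and needs no external citation.

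The one genuine subtlety — and the only place I expect to have to be careful — is the handling of edge cases where binomial coefficients have negative or out-of-range arguments, since the paper uses $C^n_\ell$ freely with $\ell$ possibly exceeding $n$ or indices small; I would state once the convention $C^n_\ell = 0$ when $\ell<0$ or $\ell>n$ (and $C^n_\ell=n!/(\ell!(n-\ell)!)$ otherwise), check that it is consistent with Pascal's rule at the boundary, and note that under this convention all three identities hold for every $n\in\mathbb N$ and all relevant $d, d_1, d_2$ with $0\le d_1\le d_2\le n$. With the conventions pinned down, each identity reduces to a two- or three-line computation, so the lemma's assertion that they ``are established by straightforward computation'' is entirely accurate; the proof is therefore short, and I would simply write out the three verifications in sequence: Pascal for \eqref{eq:sum-simple}, factorial cancellation for \eqref{eq:prod-prod}, and the one-line induction (or hockey-stick) for \eqref{eq:binomialsum}.
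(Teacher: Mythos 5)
Your proposal is correct and matches the paper, which offers no proof beyond the remark that the identities ``are established by straightforward computation'': Pascal's rule for \eqref{eq:sum-simple}, factorial cancellation for \eqref{eq:prod-prod}, and the hockey-stick/induction argument for \eqref{eq:binomialsum} are exactly the intended elementary verifications. The only point worth making explicit in the induction version is that the inductive hypothesis for $d-1$ must be invoked with $n$ replaced by $n-1$ (so the statement should be quantified over all $n$ before inducting on $d$), which your description of the ``accumulated $C^{n-1}_{d-1}$'' already implicitly does.
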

We also frequently use the following property.
\begin{definition}[DT-invariance]\label{def:DTinvariance}
  We say that a set of binomial coefficient terms, involving the components $d_1,\ldots,d_\ell$ and $n$ of a generic descent, are \emph{DT-invariant}, if the terms are invariant to the transformation, 
\begin{equation}\label{eq:DT}
  d_k\to d_k-1\qquad\text{and}\qquad n\to n-1,
\end{equation}
for $k=1,\ldots,\ell$. DT-invariance is short for unit \emph{decrease transformation invariance}. 
\end{definition}
Let us consider some simple example cases.
\begin{example}\label{ex:weightlength1}
Let us compute $\omega\circ(d)_n$. To determine this real-valued coefficient, in light of \eqref{eq:practicalweight}, we observe that,
\begin{equation*}
      (d)_n=\underbrace{0\cdots0}_{d-1}1\underbrace{0\cdots0}_{n-d}, 
\end{equation*}
and thus $ \id^{\star 2}\circ(d)_n=\emptyset_0\star(d-1)_{n-1}+\emptyset_d\star\emptyset_{n-d-1}$.
Then using the properties of the descent weight from Definiton~\ref{def:descentweight}, we observe that,  
\begin{equation*}
  \omega\circ(d)_n=\omega\circ\bigl(\emptyset_0\star(d-1)_{n-1}\bigr)+\omega\circ\bigl(\emptyset_d\star\emptyset_{n-d-1}\bigr),
\end{equation*}
and, since $\omega\circ\emptyset_d=\omega\circ\emptyset_{n-d-1}=1$, that,
\begin{equation}
  \omega\circ(d)_n=\omega\circ(d-1)_{n-1}+C_d^{n-1}. \label{eq:weightlength1intermed}
\end{equation}
We can iterate the relation \eqref{eq:weightlength1intermed}, giving,
\begin{align*}
  \omega\circ(d)_n=&\;\omega\circ(1_{n-d-1})+C_2^{n-d+1}+\cdots+C_d^{n-1}\\
  =&\;C_1^{n-d}+C_2^{n-d+1}+\cdots+C_d^{n-1}.
\end{align*}
Here we used that,
\begin{equation*}
  \omega\circ(1_{n-d+1})=\omega\circ(1\underbrace{0\cdots0}_{n-d})=\omega\circ (0\star\underbrace{0\cdots0}_{n-d-1}),
\end{equation*}
and thus $\omega\circ(1_{n-d+1})=\omega\circ(\emptyset_1\star\emptyset_{n-d-1})=C_1^{n-d}$. Using the standard binomial sum \eqref{eq:binomialsum}, we find,
\begin{equation}
  \omega\circ(d)_n=C_d^{n}-1. \label{eq:weightlength1}
\end{equation}
\end{example}
\begin{example}\label{ex:weightlength2} 
  We compute $\omega\circ (d_1d_2)_n$ via a similar approach.
  If we degraft $(d_1d_2)_n$, for example using the shorthand approach in Remark~\ref{rmk:shorthand}, as in Example~\ref{ex:weightlength1}, we find,
  \begin{align*}
    \id^{\star 2}\circ(d_1d_2)_n=&\;\emptyset_0\star(d_1-1)(d_2-1)_{n-1}\\
                               &\;+\emptyset_{d_1}\star(d_2-d_1-1)_{n-d_1-1}\\
                               &\;+(d_1)_{d_2}\star\emptyset_{n-d_2-1}.
  \end{align*}
  Then from Definition~\ref{def:descentweight}, we observe that,    
  \begin{align}
    \omega\circ(d_1d_2)_n=&\;\omega\circ(d_1-1)(d_2-1)_{n-1}\nonumber\\
                          &\;+C^{n-1}_{d_1}\omega\circ(d_2-d_1-1)_{n-d_1-1}\nonumber\\
                          &\;+C_{d_2}^{n-1}\omega\circ(d_1)_{d_2},\label{eq:weightlength2intermed}
  \end{align}
  where we used that $\omega\circ\emptyset_d\equiv1$. As in Example~\ref{ex:weightlength1} we iterate this expression.
  At the next iteration, instead of $\omega\circ(d_1-1)(d_2-1)_{n-1}$ we would have $\omega\circ(d_1-2)(d_2-2)_{n-2}$,
  and in addition to the terms on the right, we would have similar terms but with the replacements indicated in \eqref{eq:DT}. 
  We observe that the term $\omega\circ(d_2-d_1-1)_{n-d_1-1}$ is DT-invariant.
  Further, we know from Example~\ref{ex:weightlength1} that $\omega\circ(d_1)_{d_2}=C^{d_2}_{d_1}-1$.
  Thus using identity \eqref{eq:prod-prod}, we observe that the last term above equals,
  \begin{equation*}
    C_{d_2}^{n-1}(C^{d_2}_{d_1}-1)\equiv C_{d_1}^{n-1}C^{n-d_1-1}_{d_2-d_1}-C_{d_2}^{n-1}.
  \end{equation*}
  Note that the term $C^{n-d_1-1}_{d_2-d_1}$ is also DT-invariant.
  Thus as we iterate the relation \eqref{eq:weightlength2intermed} down to $d_1=1$, we find,
  \begin{align*}
    \omega\circ(d_1d_2)_n=&\;(C^{n}_{d_1}-1)(C^{n-d_1-1}_{d_2-d_1-1}-1)\\
                          &\;+(C_{d_1}^{n}-1)C^{n-d_1-1}_{d_2-d_1}-(C_{d_2}^{n}-C_{d_2-d_1}^{n-d_1}).
  \end{align*}
  Here we used \eqref{eq:binomialsum} directly, and, also used \eqref{eq:binomialsum} to deduce,
  \begin{equation*}
      C^{n-d_1}_{d_2-d_1+1}+\cdots+C^{n-2}_{d_2-1}+C^{n-1}_{d_2}=C_{d_2}^{n}-C_{d_2-d_1}^{n-d_1}.
  \end{equation*}
  Using the identities \eqref{eq:identities}, then reveals,
  \begin{equation*}
     \omega\circ(d_1d_2)_n=C_{d_1}^{n}C^{n-d_1}_{d_2-d_1}-C^n_{d_1}-C^n_{d_2}+1.
  \end{equation*}
\end{example}
\begin{example}\label{ex:weightlength3} 
  Computing the case, $\omega\circ(d_1d_2d_3)_n$, is particularly insightful. It provides the template for the proof in the general case. We use the following steps.
  
  \emph{Step~1:} As in Examples~\ref{ex:weightlength1} and \ref{ex:weightlength2}, we compute $\id^{\star2}\circ(d_1d_2d_3)_n$ and then apply $\omega$. This gives,
   \begin{align}
     \omega\circ(d_1d_2&d_3)_n\nonumber\\
                      =&\;\omega\circ(d_1-1)(d_2-1)(d_3-1)_{n-1}\nonumber\\
                       &\;+C_{d_1}^{n-1}\omega\circ(d_2-d_1-1)(d_3-d_1-1)_{n-d_1-1}\nonumber\\
                       &\;+C_{d_2}^{n-1}\omega\circ(d_1)_{d_2}\cdot\omega\circ(d_3-d_2-1)_{n-d_2-1}\nonumber\\
                       &\;+C_{d_3}^{n-1}\omega\circ(d_1d_2)_{d_3}.\label{eq:weight3part1}
   \end{align}
   
  \emph{Step~2:} We substitute the formulae we already know, here from Examples~\ref{ex:weightlength1} and \ref{ex:weightlength2}, for all the weights shown apart from the very first term.
  The goal, after this substitution, is to express all the terms present in so-called ``Standard Form''. We explain what we mean by this presently.
  To this end, consider the second term on the right. Using Example~\ref{ex:weightlength2}, we observe,
  \begin{align*}
    C_{d_1}^{n-1}&\omega\circ(d_2-d_1-1)(d_3-d_1-1)_{n-d_1-1}\\
                          =&\;C_{d_1}^{n-1}\bigl(C^{n-d_1-1}_{d_2-d_1-1}C^{n-d_2}_{d_3-d_2}-C^{n-d_1-1}_{d_2-d_1-1}-C^{n-d_1-1}_{d_3-d_1-1}+1\bigr)\\
                          =&\;C_{d_1}^{n-1}C^{n-d_1-1}_{d_2-d_1-1}C^{n-d_2}_{d_3-d_2}-C_{d_1}^{n-1}C^{n-d_1-1}_{d_2-d_1-1}\\
                           &\;-C_{d_1}^{n-1}C^{n-d_1-1}_{d_3-d_1-1}+C_{d_1}^{n-1}.
  \end{align*}
  We consider the terms on the right to be in ``Standard Form''. 
  The second term on the right in \eqref{eq:weight3part1} is given by, 
  \begin{align*}
    C_{d_2}^{n-1}&\omega\circ(d_1)_{d_2}\cdot\omega\circ(d_3-d_2-1)_{n-d_2-1}\\
              =&\;C_{d_2}^{n-1}(C_{d_1}^{d_2}-1)(C_{d_3-d_2-1}^{n-d_2-1}-1)\\
              =&\;C_{d_1}^{n-1}C_{d_2-d_1}^{n-d_1-1}C_{d_3-d_2-1}^{n-d_2-1}\\
              &\;-C_{d_1}^{n-1}C_{d_2-d_1}^{n-d_1-1}-C_{d_2}^{n-1}C_{d_3-d_2-1}^{n-d_2-1}+C_{d_2}^{n-1}.
  \end{align*}
  Again, the terms here are in ``Standard Form''. The third term on the right in \eqref{eq:weight3part1} can be rewritten as, 
  \begin{align*}
    C_{d_3}^{n-1}\omega\circ(d_1d_2)_{d_3}
              =&\;C_{d_3}^{n-1}\bigl(C_{d_2}^{d_3}C_{d_2-d_1}^{d_3-d_1}-C_{d_1}^{d_3}-C_{d_2}^{d_3}+1\bigr)\\
              =&\;C_{d_1}^{n-1}C_{d_2-d_1}^{n-d_1-1}C_{d_3-d_2}^{n-d_2-1}\\
               &\;-C_{d_1}^{n-1}C_{d_3-d_1}^{n-d_1-1}-C_{d_2}^{n-1}C_{d_3-d_2}^{n-d_2-1}\\
               &\;+C_{d_3}^{n-1},
  \end{align*}
  where we used the identity \eqref{eq:prod-prod} multiple times. The ``Standard Form'' moniker applies again.
  We observe that the final forms in each of the three expressions are such that in each term, the first factors are not DT-invariant while all the remaining factors are. 
  This is what the label ``Standard Form'' refers to, though we are more precise momentarily.
  We now subsistute each of the expressions above into \eqref{eq:weight3part1}.

  \emph{Step~3:} We simplify the resulting expression using identity \eqref{eq:sum-simple}. This gives,
  \begin{align}
     \omega\circ(d_1d_2d_3)_n=&\;\omega\circ(d_1-1)(d_2-1)(d_3-1)_{n-1}\nonumber\\
                       &\;+C_{d_1}^{n-1}C^{n-d_1}_{d_2-d_1}C^{n-d_2}_{d_3-d_2}-C_{d_1}^{n-1}C_{d_2-d_1}^{n-d_1}\nonumber\\
                       &\;-C_{d_1}^{n-1}C_{d_3-d_1}^{n-d_1}-C_{d_2}^{n-1}C_{d_3-d_2}^{n-d_2}\nonumber\\
                       &\;+C_{d_1}^{n-1}+C_{d_2}^{n-1}+C_{d_3}^{n-1}.\label{eq:weight3part2}
  \end{align}

  \emph{Step~4:} We iterate this expression, exactly as we did in Examples~\ref{ex:weightlength1} and \ref{ex:weightlength2}. This gives,
  \begin{align*}
    \omega\circ(d_1d_2d_3)_n=&\;(C_{d_1}^{n}-1)C^{n-d_1}_{d_2-d_1}C^{n-d_2}_{d_3-d_2}\\
                             &\;-(C_{d_1}^{n}-1)C_{d_2-d_1}^{n-d_1}-(C_{d_1}^{n}-1)C_{d_3-d_1}^{n-d_1}\\
                             &\;-(C_{d_2}^{n}-C_{d_2-d_1}^{n-d_1})C_{d_3-d_2}^{n-d_2}\\
                             &\;+(C_{d_1}^{n-1}-1)+(C_{d_2}^{n-1}-C_{d_2-d_1}^{n-d_1})\\
                             &\;+(C_{d_3}^{n-1}-C_{d_3-d_2}^{n-d_2}).
  \end{align*}
  In this calculation, as in Examples~\ref{ex:weightlength1} and \ref{ex:weightlength2}, we have used that after iteration `$d_1-1$' times,
  the first term in \eqref{eq:weight3part1} becomes $\omega\circ1(d_2-d_1-1)(d_3-d_1-1)_{n-d_1-1}$.
  This is expanded in the same manner as \eqref{eq:weight3part1} above, but does not contain the corresponding first term as the descent starts with a `$1$'.
  It is straightforward to show, using the ``Standard Form'', that this term equals the expected term at the end of the iteration (as if we had completed `$d_1$' iterations).    
  In the expression just above, we observe that each term consists of a first factor which is the difference of two expressions, the second term of which is often `$1$', though not always.
  
  \emph{Step~5:} We observe that the sum of all the expressions involving these second terms in the differences generates `$-1$'. Hence the final result is,
  \begin{align}
    \omega\circ(d_1d_2d_3)_n=&\;C_{d_1}^{n}C^{n-d_1}_{d_2-d_1}C^{n-d_2}_{d_3-d_2}-C_{d_1}^{n}C^{n-d_1}_{d_2-d_1}\nonumber\\
                            &\;-C_{d_1}^{n}C^{n-d_1}_{d_3-d_1}-C_{d_2}^{n}C^{n-d_2}_{d_3-d_2}\nonumber\\
                            &\;+C^n_{d_1}+C^n_{d_2}+C^n_{d_3}-1.\label{eq:weight3}
  \end{align}
\end{example}

From Examples~\ref{ex:weightlength1}--\ref{ex:weightlength3} the pattern is now discernable.
We model our proof of the general case on the steps we emphasised in Example~\ref{ex:weightlength3}. 
Before proceeding to the general case, we need to introduce some further useful notation.
First, we need the notion of ordered subsets.
\begin{definition}[Ordered subsets]\label{def:orderedsubsets}
  Given a set of unique integers, say $\{1,2,\ldots,\ell\}$, with a natural strict ordering, an \emph{ordered} subset $\alpha_1\alpha_2\ldots\alpha_m$ of $\{1,2,\ldots,\ell\}$,
  with $m\leqslant\ell$, is any subset of $\{1,2,\ldots,\ell\}$ for which the order of the elements of $\{1,2,\ldots,\ell\}$ is maintained.
  Thus $\alpha_1\alpha_2\cdots\alpha_m$ must be such that $\alpha_1<\alpha_2<\cdots<\alpha_m$.
  The maximal subset is $12\cdots,\ell$. The smallest subsets are $1$, $2$, \ldots, $\ell$ and the empty subset $\emptyset$.
  We use the notation,
  \begin{equation*}
    [1,\ldots,\ell]_{k},
  \end{equation*}
  to denote all the ordered subsets of $\{1,\ldots,\ell\}$ of length $k$.
  We use the notation $[1,\ldots,\ell]$ to denote the set of all possible ordered subsets of lengths $0\leqslant k\leqslant\ell$.
\end{definition}
\begin{example}[Ordered subsets]\label{ex:orderedsubsets}
The ordered subsets of $\{1,2,3,4\}$ are: $1234$; $123$; $124$; $134$; $234$; $12$; $13$; $14$; $23$; $24$; $34$; $1$; $2$; $3$; $4$ and $\emptyset$. 
\end{example}
\begin{example}[Binary representation]\label{ex:orderedsetsbinary}
  We can represent ordered subsets using binary numbers. For example, the ordered subsets of $\{1,2,3,4\}$ in Example~\ref{ex:orderedsubsets} are respectively represented by:
  $1111$, $1110$, $1101$, $1011$, $0111$, $1100$, $1010$, \ldots, $1000$, $0100$, $0010$, $0001$ and $0000$. The $1$'s in the binary numbers pick out the numbers in $\{1,2,3,4\}$
  that constitute the ordered subset. In general, we deduce that the number of ordered subsets of $\{1,2,\ldots,\ell\}$ is $2^\ell$.
\end{example}
We introduce the following shorthand notation to represent the product between binomial coefficients which are in ``Standard Form''. 
As mentioned in Example~\ref{ex:weightlength3}, this means that first factor may not be DT-invariant, while the remaining factors definitely are.
Indeed, for any set of binomial coefficients, say $C_{d_1}^n$,\ldots, $C_{d_\ell}^n$, we set,
\begin{equation}
  C^n_{d_1}\ast C^n_{d_2}\ast\cdots\ast C^n_{d_\ell}\coloneqq C^n_{d_1}C^{n-d_1}_{d_2-d_1}\cdots C^{n-d_{\ell-1}}_{d_\ell-d_{\ell-1}}. 
\end{equation}
We suppose, as is natural, that $1\ast q=q$ for any $q\in\mathbb Q$. We can also think of `$\ast$' as a multilinear product.
\begin{example}\label{ex:setupordering}
  Using the shorthand notation, we observe,
  \begin{align*}
    \omega\circ(d_1d_2d_3)_n=&\;C_{d_1}^{n}\ast C^{n}_{d_2}\ast C^{n}_{d_3}-C_{d_1}^{n}\ast C^{n}_{d_2}\\
                            &\;-C_{d_1}^{n}\ast C^{n}_{d_3}-C_{d_2}^{n}\ast C^{n}_{d_3}\\
                            &\;+C^n_{d_1}+C^n_{d_2}+C^n_{d_3}-1\\
                           =&\;\sum C_{d_{\alpha_1}}^n\ast C_{d_{\alpha_2}}^n\ast\cdots\ast C_{d_{\alpha_m}}^n,
  \end{align*}
  where the sum is over all ordered subsets $\{\alpha_1,\ldots,\alpha_m\}$ of $\{1,2,3\}$ with $m\leqslant3$.
\end{example}
Further abstraction of this notation is useful for the proof of our main result herein, Theorem~\ref{thm:weightformula}, just below.
\begin{definition}[Shorthand notation]\label{def:shorthandbinomials}
  We use the following shorthand notation for the products of binomial coefficients shown.
  For a given $n\in\mathbb N$, suppose $\{1,2,\ldots,\ell\}$ is a given set of unique integers,
  and $\alpha_1\alpha_2\cdots\alpha_m$ is an ordered subset of $\{1,2,\ldots,\ell\}$.
  Then we set,
  \begin{equation*}
    \alpha_1\alpha_2\cdots\alpha_m\coloneqq C_{d_{\alpha_1}}^n\ast C_{d_{\alpha_2}}^n\ast\cdots\ast C_{d_{\alpha_m}}^n.
  \end{equation*}
  The following variants on this notation are also useful, 
  \begin{equation*}
    \alpha_1^-\coloneqq C_{d_{\alpha_1}}^{n-1}\qquad\text{and}\qquad\alpha_1^=\coloneqq C_{d_{\alpha_1}-1}^{n-1},
  \end{equation*}
  and for all $k\in\{2,\ldots,m\}$, 
  \begin{align*}
    \alpha_k^-&\coloneqq C_{d_{\alpha_k}-d_{\alpha_{k-1}}}^{n-d_{\alpha_{k-1}}-1}, \\
    \alpha_k^=&\coloneqq C_{d_{\alpha_1}-d_{\alpha_{k-1}}-1}^{n-d_{\alpha_{k-1}}-1}.
  \end{align*}
\end{definition}
\begin{example}
 Consider Step~2 in Example~\ref{ex:weightlength3}. We respectively denote the final three terms on the right in \eqref{eq:weight3part1} as $W_1$, $W_2$ and $W_3$,
 written in ``Standard Form''. Using the new abstract notation, these terms are given by,
 \begin{align*}
   W_1&=1^-2^=3\phantom{^-}-1^-2^=-1^-3^=+1^-,\\
   W_2&=1^-2^-3^=-1^-2^--2^-3^=+2^-,\\
   W_3&=1^-2^-3^--1^-3^--2^-3^-+3^-.
 \end{align*}
 Each of the digits above represents a binomial coefficient. Further we observe that identity~\eqref{eq:sum-simple} translates to,
 \begin{equation}\label{eq:sum-simpleabstract}
   \alpha^=+\alpha^-\equiv\alpha.
 \end{equation}
 Using this identity for the sum of these three terms, we observe that the collapse in terms, from \eqref{eq:weight3part1} in Step~2 to \eqref{eq:weight3part2} in Step~3,
 is now much more transparent and,
 \begin{equation*}
 W_1+W_2+W_3=1^-23-1^-2-1^-3-2^-3+1^-\!+2^-\!+3^-.
 \end{equation*}
 Compare this with the final terms on the right in \eqref{eq:weight3part2}. 
 A further observation is useful. We say that $W_1$ `pivots' around `$1$' as all the terms contain the digit `$1^-$'.
 Note that for all the terms in $W_1$, the factor to the immediate right of $1^-$ has the form $\alpha^=$.
 We see that $W_2$ pivots around `$2$' as every term involves the digit `$2^-$'.
 In all the terms in $W_2$, the factors to the left have the form $\alpha^-$, while all those to the immediate right have the form $\alpha^=$.
 Finally we see, $W_3$ pivots around `$3^-$', and, in all the terms therein, the factors to the left of the `$3^-$' have the form $\alpha^-$.
\end{example}
We can now state and prove the main result of this section. 
\begin{theorem}[Weight formula]\label{thm:weightformula}
  The weight character associated with any descent $(d_1d_2\cdots d_\ell)_n$ is given by,
  \begin{equation*}
    \omega\circ (d_1d_2\cdots d_\ell)_n=(C^n_{d_1}-1)\ast(C^n_{d_2}-1)\ast\cdots\ast(C^n_{d_\ell}-1),
  \end{equation*}
  or equivalently, it is given by,
  \begin{equation*}  
    \omega\circ (d_1d_2\cdots d_\ell)_n=\sum (-1)^{m+\ell}C_{d_{\alpha_1}}^n\ast C_{d_{\alpha_2}}^n\ast\cdots\ast C_{d_{\alpha_m}}^n,
  \end{equation*}
  where the sum is over all ordered subsets $\{\alpha_1,\ldots,\alpha_m\}$ of $\{1,\ldots,\ell\}$ with $m\leqslant\ell$.
\end{theorem}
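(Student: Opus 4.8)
The plan is to establish the weight formula by induction on the grade $n$, modelling the argument precisely on the five steps exhibited in Example~\ref{ex:weightlength3}. First I would record the equivalence of the two stated forms: by the abstract identity $\alpha^==\alpha-\alpha^-$ (equivalently~\eqref{eq:sum-simpleabstract}), multiplying out $(C^n_{d_1}-1)\ast(C^n_{d_2}-1)\ast\cdots\ast(C^n_{d_\ell}-1)$ using multilinearity of `$\ast$' yields exactly $\sum(-1)^{m+\ell}C^n_{d_{\alpha_1}}\ast\cdots\ast C^n_{d_{\alpha_m}}$ summed over ordered subsets $\{\alpha_1,\ldots,\alpha_m\}\subseteq\{1,\ldots,\ell\}$, where the empty subset contributes $(-1)^\ell$. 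So it suffices to prove the second (sum) form. The base cases $\ell=0,1,2,3$ are Examples~\ref{ex:weightlength1}--\ref{ex:weightlength3}; these also fix the induction scheme.

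\textbf{Main induction.} For the inductive step on a general descent $(d_1\cdots d_\ell)_n$, I would first apply \eqref{eq:practicalweight}, i.e.\ compute $\mathrm{id}^{\star2}\circ(d_1\cdots d_\ell)_n$ by the shorthand degrafting of Remark~\ref{rmk:shorthand}, and apply $\omega$ and Definition~\ref{def:descentweight}. This produces one ``recursive'' term $\omega\circ(d_1-1)(d_2-1)\cdots(d_\ell-1)_{n-1}$ together with $\ell$ terms of the form $C^{n-1}_{d_j}\,\omega\circ(d_1\cdots d_{j-1})_{d_j}\cdot\omega\circ(d_{j+1}-d_j-1)\cdots(d_\ell-d_j-1)_{n-d_j-1}$ for $j=1,\ldots,\ell$ (with the natural conventions when a factor is empty). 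This is Step~1. In Step~2, I substitute the inductive hypothesis (in the $\ast$-product form) into every $\omega$ appearing except the recursive one, repeatedly using the product identity~\eqref{eq:prod-prod} to rewrite each resulting monomial in ``Standard Form'' $C^{n-1}_{d_{\beta_1}}\ast C^{n-1}_{d_{\beta_2}}\ast\cdots$ — that is, with exactly the first binomial factor carrying the upper index $n-1$ and all later factors DT-invariant. The bookkeeping here is the heart of the proof: each $j$-term becomes a signed sum $W_j$ over ordered subsets of $\{1,\ldots,\ell\}$ that contain $j$, in which $j$ appears as the pivot digit $j^-$, every factor to the left of $j^-$ has the form $\alpha^-$, and the factor immediately to the right of $j^-$ (if any) has the form $\alpha^=$. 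In Step~3, summing $W_1+\cdots+W_\ell$ and collapsing all adjacent $\alpha^=+\alpha^-\to\alpha$ pairs via~\eqref{eq:sum-simpleabstract} removes the distinction between $n$ and $n-1$ away from the pivot and yields the single-index expression $\sum_{\emptyset\ne S}(-1)^{|S|+\ell}C^{n-1}_{d_{\alpha_1}}\ast C^n_{d_{\alpha_2}}\ast\cdots$ over nonempty ordered subsets $S=\{\alpha_1<\cdots<\alpha_m\}$, with only the leading factor still at level $n-1$. In Step~4, I iterate the recursion on the remaining term $\omega\circ(d_1-1)\cdots(d_\ell-1)_{n-1}$ down to $d_1=1$ (where the leading digit is a `$1$' so the recursive term is absent), using the telescoping binomial sum~\eqref{eq:binomialsum} to turn each accumulated $C^{n-1}_{d_{\alpha_1}}$ (and its DT-shifted companions $C^{n-d_{\alpha_1}-1}_{\cdots}$) into a difference $C^n_{d_{\alpha_1}}-(\text{lower-order term})$; exactly as in Examples~\ref{ex:weightlength2}--\ref{ex:weightlength3}. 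In Step~5, I check that the totality of the ``$-(\text{lower-order term})$'' contributions across all subsets sums to the single extra $(-1)^\ell$ demanded by the empty-subset term in the target formula — this is a sign/telescoping cancellation, again generalising the ``$-1$'' in~\eqref{eq:weight3}.

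\textbf{Expected obstacle.} The routine binomial manipulations (Steps~1, 3, 4) are controlled by Lemma~\ref{lemma:binomialids} and cause no difficulty. The genuine work is in Step~2 and Step~5: I expect the hardest part to be verifying that, after substituting the inductive hypothesis into each of the $\ell$ branch terms and normalising to Standard Form, the signs $(-1)^{m+\ell}$ and the pivot structure match up across branches so that Step~3's collapse is exact — in other words, proving the combinatorial claim that $\sum_{j=1}^\ell W_j$ enumerates every nonempty ordered subset of $\{1,\ldots,\ell\}$ exactly once with the correct sign, with the correct placement of the $(\cdot)^-$ and $(\cdot)^=$ decorations. I would handle this by indexing each monomial of $W_j$ by its underlying ordered subset $S\ni j$ together with the position of the pivot $j$ within $S$, and showing the map $(S,\text{pivot})\mapsto\text{monomial}$ is a bijection onto the monomials of the target with the decorations forced. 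Once that bijection and its sign are pinned down, Steps~3--5 are formal, and the two equivalent statements of the theorem follow.
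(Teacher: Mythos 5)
Your plan reproduces the paper's proof essentially step for step: the paper likewise inducts on the length $\ell$ (not the grade $n$ as your opening sentence says, though your base cases and your application of the hypothesis to the degrafted factors make clear that length is your real induction variable), degrafts via \eqref{eq:practicalweight}, normalises each branch term to ``Standard Form'' with \eqref{eq:prod-prod}, collapses the pivot sums with \eqref{eq:sum-simpleabstract}, iterates the recursive term down to $d_1=1$ using \eqref{eq:binomialsum}, and closes with the telescoping cancellation of the lower-order corrections to the single $(-1)^\ell$. The bookkeeping you flag as the genuine work --- the bijection between Standard-Form monomials and ordered subsets with a marked pivot, and the subset-by-subset cancellation in Step~5 --- is precisely where the paper expends its effort, so your identification of the obstacles is accurate.
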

\begin{proof}
  We use induction. We know from Example~\ref{ex:weightlength1} that $\omega\circ(d)_n=C_d^{n}-1$.
  We assume the statement in the theorem holds for all $\ell^\prime\in\{2,\ldots,\ell-1\}$.
  Our goal is to prove that it holds for $\ell^\prime=\ell$. We use the same strategy as the that outlined in Example~\ref{ex:weightlength3}.
  
  \emph{Step~1.} We degraft $(d_1d_2\cdots d_\ell)_n$ and apply $\omega$ as per \eqref{eq:practicalweight}. This procedure shows us that,  
  \begin{align*}
    \omega\circ&(d_1d_2\cdots d_\ell)_n\\
                  =&\;\omega\circ\bigl((d_1\!-\!1)(d_2\!-\!1)\cdots(d_\ell\!-\!1)\bigr)_{n-1}\\                      
                   &\;+\sum_{k=1}^{\ell}C^{n-1}_{d_k}\Bigl(\omega\circ(d_1\cdots d_{k-1})_{d_k}\\
                   &\qquad\qquad\quad\cdot\omega\circ(d_{k+1}\!-\!d_k\!-\!1)\cdots(d_\ell\!-\!d_k\!-\!1)_{n-d_k-1}\Bigr),
  \end{align*}
  where, when $k=1$ the left factor is $\omega\circ\emptyset_{d_1}$, and when $k=\ell$ the right factor is $\omega\circ\emptyset_{n-d_\ell-1}$.

  \emph{Step~2.} We substitute for the weights shown in the formula in Step~1, and convert all the terms to ``Standard Form''.
  Consider a generic term in the sum above with left factor, $C^{n-1}_{d_k}\omega\circ(d_1\cdots d_{k-1})_{d_k}$, and right factor, $\omega\circ(d_{k+1}-d_k-1)\cdots(d_\ell-d_k-1)_{n-d_k-1}$.
  For $k\in\{1,\ldots,\ell\}$, each of these factors involve descents of length $\ell-1$ or less. We can thus invoke our induction assumption for their explicit formulae.
  We observe that the leading term in $C^{n-1}_{d_k}\omega\circ(d_1\cdots d_{k-1})_{d_k}$ can be expressed as,
  \begin{align*}
    C_{d_k}^{n-1}\omega&\circ(d_1\cdots d_{k-1})_{d_k}\\
                     =&\;C_{d_k}^{n-1}\bigl(C_{d_1}^{d_k}C_{d_2-d_1}^{d_k-d_1}\cdots C_{d_{k-1}-d_{k-2}}^{d_k-d_{k-2}}+\cdots\bigr)\\
                     =&\;C_{d_1}^{n-1}C_{d_k-d_1}^{n-d_1-1}C_{d_2-d_1}^{d_k-d_1}\cdots C_{d_{k-1}-d_{k-2}}^{d_k-d_{k-2}}+\cdots\\
                     =&\;C_{d_1}^{n-1}C_{d_2-d_1}^{n-d_1-1}C_{d_k-d_2}^{n-d_2-1}\cdots C_{d_{k-1}-d_{k-2}}^{d_k-d_{k-2}}+\cdots\\
                     \vdots&\\
                     =&\;C_{d_1}^{n-1}C_{d_2-d_1}^{n-d_1-1}\cdots C_{d_{k-1}-d_{k-2}}^{n-d_{k-2}-1}C_{d_{k}-d_{k-1}}^{n-d_{k-1}-1}+\cdots\\
                     =&\;1^-2^-3^-\cdots(k-1)^-k^-+\cdots.
  \end{align*}
  In the computation above, we propagated the identity \eqref{eq:prod-prod} successively through the factors shown. 
  All the other terms in the factor $C^{n-1}_{d_k}\omega\circ(d_1\cdots d_{k-1})_{d_k}$ that we have not shown above,
  consist of exactly the same forms but involving ordered subsets of $\{d_1,\ldots,d_{k-1}\}$.
  The exact same process can be applied to them, and we deduce that,
  \begin{equation*}
    C^{n-1}_{d_k}\omega\circ(d_1\cdots d_{k-1})_{d_k}=\sum(-1)^{m+k-1}\alpha_1^-\cdots\alpha_m^-k^-,
  \end{equation*}
  where the sum is over all ordered subsets $\alpha_1\cdots\alpha_m$ of $\{1,\ldots,k-1\}$, with $m\leqslant k-1$. 
  We now consider $\omega\circ(d_{k+1}-d_k-1)\cdots(d_\ell-d_k-1)_{n-d_k-1}$, which is straightforward.
  The leading order term is,
  \begin{align*}
    \omega\circ(d_{k+1}-d_k&-1)\cdots(d_\ell-d_k-1)_{n-d_k-1}\\
                      =&\;C_{d_{k+1}-d_k-1}^{n-d_k-1}C_{d_{k+2}-d_{k+1}}^{n-d_{k+1}}\cdots C_{d_\ell-d_{\ell-1}}^{n-d_{\ell-1}}+\cdots\\
                      =&\;(k+1)^=(k+2)\cdots (\ell-1)\ell+\cdots.
  \end{align*}
  We note that the first binomial coefficient above also has `$d_k$' subtracted from both the upper and lower sufficies.  
  Again, all the terms not shown involve ordered subsets of $\{k+1,\ldots,\ell\}$ and have the same form apart from the fact they are indexed by their ordered subset.
  We deduce,
  \begin{multline*}
    \omega\circ(d_{k+1}-d_k-1)\cdots(d_\ell-d_k-1)_{n-d_k-1}\\ =\sum(-1)^{m^\prime+\ell-k-1}\beta_1^=\beta_2\cdots\beta_{m^\prime},
  \end{multline*}
  where the sum is over all the ordered subsets $\beta_1\beta_2\cdots\beta_{m^\prime}$ of $\{k+1,\ldots,\ell\}$ with $m^\prime\leqslant\ell-k$.
  Putting these results together we have,
  \begin{align}
    \omega\circ(d_1&d_2\cdots d_\ell)_n\nonumber\\
                  =&\;\omega\circ\bigl((d_1\!-\!1)(d_2\!-\!1)\cdots(d_\ell\!-\!1)\bigr)_{n-1}\nonumber\\  
                   &\;+\sum_{k=1}^\ell\sum(-1)^{m+m^\prime+\ell}\alpha_1^-\cdots\alpha_m^-k^-\beta_1^=\beta_2\cdots\beta_{m^\prime},\label{eq:orderedsuminterim}
  \end{align}
  where the second sum is over both the ordered subsets: $\alpha_1\cdots\alpha_m$ from $\{1,\ldots,k-1\}$ and $\beta_1\beta_2\cdots\beta_{m^\prime}$ from $\{k+1,\ldots,\ell\}$.
  We have thus achieved ``Standard Form'' in \eqref{eq:orderedsuminterim}. 
  
  \emph{Step~3.} We now simplify the form for $\omega\circ(d_1d_2\cdots d_\ell)_n$ in ``Standard Form'' just above using the identity \eqref{eq:sum-simpleabstract}.
  The extremal cases of when the term $\alpha_1^-\cdots\alpha_m^-k^-\beta_1^=\beta_2\cdots\beta_{m^\prime}$ is maximal, and, when both the ordered sets $\alpha_1\cdots\alpha_m$ and $\beta_1\beta_2\cdots\beta_{m^\prime}$
  are empty, are naturally easiest. In the latter case, corresponding to the terms of lowest degree, the sum in \eqref{eq:orderedsuminterim} is simply a sum of $k^-$ over the $k$-values shown, giving, $1^-+2^-+\cdots+\ell^-$.  
  Now consider the former case. When $\alpha_1^-\cdots\alpha_m^-k^-\beta_1^=\beta_2\cdots\beta_{m^\prime}$ is maximal, the sum of the terms from $k=1$ to $k=\ell$ is,
  \begin{align}
   &\phantom{+}1^-2^=3\phantom{^-}4\phantom{^-}5\phantom{^-}6\phantom{^-}\cdots(\ell-2)\phantom{^-}(\ell-1)\phantom{^-}\ell\phantom{^-}\nonumber\\
   &+1^-2^-3^=4\phantom{^-}5\phantom{^-}6\phantom{^-}\cdots(\ell-2)\phantom{^-}(\ell-1)\phantom{^-}\ell\phantom{^-}\nonumber\\
   &+1^-2^-3^-4^=5\phantom{^-}6\phantom{^-}\cdots(\ell-2)\phantom{^-}(\ell-1)\phantom{^-}\ell\phantom{^-}\nonumber\\
   &+1^-2^-3^-4^-5^=6\phantom{^-}\cdots(\ell-2)\phantom{^-}(\ell-1)\phantom{^-}\ell\phantom{^-}\nonumber\\
    &\;\vdots\nonumber\\
   &+1^-2^-3^-4^-5^-6^-\cdots(\ell-2)^=(\ell-1)\phantom{^-}\ell\phantom{^-}\nonumber\\
   &+1^-2^-3^-4^-5^-6^-\cdots(\ell-2)^-(\ell-1)^=\ell\phantom{^-}\nonumber\\
   &+1^-2^-3^-4^-5^-6^-\cdots(\ell-2)^-(\ell-1)^-\ell^=\nonumber\\
   &+1^-2^-3^-4^-5^-6^-\cdots(\ell-2)^-(\ell-1)^-\ell^-\nonumber\\
   =&1^-23456\cdots(\ell-1)\ell.\label{eq:interimresult}
  \end{align}
  Here we used the identity \eqref{eq:sum-simpleabstract} in successive terms, from the final term in the sum back through to the first term. 
  The number of terms in the sum in \eqref{eq:orderedsuminterim} exceeds the number of ordered subsets of $\{1,\ldots,\ell\}$, though the empty subset is not present in \eqref{eq:orderedsuminterim}.
  Let us pick an arbitrary ordered subset $\{p_1,p_2,\ldots,p_j\}$ of $\{1,\ldots,\ell\}$ where we suppose $2\leqslant j\leqslant\ell-1$.
  In this restriction for $j$ we exclude the empty set, and the other two maximal cases we have already considered.
  For a given $j$, as $k$ cycles through $1$ to $\ell$ in the first sum in \eqref{eq:orderedsuminterim}, there are obviously $j$ instances when $k=p_i$ for $i=1,\ldots,j$.
  In each of those cases, there are unique ordered subsets $\alpha_1\cdots\alpha_m$ from $\{1,\ldots,k-1\}$ and $\beta_1\beta_2\cdots\beta_{m^\prime}$ from $\{k+1,\ldots,\ell\}$,
  such that $\alpha_1\cdots\alpha_m=p_1\cdots p_m$ with $m\leqslant i-1$ and $\beta_1\beta_2\cdots\beta_{m^\prime}=p_{i+1}\cdots p_{j}$ with $m^\prime\leqslant j-i-1$.
  Thus in the sum in \eqref{eq:orderedsuminterim} there will be $j$ terms of the form $p_1^-\cdots p_i^-p_{i+1}^=p_{i+2}\cdots p_j$; though note that the final term is just $p_1^-\cdots p_j^-$.
  If we sum all these terms over the cases $k=p_i$, the structure of the sum exactly mirrors that on the left in \eqref{eq:interimresult}, though obviously for the ordered subset $\{p_1,\ldots,p_j\}$.
  Using the identity in a similar fashion to that in \eqref{eq:interimresult}, shows that the sum in this case collapses to $p_1^-p_2p_3\cdots p_j$.
  We have thus succeeded in demonstrating that formula in \eqref{eq:orderedsuminterim} collapses to,
  \begin{align}
    \omega\circ(d_1d_2\cdots d_\ell)_n=&\;\omega\circ\bigl((d_1\!-\!1)(d_2\!-\!1)\cdots(d_\ell\!-\!1)\bigr)_{n-1}\nonumber\\  
                                      &\;+\sum(-1)^{j+\ell}p_1^-p_2\cdots p_j,\label{eq:orderedsuminterim3}
  \end{align}
  where the sum is over all non-empty ordered subsets $p_1\cdots p_j$ of $\{1,\ldots,\ell\}$ with $j\leqslant\ell$.
  We also used that $m+m^\prime+1=j$.
  
  \emph{Step~4.} We now iterate \eqref{eq:orderedsuminterim3}.
  We note that the factors $p_2\cdots p_m$ are DT-invariant, and exactly as outlined in Example~\ref{ex:weightlength3}, we find that,
  \begin{align}
    \omega\circ&(d_1d_2\cdots d_\ell)_n\nonumber\\
                     =&\;\sum(-1)^{j+\ell}\bigl(C_{d_{p_1}}^n-C_{d_{p_1}-d_1}^{n-d_1}\bigr)C_{d_{p_2}-d_{p_1}}^{n-d_{p_1}}\cdots C_{d_{p_j}-d_{p_{j-1}}}^{n-d_{p_{j-1}}}\nonumber\\
                     =&\;\sum(-1)^{j+\ell}C_{d_{p_1}}^nC_{d_{p_2}-d_{p_1}}^{n-d_{p_1}}\cdots C_{d_{p_j}-d_{p_{j-1}}}^{n-d_{p_{j-1}}}\nonumber\\
                      &\;-\sum(-1)^{j+\ell} C_{d_{p_1}-d_1}^{n-d_1}C_{d_{p_2}-d_{p_1}}^{n-d_{p_1}}\cdots C_{d_{p_j}-d_{p_{j-1}}}^{n-d_{p_{j-1}}},\label{eq:step4}
  \end{align}
  where the sums are over all ordered subsets $\{p_1\ldots,p_j\}$ of $\{1,\ldots,\ell\}$ with $j\leqslant\ell$. Note that the first term on the right just above matches the formula given in the theorem.
  
  \emph{Step~5.} In the final step, we show that the second term on the right in \eqref{eq:step4} equals $(-1)^\ell$.
  If we code the second term on the right in \eqref{eq:step4} using our abstract notation, modulo the minus sign in front, it has the form,
  \begin{equation}\label{eq:secondterm}
  \sum (-1)^{j+\ell}p_1^\ast p_2p_3\cdots p_j,
  \end{equation}  
  where the sum is over the ordered subsets  $\{p_1\ldots,p_j\}$ of $\{1,\ldots,\ell\}$.
  We use the notation `$p_1^\ast$' do denote the fact that the first binomial factor is not $C_{d_{p_1}}^{n}$, but is $C_{d_{p_1}-d_1}^{n-d_1}$.
  Note in particular, that if the ordered subset $p_1p_2\cdots p_j$ starts with $1$, i.e.\/ we have $d_{p_1}=d_1$, then $1^\ast$ corresponds to $C_0^{n-d_1}=1$.
  This means that for each ordered subset $p_1p_2\cdots p_j$ of length $j$, we should separate those ordered subsets that start with a `$1$' from the other ordered subsets of that length.
  The ordered subsets $p_1p_2\cdots p_j$ of length $j$ that start with a `$1$', when in the form $p_1^\ast p_2\cdots p_j$ as in \eqref{eq:secondterm}, collapse in length by $1$ to become $p_2\cdots p_j$.
  However there are ordered subsets of length $j-1$ that would match this form, but have the opposite sign. These terms then cancel.
  Our exercise here therefore, is to enumerate these terms and show that from one level to the next there are exact cancellations.
  First, we note that there is of course a unique ordered subset of size $\ell$, namely,
  \begin{equation}\label{eq:set0}
    \{1,2,\ldots,\ell\}.
  \end{equation}
  The subindex keeps track of the length of the ordered subset. The ordered subsets of $\{1,\ldots,\ell\}$ of length $\ell-1$ are,
  \begin{equation}\label{eq:set1}
    [1,\ldots,\ell]_{\ell-1}=1[2,\ldots,\ell]_{\ell-2}\cup 2[3,\ldots,\ell]_{\ell-2}. 
  \end{equation}
  We observe that the ordered subset of size $\ell$ in \eqref{eq:set0} with the first factor $1$ collapsed, equals the second set on the right in the union of two sets in \eqref{eq:set1}, i.e.\/ that starting with $2$.
  The sign of the latter set is the opposite of that in the former and the two terms cancel in the sum \eqref{eq:secondterm}.
  The ordered subsets of $\{1,\ldots,\ell\}$ of length $\ell-2$ are,
  \begin{equation}\label{eq:set2}
    [1,\ldots,\ell]_{\ell-2}=1[2,\ldots,\ell]_{\ell-3}\cup 2[3,\ldots,\ell]_{\ell-3}\cup 3[4,\ldots,\ell]_{\ell-3}.  
  \end{equation}
  We observe that the first ordered subset in \eqref{eq:set1}, once the first factor $1$ is collapsed, exactly matches the pair $2[3,\ldots,\ell]_{\ell-3}\cup 3[4,\ldots,\ell]_{\ell-3}$ in \eqref{eq:set2}.
  The respective signs of the terms in this match are opposite, and so again these terms cancel in the sum \eqref{eq:secondterm}.
  The ordered subsets of $\{1,\ldots,\ell\}$ of length $\ell-3$ are,
  \begin{align}\label{eq:set3}
    [1,\ldots,\ell]_{\ell-3}=&\;1[2,\ldots,\ell]_{\ell-4}\cup 2[3,\ldots,\ell]_{\ell-4}\nonumber\\
                            &\;\cup 3[4,\ldots,\ell]_{\ell-4}\cup 4[5,\ldots,\ell]_{\ell-4}.  
  \end{align}
  Again, we observe that the first ordered subset in \eqref{eq:set2}, once the first factor $1$ is collapsed,
  exactly matches the triple $2[3,\ldots,\ell]_{\ell-4}\cup 3[4,\ldots,\ell]_{\ell-4}\cup 4[5,\ldots,\ell]_{\ell-4}$ in \eqref{eq:set3}.
  Their signs are opposite and these cancel in the sum \eqref{eq:secondterm}. And so forth. The final collection is all the ordered subsets of $\{1,\ldots,\ell\}$ of length $1$ which are,
  \begin{equation}\label{eq:setl}
    [1,\ldots,\ell]_{1}=1[2,\ldots,\ell]_{0}\cup 2[3,\ldots,\ell]_{0}\cup\cdots\cup \ell[\emptyset]_{0}. 
  \end{equation}
  By the same argument the set $2[3,\ldots,\ell]_{0}\cup\cdots\cup \ell[\emptyset]_{0}$ matches the appropriate term at the length level $2$, with the signs opposite and the same cancellation occuring.
  This leaves one remainig term, the first one in \eqref{eq:setl}, namely $1$, or in terms of the sum in \eqref{eq:secondterm} this is the term $1^\ast$, which then collapses to the real value $1$.
  Simple counting, and recalling that the second term on the right in \eqref{eq:step4} is preceeded by a minus sign, completes the proof.
\qed
\end{proof}

\section{Sato coefficients}\label{sec:Satocoeffs}       
Herein we establish our main result, Theorem~\ref{thm:Satocoeffsdescents}.  
We set $S$ to be the endomorphism on $\R\langle\mathbb D\rangle_{\cendot}$ given by,
\begin{equation*}
S\coloneqq \sum_{m\geqslant0}\bigl((-1)^{m}J\cendot\id^{\cendot m}.
\end{equation*}
We begin with the following result---see the examples at the end of Section~\ref{sec:degrafting}. 
\begin{lemma}[Sato coefficient endomorphism]\label{lemma:Satocoeffsviadescents}
  The Sato coefficients $b^\ast_{n-k}$ are given by,
  \begin{equation*}
    b^\ast_{n-k}=-\sum_{\|\df\|=k-1} S\circ\df\botimes\df.
  \end{equation*}
\end{lemma}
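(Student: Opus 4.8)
The plan is to work backwards from the explicit formula \eqref{eq:bnkformula} for $b_{n-k}^\ast$, namely
\begin{equation*}
  b_{n-k}^\ast=\sum(-1)^\ell(\df_1\cendot\cdots\cendot\df_\ell)\botimes(\df_1\star\cdots\star\df_\ell),
\end{equation*}
where the sum ranges over all tuples $(\df_1,\ldots,\df_\ell)$ of descents satisfying the composition constraint \eqref{eq:descentcomps}, and to re-collect the terms by the value of the grafting product $\df=\df_1\star\cdots\star\df_\ell$. Fixing a target descent $\df$ with $\|\df\|=k-1$, the coefficient multiplying $\df$ on the right in $\R\langle\mathbb D\rangle_\star$ becomes
\begin{equation*}
  \sum_{\ell\geqslant1}(-1)^\ell\!\!\sum_{\df_1\star\cdots\star\df_\ell=\df}\!\!\df_1\cendot\cdots\cendot\df_\ell,
\end{equation*}
and the claim reduces to identifying this inner sum (up to the overall minus sign) with $S\circ\df$. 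So the first step is simply this reorganisation of the known formula: this is bookkeeping and uses only that all $\star$-products on the right are taken left to right, matching the iterated-degrafting convention of Section~\ref{sec:degrafting}.

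The second, and main, step is to recognise that $\sum_{\df_1\star\cdots\star\df_\ell=\df}\df_1\cendot\cdots\cendot\df_\ell$ is exactly the $\ell$-fold iterated degrafting $\cendot\circ\Delta_{\ell-1}\circ\df$, since by definition $\Delta$ enumerates all pairs $(\df_1,\df_2)$ with $\df_1\star\df_2=\df$, and iterating (with the convention $\Delta(\emptyset_0)=0$ from Definition~\ref{def:degraftingop}) enumerates all left-to-right factorisations into $\ell$ descents, none of which is $\emptyset_0$ except that the constraint \eqref{eq:descentcomps}—which forbids compositions beginning with a $1$, i.e.\ forbids a leading $\emptyset_0$ factor—must be matched against the degrafting side. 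This is where the endomorphism $J$ enters: writing $\Delta_{\ell-1}$ and then applying $J\cendot\id^{\cendot(\ell-1)}$ kills precisely those factorisations whose first component is $\emptyset_0$, which are exactly the ones excluded by \eqref{eq:descentcomps}. Summing over $\ell$ with the sign $(-1)^\ell$ (absorbing one sign into the overall minus, leaving $(-1)^{m}$ with $m=\ell-1$ as in the definition of $S$) then gives
\begin{equation*}
  \text{coefficient of }\df=-\sum_{m\geqslant0}(-1)^m\bigl(J\cendot\id^{\cendot m}\bigr)\circ\df=-\,S\circ\df,
\end{equation*}
which is the assertion. I would make this precise by the same induction pattern already rehearsed for $b_{n-2}^\ast,b_{n-3}^\ast,b_{n-4}^\ast$ in \eqref{eq:bnkcoeffs} and in the displayed $J$-expansions at the end of Section~\ref{sec:degrafting}, checking the base case $\ell=1$ (where the only factorisation is $\df$ itself, contributing $J\circ\df$) and the inductive step via $\Delta_k=(\Delta\otimes\id^{\otimes(k-1)})\circ\Delta_{k-1}$.

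The step I expect to be the main obstacle is the careful matching of the two exclusion conventions: on the Sato side the constraint \eqref{eq:descentcomps} forbids compositions starting with $1$ (equivalently, via Remark~\ref{rmk:descentsand compositions}, a leading zero-grade factor), while on the degrafting side one must verify that applying $J$ to the \emph{first} tensor slot and $\id$ to all others removes exactly those and no other terms, using Remark~\ref{rmk:convention} to ensure the iterated degrafting terminates correctly and that no spurious $\emptyset_0$ in an interior slot survives—here the fact that interior degrafting factors are never $\emptyset_0$ when the left factor is obtained from the binary ``fix'' of Remark~\ref{rmk:twofixes} is what one needs. Once that correspondence is nailed down, the weight bookkeeping is already handled: the real coefficients $\sigma_n$ were pulled back in defining $b_{n-k}^\ast$, so no appeal to the weight formula of Theorem~\ref{thm:weightformula} is needed for this lemma, only the combinatorial identity $\sum_\ell (-1)^\ell \cendot\circ(J\otimes\id^{\otimes(\ell-1)})\circ\Delta_{\ell-1}=-S$ acting on each $\df$.
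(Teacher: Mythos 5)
Your proposal is correct and follows essentially the same route as the paper's own (much terser) proof: reorganise \eqref{eq:bnkformula} by the value of the left-to-right product $\df_1\star\cdots\star\df_\ell$, observe that iterated degrafting enumerates exactly these factorisations, and use $J$ in the first tensor slot to excise the leading-$\emptyset_0$ factorisations, which correspond to the compositions beginning with `$1$' excluded from $\mathcal C_k^\ast$; the sign bookkeeping $(-1)^\ell=-(-1)^m$ with $m=\ell-1$ is also right. One small correction to your final paragraph: interior and trailing degrafting factors \emph{can} be $\emptyset_0$ (for instance $\Delta_2\circ 12_3=\emptyset_1\otimes\emptyset_0\otimes\emptyset_0$, which produces the legitimate term $(\emptyset_1\cendot\emptyset_0\cendot\emptyset_0)\botimes 12_3$ in $b^\ast_{n-4}$), and these must be retained rather than excluded, so your parenthetical claim that interior degrafting factors are never $\emptyset_0$ is false --- but nothing in your argument actually uses it, since the correct matching only requires that $J$ act on the first slot and $\id$ on all the others.
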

\begin{proof}
  This follows from the formula for $b_{n-k}^\ast$ in \eqref{eq:bnkformula}, and using that linear combinations of the form \eqref{eq:lincombdescents} can be rewritten in the form \eqref{eq:adjoint}.
  The sum in the formula for $b_{n-k}^\ast$ is over all descents $\df_1,\df_2,\ldots,\df_\ell\in\mathbb D$ satisfying \eqref{eq:descentcomps}. 
  The products `$\df_1\star\cdots\star\df_\ell$' thus generate all possible descents $\df$ such that $\|\df\|=k-1$.
  We can generate all the possible `$\star$' products of descents that generate such descents $\df$ by degrafting.
  The terms in $S$ achieve this precise resummation, and the first factor $J$ in each term therein, ensures the first factor in any degrafting is not $\emptyset_0$,
  corresponding to the compositions that start with `$1$'.
\qed
\end{proof}
Before proceeding to our main result we consider some example cases.
Recall the Sato character $\gamma_n$ given in Definition~\ref{def:Satochar}.
For convenience, given a string of integers $c_1c_2\cdots c_\ell$, we define the map,
\begin{equation*}
\hat{\gamma}_n(c_1c_2\cdots c_\ell)\coloneqq\gamma_n\bigl((c_1+1)(c_2+1)\cdots(c_\ell+1)\bigr).
\end{equation*}
More explicitly, using Definition~\ref{def:Satochar}, we observe that,
\begin{equation*}
     \hat{\gamma}_n(c_1c_2\cdots)\coloneqq C^n_{c_1}\,C^{n-c_1-1}_{c_2}\,C^{n-c_1-c_2-2}_{c_3}\cdots.
\end{equation*}
Again, as for $\gamma_n$, with a slight abuse of notation, we may use $\hat{\gamma}_n=\hat{\gamma}_n(\df_1,\ldots,\df_k)$ to denote $\hat{\gamma}_n=\hat{\gamma}_n\bigl(\|\df_1\|\cdots\|\df_k\|\bigr)$.
\begin{example}\label{ex:main2}
  We compute $\sigma_n\circ S\circ (d_1d_2)_n$. In this case, since $(d_1d_2)_n$ can only maximally be degrafted into three components,
  we have, $S=-J+J\cendot\id-J\cendot\id^{\cendot2}$. Recall the definition of the Sato weight map from \eqref{eq:sigmadef}. 
  First, we need to compute $\sigma_n\circ J\circ (d_1d_2)_n=\sigma_n\circ(d_1d_2)_n=\hat{\gamma}_n\circ(d_1d_2)_n\cdot\omega\circ(d_1d_2)_n$.
  The first factor is given by,
  \begin{equation*}
  \hat{\gamma}_n\circ(d_1d_2)_n=\hat{\gamma}_n\circ\|(d_1d_2)_n\|=\hat{\gamma}_n\circ n=C_n^n=1.
  \end{equation*}
  The second factor is given by the weight formula in Theorem~\ref{thm:weightformula}, or explicitly in Example~\ref{ex:weightlength2}.
  Thus we have,
  \begin{align*}
    \sigma_n\circ J\circ (d_1d_2)_n=&\;\omega\circ (d_1d_2)_n\\
                                  =&\;C_{d_1}^{n}C^{n-d_1}_{d_2-d_1}\!-C^n_{d_1}\!-C^n_{d_2}+1\\
                                  =&\;12-1-2+\nu,
  \end{align*}
  where in the last line we are using the shorthand notation outlined in Definition~\ref{def:shorthandbinomials}, and we use `$\nu$' in the shorthand notation to correspond to the real binomial coefficient `$1$'.
  Second, we need to compute $\sigma_n\circ J\cendot\id\circ (d_1d_2)_n$. Either using Example~\ref{ex:weightlength2}, or the binary representation for descents,
  we observe that,
  \begin{align*}
    \Delta\circ(d_1d_2)_n=&\;\emptyset_0\otimes(d_1-1)(d_2-1)_{n-1}\\
                               &\;+\emptyset_{d_1}\otimes(d_2-d_1-1)_{n-d_1-1}\\
                               &\;+(d_1)_{d_2}\otimes\emptyset_{n-d_2-1}.
  \end{align*}
  Since $J\circ\emptyset_0=0$, we observe, $\sigma_n\circ J\cendot\id\circ(d_1d_2)_n$ equals,
  \begin{align*}
    \sigma_n\circ&\;\bigl(\emptyset_{d_1}\cendot(d_2-d_1-1)_{n-d_1-1}\bigr)+\sigma_n\circ\bigl((d_1)_{d_2}\cendot\emptyset_{n-d_2-1}\bigr)\\
    =&\;\hat{\gamma}_n\circ\bigl(d_1(n-d_1-1)\bigr)\,\omega\circ(d_2-d_1-1)_{n-d_1-1}\\
    &\;+\hat{\gamma}_n\circ\bigl(d_2(n-d_2-1)\bigr)\,\omega\circ(d_1)_{d_2}\\
    =&\;C_{d_1}^n\,\omega\circ(d_2-d_1-1)_{n-d_1-1}+C_{d_2}^n\,\omega\circ(d_1)_{d_2}\\
    =&\;C_{d_1}^n(C_{d_2-d_1-1}^{n-d_1-1}-1)+C_{d_2}^n(C_{d_1}^{d_2}-1)\\
    =&\;C_{d_1}^nC_{d_2-d_1-1}^{n-d_1-1}-C_{d_1}^n+C_{d_1}^nC_{d_2-d_1}^{n-d_1}-C_{d_2}^n\\
    =&\;12^=-1-12-2,
  \end{align*}
  where we used identity~\eqref{eq:prod-prod}. Third, we need to compute $\sigma_n\circ J\cendot\id^{\cendot2}\circ (d_1d_2)_n$.
  Since $\Delta\circ\emptyset_0=0$, we observe,
  \begin{align*}
    \Delta_2\circ (d_1d_2)_n=&\;(\Delta\otimes\id)\circ\Delta\circ (d_1d_2)_n\\
                           =&\;\Delta\circ\emptyset_{d_1}\otimes(d_2-d_1-1)_{n-d_1-1}\\
                            &\;+\Delta\circ(d_1)_{d_2}\otimes\emptyset_{n-d_2-1}\\
                           =&\;\emptyset_0\otimes\emptyset_{d_1-1}\otimes(d_2-d_1-1)_{n-d_1-1}\\
                            &\;+\emptyset_0\otimes(d_1-1)_{d_2-1}\otimes\emptyset_{n-d_2-1}\\
                            &\;+\emptyset_{d_1}\otimes\emptyset_{d_2-d_1-1}\otimes\emptyset_{n-d_2-1}.
  \end{align*}
  Again, since $J\circ\emptyset_0=0$, we observe,
  \begin{align*}
    \sigma_n\circ J\cendot\id^{\cendot2}\circ (d_1d_2)_n=&\;\sigma_n\circ\emptyset_{d_1}\otimes\emptyset_{d_2-d_1-1}\otimes\emptyset_{n-d_2-1}\\
                                                      =&\;\hat{\gamma}_n\circ\bigl(d_1(d_2\!-\!d_1\!-\!1)(n\!-\!d_2\!-\!1)\bigr)\\
                                                      =&\;C_{d_1}^nC_{d_2-d_1-1}^{n-d_1-1}\\
                                                      =&\;12^=,
  \end{align*}
  where in the second line we used that $\omega\circ\emptyset_{d_1}=\omega\circ\emptyset_{d_2-d_1-1}=\omega\circ\emptyset_{n-d_2-1}=1$.
  If we now put all these results together we observe that $\sigma_n\circ S\circ (d_1d_2)_n$ equals,
  \begin{align*}
    -(12-1-2+\nu)+(12^=-1-12-2)-12^==-\nu,
  \end{align*}
  which corresponds to `$-1$' once we convert back the shorthand, i.e. \/ $\sigma_n\circ S\circ (d_1d_2)_n=-1$.
\end{example}
\begin{example}\label{ex:main3}
  We now compute $\sigma_n\circ S\circ (d_1d_2d_3)_n$. This case is particularly insightful and the strategy we employ for the general case in Theorem~\ref{thm:Satocoeffsdescents}
  becomes apparent. Here, we have $S=-J+J\cendot\id-J\cendot\id^{\cendot2}+J\cendot\id^{\cendot3}$. As in Example~\ref{ex:main2}, the first term is given by,
  \begin{align*}
    \sigma_n\circ J\circ (d_1d_2d_3)_n=&\;\omega\circ (d_1d_2d_3)_n\\
                                     =&\;123-12-13-23+1+2+3-\nu,
  \end{align*}
  using Example~\ref{ex:weightlength3}. For the second term, we note that,
  \begin{align*}
    \Delta\circ(d_1d_2d_3)_n=&\;\emptyset_0\otimes(d_1\!-\!1)(d_2\!-\!1)(d_3\!-\!1)_{n-1}\\
                               &\;+\emptyset_{d_1}\otimes(d_2\!-\!d_1\!-\!1)(d_3\!-\!d_1\!-\!1)_{n-d_1-1}\\
                               &\;+(d_1)_{d_2}\otimes(d_3\!-\!d_2\!-\!1)_{n-d_2-1}\\
                               &\;+(d_1d_2)_{d_3}\otimes\emptyset_{n-d_3-1}.
  \end{align*}
  Thus, using identity~\eqref{eq:prod-prod}, we have, 
  \begin{align*}
    \sigma_n\circ J&\cendot\id\circ (d_1d_2d_3)_n\\
                 =&\;C_{d_1}^n\,\omega\circ(d_2-d_1-1)(d_3-d_1-1)_{n-d_1-1}\\
                  &\;+C_{d_2}^n\,\omega\circ(d_1)_{d_2}\,\omega\circ(d_3-d_2-1)_{n-d_2-1}\\
                  &\;+C_{d_3}^n\,\omega\circ(d_1d_2)_{d_3}\\
                 =&\;C_{d_1}^n\,\bigl(C_{d_2-d_1-1}^{n-d_1-1} C_{d_3-d_2}^{n-d_2}-C_{d_2-d_1-1}^{n-d_1-1}-C_{d_3-d_1-1}^{n-d_1-1}+1\bigr)\\ 
                  &\;+C_{d_2}^n\,\bigl(C_{d_1}^{d_2}-1\bigr)\bigl(C_{d_3-d_2-1}^{n-d_2-1}-1\bigr)\\ 
                  &\;+C_{d_3}^n\,\bigl(C_{d_1}^{d_3} C_{d_2-d_1}^{d_3-d_1}-C_{d_1}^{d_3}-C_{d_2}^{d_3}+1\bigr)\\             
                 =&\;\bigl(C_{d_1}^nC_{d_2-d_1-1}^{n-d_1-1}C_{d_3-d_2}^{n-d_2}-C_{d_1}^nC_{d_2-d_1-1}^{n-d_1-1}\\
                  &\;-C_{d_1}^nC_{d_3-d_1-1}^{n-d_1-1}+C_{d_1}^n\bigr)\\
                  &\;+\bigl(C_{d_1}^nC_{d_2-d_1}^{n-d_1}C_{d_3-d_2-1}^{n-d_2-1}-C_{d_1}^nC_{d_2-d_1}^{n-d_1}\\
                  &\;-C_{d_2}^nC_{d_3-d_2-1}^{n-d_2-1}+C_{d_2}^n\bigr)\\
                  &\;+\bigl(C_{d_1}^nC_{d_2-d_1}^{n-d_1}C_{d_3-d_2}^{n-d_2}-C_{d_1}^nC_{d_3-d_1}^{n-d_1}\\
                  &\;-C_{d_2}^nC_{d_3-d_2}^{n-d_2}+C_{d_3}^n\bigr)\\
                 =&\;(12^=3-12^=-13^=+1)+(123^=-12-23^=+2)\\
                  &\;+(123-13-23+3).               
  \end{align*}
  For the third term, we have,
  \begin{align*}
    \Delta_2\circ(d_1&d_2d_3)_n\\
          =&\;\Delta\circ\emptyset_{d_1}\otimes(d_2-d_1-1)(d_3-d_1-1)_{n-d_1-1}\\
                              &\;+\Delta\circ(d_1)_{d_2}\otimes(d_3-d_2-1)_{n-d_2-1}\\
                              &\;+\Delta\circ(d_1d_2)_{d_3}\otimes\emptyset_{n-d_3-1}\\
                             =&\;\emptyset_0\otimes\emptyset_{d_1-1}\otimes(d_2-d_1-1)(d_3-d_1-1)_{n-d_1-1}\\
                              &\;+\emptyset_{0}\otimes(d_1-1)_{d_2-1}\otimes(d_3-d_2-1)_{n-d_2-1}\\
                              &\;+\emptyset_{d_1}\otimes\emptyset_{d_2-d_1-1}\otimes(d_3-d_2-1)_{n-d_2-1}\\
                              &\;+\emptyset_0\otimes(d_1-1)(d_2-1)_{d_3-1}\otimes\emptyset_{n-d_3-1}\\
                              &\;+\emptyset_{d_1}\otimes(d_2-d_1-1)_{d_3-d_2-1}\otimes\emptyset_{n-d_3-1}\\
                              &\;+(d_1)_{d_2}\otimes\emptyset_{d_3-d_2-1}\otimes\emptyset_{n-d_3-1}.
  \end{align*}
  Thus, using that $\omega\circ\emptyset_d=0$, we observe that,
  \begin{align*}
    \sigma_n\circ J\cendot&\id^{\cendot2}\circ (d_1d_2d_3)_n\\
           =&\;C_{d_1}^{n}C_{d_2-d_1-1}^{n-d_1-1}(C_{d_3-d_2-1}^{n-d_2-1}-1)\\
           &\;+C_{d_1}^{n}C_{d_3-d_1-1}^{n-d_1-1}(C_{d_2-d_1-1}^{d_3-d_1-1}-1)\\
           &\;+C_{d_2}^{n}C_{d_3-d_2-1}^{n-d_2-1}(C_{d_1}^{d_2}-1)\\
                  =&\;(12^=3^=-12^=)+(12^=3-13^=)+(123^=-23^=),
  \end{align*}
  where the identity~\eqref{eq:prod-prod} is required in a couple of places in the last step to obtain ``Standard Form''.
  The fourth term is straightforwardly computed. Observing the expression for $\Delta_2\circ(d_1d_2d_3)_n$, we see that if we degraft the left factors by applying $\Delta$ to them,
  then the terms with left factors $\emptyset_0$ are trivialised when we apply $\Delta$.
  The terms shown with the left factors $\emptyset_{d_1}$ generate a four-tensor term with left factors $\emptyset_{0}$ after we apply $\Delta$, and they are trivialised when we apply $J\otimes\id^{\otimes 3}$.
  The only relevant term, here, is thus, $\emptyset_{d_1}\otimes\emptyset_{d_2-d_1-1}\otimes\emptyset_{d_3-d_2-1}\otimes\emptyset_{n-d_3-1}$. 
  Thus, we have,
  \begin{equation*}
    \sigma_n\!\circ\! J\!\cendot\!\id^{\cendot3}\!\circ (d_1d_2d_3)_n=C_{d_1}^{n}C_{d_2-d_1-1}^{n-d_1-1}C_{d_3-d_2-1}^{n-d_2-1}=12^=3^=.
  \end{equation*}
  Putting all the results together we observe that,
  \begin{align}
    \sigma_n&\circ S\circ (d_1d_2d_3)_n\nonumber\\
           =&\;-(123-12-13-23+1+2+3-\nu)\nonumber\\
            &\;+(12^=3-12^=-13^=+1)\nonumber\\
            &\;+(123^=-12-23^=+2)\nonumber\\
            &\;+(123-13-23+3)\nonumber\\
            &\;-(12^=3^=-12^=)-(12^=3-13^=)-(123^=-23^=)\nonumber\\
            &\;+12^=3^=.\label{eq:siglength3}
  \end{align}
  We observe that all the terms on the right above cancel except for $\nu$, and hence we deduce, $\sigma_n\circ S\circ (d_1d_2d_3)_n=1$.
\end{example}
\begin{remark}\label{rmk:mainthmstrategy}
 We observe that on the right in \eqref{eq:siglength3}, cancellations occur locally between neighbouring terms corresponding to individual terms in $S=-J+J\cendot\id-J\cendot\id^{\cendot2}+J\cendot\id^{\cendot3}$.
 In other words, the terms generated by $-J$, corresponding to the first bracketed term on the right, cancel completely with some terms in $J\cendot\id$, corresponding to the next three bracketed terms on the right.
 And then the remaining terms from $J\cendot\id$ cancel completely with some terms from $-J\cendot\id^{\cendot2}$, corresponding to the next three pairs of bracketed terms on the right.
 Finally, we observe that the  one remaining term in $-J\cendot\id^{\cendot2}$, namely $12^=3^=$, cancels with the final term on the right generated by $J\cendot\id^{\cendot3}$. 
 We exploit this observation in our proof of Theorem~\ref{thm:Satocoeffsdescents}, next.  
\end{remark}
\begin{theorem}[Sato coefficient via descents]\label{thm:Satocoeffsdescents}
  In the case $k=n+1$ in Lemma~\ref{lemma:Satocoeffsviadescents}, the push forward of the Sato coefficient $b^\ast_{-1}$ by $\sigma_n\botimes\id$ is given by,
  \begin{equation}\label{eq:finalSatocoeff}
        b_{-1}=-\sum_{\|\df\|=n}(-1)^{|\df|}\botimes\df,
  \end{equation}
  or equivalently, $\sigma_n\circ S\circ\df=(-1)^{|\df|}$.
\end{theorem}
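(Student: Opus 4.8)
The plan is to prove the equivalent pointwise statement $\sigma_n\circ S\circ\df=(-1)^{|\df|}$; the displayed formula \eqref{eq:finalSatocoeff} then follows at once from Lemma~\ref{lemma:Satocoeffsviadescents} with $k=n+1$ by pushing forward with $\sigma_n\botimes\id$. I would argue by induction on the length $\ell\coloneqq|\df|$, following the template worked out in Examples~\ref{ex:main2} and \ref{ex:main3} and the cancellation pattern flagged in Remark~\ref{rmk:mainthmstrategy}. The base cases $\ell=0$ and $\ell=1$ are the direct computations $\sigma_n\circ S\circ\emptyset_n$ and $\sigma_n\circ S\circ(d)_n$, using $\omega\circ\emptyset_d=1$ and $\omega\circ(d)_n=C^n_d-1$ from Example~\ref{ex:weightlength1}, exactly as at the end of Section~\ref{sec:degrafting}.

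First I would unpack $S\circ\df$ term by term. Since a descent of grade $n$ can be degrafted at most $n$ times (Remark~\ref{rmk:convention}), only finitely many $(-1)^mJ\cendot\id^{\cendot m}$ contribute, and for each $m$ we have $\sigma_n\circ J\cendot\id^{\cendot m}\circ\df=\sigma_n\circ(J\otimes\id^{\otimes m})\circ\Delta_m\circ\df$. Using the explicit iterated-degrafting description of $\Delta_m$ from Section~\ref{sec:degrafting} — most transparently in the binary representation (Definition~\ref{def:binaryrepresentation}, Remark~\ref{rmk:shorthand}) — the terms surviving the leading $J$ (those whose first tensor slot is not $\emptyset_0$) are indexed by the ways of cutting the word $d_1\cdots d_\ell$ at an ordered subset of its descent positions. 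On each such term $\sigma_n$ evaluates as $\hat{\gamma}_n$ of the grade sequence times the product of the $\omega$'s of the factors (the $\hat{\gamma}_n$-form of Definition~\ref{def:Satoweightmap} used in Examples~\ref{ex:main2}--\ref{ex:main3}); here the weight formula, Theorem~\ref{thm:weightformula}, supplies each $\omega$-factor as a signed sum of Standard-Form binomial monomials $C^n_{d_{\alpha_1}}\ast\cdots\ast C^n_{d_{\alpha_m}}$, and the identity $C^n_{d_1}C^{n-d_1}_{d_2-d_1}\equiv C^n_{d_2}C^{d_2}_{d_1}$ from Lemma~\ref{lemma:binomialids} is invoked repeatedly to absorb the $\hat{\gamma}_n$-coefficient and bring everything to Standard Form, precisely as in Step~2 of the proof of Theorem~\ref{thm:weightformula} and in Example~\ref{ex:main3}.

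The heart of the argument is then the cancellation. I would show that, once every $\sigma_n\circ(-1)^mJ\cendot\id^{\cendot m}\circ\df$ is expressed in Standard Form, the resulting monomials organise into a telescoping collapse: each monomial occurring in the $m$-summand matches, with opposite sign, either a monomial in the $(m-1)$-summand or one in the $(m+1)$-summand, via a sign-reversing bijection obtained by deleting or restoring a leading binomial factor. This is the general-$\ell$ version of Remark~\ref{rmk:mainthmstrategy}, and structurally it mirrors Step~5 of the proof of Theorem~\ref{thm:weightformula}, where a sum over ordered subsets of $\{1,\ldots,\ell\}$ of varying sizes collapses by exactly this kind of first-factor matching. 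After the cancellations, the unique term with no partner is the constant coming from $-J\circ\df=-\omega\circ\df$, whose leading Standard-Form sign is $-(-1)^{\ell}$, and reconciling this with the sign conventions of Definitions~\ref{def:Satoweightmap} and \ref{def:descentweight} yields $\sigma_n\circ S\circ\df=(-1)^{|\df|}$.

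The step I expect to be the main obstacle is precisely this last combinatorial collapse in full generality: organising the Standard-Form monomials produced by all of the multiple degraftings $\Delta_m$ — each tensor factor of which must itself be weight-expanded via Theorem~\ref{thm:weightformula} — into the correct uniform sign-reversing telescoping bijection. The examples show that at each fixed $\ell$ this is a finite checkable collapse, but making the bijection explicit and independent of $\ell$, keeping track of which cut position acts as the ``pivot'' and which factors are DT-invariant, requires care; the binary representation and the shorthand $\ast$-notation of Definition~\ref{def:shorthandbinomials} are what keep it tractable, together with the single identity $\alpha^=+\alpha^-\equiv\alpha$, i.e. $C^n_d+C^n_{d-1}\equiv C^{n+1}_d$, which drives every cancellation step.
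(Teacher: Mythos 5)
Your proposal follows essentially the same route as the paper's proof: expand each $\sigma_n\circ(-1)^mJ\cendot\id^{\cendot m}\circ\df$ via iterated degrafting, evaluate with $\hat{\gamma}_n$ and the weight formula of Theorem~\ref{thm:weightformula}, reduce to Standard Form with identity~\eqref{eq:prod-prod}, and cancel telescopically between consecutive levels until only the constant term of $-\omega\circ\df$ survives. The only cosmetic differences are that the paper argues directly for arbitrary $\ell$ rather than by induction on $|\df|$, and its level-to-level cancellations are exact term matchings obtained from the ordered-subset decomposition~\eqref{eq:decomp} and sum reindexing rather than from the Pascal identity~\eqref{eq:sum-simple}.
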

\begin{proof}
We proceed systematically through the terms generated by each endomorphism in $S$, namely $-J$, $J\cendot\id$, $-J\cendot\id^{\cendot2}$, etc. 
As indicated in Remark~\ref{rmk:mainthmstrategy}, we compare the terms generated by each endomorphism with the next, and keep track of the cancellation of terms.
For convenience here, we use the following notation for ordered subsets.
For a generic ordered subset, say, $\alpha_1\cdots\alpha_m$, we suppose,
\begin{equation*}
\balpha\coloneqq\alpha_1\cdots\alpha_m\qquad\text{and}\qquad\balpha^=\coloneqq\alpha_1^=\alpha_2\cdots\alpha_m.
\end{equation*}
We use $|\balpha|$ to denote the length of the ordered subset, so for example, $|\alpha_1\cdots\alpha_m|=m$.
Recall that $J\circ\emptyset_0=0$ and $\Delta\circ\emptyset_0=0$.
Hereafter, $\df=(d_1d_2\cdots d_\ell)_n$ denotes a generic descent in $\mathbb D$.
To begin, using Theorem~\ref{thm:weightformula}, we know that,
\begin{equation}\label{eq:mainstep1}
  \sigma_n\circ(-J)\circ\df=-\omega\circ\df=-\sum(-1)^{\ell+|\balpha|}\balpha,
\end{equation}
where the sum is over all ordered subsets $\balpha\in[1,\ldots,\ell]$.

Next, we consider the terms generated by $\sigma_n\circ J\cendot\id\circ\df$. We observe that,
\begin{align*}
  \Delta\circ\df=\sum_{k=1}^\ell&\Bigl((d_1\cdots d_{k-1})_{d_k}\\
                               &\!\otimes\!(d_{k+1}-d_k-1)\cdots(d_\ell-d_k-1)_{n-d_k-1}\Bigr),
\end{align*}
where we have ignored the first term starting with a $\emptyset_0$ left factor as this is eliminated in the next step.
Indeed, when computing $\sigma_n\circ J\cendot\id\circ\df$, the first term just mentioned is eliminated, and within the sum above,
we generate the three factors $\hat{\gamma}_n\bigl(d_k(n-d_k-1)\bigr)=C_{d_k}^n$, $\omega\circ(d_1\cdots d_{k-1})_{d_k}$ and $\omega\circ(d_{k+1}-d_k-1)\cdots(d_\ell-d_k-1)_{n-d_k-1}$.
Proceeding as in Step~2 in the proof of Theorem~\ref{thm:weightformula}, we observe that $\sigma_n\circ J\cendot\id\circ\df$ is given by,
\begin{equation}
  \sum_{k=1}^\ell\sum(-1)^{\ell+|\balpha|+|\bbeta|-1}\balpha k\bbeta^=,\label{eq:mainstep2a}
\end{equation}
where the second sum is over all $\balpha\in[1,\ldots,k-1]$ and $\bbeta\in[k+1,\ldots,\ell]$.
Note that the binomial coeeficient corresponding to the first term $\beta_1^=$ in $\bbeta^=$, also has `$d_k$' subtracted from both its arguments.
Consider the case when $\bbeta=\emptyset$, the empty ordered subset. The terms in \eqref{eq:mainstep2a} corresponding to this case are,
\begin{equation}\label{mainstep2b}
  \sum_{k=1}^\ell\sum(-1)^{\ell+|\balpha|-1}\balpha k.
\end{equation}
We note that we have the following natural decomposition,
\begin{equation*}
[1,\ldots,\ell]\backslash\{\emptyset\}=1\cup[1]2\cup[1,2]3\cup\cdots\cup[1,\ldots,\ell-1]\ell.
\end{equation*}
Hence, except for the single term in \eqref{eq:mainstep1} corresponding to $\balpha=\emptyset$, the terms in \eqref{mainstep2b} exactly match those in \eqref{eq:mainstep1}.
With care taken with signs, these terms exactly cancel. We are thus left with the single term from \eqref{eq:mainstep1} mentioned, and the terms from \eqref{eq:mainstep2a}, for which,
\begin{equation}\label{eq:mainconds2}
  \bbeta\in[k+1,\ldots,\ell]\backslash\{\emptyset\}\quad\text{and}\quad k\in1,\ldots,\ell-1,
\end{equation}
where the former necessitates the latter.

We now consider the terms generated by $\sigma_n\circ(-J\cendot\id^{\cendot2})\circ\df$.
To generate $\Delta_2\circ\df$, we simply need to apply $\Delta$ to the left factors in our expression for $\Delta\circ\df$ above.
This eliminates the first term in $\Delta\circ\df$, and means we need to compute $\Delta\circ(d_1\cdots d_{k-1})_{d_k}$ for which can use the formula for $\Delta\circ\df$ itself, replacing $n$ by $d_k$ and $\ell$ by $d_{k-1}$.
Proceeding in this manner, we find that,
\begin{align*}
\Delta_2\circ\df&=\sum_{k_1=1}^\ell\sum_{k_2=1}^{k_1-1}\Bigl((d_1\cdots d_{k_2-1})_{d_{k_2}}\\
&\otimes(d_{k_2+1}-d_{k_2}-1)\cdots(d_{k_1-1}-d_{k_2}-1)_{d_{k_1}-d_{k_2}-1}\\
&\otimes(d_{k_1+1}-d_{k_1}-1)\cdots\cdot(d_\ell-d_{k_1}-1)_{n-d_{k_1}-1}\Bigr).
\end{align*}
where, again, we have ignored all terms starting with a $\emptyset_0$ left factor.
Thus, using $\Delta_2\circ\df$, computing $\sigma_n\circ(-J\cendot\id^{\cendot2})\circ\df$, generates the following four factors within the double sum involving $k_1$ and $k_2$. 
The first factor is,
\begin{equation*}
\hat{\gamma}_n\bigl(d_{k_2}(d_{k_1}-d_{k_2}-1)(n-d_{k_1}-1)\bigr)=C_{d_{k_2}}^nC_{d_{k_1}-d_{k_2}-1}^{n-d_{k_2}-1},
\end{equation*}
while the other three factors are the weights of each of the three tensored terms shown in the double sum for $\Delta_2\circ\df$ shown above.
We distribute the two factors from $\gamma_n$ above across the first two weights.
We observe that, by an exactly analogous calculation to that in Step~2 in the proof of Theorem~\ref{thm:weightformula}, we have,
\begin{equation}\label{eq:mainstep3a}
    C_{d_{k_2}}^n\omega\circ(d_1\cdots d_{k_2-1})_{d_{k_2}}=\sum(-1)^{k_2-1+|\balpha|}\balpha\,k_2.
\end{equation}
The sum is over all ordered subsets $\balpha\in[1,\ldots,k_2-1]$. Next, again by an analogous calculation, we find,
\begin{align}
  C_{d_{k_1}-d_{k_2}-1}^{n-d_{k_2}-1}\omega\!\circ\!(d_{k_2+1}&\!-\!d_{k_2}\!-\!1)\!\cdot\!\cdot\!\cdot\!(d_{k_1-1}\!-\!d_{k_2}\!-\!1)_{d_{k_1}-d_{k_2}-1}\nonumber\\
  =&\sum(-1)^{k_1-k_2-1+|\bbeta|}\bbeta^=\,k_1.\label{eq:mainstep3b}
\end{align}
The sum is over $\bbeta\in[k_2+1,\ldots,k_1-1]$.
Some care is required here, and we remark that the first term $\beta_1^=$ in $\bbeta^=$ is actually, $C_{d_{\beta_1}-d_{k_2}-1}^{n-d_{k_2}-1}$, as we should anticipate.
Lastly, the final weight term is given exactly by the same calculation as that for the corresponding term in Step~2 of the proof of Theorem~\ref{thm:weightformula}, generating,
\begin{equation}\label{eq:mainstep3c}
    \sum(-1)^{\ell-k_1+|\bgamma|}\bgamma^=.
\end{equation}
The sum is over all $\bgamma\in[k_1+1,\ldots,\ell]$, and similar care is required in interpreting the first term in $\bgamma^=$, as we outlined for $\bbeta^=$ just above. 
Putting the factors \eqref{eq:mainstep3a}, \eqref{eq:mainstep3b} and \eqref{eq:mainstep3c} together, we observe that $\sigma_n\circ(-J\cendot\id^{\cendot2})\circ\df$ equals,
\begin{equation}\label{eq:mainstep3}
   -\sum_{k_1=2}^\ell\sum_{k_2=1}^{k_1-1}\sum(-1)^{\ell+|\balpha|+|\bbeta|+|\bgamma|}\balpha\,k_2\bbeta^=\,k_1\bgamma^=.
\end{equation}
The third sum shown is over all $\balpha\in[1,\ldots,k_2-1]$, $\bbeta\in[k_2+1,\ldots,k_1-1]$ and $\bgamma\in[k_1+1,\ldots,\ell]$.
Now consider the case when $\bgamma=\emptyset$, in which case \eqref{eq:mainstep3} collapses to,
\begin{equation*}
   -\sum_{k_1=2}^\ell\sum_{k_2=1}^{k_1-1}\sum(-1)^{\ell+|\balpha|+|\bbeta|}\balpha\,k_2\bbeta^=\,k_1.
\end{equation*}
If we swap the order of the $k_1$ and $k_2$ sums, and then swap the $k_1$ and $k_2$ labels, we obtain,
\begin{equation}\label{eq:mainstep3d}
   -\sum_{k_1=1}^{\ell-1}\sum_{k_2=k_1+1}^{\ell}\sum(-1)^{\ell+|\balpha|+|\bbeta|}\balpha\,k_1\bbeta^=\,k_2,
\end{equation}
where now $\balpha\in[1,\ldots,k_1-1]$ and $\bbeta\in[k_1+1,\ldots,k_2-1]$.
We note that in \eqref{eq:mainstep3d}, the subsum over $k_2$ and $\bbeta$ of $\bbeta^=\,k_2$ equals the sum of $\hat{\bbeta}$
over all $\hat{\bbeta}\in[k_1+1,\ldots,\ell]\backslash\{\emptyset\}$. This is due to the decomposition of $[k+1,\ldots,\ell]\backslash\{\emptyset\}$ into,
\begin{equation}\label{eq:decomp}
(k+1)\cup[k+1](k+2)\cup\cdots\cup[k+1,\ldots,\ell-1]\ell.
\end{equation}
Hence \eqref{eq:mainstep3d} becomes,
\begin{equation}\label{eq:mainstep3e}
   -\sum_{k_1=1}^{\ell-1}\sum(-1)^{\ell+|\balpha|+|\bbeta|-1}\balpha\,k_1\hat{\bbeta}^=,
\end{equation}
where the second sum is over all $\balpha\in[1,\ldots,k_1-1]$ and $\hat{\bbeta}\in[k_1+1,\ldots,\ell]\backslash\{\emptyset\}$.
This exactly matches and cancels with the terms remaining from \eqref{eq:mainstep2a} with the conditions~\eqref{eq:mainconds2}.
We are thus left with the terms in \eqref{eq:mainstep3}, restricted so that,
\begin{equation}\label{eq:mainconds3}
\bgamma\in[k_1+1,\ldots,\ell]\backslash\{\emptyset\}\quad\text{and}\quad k_1\in2,\ldots,\ell-1. 
\end{equation}

As a last case, we consider $\sigma_n\circ(-J\cendot\id^{\cendot3})\circ\df$. The pattern thereafter becomes straightforward. 
To compute $\Delta_3\circ\df$, we must apply $\Delta$ to the left factors in the expression $\Delta_2\circ\df$ above.
Ignoring the term with the left factor $\emptyset_0$, we find,
\begin{align*}
  \Delta_3\circ\df&=\sum_{k_1=1}^\ell\sum_{k_2=1}^{k_1-1}\sum_{k_3=1}^{k_2-1}\Bigl((d_1\cdots d_{k_3-1})_{d_{k_3}}\\
                  &\otimes(d_{k_3+1}-d_{k_3}-1)\cdots(d_{k_2-1}-d_{k_3}-1)_{d_{k_2}-d_{k_3}-1}\\
                  &\otimes(d_{k_2+1}-d_{k_2}-1)\cdots(d_{k_1-1}-d_{k_2}-1)_{d_{k_1}-d_{k_2}-1}\\
                  &\otimes(d_{k_1+1}-d_{k_1}-1)\cdots\cdot(d_\ell-d_{k_1}-1)_{n-d_{k_1}-1}\Bigr).
\end{align*}
Using this form to compute $\sigma_n\circ(-J\cendot\id^{\cendot3})\circ\df$ generates five factors within the triple sum,
the first of which is the $\hat{\gamma}_n$ factor, which in this case is,
\begin{equation}
  C_{d_{k_3}}^nC_{d_{k_2}-d_{k_3}-1}^{n-d_{k_3}-1}C_{d_{k_1}-d_{k_2}-1}^{n-d_{k_2}-1}. \label{eq:threefactors}
\end{equation}
The other four factors are the weights of the four tensored terms shown in the triple sum.
We distribute the three binomial coefficients respectively across the first three weights.
Again, exactly analogous to the calculation in Step~2 in the proof of Theorem~\ref{thm:weightformula}, we have,
\begin{equation}\label{eq:mainstep4a}
    C_{d_{k_3}}^n\omega\circ(d_1\cdots d_{k_3-1})_{d_{k_3}}=\sum(-1)^{k_3-1+|\balpha|}\balpha\,k_3.
\end{equation}
The sum is over all $\balpha\in[1,\ldots,k_3-1]$.
The next two factors are computed in the same manner as that above. Indeed, respectively,
the second/third factor in \eqref{eq:threefactors} times the weight of the second/third factor in $\Delta_3\circ\df$ give,
\begin{align}
  &\sum(-1)^{k_2-k_3-1+|\bbeta|}\bbeta^=\,k_2, \label{eq:mainstep4b}
  \intertext{and}
  &\sum(-1)^{k_1-k_2-1+|\bgamma|}\bgamma^=k_1. \label{eq:mainstep4c}
\end{align}
The sums are respectively over $\bbeta\in[k_3+1,\ldots,k_2-1]$ and $\bgamma\in[k_2+1,\ldots,k_1-1]$.
The final weight term generates,
\begin{equation}\label{eq:mainstep4d}
    \sum(-1)^{\ell-k_1+|\bdelta|}\bdelta^=.
\end{equation}
The sum is over $\bdelta\in[k_1+1,\ldots,\ell]$.
Putting the factors \eqref{eq:mainstep4a}--\eqref{eq:mainstep4d} together, we observe, $\sigma_n\circ(-J\cendot\id^{\cendot3})\circ\df$ equals,
\begin{equation}\label{eq:mainstep4}
   \sum_{k_1=3}^\ell\sum_{k_2=2}^{k_1-1}\sum_{k_3=1}^{k_2-1} \sum\rho\,\balpha\,k_3\bbeta^=\,k_2\bgamma^=k_1\bdelta^=,
\end{equation}
where $\rho=(-1)^{\ell+|\balpha|+|\bbeta|+|\bgamma|+|\bdelta|-1}$.
The fourth sum is over all ordered subsets associated with the factors \eqref{eq:mainstep4a}--\eqref{eq:mainstep4d}.
Now consider the case when $\bdelta=\emptyset$, in which case \eqref{eq:mainstep4} collapses to, 
\begin{equation}\label{eq:mainstep4e}
   \sum_{k_1=3}^\ell\sum_{k_2=2}^{k_1-1}\sum_{k_3=1}^{k_2-1} \sum(-1)^{\ell+|\balpha|+|\bbeta|+|\bgamma|-1}\balpha\,k_3\bbeta^=\,k_2\bgamma^=k_1,
\end{equation}
where the fourth sum is now only over the ordered subsets for $\balpha$, $\bbeta$ and $\bdelta$.
If we reverse the order of the $k_1$, $k_2$ and $k_3$, which involves three swaps, and then swap the $k_1$ and $k_3$ labels, we obtain,
\begin{equation}\label{eq:mainstep4f}
   \sum_{k_1=1}^{\ell-2}\sum_{k_2=k_1+1}^{\ell-1}\sum_{k_3=k_2+1}^{\ell} \sum\rho^\prime\balpha\,k_1\bbeta^=\,k_2\bgamma^=k_3,
\end{equation}
where $\rho^\prime=(-1)^{\ell+|\balpha|+|\bbeta|+|\bgamma|-1}$.
The fourth sum is over $\balpha\in[1,\ldots,k_1-1]$, $\bbeta\in[k_1+1,\ldots,k_2-1]$ and $\bgamma\in[k_2+1,\ldots,k_3-1]$.
Again using the decomposition \eqref{eq:decomp}, the form \eqref{eq:mainstep4f} collapses to,
\begin{equation}\label{eq:mainstep4g}
   \sum_{k_1=1}^{\ell-2}\sum_{k_2=k_1+1}^{\ell-1}\sum(-1)^{\ell+|\balpha|+|\bbeta|+|\bgamma|}\balpha\,k_1\bbeta^=\,k_2\hat{\bgamma}^=,
\end{equation}
where the third sum is over the same ranges for $\balpha$ and $\bbeta$, but now $\hat{\bgamma}\in[k_2+1,\ldots,\ell]\backslash\{\emptyset\}$.
If we swap the order of the $k_1$ and $k_2$ sums, and then swap the $k_1$ and $k_2$ labels, in \eqref{eq:mainstep4g},
then it precisely matches \eqref{eq:mainstep3} subject to the conditions \eqref{eq:mainconds3}, with opposite sign.
These two terms thus cancel.
The procedure is now straightforward. We continue in this manner until we reach the final single term $12^=3^=\cdots\ell^=$, which cancels with the corresponding term at the previous level.
There is thus only a single term remaining, namely the `$-(-1)^{\ell}$' term from \eqref{eq:mainstep1} corresponding to the case $\balpha=\emptyset$.
This completes the proof.
\qed
\end{proof}
\begin{remark}
We note that the formula for $b_{-1}$ in Theorem~\ref{thm:Satocoeffsdescents}, after collapsing the `$\botimes$' tensor, can also be succinctly expressed as follows,
\begin{equation*}
   b_{-1}=-\Biggl(\prod_{k=1}^{n-1}(\emptyset-k)\Biggr)_n. 
\end{equation*}
Here, we assume the terms in the product are concatenated to generate the descents of grade $n$.
Thus, for example, in the case $n=4$, we have, $ b_{-1}=-(\emptyset-1)(\emptyset-2)(\emptyset-3)_4=-\emptyset_4+1_4+2_4-12_4+3_4-13_4-23_4+123_4$.
\end{remark}

\section{Solutions}\label{sec:solutions}     
Our main result, Theorem~\ref{thm:Satocoeffsdescents} in Section~\ref{sec:Satocoeffs}, establishes that, given any $n\in\mathbb N$,
the Sato coefficient $b_{-1}$ collapses to the sum over all descents $\df$ of grade $n$, in which the real coefficient of each descent $\df$ therein is $(-1)^{|\df|+1}$.
Our goal herein is to finally close the loop.
We show that \eqref{eq:finalSatocoeff} is equivalent to the statement that the solution to the noncommutative KP equation is generated
by the solution to the GLM equation~\eqref{eq:GLMeqintro}, or equivalently \eqref{eq:GLMPoppe}, in which the semi-additive scattering data is generated by the linearised KP flow.
This is equivalent to showing that \eqref{eq:finalSatocoeff} in Theorem~\ref{thm:Satocoeffsdescents} matches the statement,
\begin{equation}\label{eq:PoppeVkernel}
   \pa_{t_n}\lb V\rb=\lb V\bigl(\pa_{\el}^nP-(-1)^n\pa_{\rl}^nP\bigr)V\rb.
\end{equation}
To establish this connection, we use the $\el\gl\rl$-representation of the descent algebra from Section~\ref{sec:lgr-alg}.

To begin, recall our discussion on generating descents from Remark~\ref{rmk:generatingdescents} in Section~\ref{sec:descent-alg}.
For example, we can generate all the descents at grade $3$ from those at grade $2$, namely $\{\emptyset_2,1_2\}$, by ``stretching'' and ``appending'', as follows.
We simply ``stretch'' the descents $\{\emptyset_2,1_2\}$ by extending the grade to $3$ so that we get $\{\emptyset_3,1_3\}$.
We then ``append'' the descents $\{\emptyset_2,1_2\}$ by tagging onto the end of each descent therein the new possible descent position at grade $3$, to get $\{2_3,12_3\}$.
The union of the results of these two processes is $\{\emptyset_3,1_3,2_3,12_3\}$, which represents all the descents of grade $3$. ``Stretching'' and then ``appending''
the descent set $\{\emptyset_3,1_3,2_3,12_3\}$, then generates the descent set $\{\emptyset_4,1_4,2_4,12_4,3_4,13_4,23_4,123_4\}$ which represents all the descents of grade $4$, and so forth.
Let us now establish this generation procedure in the $\el\gl\rl$-algebra context. Recall that $\{\emptyset_2,1_2\}$ have the $\el\gl\rl$-codes,
\begin{equation*}
\{(\el+\gl)\gl,\gl(\rl+\gl)\}.
\end{equation*}
See, for example, Table~\ref{table:treesupto3}.
Recall from Section~\ref{sec:lgr-alg} that all $\el\gl\rl$-codes consist of factors of $\gl$, $(\el+\gl)$, $(\el+\gl+\rl)$ and $(\rl+\gl)$,
and there are some natural juxtaposition constraints on such factors. In particular, all $\el\gl\rl$-codes either end in the factor $\gl$ or the factor $(\rl+\gl)$.
This is apparent from the $\el\gl\rl$-grafting product in Definition~\ref{def:lgrgrafting}; and the knowledge that, as representatives, all planar binary rooted trees are generated by grafting.
\begin{definition}[Stretching and appending]\label{def:S&A}
  Given a generic $\el\gl\rl$-code $\df$ which we can always decompose into the form $\df=\df^\prime\mathfrak{c}$,
  where $\mathfrak{c}$ is the single rightmost factor such that, either $\mathfrak{c}=\gl$ or $\mathfrak{c}=(\rl+\gl)$.
  Then the \emph{stretch} procedure of $\df$ is given by,
  \begin{equation*}
    \df\mapsto\begin{cases}\df^\prime(\el+\gl)\gl, &~\text{if}\quad\mathfrak{c}=\gl,\\ \df^\prime(\el+\gl+\rl)\gl, &~\text{if}\quad\mathfrak{c}=(\rl+\gl),\end{cases}
  \end{equation*}
  while the \emph{append} procedure is given by, $\df\mapsto \df(\rl+\gl)$.
\end{definition}
With formal ``stretching'' and ``appending'' procedures in hand, we have the following result.
\begin{theorem}[Sato coefficient via $\el\gl\rl$-codes]\label{thm:Satocoeffsvialgr}
  For any $n\in\mathbb N$, the Sato coefficient $b_{-1}$ given in Theorem~\ref{thm:Satocoeffsdescents}
  has the following form in terms of $\el\gl\rl$-codes,
  \begin{equation}\label{eq:Satocoeffsvialgr}
    b_{-1}=-\sum_{k=0}^{n-1}(-1)^k\el^{n-1-k}\gl\rl^k.
  \end{equation}
\end{theorem}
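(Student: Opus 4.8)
The starting point is Theorem~\ref{thm:Satocoeffsdescents}, which, after collapsing the $\botimes$-tensor, says that $b_{-1}=-B_n$ where $B_n\coloneqq\sum_{\|\df\|=n}(-1)^{|\df|}\df$, viewed as an element of the $\el\gl\rl$-algebra via the identification of the descent algebra with the $\el\gl\rl$-algebra from Section~\ref{sec:lgr-alg}. The plan is to prove, by induction on $n$, that
\begin{equation*}
B_n=\sum_{k=0}^{n-1}(-1)^k\el^{n-1-k}\gl\rl^k
\end{equation*}
once each descent is written out in its letter-expansion (equivalently, via the isomorphism with the augmented pre-P\"oppe algebra, in terms of the kernel monomials, a single glue-block $\el^a\gl\rl^b$ being $\lb V\pa_\el^a\pa_\rl^b P_{\hat1}V\rb$). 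Equation~\eqref{eq:Satocoeffsvialgr} then follows by multiplying by $-1$.

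First I would establish the recursion. By Remark~\ref{rmk:generatingdescents}, every descent of grade $n$ is obtained from a \emph{unique} descent of grade $n-1$ by exactly one of the two operations in Definition~\ref{def:S&A}: the stretch, which leaves the descent set unchanged, so $|\mathrm{stretch}(\df)|=|\df|$; or the append $\df\mapsto\df(\rl+\gl)$, which inserts a descent at the final position, so $|\mathrm{append}(\df)|=|\df|+1$. Bookkeeping the signs $(-1)^{|\df|}$ then gives $B_n=\mathrm{Str}(B_{n-1})-\mathrm{App}(B_{n-1})$, where $\mathrm{Str}$ and $\mathrm{App}$ denote the linear extensions of stretch and append.

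The key step is to read off the effect of $\mathrm{Str}$ and $\mathrm{App}$ at the level of letter-words. Since $\mathrm{App}(\df)=\df(\rl+\gl)$ is right-multiplication by a fixed factor, on a letter-word $w$ it acts as $\mathrm{App}(w)=w\gl+R(w)$, where $w\gl$ appends a fresh $\gl$-block and $R(w)$ augments the last glue-block of $w$ by one $\pa_\rl$-derivative. For $\mathrm{Str}$ a short case split on whether a code ends in $\gl$ or in $(\rl+\gl)$ — using that a code ending in $\gl$ has all its letter-monomials ending in $\gl$, while one ending in $(\rl+\gl)$ has monomials ending in $\gl$ or in $\rl$ — shows that $\mathrm{Str}$ descends to a well-defined linear map on letter-words, with $\mathrm{Str}(w)=w\gl+\chi_{\gl}(w)\,E(w)$, where $\chi_{\gl}(w)=1$ if $w$ ends in the letter $\gl$ and $0$ otherwise, and $E(w)$ augments the last glue-block of $w$ by one $\pa_\el$-derivative. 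Hence the fresh-$\gl$-block terms cancel and $(\mathrm{Str}-\mathrm{App})(w)=\chi_{\gl}(w)\,E(w)-R(w)$.

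It then remains to run the induction. The base case $n=1$ is $B_1=\emptyset_1=\gl$, matching. For the inductive step, I would substitute $B_{n-1}=\sum_{k=0}^{n-2}(-1)^k\el^{n-2-k}\gl\rl^k$ into the recursion: only the $k=0$ term ends in $\gl$, with $E(\el^{n-2}\gl)=\el^{n-1}\gl$, whereas $R(\el^{n-2-k}\gl\rl^k)=\el^{n-2-k}\gl\rl^{k+1}$ for all $k$; collecting and reindexing $j=k+1$ gives $B_n=\el^{n-1}\gl+\sum_{j=1}^{n-1}(-1)^j\el^{n-1-j}\gl\rl^j=\sum_{j=0}^{n-1}(-1)^j\el^{n-1-j}\gl\rl^j$, closing the induction. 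I expect the main obstacle to be the third paragraph: verifying carefully that $\mathrm{Str}$, which is defined on codes by a case split rather than manifestly linearly, descends to the asserted clean linear rule on letter-words — that is, that its effect on a letter-monomial does not depend on which code the monomial came from. The rest is bookkeeping, guided by the worked cases $n=2,3$ of \eqref{eq:Satocoeffsvialgr}.
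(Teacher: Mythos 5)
Your proposal is correct and follows essentially the same route as the paper: both prove the formula by induction on $n$ using the stretch--append generation of grade-$n$ descents from grade-$(n-1)$ descents, with stretch preserving $|\df|$ and append flipping the sign, so that $B_n=\mathrm{Str}(B_{n-1})-\mathrm{App}(B_{n-1})$. The only organisational difference is that the paper tracks the final factor ($\gl$ versus $(\rl+\gl)$) by writing $\hat b_{-1}^{(n)}=\el^{n-1}\gl+\hat b_{-1}^{(n-1)}\gl-\hat b_{-1}^{(n-1)}(\rl+\gl)$ and stretching/appending that code-level decomposition directly, rather than descending to letter-words and verifying well-definedness there as you do.
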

\begin{proof}
  For convenience, we set,
  \begin{equation*}
    \hat{b}_{-1}^{(n)}\coloneqq \sum_{k=0}^{n-1}(-1)^k\el^{n-1-k}\gl\rl^k.
  \end{equation*}
  We know that $\hat{b}_{-1}^{(2)}=\emptyset_2-1_2=\el\gl-\gl\rl$.
  We observe that, we can express $\hat{b}_{-1}^{(n)}$ in the form,
  \begin{equation*}
   \hat{b}_{-1}^{(n)}=\el^{n-1}\gl+\hat{b}_{-1}^{(n-1)}\gl-\hat{b}_{-1}^{(n-1)}(\rl+\gl).
  \end{equation*}
  We now ``stretch'' and ``append'' $\hat{b}_{-1}^{(n)}$. When we ``stretch'' we add to the length of the corresponding descent and we effectively postmultiply by `$-(\rl+\gl)$'.
  This gives,
  \begin{align*}
     \el^{n-1}&(\el+\gl)\gl+\hat{b}_{-1}^{(n-1)}(\el+\gl)\gl-\hat{b}_{-1}^{(n-1)}(\el+\gl+\rl)\gl\\
    -\el^{n-1}&\gl(\rl+\gl)-\hat{b}_{-1}^{(n-1)}\gl(\rl+\gl)+\hat{b}_{-1}^{(n-1)}(\rl+\gl)^2\\
    =&\;\el^{n-1}\el\gl+\el^{n-1}\gl^2+\hat{b}_{-1}^{(n-1)}\rl^2-\el^{n-1}\gl\rl-\el^{n-1}\gl^2\\
    =&\;\hat{b}_{-1}^{(n+1)}.
  \end{align*}
  The statement of the theorem follows by induction.
\qed
\end{proof}
Recall from Section~\ref{sec:lgr-alg} that $\gl$ represents the form $\lb VP_{\hat{1}}V\rb$. 
Further, recall our discussion in Example~\ref{ex:illus} regarding the actions of $\el$ and $\rl$ on $\gl$.
We observe that, in terms of counting the number of left and right derivatives on $\lb V\rb$, we can replace $\gl$ by $(\el+\rl)$.
Indeed we observe the following.
\begin{corollary}[Sato coefficient]\label{cor:Satocoeff}
 For any $n\in\mathbb N$, the Sato coefficient $b_{-1}$ in Theorem~\ref{thm:Satocoeffsvialgr} is given by,
  \begin{equation*}
    b_{-1}=-\bigl(\el^n-(-1)^n\rl^n\bigr).
  \end{equation*} 
\end{corollary}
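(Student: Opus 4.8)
The plan is to derive Corollary~\ref{cor:Satocoeff} directly from Theorem~\ref{thm:Satocoeffsvialgr} by reinterpreting the $\el\gl\rl$-code expression
\begin{equation*}
  b_{-1}=-\sum_{k=0}^{n-1}(-1)^k\el^{n-1-k}\gl\rl^k
\end{equation*}
at the level of counting left/right derivatives on $\lb V\rb$. First I would recall, as explained in Section~\ref{sec:lgr-alg} and Example~\ref{ex:illus}, that the factor $\gl$ corresponds to the monomial $\lb VP_{\hat 1}V\rb$, and that $P_{\hat 1}=\pa_\el P+\pa_\rl P$; so from the point of view of tracking how many $\pa_\el$ and $\pa_\rl$ operators act on the central $P$, the symbol $\gl$ behaves exactly like $(\el+\rl)$. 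Moreover, the juxtaposition convention from Example~\ref{ex:illus} says that any $\el$ immediately to the left of a $\gl$ adds a $\pa_\el$-derivative to that $\gl$, and any $\rl$ immediately to the right adds a $\pa_\rl$-derivative. Hence in the term $\el^{n-1-k}\gl\rl^k$, once we collapse to derivative counts we may replace $\gl$ by $(\el+\rl)$ and absorb the adjacent powers, so that this term corresponds to $\el^{n-1-k}(\el+\rl)\rl^k=\el^{n-k}\rl^k+\el^{n-1-k}\rl^{k+1}$.

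The second step is the telescoping. Substituting this replacement into the sum gives
\begin{equation*}
  b_{-1}=-\sum_{k=0}^{n-1}(-1)^k\bigl(\el^{n-k}\rl^k+\el^{n-1-k}\rl^{k+1}\bigr).
\end{equation*}
Re-indexing the second piece via $k\mapsto k-1$ turns it into $-\sum_{k=1}^{n}(-1)^{k-1}\el^{n-k}\rl^{k}=\sum_{k=1}^{n}(-1)^{k}\el^{n-k}\rl^{k}$, which cancels against all but the $k=0$ term of the first piece and contributes the $k=n$ term $(-1)^n\rl^n$. What survives is $b_{-1}=-\bigl(\el^{n}-(-1)^n\rl^n\bigr)$, as claimed. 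I would present this cancellation explicitly but briefly, since it is a one-line telescoping sum.

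The main obstacle, and the point requiring the most care, is the first step: justifying rigorously that one may substitute $\gl\to(\el+\rl)$ inside a general $\el\gl\rl$-code when the object of interest is $b_{-1}$ acting on $\lb V\rb$, i.e.\ the kernel identity \eqref{eq:PoppeVkernel}. The subtlety is that $\gl$ genuinely records a "glue" operation (insertion of an extra $VP_{\hat 1}V$ block), not literally a derivative, so the replacement is only valid after we pass from the full augmented pre-P\"oppe expansion to the reduced bookkeeping of "number of $\pa_\el$ and $\pa_\rl$ derivatives on the relevant $P$." I would phrase this as: applying $b_{-1}$ to $\lb V\rb$ and expanding via Corollary~\ref{cor:derivsandtrees} and Lemma~\ref{lemma:crucial}, every resulting monomial $\lb V(\pa_\el^a\pa_\rl^b P)V\cdots\rb$ is determined, for the purposes of matching $\lb V(\pa_\el^n P-(-1)^n\pa_\rl^n P)V\rb$, solely by the total $(\el,\rl)$-derivative weight it carries; and in the $\el\gl\rl$-code each $\gl$ ultimately feeds exactly one $\pa_\el$ or one $\pa_\rl$ onto the unique $P$-block that remains when all glue is resolved at $z=\zeta=0$ (cf.\ Remark~\ref{rmk:multifactor}). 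Once that reduction is granted, the computation above shows $b_{-1}\,\lb V\rb=-\lb V(\pa_\el^n P-(-1)^n\pa_\rl^n P)V\rb$, which together with the ansatz $w_1=-\lb G\rb=-\lb V\rb$ and \eqref{eq:PoppeV} is exactly the statement that $\lb G\rb$ solves the $n$th equation of the noncommutative KP hierarchy. I would close by remarking that Corollary~\ref{cor:Satocoeff} therefore completes the "head-on" direct linearisation proof promised in the introduction.
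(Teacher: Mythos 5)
Your proposal is correct and follows essentially the same route as the paper: the paper's proof likewise substitutes $\gl\to(\el+\rl)$ in \eqref{eq:Satocoeffsvialgr} (justified, as you note, by the derivative-counting interpretation of $\gl=\lb VP_{\hat 1}V\rb$ with $P_{\hat 1}=\pa_\el P+\pa_\rl P$ from Example~\ref{ex:illus}) and then cancels the two resulting middle sums, which is exactly your telescoping step. Your additional paragraph elaborating why the substitution is legitimate expands on what the paper states briefly in the preamble to the corollary, but it is the same idea.
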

\begin{proof}
  If we replace $\gl$ in \eqref{eq:Satocoeffsvialgr} by $(\el+\rl)$, we observe that,
  \begin{align*}
    b_{-1}=&\;-\biggl(\el^{n-1}(\el+\rl)+\sum_{k=1}^{n-2}(-1)^k\el^{n-k}\rl^k\\
           &\;\quad+\sum_{\kappa=1}^{n-2}(-1)^{\kappa}\el^{n-1-\kappa}\rl^{\kappa+1}+(-1)^{n-1}(\el+\rl)\rl^{n-1}\biggr)\\
          =&\;-\biggl(\el^{n}+\sum_{k=1}^{n-1}(-1)^k\el^{n-k}\rl^k\\
           &\;\quad+\sum_{\kappa=0}^{n-2}(-1)^{\kappa}\el^{n-1-\kappa}\rl^{\kappa+1}+(-1)^{n-1}\rl^{n}\biggr),    
  \end{align*}
  which gives the result once we cancel the two middle sums on the right.
\qed
\end{proof}
The statement in \eqref{eq:PoppeVkernel} follows directly from Corollary~\ref{cor:Satocoeff}.

\section{Discussion}\label{sec:discussion}
There are many further aspects and future directions of interest to explore.
Some examples are as follows.

(1) \emph{Noncommutative modified KP hierarchy:} For the noncommutative modified KP hierarchy, the GLM equation has the form $P=G(\id-Q)$, with $Q\coloneqq P^2$.
Many more details can be found in Blower and Malham~\cite{BM-KP}.
Therein we establish the Miura transformation between the noncommutative KP equation and a lifted noncommutative modified KP system.
It would be of interest to adapt the methods herein to establish the integrability of the complete noncommutative modified KP hierarchy (lifted version) by direct linearisation,
either via the Miura transformation, or more directly via the modified hierarchy equations.

(2) \emph{Reductive hierarchies:} The noncommutative KP hierarchy reduces to many well-known hierarchies under various symmetry assumptions.
The noncommutative KP hierarchy reduces to the noncommutative Boussinesq hierarchy, and, via a separate symmetry assumption, to the noncommutative KdV hierarchy.
The noncommutative Boussinesq hierarchy is related to the noncommutative Kaup--Kupershmidt and Sawada--Kotera hierarchies; see for example,
Foursov and Moreno Maza~\cite{FM}, Hone and Wang~\cite{HW} and Kang, Liu, Olver and Qu~\cite{KLOQ}.
These hierarchies are in turn related to the Camassa--Holm and Degasperis--Procesi hierarchies via Liouville transformations and so forth.
It would be of interest to understand the consequences, of the direct linearisation solution ansatz and the classes of solutions we have considered herein, for solutions to these reduced hierarchies.

(3) \emph{Numerical simulations:} As mentioned in the introduction, in Blower and Malham~\cite{BM-KP}, we generated solutions to the noncommutative KP equation by numerically solving the direct linearisation problem.
The solution to the linearised equations~\eqref{eq:linearform} can be analytically advanced to at anytime $t>0$ in Fourier space. 
Then the discrete inverse Fourier transform (via FFT) can be substituted as the scattering data into the GLM equation~\eqref{eq:GLMeqintro}.
This in turn can be solved as a numerical linear algebra problem to generate the solution to the noncommutative KP equation at that time $t>0$. 
See Blower and Malham~\cite{BM-KP} for the precise details, which include simulations involving three-wave interactions to the noncommutative KP equations. 
A natural next step is to extend this procedure to higher hierarchy members.

(4) \emph{Quasi-determinant solutions:} It is well-known that solutions to the noncommutative KP hierarchy can be expressed in terms of quasi-determinants,
see Etingof, Gelfand, Retakh~\cite{EGR97,EGR98} and Gilson and Nimmo~\cite{GN}.
Of interest here, would be to establish the explicit relation between the solution ansatz we have used herein and these quasi-determinant solution formulations.
We have established a quasi-trace interpretation for the solutions herein, see Blower and Malham~\cite{BM-KP}.
Investigating this quasi-trace solution form and its connection to quasi-determinant formulations is another interesting future investigation.

(5) \emph{Fredholm Grassmannian flows:} It is well-known that the solution to the KP hierarchy can be interpreted as a flow on a Fredholm Grassmann manifold.
See Sato~\cite{SatoI,SatoII}, Segal and Wilson~\cite{SW}, and Mulase~\cite{Mu84}.
Further, all possible KP multi-soliton interactions can be parametrised within the context of finite dimensional Grassmannians. This classification was established by Kodama~\cite{Kodama}.
From our perspective herein, for a given semi-additive scattering operator $P$ satisfying the linearised KP equations~\eqref{eq:linearform},
the solution $G$ to the GLM equation~\eqref{eq:GLMeqintro} naturally generates a solution flow on a given coordinate patch of a Fredholm Grassmannian.
The coordinate patch is represented by the pair $(\id,G)$,
which parametrises the time-evolving \emph{envelope} or \emph{subspace} of dispersive field solutions represented by the pair $(\id-P,P)$; see Blower and Malham~\cite{BM}.
Also see Doikou, Malham and Stylianidis~\cite{DMS}, Doikou \textit{et al.\/} \cite{DMSWa,DMSWc}, Malham~\cite{MKdV} and Blower and Malham~\cite{BM-KP}, as well as Beck \textit{et al.\/} \cite{BDMSa,BDMSb}.
There are many directions of interest to pursue for solutions to the noncommutative KP hierarchy within the Fredholm Grassmannian context, for example:
(a) Flows with singularities could be interpreted as flows for which, locally, the coordinate patch $(\id,G)$ is a poor representative patch.
We could easily continue such flows as non-singular flows in another coordinate patch, for example, the pair $(G^\prime,\id)$;
(b) Could we extend, improve or view known solution regularity results in this Fredholm Grassmannian context,
for example along the lines of the results of Grudsky and Rybkin~\cite{GR}---see item (7) below;
(c) Can we construct a quasi-determinantal bundle associated with the Fredholm Grassmannian and connect that to the quasi-determinant solutions
established for the noncommutative KP equation by Gilson and Nimmo~\cite{GN}, and/or a suitably defined $\tau$-function?

(6) \emph{Fay identities:} Connecting our results herein to the algebraic and/or differential Fay identities, and theta functions,
is very much of interest, though definitively a longer term project.
See, for example, Dimakis and M\"uller--Hoissen~\cite{DMH} and Takasaki~\cite{Takasaki}.

(7) \emph{Analytical generalisations:} The fact that we restrict ourselves herein to Hilbert--Schmidt valued operators might be too restrictive.
It would be of interest to see if we could relax the class of operators we use throughout to bounded linear operators with sufficiently differentiable kernels, for example.
Weakening the classes of solutions for which solutions to integrable systems can be established, has seen a recent resurgence of interest, see for example Grudsky and Rybkin~\cite{GR}.

(8) \emph{Algebra generalisations:} The following generalisations are very much of interest:
(a) \emph{General fields:} The results we have established herein naturally generalise to the complex field $\mathbb C$.
However they may also generalise to algebra structures in which we replace these fields by quaternions or more general rings.
(b) \emph{Lifted associative algebra:} For any nonassociative algebra, such as the descent algebra $\R\langle\mathbb D\rangle_{\star}$ we considered herein, we can generate, via left and right linear maps,
the \emph{associative enveloping algebra} or \emph{multiplication algebra} which is the subassocciative algebra of the algebra of endomorphisms on $\R\langle\mathbb D\rangle_{\star}$.   
See, for example, Schafer~\cite{Schafer}. Can we lift the descent algebra in this way to simplify our proof? 
(c) \emph{Degrafting coalgebra:} The degrafting operation $\Delta$ introduced in Section~\ref{sec:degrafting} and utilised extensively thereafter, is a natural candidate for a coproduct,
and thus establishing a coalgebra equipped with a degrafting co-product.
Coassociativity of $\Delta$ appears to be the main obstruction, but perhaps a more general perspective might help remove this obstacle and enable us to establish a coalgebra based on $\Delta$;
(d) \emph{Hopf algebra:} If a degrafting coalgebra can be established, then it is only natural to look for a Hopf algebra structure.

\section{Declarations}

\subsection{Funding and conflicts or competing interests}
There are no conflicts of interests or competing interests. 

\subsection{Data availability statement}
No data was used in this work.

\section*{References}

\end{document}